\numberwithin{equation}{section}
\numberwithin{figure}{section}
\numberwithin{table}{section}
\pgfplotsset{compat=1.14}
\newtheorem{definition}{Definition}[section]
\newtheorem{theorem}{Theorem}[section]
\newtheorem{corollary}{Corollary}[theorem]
\newtheorem{lemma}[theorem]{Lemma}
\newtheorem{proposition}[theorem]{Proposition}
\newtheorem{remark}{Remark}
\tikzset{external/system call={pdflatex \tikzexternalcheckshellescape 
		-halt-on-error
		-interaction=batchmode 
		-jobname "\image" "\texsource"
		&& pdftops -eps "\image.pdf"}}
	\title{Quiver Yangians and $\mathcal{W}$-Algebras for Generalized Conifolds}
	\author[a,b]{Jiakang Bao}
	\affiliation[a]{
		Department of Mathematics, City, University of London, EC1V 0HB, UK}
	\affiliation[b]{
		London Institute for Mathematical Sciences, Royal Institution, London W1S 4BS, UK}
	\emailAdd{jiakang.bao@city.ac.uk}
	\preprint{
		\begin{flushright}
			
		\end{flushright}
	}
	\abstract{We focus on quiver Yangians for most generalized conifolds. We construct a coproduct of the quiver Yangian following the similar approach by Guay-Nakajima-Wendlandt. We also prove that the quiver Yangians related by Seiberg duality are indeed isomorphic. Then we discuss their connections to $\mathcal{W}$-algebras analogous to the study by Ueda. In particular, the universal enveloping algebras of the $\mathcal{W}$-algebras are truncations of the quiver Yangians, and therefore they naturally have truncated crystals as their representations.
	}
\begin{document}
	\maketitle

\section{Introduction and Summary}\label{intro}
The quiver Yangians were first introduced in \cite{Li:2020rij} as BPS algebras \cite{Harvey:1996gc} for Type IIA string theories on toric Calabi-Yau (CY) threefolds. Since then, the quiver Yangians and their cousins have been extensively studied in \cite{Galakhov:2020vyb,Galakhov:2021xum,Noshita:2021ldl,Galakhov:2021vbo,Noshita:2021dgj,Bao:2022fpk,Galakhov:2022uyu}. As a nice combinatoric way to encode the BPS spectrum, the crystal melting models \cite{Okounkov:2003sp,Iqbal:2003ds,Ooguri:2009ijd} naturally provide representations for the algebras. It is also expected that the quiver Yangians have intimate relations with cohomological Hall algebras \cite{Joyce:2008pc,Kontsevich:2008fj,Kontsevich:2010px,Rapcak:2018nsl,Rapcak:2020ueh} and many other quantum algebras.

An important algebra structure of the quiver Yangian is its coproduct. As studied in \cite{Litvinov:2020zeq,Litvinov:2021phc,Chistyakova:2021yyd,Kolyaskin:2022tqi,Bao:2022fpk,Galakhov:2022uyu}, one can construct the $\mathcal{R}$-matrices from the BPS algebras. This realizes the Bethe/gauge correspondence \cite{Nekrasov:2009uh,Nekrasov:2009ui} for supersymmetric gauge theories with (non-chiral) toric quiver descriptions. For instance, the rapidities in the Bethe ansatz equation are known to be identified with the supersymmetric vacua of the 2d $\mathcal{N}=(2,2)$ gauge theory in this dictionary. Moreover, each 2-magnon $S$-matrix corresponds to the so-called bond factor (see \eqref{bondfactor}) stemmed from the action of the quiver Yangian generators on the 2d molten crystal \cite{Bao:2022fpk}. By analyzing the coproduct structures of the algebra generators and the Lax operators, a remarkable no-go theorem for chiral quivers was discovered for the Bethe/gauge correspondence in \cite{Galakhov:2022uyu}. In other words, it would require a more delicate and involved study if we want to extend such correspondence for shifted 2d crystals or the quivers associated to toric CY threefolds with compact divisors.

Here, we shall mainly focus on generalized conifolds $xy=z^Mw^N$ with $M+N>2$, $MN\neq2$ and $M\neq N$. All but one toric CY threefolds without compact 4-cycles are generalized conifolds. Their quiver Yangians have salient features and have been systematically studied in \cite{Li:2020rij}. As the restrictions on $M,N$ suggest, the construction of the coproduct would greatly benefit from the underlying untwisted affine Lie superalgebra $\widehat{\mathfrak{sl}}_{M|N}$. To obtain the coproduct, we need to consider different presentations of the quiver Yangians. In fact, our approach is analogous to the non-super affine case in \cite{guay2018coproduct}. It is also worth noting that a super version of the affine Yangian was introduced in \cite{ueda2019affine} with a coproduct given using the similar method. However, as pointed out in \cite{Bao:2022fpk}, this super Yangian is different from the quiver Yangian (unless these two-parameter algebras degenerate to one-parameter families). Physically, this is because the Yangian in \cite{ueda2019affine} does not respect the gauge symmetries associated to the vertices in the quiver. Therefore, we would still provide detailed steps when deriving the coproduct of the quiver Yangian. Nevertheless, it is not surprising to find the expressions similar to the forms in \cite{guay2018coproduct,ueda2019affine}.

For affine Lie algebras and quiver Yangians with only bosonic generators, we only have even reflections, and there is one single toric phase for each quiver Yangian. For affine Lie superalgebras and quiver Yangians with fermionic generators, we would further have odd reflections \cite{serganova1985automorphisms,hoyt2007classification,serganova2011kac} that do not respect the $\mathbb{Z}_2$-grading of the algebras. However, it is exactly this property of the odd reflections that allows us to transform the quiver from one toric phase to another. Often in literature, only the distinguished case (which has two fermionic nodes) would be discussed. It is always natural to expect the results obtained would still hold for those with more fermionic nodes due to the invariance of the underlying Kac-Moody algebra. Here, along with the help of a presentation obtained in studying the coproduct, we shall extend the odd reflections to the quiver Yangians and prove that the Seiberg dual quiver Yangians are indeed isomorphic algebras.

On the other hand, the $\mathcal{W}$-algebras \cite{Zamolodchikov:1985wn,Prochazka:2017qum,Prochazka:2018tlo,Creutzig:2018pts,Creutzig:2019qos,Rapcak:2019wzw,Eberhardt:2019xmf} should play a crucial role in the tensionless limit of string theory in AdS$_3$ \cite{Gross:1988ue,Henneaux:2010xg,Campoleoni:2010zq,Gaberdiel:2014cha}. In particular, the rectangular $\mathcal{W}$-algebra can be realized as the symmetry algebra of the coset CFT whose holographic dual gives higher spin gravity \cite{Creutzig:2013tja}. Such vertex algebras have been well-studied in mathematics literature such as \cite{kac2003quantum,Kac:2003jh,Kac2005CorrigendumT}.

In this paper, we shall discuss the BPS/CFT (aka AGT, 2d/4d) correspondence \cite{Alday:2009aq,Wyllard:2009hg} at the level of algebras. The BPS algebras and the vertex operator algebras (VOAs) are expected to be contained in a broader picture under the BPS/CFT correspondence. In the finite cases, the relations between Yangians and $\mathcal{W}$-algebras have been explored in \cite{Briot:2000fn,Brundan:2004ca,briot2001yangians}. For $\widehat{\mathfrak{gl}}_1$ whose associated CY is the simplest $\mathbb{C}^3$, it was shown in \cite{Gaberdiel:2017dbk,Prochazka:2019dvu} that the affine/quiver Yangian is isomorphic to the universal enveloping algebra of the $\mathcal{W}_{1+\infty}$-algebra. Moreover, in such case, the AGT conjecture was proven in \cite{schiffmann2013cherednik} with a surjective homomorphism from the quiver Yangian for $\mathbb{C}^3$ to the universal enveloping algebra of the principal $\mathcal{W}$-algebra. Physically, the Nekrasov partition function of the 4d supersymmetric gauge theories can be identified with the conformal blocks of the corresponding VOAs. From a geometric perspective, the Verma module of the VOA results from its action on the equivariant intersection cohomology of the instanton moduli space \cite{Braverman:2014xca}. Later in \cite{Prochazka:2017qum,Gaberdiel:2017hcn,Gaberdiel:2018nbs,Li:2019nna,Li:2019lgd}, a similar study was extended to the supersymmetric case with the $\mathcal{N}=2$ $\mathcal{W}$-algebra under the construction of gluing two trivalent vertices/plane partitions.

Similar to the $\mathcal{W}_{1+\infty}$-algebra case whose truncation gives the algebra at the corner \cite{Gaiotto:2017euk}, the matrix-extened $\mathcal{W}$-algebra (called $\mathcal{W}_{M|N\times\infty}$ in \cite{Rapcak:2019wzw}) for any generalized conifold has generators of spins $s=1,2,\dots$ and is expected to truncate to VOAs describing certain interface of a 4d supersymmetric gauge theory. In \cite{ueda2022affine}, a surjective homomorphism from Ueda's affine super Yangian to the rectangular $\mathcal{W}$-algebras was constructed in a way similar to the evaluation maps in Yangian algebras \cite{kodera2021guay,ueda2019affine}. However, as mentioned above, such super Yangian is not the BPS algebra physically. Therefore, due to this difference as well as the different conventions, we decide to give a careful construction of the surjective homomorphism from the quiver Yangian to the $\mathcal{W}$-algebras (under certain condition on the parameters of the quiver Yangian) although the approach is still similar to those in \cite{ueda2022affine,kodera2021guay}. This shows that the universal enveloping algebras of the VOAs are indeed the truncations of the BPS algebras, and the quiver Yangians can in this sense be viewed as some realization of $U(\mathcal{W}_{M|N\times\infty})$. As the (truncated) crystals are then naturally the representations of the universal enveloping algebras of the $\mathcal{W}$-algebras, we show that they are highest weight representations. We will also see that the parameters on the $\mathcal{W}$-algebra side can be related to the vacuum charge on the quiver Yangian side. Let us also mention here that similar to the quiver Yangian discussions, we shall also consider the $\mathcal{W}$-algebras not only restricted to the distinguished case. However, unlike the proof for the quiver Yangians, the isomorphisms for these $\mathcal{W}$-algebras are much more straightforward. This allows us to connect both the BPS algebras and the VOAs in different phases freely.

The paper is organized as follows. In \S\ref{QY}, we give a brief review on the concept of quiver Yangians for generalized conifolds. In \S\ref{coprod}, we discuss different presentations of the quiver Yangians which enable us to obtain the coproducts of them. We shall then prove that the quiver Yangians in different toric phases are isomorphic in \S\ref{isoQY}. In \S\ref{YandW}, we shall study the relations between quiver Yangians and rectangular $\mathcal{W}$-algebras. Some future directions will be mentioned in \S\ref{outlook}. In Appendix \ref{generators2}, we shall also discuss the generators of the quiver Yangians for $\mathbb{C}\times\mathbb{C}^2/\mathbb{Z}_2$ and the conifold, as well as $\mathbb{C}^3/(\mathbb{Z}_2\times\mathbb{Z}_2)$ which is the only toric CY$_3$ without compact 4-cycles that is not a generalized conifold. We discuss the odd reflections of the Kac-Moody algebras in Appendix \ref{oddrefchevgen}. In Appendix \ref{recW}, we give an introduction to the rectangular $\mathcal{W}$-algebras.

\section{Quiver Yangians for Generalized Conifolds}\label{QY}
Let us start with a recap on some fundamentals of the quiver Yangians. Here, we shall only give the defining relations for generalized conifolds $xy=z^Mw^N$ ($M+N>2$), but the quiver Yangians can be defined for any quivers as in \cite{Li:2020rij}. We shall fix the convention that the set of natural numbers $\mathbb{N}$ is always counted from 0.

Given any generalized conifold with $M+N\geq1$, its toric diagram is
\begin{equation}
	\includegraphics[width=5cm]{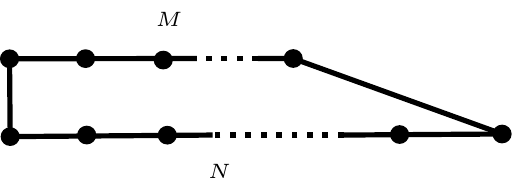}.
\end{equation}
Its quivers in different toric phases can be obtained from the triangulations of the lattice polygon \cite{nagao2008derived,Nagao:2009rq}. These triangulations can in turn be encoded by a sequence of signs $\varsigma=\{\varsigma_a|a\in\mathbb{Z}/(M+N)\mathbb{Z}\}$, one for each simplex in the toric diagram. There are $M$ minus ones and $N$ plus ones. When two simplices are glued side by side, they have the same sign. When they are glued in the alternative way, they have opposite signs. An illustration can be found in Figure \ref{sigmaex}.
\begin{figure}[h]
	\centering
	\includegraphics[width=6cm]{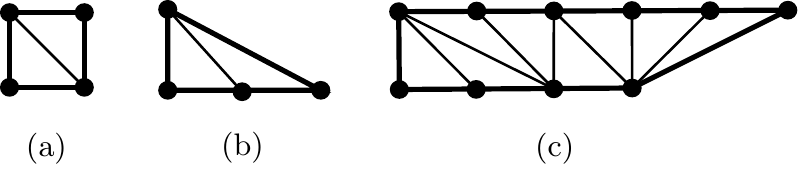}
	\caption{In these examples, we have (a) $\varsigma=\{-1,+1\}$, (b) $\varsigma=\{-1,-1\}$ and (c) $\varsigma=\{-1,-1,+1,+1,-1,+1,+1,+1\}$.}\label{sigmaex}
\end{figure}

The quiver is constructed as follows. First, there is always a pair of opposite arrows connecting node $a$ and node $a+1$ ($a\in\mathbb{Z}/(M+N)\mathbb{Z}$). Then the node $a$ is bosonic/even and has a self-loop if $\varsigma_a=\varsigma_{a+1}$. If $\varsigma_a=-\varsigma_{a+1}$, then it is fermionic/odd and has no self-loops. Hence, the resulting quiver is essentially the double of the untwisted affine $\mathfrak{sl}_{M|N}$ Dynkin quiver with extra loops on the bosonic nodes. The superpotential $W$ can be fully determined in the toric quiver gauge theory and is composed of the terms
\begin{equation}
	\begin{cases}
		\varsigma_a\text{tr}(I_{a,a}I_{a,a-1}I_{a-1,a}-I_{a,a}I_{a,a+1}I_{a+1,a}),&\varsigma_a=\varsigma_{a+1},\\
		\varsigma_a\text{tr}(I_{a,a+1}I_{a+1,a}I_{a,a-1}I_{a-1,a}),&\varsigma_a=-\varsigma_{a+1},
	\end{cases}
\end{equation}
where $I_{a,b}$ denotes the arrow/field from node $a$ to $b$.

To construct the quiver Yangian, we need to assign parameters $\widetilde{\epsilon}_I$ to the arrows $I$ in the quiver. As $\widetilde{\epsilon}_I$ can be viewed as charges under a global symmetry of the quiver quantum mechanics, they should be compatible with the superpotential. This yields the loop constraint
\begin{equation}
	\sum_{I\in L}\widetilde{\epsilon}_I=0,
\end{equation}
for any closed loop $L$ in the periodic quiver. It turns out that the number of independent parameters is given by
\begin{equation}
	|Q_1|-|Q_2|-1=|Q_0|+1,
\end{equation}
where $Q_0$, $Q_1$ are the sets of nodes and arrows respectively. Moreover, $Q_2$ denotes the faces of the periodic quiver, or equivalently, the monomial terms in the superpotential.

Furthermore, as pointed out in \cite{Li:2020rij}, the mixing of global and gauge symmetries associated to each node would cause shifts of $\widetilde{\epsilon}_I$. One can then introduce a gauge fixing condition to get rid of this shift. This is known as the vertex constraint:
\begin{equation}
	\sum_{I\in a}\text{sgn}_a(I)\widetilde{\epsilon}_I=0,
\end{equation}
where the sign function $\text{sgn}_a(I)$ is equal to $+1$ (resp. $-1$) when the arrow $I$ starts from (resp. ends at) the node $a$, and $0$ otherwise. As an overall U(1) symmetry decouples, the total number of the vertex constraints is $|Q_0|-1$. Together with the $|Q_0|+1$ loop constraints, we are then left with two independent parameters denoted as $\epsilon_{1,2}$. Along with the R-symmetry, they parametrize the $\text{U}(1)^3$ isometry of the toric CY$_3$. Sometimes, it would also be convenient to introduce a third parameter $\epsilon_3$ such that $\epsilon_1+\epsilon_2+\epsilon_3=0$. This condition then specializes to the sub-torus $T^2$ that preserves the volume form.

Therefore, the quiver Yangian constructed therefrom is a two-parameter algebra. The general rule of the parameter assignment to the arrows is summarized in Figure \ref{epsilongencon}.
\begin{figure}[h]
	\centering
	\includegraphics[width=12cm]{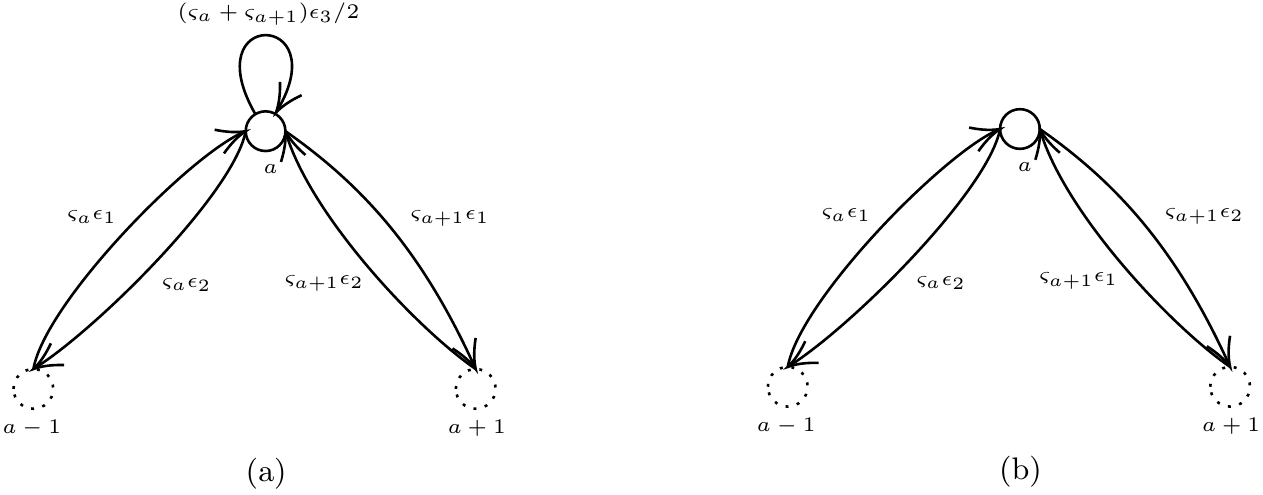}
	\caption{The values of $\widetilde{\epsilon}_I$ associated to bifundamentals and adjoints subject to the loop and vertex constraints for generalized conifolds. We have (a) $\varsigma_a=\varsigma_{a+1}$ and (b) $\varsigma_a=-\varsigma_{a+1}$, where $\epsilon_{1,2,3}$ are parameters of the quiver Yangian.}\label{epsilongencon}
\end{figure}

To write down the definition of the quiver Yangian, we need to set up some notations. In physics literature, the super bracket is often denoted as $[\text{-},\text{-}\}$. However, for convenience, we shall simply use $[x,y]=xy-(-1)^{|x||y|}yx$ here, where $|x|\in\{0,1\}$ indicates the Bose-Fermi statistics/$\mathbb{Z}_2$-grading of $x$. For any two nodes $a,b\in Q_0$, write $\{a\rightarrow b\}$ as the set of arrows from $a$ to $b$. The $k^\text{th}$ symmetric sum of $\widetilde{\epsilon}_I$ for all $I\in\{a\rightarrow b\}$ is denoted as $\sigma^{ab}_k$. As a result, $\sigma^{ab}_1$ and $\sigma^{ba}_1$ are the only non-trivial parameters when $M+N>2$ if there is an arrow from $a$ to $b$. Otherwise, $\sigma^{ab}_1=\sigma^{ba}_1=0$.

\begin{definition}\label{Ydef}
	Given a quiver $Q=(Q_0,Q_1)$ and its superpotential $W$ (with $M+N>2$), the \emph{non-reduced} quiver Yangian is generated by the modes $e^{(a)}_n$, $f^{(a)}_n$ and $\psi^{(a)}_n$ ($a\in Q_0$, $n\in\mathbb{N}$) satisfying the relations
	\begin{align}
		&\left[\psi^{(a)}_n,\psi^{(b)}_m\right]=0,\label{psipsi}\\
		&\left[e^{(a)}_n,f^{(b)}_m\right]=\delta_{ab}\psi^{(a)}_{m+n},\label{ef}\\
		&\left[\psi^{(a)}_{n+1},e^{(b)}_m\right]-\left[\psi^{(a)}_n,e^{(b)}_{m+1}\right]=\sigma^{ba}_1\psi^{(a)}_ne^{(b)}_m+\sigma^{ab}_1e^{(b)}_m\psi^{(a)}_n,\label{psie}\\
		&\left[\psi^{(a)}_{n+1},f^{(b)}_m\right]-\left[\psi^{(a)}_n,f^{(b)}_{m+1}\right]=-\sigma^{ab}_1\psi^{(a)}_nf^{(b)}_m-\sigma^{ba}_1f^{(b)}_m\psi^{(a)}_n,\label{psif}\\
		&\left[e^{(a)}_{n},e^{(b)}_m\right]=\left[f^{(a)}_{n},f^{(b)}_m\right]=0\qquad(\sigma_1^{ab}=0),\label{eeff}\\
		&\left[e^{(a)}_{n+1},e^{(b)}_m\right]-\left[e^{(a)}_n,e^{(b)}_{m+1}\right]=\sigma^{ba}_1e^{(a)}_ne^{(b)}_m+(-1)^{|a||b|}\sigma^{ab}_1e^{(b)}_me^{(a)}_n\qquad(\sigma_1^{ab}\neq0),\label{ee}\\
		&\left[f^{(a)}_{n+1},f^{(b)}_m\right]-\left[f^{(a)}_n,f^{(b)}_{m+1}\right]=-\sigma^{ab}_1f^{(a)}_nf^{(b)}_m-(-1)^{|a||b|}\sigma^{ba}_1f^{(b)}_mf^{(a)}_n\qquad(\sigma_1^{ab}\neq0).\label{ff}
	\end{align}
	The generators $e^{(a)}_n$ and $f^{(a)}_n$ have the $\mathbb{Z}_2$-grading same as the corresponding node $a$ while $\psi^{(a)}_n$ is always bosonic. In the above relations, we also allow $\psi^{(a)}_{-1}:=1/(\epsilon_1+\epsilon_2)$ so that
	\begin{equation}
		\left[\psi^{(a)}_0,e^{(b)}_m\right]=\frac{1}{\epsilon_1+\epsilon_2}\left(\sigma^{ab}_1+\sigma^{ba}_1\right)e^{(b)}_m,\quad\left[\psi^{(a)}_0,f^{(b)}_m\right]=-\frac{1}{\epsilon_1+\epsilon_2}\left(\sigma^{ab}_1+\sigma^{ba}_1\right)f^{(b)}_m
	\end{equation}
    can be deduced from the $\psi e$ and $\psi f$ relations.
\end{definition}

Notice that for future convenience, we have rescaled the generators compared to the original convention in \cite{Li:2020rij} with $\mathtt{e}^{(a)}_n$, $\mathtt{f}^{(a)}_n$ and $\uppsi^{(a)}_n$. The two sets of modes are related by $\mathtt{e}^{(a)}_n=(\epsilon_1+\epsilon_2)^{1/2}e^{(a)}_n$, $\mathtt{f}^{(a)}_n=(\epsilon_1+\epsilon_2)^{1/2}f^{(a)}_n$ and $\uppsi^{(a)}_n=(\epsilon_1+\epsilon_2)\psi^{(a)}_n$ (including $\uppsi^{(a)}_{-1}$).

To correctly recover the BPS degeneracies, we also need the Serre relations.
\begin{definition}
	Given the above quiver data, the (reduced) quiver Yangian $\mathtt{Y}_{Q,W}$ is the non-reduced quiver Yangian with the Serre relations given as follows. When $MN\neq2$, we have
	\begin{align}
		&\textup{Sym}_{n_1,n_2}\left[e^{(a)}_{n_1},\left[e^{(a)}_{n_2},e^{(a\pm1)}_{m}\right]\right]=0&\qquad(|a|=0),\label{Serre1}\\
		&\textup{Sym}_{n_1,n_2}\left[e^{(a)}_{n_1},\left[e^{(a+1)}_{m_1},\left[e^{(a)}_{n_2},e^{(a-1)}_{m_2}\right]\right]\right]=0&\qquad(|a|=1)\label{Serre2},
	\end{align}
    and the same relations with all $e$ replaced by $f$. When $(M,N)=(2,1)$ (or equivalently, $(M,N)=(1,2)$), namely for the suspended pinch point (SPP), we have
    \begin{equation}
    	\begin{split}
    		&\textup{Sym}_{n_1,n_2}\textup{Sym}_{m_1,m_2}\left[e^{(0)}_{n_1},\left[e^{(2)}_{m_1},\left[e^{(0)}_{n_2},\left[e^{(2)}_{m_2},e^{(1)}_{k}\right]\right]\right]\right]\\
    		=&\textup{Sym}_{n_1,n_2}\textup{Sym}_{m_1,m_2}\left[e^{(2)}_{m_1},\left[e^{(0)}_{n_1},\left[e^{(2)}_{m_2},\left[e^{(0)}_{n_2},e^{(1)}_{k}\right]\right]\right]\right],
    	\end{split}
    \end{equation}
    and the same relation with all $e$ replaced by $f$, where the node $(1)$ is taken to be the single bosonic node.
\end{definition}
For toric CYs, as the superpotential can be unambiguously determined for a given quiver, we shall abbreviate $\mathtt{Y}_{Q,W}$ as $\mathtt{Y}_Q$ or even $\mathtt{Y}$ if it would not cause confusions. Moreover, since the quiver Yangian\footnote{In this paper, the name quiver Yangian is always referred to as the reduced one.} is always a two-parameter Yangian algebra, we will omit $\epsilon_i$ as well.

\section{Coproduct of Quiver Yangians}\label{coprod}
Following the strategy in \cite{guay2018coproduct,ueda2019affine}, a coproduct of the quiver Yangian can be obtained based on the underlying Kac-Moody superalgebra $\mathfrak{g}=A^{(1)}_{M-1,N-1}$. In the Chevalley basis, we have the generators with $\left[x^{(a)}_+,x^{(a)}_-\right]=h^{(a)}$ and $\left(x^{(a)}_+,x^{(a)}_-\right)=1$, where $(\text{-},\text{-})$ is an invariant inner product on the Kac-Moody superalgebra. Let $\varDelta=\varDelta_+\cup\varDelta_-$ be the set of roots composed of positive and negative roots. Denote the sets of real and imaginary roots as $\varDelta^\text{re}$ and $\varDelta^\text{im}$ respectively. Write $\mathfrak{g}_\alpha$ as the root space attached to the root $\alpha$, and the simple roots will be labelled as $\alpha^{(a)}$. In particular, $\varDelta^\text{re}_+=\mathring{\varDelta}_+\cup\left\{n\delta+\alpha|n\in\mathbb{Z}_+,\alpha\in\mathring{\varDelta}\right\}$ and $\varDelta^\text{im}_+=\{n\delta|n\in\mathbb{Z}_+\}$, where $\mathring{\varDelta}$ is the set of roots of the underlying Lie superalgebra with the zeroth vertex removed in the Dynkin diagram of $\mathfrak{g}$ and $\delta=\sum\limits_{a}\alpha^{(a)}$ is the minimal positive imaginary root of $\mathfrak{g}=A^{(1)}_{M-1,N-1}$. Notice that all the odd roots are isotropic (i.e., with vanishing inner product) in such case.

Following the Cartan matrix (with the first non-zero diagonal element being 2), our convention would be taken as $\sigma^{ab}_1+\sigma^{ba}_1=(\epsilon_1+\epsilon_2)\left(\alpha^{(a)},\alpha^{(b)}\right)$ for future convenience. In other words, we shall always choose $\varsigma_a=-1$ for the corresponding simplex in the toric diagram. Therefore, $\sigma^{aa}_1=-\frac{1}{2}\epsilon_3\left(\alpha^{(a)},\alpha^{(a)}\right)$. Then it is straightforward to see that there is an algebra homomorphism $\iota$ from $U(\mathfrak{g})$ to $\mathtt{Y}$ with $h^{(a)}\mapsto\psi^{(a)}_0$, $x^{(a)}_+\mapsto e^{(a)}_0$ and $x^{(a)}_-\mapsto f^{(a)}_0$. For each positive root $\alpha$, choose a basis $\left\{x^{(\alpha,k)}_+\right\}$ of $\mathfrak{g}_\alpha$ with a dual basis $\left\{x^{(\alpha,k)}_-\right\}$ of $\mathfrak{g}_{-\alpha}$ such that $\left(x^{(\alpha,k)}_+,x^{(\alpha,l)}_-\right)=\delta_{kl}$. We will also denote $e^{(\alpha,k)}=\iota\left(x^{(\alpha,k)}_+\right)$ and $f^{(\alpha,k)}=\iota\left(x^{(a,k)}_-\right)$, where $k=1,\dots,\dim\mathfrak{g}_\alpha$. When $\alpha$ is a real root, $\dim\mathfrak{g}_\alpha=1$ and we shall simply write $e^{(\alpha)}=e^{(\alpha,1)}$, $f^{(\alpha)}=f^{(\alpha,1)}$. In particular, given a simple root $\alpha^{(a)}$, we have $e^{(a)}_0=e^{\left(\alpha^{(a)}\right)}$ and $f^{(a)}_0=f^{\left(\alpha^{(a)}\right)}$.

\subsection{A Minimalistic Presentation}\label{minimalistic}
The definition of the quiver Yangian in \S\ref{QY} involves infinitely many generators. To write the coproduct, we first need to give a presentation with generators of a finite number.

As shown in \cite{Bao:2022fpk}, all the generators can in fact be inductively obtained from $e^{(a)}_0$, $f^{(a)}_0$ and $\psi^{(a)}_{0,1}$ by
\begin{align}
	&e^{(a)}_{m+1}=\frac{1}{\left(\alpha^{(a)},\alpha^{(b)}\right)}\left[\widetilde{\psi}^{(b)}_1,e^{(a)}_m\right]-\frac{\sigma^{ab}_1-\sigma^{ba}_1}{2\left(\alpha^{(a)},\alpha^{(b)}\right)}\left[\psi^{(b)}_0,e^{(a)}_m\right]=\frac{1}{\left(\alpha^{(a)},\alpha^{(b)}\right)}\left[\widetilde{\psi}^{(b)}_1,e^{(a)}_m\right]-\frac{\sigma^{ab}_1-\sigma^{ba}_1}{2}e^{(a)}_m,\nonumber\\
	&f^{(a)}_{m+1}=-\frac{1}{\left(\alpha^{(a)},\alpha^{(b)}\right)}\left[\widetilde{\psi}^{(b)}_1,f^{(a)}_m\right]+\frac{\sigma^{ab}_1-\sigma^{ba}_1}{2\left(\alpha^{(a)},\alpha^{(b)}\right)}\left[\psi^{(b)}_0,f^{(a)}_m\right]=-\frac{1}{\left(\alpha^{(a)},\alpha^{(b)}\right)}\left[\widetilde{\psi}^{(b)}_1,f^{(a)}_m\right]-\frac{\sigma^{ab}_1-\sigma^{ba}_1}{2}f^{(a)}_m,\nonumber\\
	&\psi^{(a)}_{m+1}=\left[e^{(a)}_{m+1},f^{(a)}_0\right],\nonumber\\
	\label{finitegenerators}
\end{align}
where $\widetilde{\psi}^{(b)}_1:=\psi^{(b)}_1-\frac{\epsilon_1+\epsilon_2}{2}\left(\psi^{(b)}_0\right)^2$, and the node $b$ can be taken as $a$ (resp. $a+1$) when $a$ is bosonic (resp. fermionic). Therefore, it is natural to expect that the quiver Yangian can be generated only by finitely many relations of the three sets of zero modes together with $\psi^{(a)}_1$ (or equivalently, $\widetilde{\psi}^{(a)}_1$). To confirm that this is the case, we need to show that they can recover all the defining relations of the quiver Yangian in \S\ref{QY}.

For this minimalistic presentation of the (reduced) quiver Yangians\footnote{We discuss how we can get all the generators from finitely many of them for the quiver Yangians associated to $\mathbb{C}\times\mathbb{C}^2/\mathbb{Z}_2$ and the conifold in Appendix \ref{generators2} (though their possible minimalistic presentations with finitely many relations are postponed to future study). We also mention the minimalistic presentation for $\mathbb{C}^3/(\mathbb{Z}_2\times\mathbb{Z}_2)$ therein although it is not a generalized conifold.}, we also need to exclude two special cases: (1) $(M,N)=(2,1),(1,2)$; (2) $(M,N)=(2,2)$ with only fermionic nodes\footnote{Notice that the other toric phase for $(M,N)=(2,2)$ is not excluded, where the quiver has four nodes being bosonic and fermionic alternatively, though all cases with $M=N$ will not be considered when we discuss coproducts later.}. Their quivers are depicted in Figure \ref{exclude}. We will further comment on this at the end of this subsection.
\begin{figure}[h]
	\centering
	\includegraphics[width=6cm]{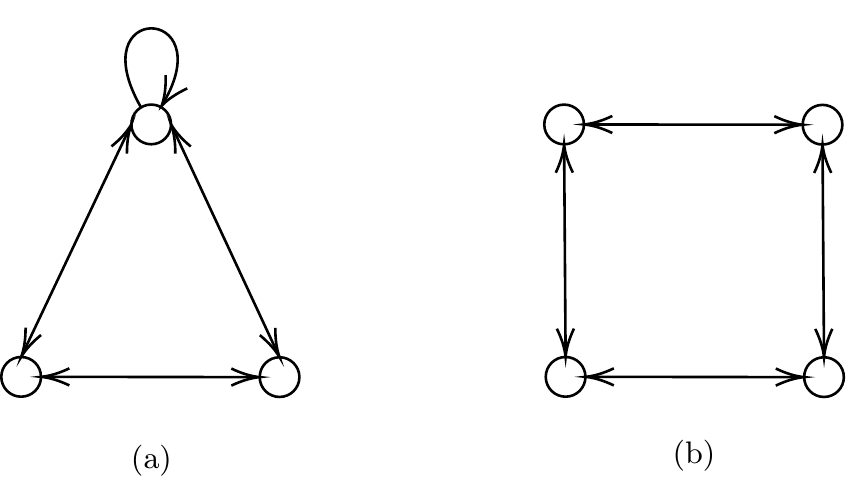}
	\caption{(a) The SPP quiver with $(M,N)=(2,1),(1,2)$. (b) The quiver for the $\mathbb{Z}_2$ orbifold of the conifold with $(M,N)=(2,2)$ in one of the toric phases.}\label{exclude}
\end{figure}

\begin{theorem}\label{minthm}
	For generalized conifolds with $M+N>2$, the non-reduced quiver Yangian is generated by the modes $e^{(a)}_r$, $f^{(a)}_r$ and $\psi^{(a)}_r$ ($a\in Q_0$, $r=0,1$) satisfying the relations
	\begin{align}
		&\left[\psi^{(a)}_r,\psi^{(b)}_s\right]=0,\label{psipsimin}\\
		&\left[e^{(a)}_0,f^{(b)}_0\right]=\delta_{ab}\psi^{(a)}_0,\quad\left[e^{(a)}_1,f^{(b)}_0\right]=\left[e^{(a)}_0,f^{(b)}_1\right]=\delta_{ab}\psi^{(a)}_1,\label{efmin}\\
		&\left[\psi^{(a)}_0,e^{(b)}_r\right]=\left(\alpha^{(a)},\alpha^{(b)}\right)e^{(b)}_r,\label{psi0emin}\\
		&\left[\psi^{(a)}_1,e^{(b)}_0\right]=\left(\alpha^{(a)},\alpha^{(b)}\right)e^{(b)}_1+\sigma^{ba}_1\psi^{(a)}_0e^{(b)}_0+\sigma^{ab}_1e^{(b)}_0\psi^{(a)}_0,\label{psi1emin}\\
		&\left[\psi^{(a)}_0,f^{(b)}_r\right]=-\left(\alpha^{(a)},\alpha^{(b)}\right)f^{(b)}_r,\label{psi0fmin}\\
		&\left[\psi^{(a)}_1,f^{(b)}_0\right]=-\left(\alpha^{(a)},\alpha^{(b)}\right)f^{(b)}_1+\sigma^{ab}_1\psi^{(a)}_0f^{(b)}_0+\sigma^{ba}_1f^{(b)}_0\psi^{(a)}_0,\label{psi1fmin}\\
		&\left[e^{(a)}_0,e^{(b)}_0\right]=\left[f^{(a)}_0,f^{(b)}_0\right]=0\qquad(\sigma_1^{ab}=0),\label{eeffmin}\\
		&\left[e^{(a)}_1,e^{(b)}_0\right]-\left[e^{(a)}_0,e^{(b)}_1\right]=\sigma^{ba}_1e^{(a)}_0e^{(b)}_0+(-1)^{|a||b|}\sigma^{ab}_1e^{(b)}_0e^{(a)}_0,\label{eemin}\\
		&\left[f^{(a)}_1,f^{(b)}_0\right]-\left[f^{(a)}_0,f^{(b)}_1\right]=-\sigma^{ab}_1f^{(a)}_0f^{(b)}_0-(-1)^{|a||b|}\sigma^{ba}_1f^{(b)}_0f^{(a)}_0.\label{ffmin}
	\end{align}
    Then the higher modes $\psi^{(a)}_n$, $e^{(a)}_m$ and $f^{(a)}_m$ ($n>1,m>0$) are defined via \eqref{finitegenerators}. Moreover, when the quiver does \emph{not} belong to those in Figure \ref{exclude}, the quiver Yangian is generated by these relations together with
    \begin{align}
    	&\left[e^{(a)}_0,\left[e^{(a)}_0,e^{(a\pm1)}_0\right]\right]=\left[f^{(a)}_0,\left[f^{(a)}_0,f^{(a\pm1)}_0\right]\right]=0&\qquad(|a|=0),\label{Serre1min}\\
    	&\left[e^{(a)}_0,\left[e^{(a+1)}_0,\left[e^{(a)}_0,e^{(a-1)}_0\right]\right]\right]=\left[f^{(a)}_0,\left[f^{(a+1)}_0,\left[f^{(a)}_0,f^{(a-1)}_0\right]\right]\right]=0&\qquad(|a|=1).\label{Serre2min}
    \end{align}
\end{theorem}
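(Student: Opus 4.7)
The plan is a two-stage induction. First I would lift the full non-reduced defining relations of Definition \ref{Ydef} from the low-mode relations \eqref{psipsimin}--\eqref{ffmin} together with the recursive formulas \eqref{finitegenerators}; then, for the reduced statement, I would lift the zero-mode Serre relations \eqref{Serre1min}, \eqref{Serre2min} to all modes. The overall logic closely mimics the recursive presentations of the affine and super-affine Yangians in \cite{guay2018coproduct,ueda2019affine}, adapted to the $A^{(1)}_{M-1,N-1}$-type quiver Yangian.

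Before running the main induction, I would verify that \eqref{finitegenerators} is unambiguous, i.e.\ that different allowed choices of the auxiliary node $b$ (forced to be $a$ when $a$ is bosonic, but permitted to be either $a+1$ or $a-1$ when $a$ is fermionic) yield the same higher mode. This reduces to a short bracket identity deducible from \eqref{psi0emin}--\eqref{psi1fmin} together with the sign conventions fixed in \S\ref{QY}; the cases excluded in Figure \ref{exclude} are exactly those where this consistency check degenerates and \eqref{finitegenerators} does not afford a clean mode-raising device.

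The core induction runs on the mode grading. Using \eqref{finitegenerators}, every occurrence of $e^{(a)}_{m+1}$, $f^{(a)}_{m+1}$ or $\psi^{(a)}_{m+1}$ is rewritten as a commutator of $\widetilde{\psi}^{(b)}_1$ with a generator of strictly lower mode; a standard Jacobi manipulation then reduces each of \eqref{psipsi}--\eqref{ff} to its base case in \eqref{psipsimin}--\eqref{ffmin} plus lower-mode instances already supplied by the induction hypothesis. I would handle the six families in the order $\psi\psi$, $\psi e$, $\psi f$, $ef$, $ee$, $ff$, since each subsequent family reuses the previous ones inside the Jacobi rearrangement. The $\psi\psi$ commutativity is easiest: it follows by bracketing \eqref{psipsimin} with $\widetilde{\psi}^{(c)}_1$ and invoking \eqref{psi1emin}, \eqref{efmin}. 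The $ef$ relation in turn reduces to $[e^{(a)}_1, f^{(b)}_0] = \delta_{ab}\psi^{(a)}_1$ together with the $\psi e$, $\psi f$ relations just established, and the $ee$ (resp.\ $ff$) family is bootstrapped from \eqref{eemin} (resp.\ \eqref{ffmin}) by a single bracket with $\widetilde{\psi}^{(c)}_1$.

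For the Serre relations, I would bracket \eqref{Serre1min}, \eqref{Serre2min} repeatedly with $\widetilde{\psi}^{(b)}_1$, choosing $b$ so that \eqref{finitegenerators} promotes one of the interior factors $e^{(\cdot)}_{n_i}$ to $e^{(\cdot)}_{n_i+1}$; each such bracket produces the desired Serre expression with one mode raised, modulo correction terms that are either previously established Serre relations or immediate consequences of \eqref{ee}, \eqref{ff}. I expect this propagation step to be the main obstacle, since tracking the commutator corrections is combinatorially delicate, especially for the four-fold fermionic relation \eqref{Serre2min}, where several inner Jacobi rearrangements are required before the raised-mode Serre expression isolates. This is also precisely where the excluded quivers in Figure \ref{exclude} fail: for the SPP the genuine Serre relations are of fifth order rather than the three- or four-fold form \eqref{Serre1min}, \eqref{Serre2min}, and for the all-fermionic toric phase of $(M,N)=(2,2)$ there is no bosonic node to serve as a bootstrap, so the induction cannot close in the uniform fashion above. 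Once all of \eqref{psipsi}--\eqref{ff} together with \eqref{Serre1}, \eqref{Serre2} are recovered, the minimalistic presentation matches that of $\mathtt{Y}_{Q,W}$ and the theorem follows.
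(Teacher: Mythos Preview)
Your overall strategy---lifting each family of relations by bracketing with $\widetilde{\psi}^{(c)}_1$ and inducting on the total mode degree---is the same as the paper's, and is the right approach. But your sketch mislocates the obstruction for the quivers in Figure~\ref{exclude} and, relatedly, understates what the Serre step actually requires.

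The recursion \eqref{finitegenerators} does not degenerate for the excluded quivers; it works for all $M+N>2$, and the non-reduced half of the theorem carries no exclusion at all. The obstruction lies entirely in lifting the Serre relations, and the mechanism is not the one you describe. Bracketing a Serre expression with a single $\widetilde{\psi}^{(c)}_1$ does not raise one chosen interior mode ``modulo previously established corrections'': it yields a \emph{linear combination} of Serre-type expressions, each with a different mode raised by one, all sitting at the same new induction level and none yet known. One must therefore bracket with several choices of $c$ at once and solve the resulting linear system. For the bosonic relation, $c=a$ and $c=a\pm1$ give two equations and the $n_1\leftrightarrow n_2$ symmetry closes the system. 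For the fermionic relation the paper also needs $c=a+2$, and that extra equation is useful only because $a+2$ and $a-1$ are non-adjacent, so that $\sigma_1^{a+2,a-1}=0$. The excluded cases are precisely those where too few nodes are available to produce enough independent equations: for SPP the five-mode Serre relation outstrips the three nodes, and for the all-fermionic $(2,2)$ phase the $c=a+2$ equation coincides with one already obtained. Your ``no bosonic node to serve as a bootstrap'' is not the operative reason. (The same linear-system device already appears when lifting the $ee$ and $ff$ relations: when both $a,b$ are fermionic with $\sigma_1^{ab}=0$, brackets with $\widetilde{\psi}^{(a)}_1$ and $\widetilde{\psi}^{(b)}_1$ trivialize and one must pass to neighbouring nodes instead.)
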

In terms of $\widetilde{\psi}^{(a)}_1$, the $\psi_1e_0$ and $\psi_1f_0$ relations can be written as
\begin{equation}
	\begin{split}
		&\left[\widetilde{\psi}^{(a)}_1,e^{(b)}_0\right]=\left(\alpha^{(a)},\alpha^{(b)}\right)e^{(b)}_1+\frac{\sigma^{ba}_1-\sigma^{ab}_1}{2}\left(\alpha^{(a)},\alpha^{(b)}\right)e^{(b)}_0,\\
		&\left[\widetilde{\psi}^{(a)}_1,f^{(b)}_0\right]=-\left(\alpha^{(a)},\alpha^{(b)}\right)f^{(b)}_1-\frac{\sigma^{ba}_1-\sigma^{ab}_1}{2}\left(\alpha^{(a)},\alpha^{(b)}\right)f^{(b)}_0.
	\end{split}\label{psitildeefmin}
\end{equation}
For simplicity, we shall denote all the relations \eqref{psipsimin}$\sim$\eqref{ffmin} as (R) and the relations \eqref{Serre1min}, \eqref{Serre2min} as (S).
\begin{proof}
	The proof is similar to the ones in \cite{guay2018coproduct,ueda2019affine}. Essentially, we need to show that the defining relations in \S\ref{QY} can be deduced from (R) for the non-reduced quiver Yangians and also (S) for the reduced ones. This follows from Lemma \ref{psi01eflemma}$\sim$Lemma \ref{Serrelemma} below.
\end{proof}

\begin{lemma}
	For $m\in\mathbb{N}$, we have
	\begin{align}
		&\left[\psi^{(a)}_0,e^{(b)}_m\right]=\left(\alpha^{(a)},\alpha^{(b)}\right)e^{(b)}_m,\\
		&\left[\psi^{(a)}_1,e^{(b)}_m\right]=\left(\alpha^{(a)},\alpha^{(b)}\right)e^{(b)}_{m+1}+\sigma^{ba}_1\psi^{(a)}_0e^{(b)}_m+\sigma^{ab}_1e^{(b)}_m\psi^{(a)}_0,
	\end{align}
    and similar relations for $f^{(b)}_m$ from \textup{(R)}.\label{psi01eflemma}
\end{lemma}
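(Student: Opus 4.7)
My plan is to prove both statements simultaneously by induction on $m$, using the recursive definition \eqref{finitegenerators} of the higher $e$--modes and the fact that the Cartan generators pairwise commute by \eqref{psipsimin}.

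\textbf{Base case.} For $m=0$, the first identity is exactly \eqref{psi0emin} and the second is exactly \eqref{psi1emin}, which are part of the defining relations (R).

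\textbf{Induction step.} Assume both identities hold for some $m\geq 0$ and all $a,b\in Q_0$. To pass from $m$ to $m+1$, I use the presentation
\begin{equation*}
e^{(b)}_{m+1}=\frac{1}{\left(\alpha^{(b)},\alpha^{(c)}\right)}\left[\widetilde{\psi}^{(c)}_1,e^{(b)}_m\right]-\frac{\sigma^{bc}_1-\sigma^{cb}_1}{2}\,e^{(b)}_m,
\end{equation*}
where $c=b$ if $|b|=0$ and $c=b+1$ if $|b|=1$, so that $(\alpha^{(b)},\alpha^{(c)})\neq 0$. Equivalently,
\begin{equation*}
\left[\widetilde{\psi}^{(c)}_1,e^{(b)}_m\right]=\left(\alpha^{(b)},\alpha^{(c)}\right)e^{(b)}_{m+1}+\frac{\left(\alpha^{(b)},\alpha^{(c)}\right)\!\left(\sigma^{bc}_1-\sigma^{cb}_1\right)}{2}\,e^{(b)}_m.
\end{equation*}

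For the $\psi_0 e$ identity, I apply $\mathrm{ad}\,\psi^{(a)}_0$ to this expression for $e^{(b)}_{m+1}$. Since $\psi^{(a)}_0$ commutes with $\widetilde{\psi}^{(c)}_1$ by \eqref{psipsimin}, the Jacobi identity gives $[\psi^{(a)}_0,[\widetilde{\psi}^{(c)}_1,e^{(b)}_m]]=[\widetilde{\psi}^{(c)}_1,[\psi^{(a)}_0,e^{(b)}_m]]$, so the eigenvalue $(\alpha^{(a)},\alpha^{(b)})$ from the inductive hypothesis passes straight through and the claim for $m+1$ follows immediately.

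For the $\psi_1 e$ identity, the same Jacobi trick applies because $\psi^{(a)}_1$ also commutes with $\widetilde{\psi}^{(c)}_1$. Substituting the inductive hypothesis for $[\psi^{(a)}_1,e^{(b)}_m]$ inside, then using that $\widetilde{\psi}^{(c)}_1$ commutes with $\psi^{(a)}_0$ (so it moves freely past the $\psi^{(a)}_0 e^{(b)}_m$ and $e^{(b)}_m\psi^{(a)}_0$ factors), and then replacing each occurrence of $[\widetilde{\psi}^{(c)}_1,e^{(b)}_m]$ and $[\widetilde{\psi}^{(c)}_1,e^{(b)}_{m+1}]$ by the formula displayed above, yields an expression of the form
\begin{equation*}
\bigl(\alpha^{(a)},\alpha^{(b)}\bigr)e^{(b)}_{m+2}+\sigma^{ba}_1\psi^{(a)}_0 e^{(b)}_{m+1}+\sigma^{ab}_1 e^{(b)}_{m+1}\psi^{(a)}_0+\left(\tfrac{\sigma^{bc}_1-\sigma^{cb}_1}{2}\right)X,
\end{equation*}
where $X$ is precisely the right-hand side of the inductive identity at level $m$. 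The residual shift term $-\tfrac{\sigma^{bc}_1-\sigma^{cb}_1}{2}[\psi^{(a)}_1,e^{(b)}_m]$ coming from the second summand of $e^{(b)}_{m+1}$ cancels this $X$-contribution exactly, leaving the desired formula at level $m+1$.

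The one thing to watch is that the statement is required for \emph{every} index $a$, whereas the recursion for $e^{(b)}_{m+1}$ only invokes the single auxiliary index $c=c(b)$; the induction nevertheless goes through because the Jacobi computations above treat $a$ as a free parameter and rely only on the commutations $[\psi^{(a)}_{0,1},\widetilde{\psi}^{(c)}_1]=0$ and on the inductive hypothesis taken with that same $a$. The $f$-analogues are obtained by the symmetric argument using the second line of \eqref{finitegenerators} and \eqref{psi0fmin}--\eqref{psi1fmin}, with the sign changes absorbed into the parameter swaps already present in those defining relations.
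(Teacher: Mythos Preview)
Your proof is correct and follows essentially the same approach as the paper: induction on $m$ using the recursive definition \eqref{finitegenerators} together with the Jacobi identity and the commutativity \eqref{psipsimin} of the Cartan generators. The paper only spells out the $\psi_0 e$ step explicitly and defers the $\psi_1 e$ step with the remark that it is done ``in the same manner using $\widetilde{\psi}^{(a)}_1$'', whereas you carry out the $\psi_1 e$ computation directly; this is a minor stylistic difference, not a different method.
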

It would be helpful to also spell out these relations using $\widetilde{\psi}^{(a)}_1$:
\begin{equation}
	\begin{split}
		&\left[\widetilde{\psi}^{(a)}_1,e^{(b)}_m\right]=\left(\alpha^{(a)},\alpha^{(b)}\right)e^{(b)}_{m+1}+\frac{\sigma^{ba}_1-\sigma^{ab}_1}{2}\left(\alpha^{(a)},\alpha^{(b)}\right)e^{(b)}_m,\\
		&\left[\widetilde{\psi}^{(a)}_1,f^{(b)}_m\right]=-\left(\alpha^{(a)},\alpha^{(b)}\right)f^{(b)}_{m+1}-\frac{\sigma^{ba}_1-\sigma^{ab}_1}{2}\left(\alpha^{(a)},\alpha^{(b)}\right)f^{(b)}_m.
	\end{split}
\end{equation}
\begin{proof}
	This can be proven by induction. When $m=0$, this is the given \eqref{psi0emin}. Now suppose both the $\psi_0e_m$ and $\psi_1e_m$ relations hold for $m=k$. Then
	\begin{equation}
		\begin{split}
			\left[\psi^{(a)}_0,e^{(b)}_{k+1}\right]=&\left[\psi^{(a)}_0,\frac{1}{\left(\alpha^{(c)},\alpha^{(b)}\right)}\left[\widetilde{\psi}^{(c)}_1,e^{(b)}_k\right]-\frac{\sigma^{bc}_1-\sigma^{cb}_1}{2}e^{(b)}_k\right]\\
			=&\frac{1}{\left(\alpha^{(c)},\alpha^{(b)}\right)}\left[\widetilde{\psi}^{(c)}_1,\left[\psi^{(a)}_0,e^{(b)}_k\right]\right]-\frac{\sigma^{bc}_1-\sigma^{cb}_1}{2}\left[\psi^{(a)}_0,e^{(b)}_k\right]\\
			=&\left(\alpha^{(a)},\alpha^{(b)}\right)e^{(b)}_{k+1},
		\end{split}
	\end{equation}
    where the second equality follows from the fact that $\psi^{(a)}_0$ commutes with $\widetilde{\psi}^{(c)}_1$, and we have used the $\psi_0e_m$ relation at $m=k$ followed by expanding the only remaining bracket to obtain the third equality. The $\psi_1e_m$ relation, as well as the $\psi_{0,1}f_m$ relations, can be shown in the same manner using $\widetilde{\psi}^{(a)}_1$.
\end{proof}

With this useful lemma prepared, let us first show the $\psi\psi$ and $ef$ relations.
\begin{lemma}
	For $m,n\in\mathbb{N}$, we have \eqref{psipsi} and \eqref{ef} from \textup{(R)}.
\end{lemma}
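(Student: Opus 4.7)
The plan is to prove both relations by induction, leveraging the recursive definitions \eqref{finitegenerators} of higher modes, Lemma \ref{psi01eflemma} (for how $\widetilde{\psi}^{(c)}_1$ acts on $e^{(b)}_m$ and $f^{(b)}_m$), the commutativity of $\widetilde{\psi}^{(c)}_1$'s among themselves from \eqref{psipsimin}, and the super Jacobi identity. I would establish the $ef$ relation \eqref{ef} first, since it supplies the identification $\psi^{(a)}_{n+m}=[e^{(a)}_n,f^{(a)}_0]$ needed to manipulate $\psi\psi$ brackets.

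For the $ef$ relation I would induct on $n+m$, with bases $n+m\leq 1$ given by \eqref{efmin}. The inductive step naturally splits in two. First, I would show the shift invariance
\begin{equation*}
[e^{(a)}_{n+1},f^{(b)}_m]=[e^{(a)}_n,f^{(b)}_{m+1}]
\end{equation*}
by reducing $e^{(a)}_{n+1}$ via \eqref{finitegenerators}, applying super Jacobi, and using the $\widetilde{\psi}^{(c)}_1$-versions in \eqref{psitildeefmin} to expand $[\widetilde{\psi}^{(c)}_1,f^{(b)}_m]$; after substituting the induction hypothesis for $[e^{(a)}_n,f^{(b)}_m]$, the parallel recursion for $f^{(b)}_{m+1}$ brings both sides into agreement, provided the auxiliary node $c$ is chosen compatibly for the two reductions. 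Telescoping then lets me rewrite any $[e^{(a)}_n,f^{(b)}_m]$ with $n+m=N+1$ as $[e^{(a)}_0,f^{(b)}_{N+1}]$, which is evaluated by an auxiliary induction on the right index: for $a=b$ this reproduces the defining recursion $\psi^{(a)}_{N+1}=[e^{(a)}_{N+1},f^{(a)}_0]$ (after one further telescope across the $e$ argument), while for $a\neq b$ all terms vanish by the induction hypothesis.

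For the $\psi\psi$ relation \eqref{psipsi}, once the $ef$ relation is in hand I would use $\psi^{(a)}_n=[e^{(a)}_n,f^{(a)}_0]$ and induct on $n+m$, with base cases $n,m\in\{0,1\}$ in \eqref{psipsimin}. Expanding
\begin{equation*}
[\psi^{(a)}_n,\psi^{(b)}_m]=\bigl[[e^{(a)}_n,f^{(a)}_0],\psi^{(b)}_m\bigr]
\end{equation*}
via super Jacobi and reducing $\psi^{(b)}_m$ recursively through $\psi^{(b)}_m=[e^{(b)}_m,f^{(b)}_0]$ combined with \eqref{finitegenerators}, the resulting nested commutators land either on Lemma \ref{psi01eflemma} (giving explicit $\widetilde{\psi}_1$-shifts of $e^{(a)}_k,f^{(a)}_k$), on the $ef$ relation (collapsing $[e^{(b)}_k,f^{(a)}_0]$ to $\delta_{ab}\psi^{(a)}_k$), or on the one-mode commutativity of $\widetilde{\psi}^{(c)}_1$'s from \eqref{psipsimin}. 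The induction hypothesis absorbs any $[\psi^{(a)}_{n'},\psi^{(b)}_{m'}]$ with $n'+m'<n+m$ that appears, and all the remaining pieces cancel to zero.

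The main obstacle is the telescoping step in the $ef$ proof: verifying that the two recursive expansions of $[e^{(a)}_{n+1},f^{(b)}_m]$ and $[e^{(a)}_n,f^{(b)}_{m+1}]$ produce matching expressions after invoking Lemma \ref{psi01eflemma}, which requires careful tracking of the $\mathbb{Z}_2$-grading signs and of the asymmetric shifts $\sigma^{ab}_1-\sigma^{ba}_1$ entering \eqref{finitegenerators}, plus a judicious choice of the auxiliary node $c$ that works simultaneously for the reductions of $e^{(a)}_{n+1}$ and $f^{(b)}_{m+1}$. Once this step is settled, the rest of the argument proceeds essentially mechanically.
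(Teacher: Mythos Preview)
Your proposal is essentially the paper's approach: both proceed by induction on $n+m$, both derive the shift identity $[e^{(a)}_{n+1},f^{(b)}_m]=[e^{(a)}_n,f^{(b)}_{m+1}]$ by bringing $\widetilde{\psi}^{(c)}_1$ into play (you via the recursive definition \eqref{finitegenerators}, the paper by applying $\mathrm{ad}_{\widetilde{\psi}^{(c)}_1}$ directly to the induction hypothesis $[e^{(a)}_l,f^{(a)}_k]=\psi^{(a)}_{l+k}$), and both then telescope.

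Two small differences are worth noting. First, the paper anchors the telescoped chain at $[e^{(a)}_{p+1},f^{(a)}_0]=\psi^{(a)}_{p+1}$, which is literally the definition in \eqref{finitegenerators}, whereas you anchor at $[e^{(a)}_0,f^{(a)}_{p+1}]$ and run an extra auxiliary induction; the paper's choice is simply more economical. Second, the paper runs a single simultaneous induction for \eqref{psipsi} and \eqref{ef}, using that $[\widetilde{\psi}^{(c)}_1,\psi^{(a)}_{l+k}]=0$ is part of the $\psi\psi$ hypothesis at the current level; your plan to prove \eqref{ef} entirely first and only then \eqref{psipsi} hides the fact that your shift-invariance step already needs $[\widetilde{\psi}^{(c)}_1,\psi^{(a)}_{n+m}]=0$, which is precisely the $\psi_1\psi$ case of \eqref{psipsi}. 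This is not a fatal gap—interleaving the two claims as the paper does resolves it—but your write-up should reflect that the two relations are proved together rather than sequentially.
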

\begin{proof}
	Again, we prove this by induction. When $n=m=0$, they are the given \eqref{psipsimin} and \eqref{efmin}. Now suppose the relations hold when $n+m=p=l+k$ for any such $m,n$. Then we can consider the commutation relation of $\widetilde{\psi}^{(c)}_1$ with the (super) commutators in question. For $e^{(a)}_l$ and $f^{(a)}_k$, we have
	\begin{equation}
		0=\left[\widetilde{\psi}^{(c)}_1,\psi^{(a)}_{l+k}\right]=\left[\widetilde{\psi}^{(c)}_1,\left[e^{(a)}_l,f^{(a)}_k\right]\right],
	\end{equation}
    where $c$ is taken to be $a$ if $|a|=0$ and $a+1$ otherwise. Expanding the brackets and using Lemma \ref{psi01eflemma}, this becomes
    \begin{equation}
    	(\alpha^{(a)},\alpha^{(b)})\left(\left[e^{(a)}_{l+1}-f^{(a)}_k\right]-\left[e^{(a)}_l-f^{(a)}_{k+1}\right]\right)=0.
    \end{equation}
    Therefore,
    \begin{equation}
    	\left[e^{(a)}_{l+1},f^{(a)}_k\right]=\left[e^{(a)}_l,f^{(a)}_{k+1}\right].
    \end{equation}
    We first take $l=p$ and $k=0$ and get
    \begin{equation}
    	\left[e^{(a)}_p,f^{(a)}_1\right]=\left[e^{(a)}_{p+1},f^{(a)}_0\right]=\psi^{(a)}_{p+1},
    \end{equation}
    where the second equality follows from our definition of the higher modes as in \eqref{finitegenerators}. Next, we can take $l=p-1$ and $k=1$ to get the one for $e^{(a)}_{p-1}$ and $f^{(a)}_2$. Then all relations for the possible $(l,k)$ with $l+k+1=p+1$ can be obtained in such way.
    
    Likewise, for the $ef$ relation with $a\neq b$, we consider
    \begin{equation}
    	0=\left[\widetilde{\psi}^{(c)}_1,\left[e^{(a)}_l,f^{(b)}_k\right]\right].
    \end{equation}
    Here, $c$ can be choosen such that $\sigma^{ac}_1\neq0$ and/or $\sigma^{bc}_1\neq0$. Again, after using Lemma \ref{psi01eflemma}, we can reach the relation for any $l+k+1=p+1$. This proves the $ef$ relation. The $\psi\psi$ relation can be shown using $\widetilde{\psi}^{(c)}_1$ in the similar way.
\end{proof}

Now, let us show the $\psi e$, $\psi f$, $ee$ and $ff$ relations.
\begin{lemma}
	For $m,n\in\mathbb{N}$, we have \eqref{psie}$\sim$\eqref{ff} from \textup{(R)}.
\end{lemma}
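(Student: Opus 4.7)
The plan is a joint induction on the total weight $n+m$ running across all four families of relations simultaneously, since they are intertwined. For instance, proving \eqref{psie} at mode $(n+1, m)$ via the definition $\psi^{(a)}_{n+1}=[e^{(a)}_{n+1}, f^{(a)}_0]$ and super-Jacobi produces a bracket $[e^{(a)}_{n+1}, e^{(b)}_m]$, which is an instance of the $ee$ relation; conversely, substituting the recursion \eqref{finitegenerators} for $e^{(a)}_{n+1}$ inside an $ee$ bracket forces one to commute $\widetilde{\psi}^{(c)}_1$ past $e^{(b)}_m$, which invokes \eqref{psie}. Thus every inductive step appeals to hypotheses of lower total weight for several of the relations at once. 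The base case $n=m=0$ of each relation is either directly one of the minimalistic (R) relations or an immediate consequence of Lemma \ref{psi01eflemma}.

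For \eqref{psie} and \eqref{psif}, I would propagate the relation along $m$ by the $\widetilde{\psi}^{(c)}_1$-commutator trick. Writing $X(n,m)$ for the LHS minus RHS of \eqref{psie}, and using $[\widetilde{\psi}^{(c)}_1, \psi^{(a)}_r]=0$ together with Lemma \ref{psi01eflemma}, a direct computation yields
\begin{equation*}
[\widetilde{\psi}^{(c)}_1, X(n,m)] = (\alpha^{(c)}, \alpha^{(b)})\, X(n,m+1) + \tfrac{\sigma^{bc}_1-\sigma^{cb}_1}{2}(\alpha^{(c)}, \alpha^{(b)})\, X(n,m).
\end{equation*}
Choosing $c$ with $(\alpha^{(c)}, \alpha^{(b)})\neq 0$, which is always possible for $M+N>2$ since every node of the affine $\mathfrak{sl}_{M|N}$ Dynkin diagram has a neighbor of non-trivial pairing, propagates $X(n,m)=0$ along the $m$-direction once the $m=0$ slice is known. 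To obtain $X(n,0)=0$ for $n\geq 1$, I would rewrite $\psi^{(a)}_n=[e^{(a)}_n, f^{(a)}_0]$ via \eqref{finitegenerators} and apply super-Jacobi, reducing $[\psi^{(a)}_n, e^{(b)}_0]$ to an $ef$ bracket (already proven) and an $[e^{(a)}_n, e^{(b)}_0]$ bracket, the latter available by the inductive hypothesis on the $ee$ relation at weight $n$. The same scheme with $e\leftrightarrow f$ handles \eqref{psif}.

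For \eqref{eeff}, \eqref{ee} and \eqref{ff}, I would substitute the recursion \eqref{finitegenerators} for $e^{(a)}_{n+1}$ inside $[e^{(a)}_{n+1}, e^{(b)}_m]$ and expand by super-Jacobi. This produces $[\widetilde{\psi}^{(c)}_1, [e^{(a)}_n, e^{(b)}_m]]$, which falls under the inductive hypothesis on $ee$, and $[e^{(a)}_n, [\widetilde{\psi}^{(c)}_1, e^{(b)}_m]]$, which is rewritten via Lemma \ref{psi01eflemma} as a combination of $[e^{(a)}_n, e^{(b)}_{m+1}]$ and $[e^{(a)}_n, e^{(b)}_m]$. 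Collecting the coefficients of $\sigma^{ab}_1$, $\sigma^{ba}_1$ and the $\mathbb{Z}_2$-grading sign $(-1)^{|a||b|}$ from super-Jacobi reproduces exactly the asymmetric RHS of \eqref{ee}; when $\sigma^{ab}_1=0$ all these coefficients vanish and \eqref{eeff} follows. The main obstacle I anticipate is the careful bookkeeping of the two $\sigma$-coefficients and the super-signs through repeated Jacobi manipulations, together with verifying that the outcome is independent of the auxiliary neighbor $c$ used in \eqref{finitegenerators}; the latter independence is itself implied by the $\psi e$ relations being proved in parallel, which is why a joint induction rather than a sequential one is essential.
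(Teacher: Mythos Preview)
Your $m$-propagation of the $\psi e$ relation via $\mathrm{ad}_{\widetilde{\psi}^{(c)}_1}$ is correct and is exactly the mechanism the paper uses. The gap is in your treatment of the $ee$ relation (and, by the same token, of the $n$-direction for $\psi e$).

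When you substitute the recursion \eqref{finitegenerators} for $e^{(a)}_{n+1}$ into $[e^{(a)}_{n+1},e^{(b)}_m]$ and expand by super-Jacobi, the resulting term $[\widetilde{\psi}^{(c)}_1,[e^{(a)}_n,e^{(b)}_m]]$ does \emph{not} fall under any inductive hypothesis. Relation \eqref{ee} at lower weight only constrains the \emph{difference} $[e^{(a)}_{n},e^{(b)}_m]-[e^{(a)}_{n-1},e^{(b)}_{m+1}]$; it never supplies the individual element $[e^{(a)}_n,e^{(b)}_m]$, so you have no way to evaluate $\mathrm{ad}_{\widetilde{\psi}^{(c)}_1}$ on it. If you try to expand that term using Lemma~\ref{psi01eflemma} you simply regenerate $A_{ca}[e^{(a)}_{n+1},e^{(b)}_m]+\cdots$, i.e.\ the very top-weight bracket you started from, and the calculation becomes tautological. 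The same confusion appears in your $n$-direction argument for $\psi e$: the phrase ``$[e^{(a)}_n,e^{(b)}_0]$ available by the inductive hypothesis on the $ee$ relation'' presupposes a closed form that the relations do not provide; and applying $\mathrm{ad}_{f^{(a)}_0}$ to the $ee$ relation, when $a=b$, only yields the symmetrised constraint $X(n,m)+X(m,n)=0$, not $X(n,m)=0$ on its own.

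The paper's device avoids this trap: it applies $\mathrm{ad}_{\widetilde{\psi}^{(c)}_1}$ to the full relation $Z(l,k)=0$ (left side minus right side), not to a single bracket. Because $\widetilde{\psi}^{(c)}_1$ shifts both $e^{(a)}$ and $e^{(b)}$ modes, one obtains a linear identity
\[
A_{ca}\,Z(l{+}1,k)+A_{cb}\,Z(l,k{+}1)+(\text{coeff})\,Z(l,k)=0.
\]
Two choices of $c$ (typically $c=a$ and $c=b$; for the degenerate case of two non-adjacent fermionic nodes one uses $c=a\pm1,\,b\pm1$ instead) give a $2\times 2$ system that solves to $Z(l{+}1,k)=Z(l,k{+}1)=0$. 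No closed form for any individual bracket is ever needed, and the step is uniform across all the relations in the lemma.
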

\begin{proof}
	We again prove this by induction. When $n=m=0$, they are given by \eqref{psi0emin}$\sim$\eqref{ffmin}. Suppose they hold for $n=l,m=k$. We need to show that they are still true for $(l+1,k)$ and $(l,k+1)$. Consider
	\begin{equation}
		\left[\widetilde{\psi}^{(a)}_1,\left[e^{(a)}_{l+1},e^{(b)}_k\right]-\left[e^{(a)}_l,e^{(b)}_{k+1}\right]\right]=\left[\widetilde{\psi}^{(a)}_1,\sigma^{ba}_1e^{(a)}_le^{(b)}_k+(-1)^{|a||b|}e^{(b)}_ke^{(a)}_l\right],
	\end{equation}
    which follows from our assumption when $n=l,m=k$. Using Lemma \ref{psi01eflemma}, a straightforward calculation yields
    \begin{equation}
    	\begin{split}
    		&2\sigma^{aa}_1\left(\left[e^{(a)}_{l+2},e^{(b)}_k\right]-\left[e^{(a)}_{l+1},e^{(b)}_{k+1}\right]\right)+\left(\sigma^{ab}_1+\sigma^{ba}_1\right)\left(\left[e^{(a)}_{l+1},e^{(b)}_{k+1}\right]-\left[e^{(a)}_l,e^{(b)}_{k+2}\right]\right)\\
    		=&2\sigma^{aa}_1\left(\sigma^{ba}_1e^{(a)}_{l+1}e^{(b)}_k+(-1)^{|a||b|}e^{(b)}_ke^{(a)}_{l+1}\right)+\left(\sigma^{ab}_1+\sigma^{ba}_1\right)\left(\sigma^{ba}_1e^{(a)}_le^{(b)}_{k+1}+(-1)^{|a||b|}e^{(b)}_{k+1}e^{(a)}_l\right).
    	\end{split}
    \end{equation}
    Replacing $\widetilde{\psi}^{(a)}_1$ with $\widetilde{\psi}^{(b)}_1$, we obtain
    \begin{equation}
    	\begin{split}
    		&\left(\sigma^{ab}_1+\sigma^{ba}_1\right)\left(\left[e^{(a)}_{l+2},e^{(b)}_k\right]-\left[e^{(a)}_{l+1},e^{(b)}_{k+1}\right]\right)+2\sigma^{bb}_1\left(\left[e^{(a)}_{l+1},e^{(b)}_{k+1}\right]-\left[e^{(a)}_l,e^{(b)}_{k+2}\right]\right)\\
    		=&\left(\sigma^{ab}_1+\sigma^{ba}_1\right)\left(\sigma^{ba}_1e^{(a)}_{l+1}e^{(b)}_k+(-1)^{|a||b|}e^{(b)}_ke^{(a)}_{l+1}\right)+2\sigma^{bb}_1\left(\sigma^{ba}_1e^{(a)}_le^{(b)}_{k+1}+(-1)^{|a||b|}e^{(b)}_{k+1}e^{(a)}_l\right).
    	\end{split}
    \end{equation}
    When $|a||b|=0$ or when $|a||b|=1$ with $\sigma^{ab}_1\neq0$, we essentially have two linearly independent equations with solution
    \begin{equation}
    	\begin{split}
    		&\left(\left[e^{(a)}_{l+2},e^{(b)}_k\right]-\left[e^{(a)}_{l+1},e^{(b)}_{k+1}\right]\right)-\left(\sigma^{ba}_1e^{(a)}_{l+1}e^{(b)}_k+(-1)^{|a||b|}e^{(b)}_ke^{(a)}_{l+1}\right)=0,\\
    		&\left(\left[e^{(a)}_{l+1},e^{(b)}_{k+1}\right]-\left[e^{(a)}_l,e^{(b)}_{k+2}\right]\right)-\left(\sigma^{ba}_1e^{(a)}_le^{(b)}_{k+1}+(-1)^{|a||b|}e^{(b)}_{k+1}e^{(a)}_l\right)=0.
    	\end{split}
    \end{equation}
    When $|a||b|=1$ with $\sigma^{ab}_1=0$, the two linear equations are trivial, but we can instead use $\widetilde{\psi}^{(a\pm1)}_1$ and $\widetilde{\psi}^{(b\pm1)}_1$. This leads to the relation \eqref{ee}. The other relations can be shown in the similar way using $\widetilde{\psi}^{(a)}_1$.
\end{proof}

Now, we have proven the non-reduced quiver Yangian part in Theorem \ref{minthm}. To complete the proof for (reduced) quiver Yangians, we need to check the Serre relations.
\begin{lemma}
	For $m,m_{1,2},n_{1,2}\in\mathbb{N}$, we have \eqref{Serre1} and \eqref{Serre2} from \textup{(S)}.\label{Serrelemma}
\end{lemma}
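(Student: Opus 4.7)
The plan is to lift the mode-zero Serre relations of \textup{(S)} to all modes by induction on the total mode degree, mirroring the $\widetilde{\psi}^{(c)}_1$-bracketing technique used in the preceding lemmas. The base cases $n_1=n_2=m=0$ and $n_1=n_2=m_1=m_2=k=0$ are \textup{(S)} themselves, and the $f$-side relations follow verbatim using the second half of Lemma \ref{psi01eflemma}, so I focus on the $e$-side.

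For \eqref{Serre1} with $|a|=0$, set $S(n_1,n_2;m):=\textup{Sym}_{n_1,n_2}\bigl[e^{(a)}_{n_1},\bigl[e^{(a)}_{n_2},e^{(a\pm1)}_m\bigr]\bigr]$ and assume $S\equiv 0$ in all total degrees $\le N$. I bracket a vanishing degree-$N$ instance with $\widetilde{\psi}^{(c)}_1$, distribute by the graded Leibniz rule, and apply Lemma \ref{psi01eflemma} to each internal $e^{(b)}_r$; the same-degree scalar contributions reassemble into a scalar multiple of the original $S$ (zero by IH), while the raised contributions yield
\[
(\alpha^{(c)},\alpha^{(a)})\bigl[S(n_1{+}1,n_2;m)+S(n_1,n_2{+}1;m)\bigr]+(\alpha^{(c)},\alpha^{(a\pm1)})\,S(n_1,n_2;m{+}1)=0.
\]
Specializing $c=a$ and $c=a\pm 1$ produces a $2\times 2$ system whose determinant $2(\alpha^{(a\pm 1)},\alpha^{(a\pm 1)})-1\in\{3,-1\}$ is invertible, so both $V:=S(n_1,n_2;m{+}1)$ and the symmetric combination $U:=S(n_1{+}1,n_2;m)+S(n_1,n_2{+}1;m)$ vanish. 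The $V=0$ conclusion directly handles every degree-$(N{+}1)$ Serre expression with $m\ge 1$, and combining the $U=0$ telescoping identities at $m=0$ with the symmetry $S(p,q;m)=S(q,p;m)$ and the already-proven $ee$-relation \eqref{ee} (which shifts modes inside $[e^{(a)}_p,e^{(a)}_q]$ and, after a short Jacobi rearrangement, rewrites each remaining $S(p,q;0)$ in terms of telescoped and $m\ge 1$ pieces) pins down each surviving Serre expression at degree $N{+}1$.

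For \eqref{Serre2} with $|a|=1$, apply the same strategy to $T(n_1,n_2;m_1,m_2):=\textup{Sym}_{n_1,n_2}\bigl[e^{(a)}_{n_1},\bigl[e^{(a+1)}_{m_1},\bigl[e^{(a)}_{n_2},e^{(a-1)}_{m_2}\bigr]\bigr]\bigr]$, obtaining a $3\times 3$ system from $c\in\{a-1,a,a+1\}$ in the three raised-degree unknowns (raising one of the $n_i$, raising $m_1$, raising $m_2$); the coefficient matrix is the restricted inner product on $\{\alpha^{(a-1)},\alpha^{(a)},\alpha^{(a+1)}\}$, with off-block entry $(\alpha^{(a-1)},\alpha^{(a+1)})=0$ since those two nodes are non-adjacent in the affine cycle of $\widehat{\mathfrak{sl}}_{M|N}$ whenever $M+N\ge 4$. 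The main obstacle I anticipate is the degenerate configuration in which all three of $a-1,a,a+1$ are fermionic, so that every diagonal entry of this matrix vanishes and its rank drops; three consecutive fermionic nodes force $M+N\ge 5$ (for $M+N=4$ the only such possibility is the excluded all-fermionic phase of Figure \ref{exclude}(b)), so $a\pm 2$ are distinct nodes adjacent to $a\pm 1$ but not to $a$. Bracketing with $\widetilde{\psi}^{(a\pm 2)}_1$ then delivers the clean equations $(\alpha^{(a\pm 2)},\alpha^{(a\pm 1)})\,T(\ldots;m_{1,2}{+}1)=0$ that isolate the two $m$-raised unknowns, and the $n$-raised unknown is extracted by the same telescoping-plus-\eqref{ee} argument as in the bosonic case, completing the induction.
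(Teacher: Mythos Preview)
Your overall strategy coincides with the paper's: bracket a degree-$p$ Serre expression with $\widetilde{\psi}^{(c)}_1$ for several choices of $c$, solve the resulting linear system in the raised-mode unknowns, and iterate. For \eqref{Serre1} the two arguments are essentially identical (your determinant formula $2(\alpha^{(a\pm 1)},\alpha^{(a\pm 1)})-1$ tacitly assumes $(\alpha^{(a)},\alpha^{(a)})=+2$, but the conclusion det $\in\{3,-1\}\neq 0$ survives either sign).

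There is, however, a genuine gap in your treatment of \eqref{Serre2}. You assert that the $3\times 3$ Gram matrix on $\{\alpha^{(a-1)},\alpha^{(a)},\alpha^{(a+1)}\}$ degenerates only when all three nodes are fermionic. A direct computation (using $A_{aa}=0$ and $A_{a-1,a+1}=0$) gives $\det=-(A_{a-1,a-1}+A_{a+1,a+1})$, and when $|a|=1$ one checks from \eqref{Aabvarsigma} that $A_{a-1,a-1}=-A_{a+1,a+1}$ whenever \emph{both} $a\pm 1$ are bosonic. So the system is singular in that case too, and your fallback to $c=a\pm 2$ is needed more broadly than you allow. The paper sidesteps this by invoking $c=a+2$ \emph{unconditionally}: it first reads off $\text{Sym}(k_1{+}1,k_2)=\text{Sym}(k_1,k_2{+}1)$ from $c=a$ (that is \eqref{S2prf4}), then uses $c=a+2$ --- which for $M+N\ge 5$ sees $\alpha^{(a+1)}$ but neither $\alpha^{(a)}$ nor $\alpha^{(a-1)}$ --- to kill $\text{Sym}(k_1{+}1,k_2)$ outright (that is \eqref{S2prf6}); only then does it return to the $c=a\pm 1$ equation \eqref{S2prf7} to obtain $U_n=0$.

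A secondary point: your endgame for both relations, ``telescoping plus symmetry plus the $ee$-relation \eqref{ee}'', is too vague --- it is not clear how shifting modes inside $[e^{(a)}_p,e^{(a)}_q]$ actually closes the argument for the $m=0$ (resp.\ $k_1=k_2=0$) Serre expressions. The paper does not appeal to \eqref{ee} at all here. Instead it runs a concrete ladder: start at $l_1=l_2=l$, where by symmetry $U=2\,\text{Sym}_{l+1,l}(k)$ so $\text{Sym}_{l+1,l}(k)=0$; then at $(l{-}1,l{+}1,k)$ the sum $U$ contains one already-vanishing summand and one new one, giving $\text{Sym}_{l-1,l+2}(k)=0$; and so on. This peels off one Sym at each step without ever leaving the $\widetilde{\psi}$-bracket framework.
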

\begin{proof}
	Again, we prove this by induction. When $m,m_{1,2},n_{1,2}$ are zero, they are given in (S). First, let us consider the relation \eqref{Serre1} with $|a|=0$. For brevity, we shall write it as $\text{Sym}_{n_1,n_2}(m)=0$. Suppose this holds when $n_1+n_2+m=p=l_1+l_2+k$ for any such $n_{1,2},m$. We still consider the commutation relations with $\widetilde{\psi}^{(a)}_1$ and $\widetilde{\psi}^{(a\pm1)}_1$, and we get the equations
	\begin{equation}
		\begin{cases}
			-2\text{Sym}_{l_1+1,l_2}(k)-2\text{Sym}_{l_1,l_2+1}(k)+\text{Sym}_{l_1,l_2}(k)=0,&\\
			\text{Sym}_{l_1+1,l_2}(k)+\text{Sym}_{l_1,l_2+1}(k)-2\text{Sym}_{l_1,l_2}(k)=0.&
		\end{cases}
	\end{equation}
    Taking $l_1=l_2=l$, we have two variables for the two linearly independent equations with solution
    \begin{equation}
    	\text{Sym}_{l+1,l}(k)=\text{Sym}_{l,l}(k+1)=0.
    \end{equation}
    Next, taking $l_1=l-1$ and $l_2=l+1$, the two linear independent equations again have two variables solved by
    \begin{equation}
    	\text{Sym}_{l-1,l+2}(k)=\text{Sym}_{l-1,l+1}(k+1)=0.
    \end{equation}
    Keep this procedure, and we can prove this relation for any $l_1+l_2+k+1=p+1$.
    
    Now, let us consider the relation \eqref{Serre2} with $|a|=1$. For brevity, we shall write it as $\text{Sym}_{n_1,n_2}(m_1,m_2)=0$. Suppose this holds when $n_1+n_2+m_1+m_2=p=l_1+l_2+k_1+k_2$ for any such $n_{1,2},m_{1,2}$. We still consider the commutation relations with $\widetilde{\psi}^{(a)}_1$ and $\widetilde{\psi}^{(a\pm1)}_1$, and we get the equations
    \begin{align}
    	&\left(\sigma^{a,a+1}_1+\sigma^{a+1,a}_1\right)\text{Sym}_{l_1,l_2}(k_1+1,k_2)+\left(\sigma^{a,a-1}_1+\sigma^{a-1,a}_1\right)\text{Sym}_{l_1,l_2}(k_1,k_2+1)=0,\label{S2prf1}\\
    	&\left(\sigma^{a,a+1}_1+\sigma^{a+1,a}_1\right)\text{Sym}_{l_1+1,l_2}(k_1,k_2)+2\sigma^{a+1,a+1}_1\text{Sym}_{l_1,l_2}(k_1+1,k_2)\nonumber\\
    	&+\left(\sigma^{a,a+1}_1+\sigma^{a+1,a}_1\right)\text{Sym}_{l_1,l_2+1}(k_1,k_2)=0,\label{S2prf2}\\
    	&\left(\sigma^{a,a-1}_1+\sigma^{a-1,a}_1\right)\text{Sym}_{l_1+1,l_2}(k_1,k_2)+2\sigma^{a-1,a-1}_1\text{Sym}_{l_1,l_2}(k_1,k_2+1)\nonumber\\
    	&+\left(\sigma^{a,a-1}_1+\sigma^{a-1,a}_1\right)\text{Sym}_{l_1,l_2+1}(k_1,k_2)=0,\label{S2prf3}
    \end{align}
    where we have used $\sigma^{a-1,a+1}_1=\sigma^{a+1,a-1}_1=0$ since the quiver (with at least one fermionic node) would always have more than three nodes here. In particular, \eqref{S2prf1} can be simplified to
    \begin{equation}
    	\text{Sym}_{l_1,l_2}(k_1+1,k_2)-\text{Sym}_{l_1,l_2}(k_1,k_2+1)=0\label{S2prf4}
    \end{equation}
    since $(a)$ is fermionic. Adding \eqref{S2prf2} and \eqref{S2prf3} together, we have
    \begin{equation}
    	2\sigma^{a+1,a+1}_1\text{Sym}_{l_1,l_2}(k_1+1,k_2)+2\sigma^{a-1,a-1}_1\text{Sym}_{l_1,l_2}(k_1,k_2+1)=0.\label{S2prf5}
    \end{equation}
    To get more linearly independent equations, we can also consider the one using $\text{ad}_{\widetilde{\psi}^{(a)}_1}$, which yields
    \begin{equation}
    	\left(\sigma^{a+1,a+2}_1+\sigma^{a+2,a+1}_1\right)\text{Sym}_{l_1,l_2}(k_1+1,k_2)+\left(\sigma^{a+2,a-1}_1+\sigma^{a-1,a+2}_1\right)\text{Sym}_{l_1,l_2}(k_1,k_2+1)=0.\label{S2prf6}
    \end{equation}
    Since $\sigma^{a+2,a-1}_1=\sigma^{a-1,a+2}_1=0$ here, this can be simplified to $\text{Sym}_{l_1,l_2}(k_1+1,k_2)=0$. Together with \eqref{S2prf5}, we find $\text{Sym}_{l_1,l_2}(k_1,k_2+1)=0$. Now, \eqref{S2prf2} becomes
    \begin{equation}
    	\text{Sym}_{l_1+1,l_2}(k_1,k_2)+\text{Sym}_{l_1,l_2+1}(k_1,k_2)=0.\label{S2prf7}
    \end{equation}
    Apply the same trick as before and take $l_1=l_2=l$, we have $\text{Sym}_{l,l+1}(k_1,k_2)=0$. Next, taking $l_1=l-1,l_2=l+1$, we get $\text{Sym}_{l-1,l+2}(k_1,k_2)=0$. Keep this procedure, and we can find this holds for any $l_{1,2},k_{1,2}$ satisfying $l_1+l_2+k_1+k_2+1=p+1$. This completes the proof for the Serre relations.
\end{proof}

Let us now make a comment on the cases in Figure \ref{exclude}. We have to exclude them in Theorem \ref{minthm} since we do not have enough nodes to get sufficiently many linearly independent equations. For Figure \ref{exclude}(a), we have five different modes in one relation, but there are only three nodes in the quiver. For Figure \ref{exclude}(b), we have four modes in one relation while considering $\text{ad}_{\widetilde{\psi}^{(a+2)}_1}$ is also not very useful since \eqref{S2prf6} would coincide with \eqref{S2prf5}. We leave these cases to future work.

\subsection{Another Presentation and Coproduct}\label{Jpresentation}
From now on, besides the restrictions $M+N>2$ and $MN\neq2$, we will mainly focus on the cases with $M\neq N$ due to the subtleties from the underlying simple Lie superalgebra $\mathfrak{psl}(M|M)$ (when $M=N$). Analogous to \cite{guay2018coproduct}, we can write an algebra homomorphism $\Delta_{V_1,V_2}:\mathtt{Y}\rightarrow\text{End}_{\mathbb{C}}(V_1\otimes V_2)$ for any modules $V_{1,2}$ in the category $\mathcal{O}$. In particular, this can be promoted to a coproduct of the Yangian algebra by considering its completion $\widehat{\mathtt{Y}}$ following the argument in \cite[\S5]{guay2018coproduct}. Then any $\Delta_{V_1,V_2}$ can be recovered from $\Delta:\mathtt{Y}\rightarrow\mathtt{Y}\widehat{\otimes}\mathtt{Y}$, where $\mathtt{Y}\widehat{\otimes}\mathtt{Y}$ is the completion of $\mathtt{Y}\otimes\mathtt{Y}$ we are now going to discuss.

The quiver Yangian has the triangular decomposition $\mathtt{Y}\cong\mathtt{Y}^+\otimes\mathtt{Y}^0\otimes\mathtt{Y}^-$, where $\mathtt{Y}^+$ ($\mathtt{Y}^-$, resp. $\mathtt{Y}^0$) is generated by $e^{(a)}_n$ ($f^{(a)}_n$, resp. $\psi^{(a)}_n$) for all $a\in Q_0$ and $n\in\mathbb{N}$ \cite{Li:2020rij}. We shall also assume that the positive (resp. negative) part $\mathtt{Y}^+$ (resp. $\mathtt{Y}^-$) is isomorphic to the free algebra on $e^{(a)}_n$ (resp. $f^{(a)}_n$) quotiented out by the $ee$ (resp. $ff$) relations. We will denote the subalgebra generated by $e^{(a)}_n$ (resp. $f^{(a)}_n$) and $\psi^{(a)}_n$ as $\mathtt{Y}^{\geq0}$ (resp. $\mathtt{Y}^{\leq0}$).

We can set a degree as $\deg e^{(a)}_n=1$ whose grading is compatible with the algebra structure. With respect to this grading, $\mathtt{Y}^+=\bigoplus\limits_{k=0}^\infty\mathtt{Y}^+_k$ with $\mathtt{Y}^+_k$ spanned by monomials of degree $k$ in $\mathtt{Y}^+$. We also write $\mathtt{Y}^+_{\geq n}:=\bigoplus\limits_{k\geq n}^\infty\mathtt{Y}^+_k$. Therefore, the quiver Yangian is a graded \emph{vector space} as $\mathtt{Y}=\bigoplus\limits_{k=0}^\infty\mathtt{Y}_k$, where $\mathtt{Y}_k=\mathtt{Y}^{\leq0}\otimes\mathtt{Y}^+_k$. Now consider the pair $(A_n,q_n)$ for $n\in\mathbb{N}$ with the left $\mathtt{Y}$-module $A_n:=\mathtt{Y}/\left(\mathtt{Y}\cdot\mathtt{Y}^+_{\geq n}\right)$ and the natural quotient map $q_n$ from $\mathtt{Y}$ to $A_n$. Then $q_{n-1}$ factors through $A_n$, that is, $p_n\circ q_n=q_{n-1}$ with the homomorphism $p_n:A_n\rightarrow A_{n-1}$. The pairs $(A_n,p_n)$ give rise to an inverse system of $\mathtt{Y}$-modules, and we can define the completion of the quiver Yangian as the projective limit \cite[\S10.1]{chari1995guide}:
\begin{equation}
	\widehat{\mathtt{Y}}:=\varprojlim_{n}A_n.
\end{equation}
We will also write $\mathtt{Y}\widehat{\otimes}\mathtt{Y}$ as the completion of $\mathtt{Y}\otimes\mathtt{Y}$.

To write down the coproduct of the quiver Yangian, we need another presentation of the algebra. Drinfeld's $J$ presentation is used for finite dimensional cases in \cite{Drinfeld:1985rx}, but can be appropriately extended to affine cases following the recipe of \cite{guay2018coproduct}. In this presentation, the quiver Yangian is generated by $x$ and $J(x)$ for $x$ elements of the underlying Kac-Moody superalgebra $\mathfrak{g}$. Together with the Chevalley generators of $\mathfrak{g}$ mapped to the zero modes of $\mathtt{Y}$ (recall the beginning of \S\ref{coprod}), the isomorphism is given by
\begin{equation}
	J\left(\psi^{(a)}_0\right)=\psi^{(a)}_1+v^{(a)},\quad J\left(e^{(a)}_0\right)=e^{(a)}_1+w^{(a)}_+,\quad
	J\left(f^{(a)}_0\right)=f^{(a)}_1+w^{(a)}_-,
\end{equation}
where\footnote{Notice that this is a well-defined operator when acting on modules in the category $\mathcal{O}$ as $e^{(\alpha,k)}$ (i.e., $x^{(\alpha,k)}_+$) annihilates a vector for $\alpha$ with sufficiently large height.} \cite{guay2018coproduct}
\begin{equation}
	v^{(a)}=\frac{1}{2}(\epsilon_1+\epsilon_2)\sum_{\alpha\in\varDelta_+}\left(\alpha,\alpha^{(a)}\right)\sum_{k=1}^{\dim\mathfrak{g}_\alpha}f^{(\alpha,k)}e^{(\alpha,k)}-\frac{1}{2}(\epsilon_1+\epsilon_2)\left(\psi^{(a)}_0\right)^2.
\end{equation}
Then $w^{(a)}_{\pm}$ can be obtained by requiring
\begin{equation}
	J\left(\left[\psi^{(a)}_0,e^{(a)}_0\right]\right)=\left[J\left(\psi^{(a)}_0\right),e^{(a)}_0\right],\quad J\left(\left[\psi^{(a)}_0,f^{(a)}_0\right]\right)=\left[J\left(\psi^{(a)}_0\right),f^{(a)}_0\right].\label{wrequirement}
\end{equation}

In general, a direct computation shows that
\begin{equation}
	\begin{split}
		&\left[J\left(\psi^{(a)}_0\right),e^{(b)}_0\right]-J\left(\left[\psi^{(a)}_0,e^{(b)}_0\right]\right)\\
		=&\frac{\epsilon_1+\epsilon_2}{2}\left[\sum_{\alpha\in\varDelta_+}\left(\alpha,\alpha^{(a)}\right)\sum_{k=1}^{\dim\mathfrak{g}_\alpha}f^{(\alpha,k)}e^{(\alpha,k)},e^{(a)}_0\right]+\frac{\sigma^{ba}_1-\sigma^{ab}_1}{2}\left(\alpha^{(a)},\alpha^{(b)}\right)e^{(b)}_0-\left(\alpha^{(a)},\alpha^{(b)}\right)w^{(b)}_+.
	\end{split}
\end{equation}
To simplify this, we need a very useful lemma \cite[Lemma 18.4.1]{musson2012lie}:
\begin{lemma}
	For $z\in\mathfrak{g}_{\beta-\alpha}$, we have
	\begin{equation}
		\sum_k\left[x^{(\beta,k)}_-,z\right]\otimes x^{(\beta,k)}_+=\sum_kx^{(\alpha,k)}_-\otimes\left[z,x^{(\alpha,k)}_+\right]
	\end{equation}
    in $\mathfrak{g}\otimes\mathfrak{g}$ and
    \begin{equation}
    	\sum_k\left[x^{(\beta,k)}_-,z\right]x^{(\beta,k)}_+=\sum_kx^{(\alpha,k)}_-\left[z,x^{(\alpha,k)}_+\right]
    \end{equation}
    in $U(\mathfrak{g})$.\label{betaalphalemma}
\end{lemma}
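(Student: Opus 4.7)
The plan is to exploit the invariance of the bilinear form $(-,-)$ together with the duality $(x^{(\alpha,k)}_+, x^{(\alpha,l)}_-) = \delta_{kl}$ to show that both sides of the claimed identity expand with the same scalar coefficients in the basis $\{x^{(\alpha,k)}_- \otimes x^{(\beta,l)}_+\}_{k,l}$ of $\mathfrak{g}_{-\alpha} \otimes \mathfrak{g}_\beta$. The key structural observation is that, since $z \in \mathfrak{g}_{\beta-\alpha}$, one has $[z, x^{(\alpha,k)}_+] \in \mathfrak{g}_\beta$ and $[x^{(\beta,l)}_-, z] \in \mathfrak{g}_{-\alpha}$, so both brackets admit expansions in the chosen homogeneous bases of these root spaces.

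Concretely, I would first use the reproducing property of the dual bases to write
\begin{equation*}
[z, x^{(\alpha,k)}_+] = \sum_l \bigl(x^{(\beta,l)}_-,\, [z, x^{(\alpha,k)}_+]\bigr)\, x^{(\beta,l)}_+, \qquad [x^{(\beta,l)}_-, z] = \sum_k \bigl([x^{(\beta,l)}_-, z],\, x^{(\alpha,k)}_+\bigr)\, x^{(\alpha,k)}_-.
\end{equation*}
Then by the (graded) ad-invariance of $(-,-)$, namely $(x, [y, z]) = ([x, y], z)$ with the appropriate super sign, the two scalars $(x^{(\beta,l)}_-, [z, x^{(\alpha,k)}_+])$ and $([x^{(\beta,l)}_-, z], x^{(\alpha,k)}_+)$ coincide. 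Substituting these expansions into the two sides of the tensor identity gives
\begin{equation*}
\sum_k x^{(\alpha,k)}_- \otimes [z, x^{(\alpha,k)}_+] = \sum_{k,l} c_{kl}\, x^{(\alpha,k)}_- \otimes x^{(\beta,l)}_+ = \sum_l [x^{(\beta,l)}_-, z] \otimes x^{(\beta,l)}_+,
\end{equation*}
with $c_{kl}$ the common scalar coefficient. The statement in $U(\mathfrak{g})$ is obtained by the same computation since the argument is purely scalar-linear and the $\otimes$ is replaced by associative multiplication with no reordering of tensor slots.

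The main obstacle will be the careful bookkeeping of $\mathbb{Z}_2$-grading signs when invoking super-invariance of $(-,-)$, since for odd root spaces the form satisfies $(x,y) = -(y,x)$ and the identity $(x,[y,z]) = ([x,y],z)$ acquires factors of $(-1)^{|x||y|}$ or $(-1)^{|y||z|}$ depending on the convention. One must verify that a consistent choice of homogeneous dual basis $\{x^{(\alpha,k)}_-\}$ for $\{x^{(\alpha,k)}_+\}$ absorbs these signs so that the two coefficient expressions literally agree rather than differ by a parity factor; this is possible because the invariant form restricted to $\mathfrak{g}_\alpha \times \mathfrak{g}_{-\alpha}$ is a non-degenerate pairing between spaces of fixed homogeneous parity.
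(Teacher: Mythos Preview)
Your proposal is correct and is precisely the standard argument. The paper does not give its own proof of this lemma; it simply cites \cite[Lemma 18.4.1]{musson2012lie}, and the proof there is exactly the dual-basis-plus-invariance computation you outline.

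One remark: your final paragraph about $\mathbb{Z}_2$-grading signs is overcautious. The invariance identity $([x,y],z)=(x,[y,z])$ holds in the super case without any additional parity factor (it is the supersymmetry $(x,y)=(-1)^{|x||y|}(y,x)$ that carries a sign, not invariance). Hence your coefficient comparison $\bigl([x^{(\beta,l)}_-,z],x^{(\alpha,k)}_+\bigr)=\bigl(x^{(\beta,l)}_-,[z,x^{(\alpha,k)}_+]\bigr)$ is literally true with no sign to track, and no special choice of dual basis is needed beyond the one already fixed by $(x^{(\alpha,k)}_+,x^{(\alpha,l)}_-)=\delta_{kl}$.
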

Then
\begin{equation}
	\begin{split}
		&\left[\sum_{\alpha\in\varDelta_+}\left(\alpha,\alpha^{(a)}\right)\sum_{k=1}^{\dim\mathfrak{g}_\alpha}f^{(\alpha,k)}e^{(\alpha,k)},e^{(a)}_0\right]\\
		=&\sum_{\alpha\in\varDelta_+}\left(\alpha,\alpha^{(a)}\right)\sum_{k=1}^{\dim\mathfrak{g}_\alpha}\left(f^{(\alpha,k)}\left[e^{(\alpha,k)},e^{(b)}_0\right]+(-1)^{|a||b|}\left[f^{(\alpha,k)},e^{(b)}_0\right]e^{(\alpha,k)}\right)
	\end{split}
\end{equation}
is equal to
\begin{equation}
	\begin{split}
		&\sum_{\alpha\in\varDelta_+\backslash\left\{\alpha^{(b)}\right\}}\left(\alpha,\alpha^{(a)}\right)\sum_{k=1}^{\dim\mathfrak{g}_\alpha}f^{(\alpha,k)}\left[e^{(\alpha,k)},e^{(b)}_0\right]\\
		&+(-1)^{|\beta||b|}\sum_{\beta\in\varDelta_+\backslash\left\{\alpha^{(b)}\right\}}\left(\beta-\alpha^{(b)},\alpha^{(a)}\right)\sum_{k=1}^{\dim\mathfrak{g}_\beta}f^{(\beta,k)}\left[e^{(b)}_0,e^{(\beta,k)}\right]+(\alpha^{(b)},\alpha^{(a)})\left[f^{(b)}_0,e^{(b)}_0\right]e^{(b)}_0.
	\end{split}
\end{equation}
Replacing the letter $\beta$ in the second sum with $\alpha$ and using the $ef$ relation in the last term, this is simplified to
\begin{equation}
	\begin{split}
		\left(\alpha^{(a)},\alpha^{(b)}\right)\sum_{\alpha\in\varDelta_+}\sum_{k=1}^{\dim\mathfrak{g}_\alpha}f^{(\alpha,k)}\left[e^{(\alpha,k)},e^{(b)}_0\right]-\left(\alpha^{(a)},\alpha^{(b)}\right)\psi^{(b)}_0e^{(b)}_0.
	\end{split}
\end{equation}
Therefore, we have
\begin{equation}
	\begin{split}
		\left(\alpha^{(a)},\alpha^{(b)}\right)w^{(b)}_+=&\left(\alpha^{(a)},\alpha^{(b)}\right)\frac{\epsilon_1+\epsilon_2}{2}\left(\sum_{\alpha\in\varDelta_+}\sum_{k=1}^{\dim\mathfrak{g}_\alpha}f^{(\alpha,k)}\left[e^{(\alpha,k)},e^{(b)}_0\right]-\psi^{(b)}_0e^{(b)}_0\right)\\
		&-\frac{\sigma^{ba}_1-\sigma^{ab}_1}{2}\left(\alpha^{(a)},\alpha^{(b)}\right)e^{(b)}_0.
	\end{split}\label{findingw}
\end{equation}
Taking $b=a$, the requirement \eqref{wrequirement} on $w^{(a)}_+$ yields
\begin{equation}
	w^{(a)}_+=\frac{\epsilon_1+\epsilon_2}{2}\sum_{\alpha\in\varDelta_+}\sum_{k=1}^{\dim\mathfrak{g}_\alpha}f^{(\alpha,k)}\left[e^{(\alpha,k)},e^{(a)}_0\right]-\frac{\epsilon_1+\epsilon_2}{2}\psi^{(a)}_0e^{(a)}_0.\label{w+}
\end{equation}
Likewise,
\begin{equation}
	w^{(a)}_-=-\frac{\epsilon_1+\epsilon_2}{2}\sum_{\alpha\in\varDelta_+}\sum_{k=1}^{\dim\mathfrak{g}_\alpha}\left[f^{(a)}_0,f^{(\alpha,k)}\right]e^{(\alpha,k)}-\frac{\epsilon_1+\epsilon_2}{2}f^{(a)}_0\psi^{(a)}_0.\label{w-}
\end{equation}
Notice that in \eqref{findingw}, $\left(\alpha^{(a)},\alpha^{(b)}\right)$ is zero when taking $b=a$ for fermionic nodes. Nevertheless, we can check that the expressions for $w^{(a)}_{\pm}$ in \eqref{w+} and \eqref{w-} would still satisfy the requirement \eqref{wrequirement}.

With the expressions for $v^{(a)}$ and $w^{(a)}_{\pm}$, we can write the commutation relations for the generators in the $J$ presentation. Similar relations can also be found in \cite{guay2018coproduct,ueda2019affine} for similar Yangian algebras. For brevity, we shall also define
\begin{equation}
	\widetilde{v}^{(a)}:=v^{(a)}+\frac{\epsilon_1+\epsilon_2}{2}\left(\psi^{(a)}_0\right)^2=\frac{1}{2}(\epsilon_1+\epsilon_2)\sum_{\alpha\in\varDelta_+}\left(\alpha,\alpha^{(a)}\right)\sum_{k=1}^{\dim\mathfrak{g}_\alpha}f^{(\alpha,k)}e^{(\alpha,k)}.
\end{equation}
\begin{lemma}
	We have
	\begin{align}
		&\left[\psi^{(a)}_0,v^{(b)}\right]=0,\\
		&\left[\widetilde{v}^{(a)},e^{(b)}_0\right]=\left(\alpha^{(a)},\alpha^{(b)}\right)w^{(b)}_+,\\
		&\left[\widetilde{v}^{(a)},f^{(b)}_0\right]=-\left(\alpha^{(a)},\alpha^{(b)}\right)w^{(b)}_-,\\
		&\left[w^{(a)}_+,f^{(b)}_0\right]=\left[e^{(a)}_0,w^{(b)}_-\right]=\delta_{ab}v^{(a)},\\
		&\left[w^{(a)}_+,e^{(b)}_0\right]-\left[e^{(a)}_0,w^{(b)}_+\right]=-\frac{\epsilon_1+\epsilon_2}{2}\left(\alpha^{(a)},\alpha^{(b)}\right)\left[e^{(a)}_0,e^{(b)}_0\right],\\
		&\left[w^{(a)}_-,f^{(b)}_0\right]-\left[f^{(a)}_0,w^{(b)}_-\right]=\frac{\epsilon_1+\epsilon_2}{2}\left(\alpha^{(a)},\alpha^{(b)}\right)\left[f^{(a)}_0,f^{(b)}_0\right].
	\end{align}
\end{lemma}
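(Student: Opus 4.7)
The approach is direct verification: substitute the explicit formulas for $v^{(a)}$, $\widetilde{v}^{(a)}$, $w^{(a)}_\pm$ into each commutator and reduce to the zero-mode relations \eqref{psipsimin}--\eqref{psi0fmin} together with \eqref{efmin}, using the algebra homomorphism $\iota: U(\mathfrak{g})\to\mathtt{Y}$ (which transports bracket relations of arbitrary root vectors into $\mathtt{Y}$) and Lemma \ref{betaalphalemma} for rearranging sums over $\varDelta_+$. The first identity is immediate: each summand $f^{(\alpha,k)}e^{(\alpha,k)}$ in $v^{(b)}$ is weight-zero under $\mathrm{ad}\,\psi^{(a)}_0$, and $(\psi^{(b)}_0)^2$ commutes with $\psi^{(a)}_0$ via \eqref{psipsimin}. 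The second and third identities are essentially already extracted in the derivation preceding \eqref{findingw}: isolating the summation term there yields exactly $[\widetilde{v}^{(a)}, e^{(b)}_0] = (\alpha^{(a)},\alpha^{(b)}) w^{(b)}_+$, and the mirror computation with $e\leftrightarrow f$ exchanged gives the negative-part version.

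For the fourth identity $[w^{(a)}_+, f^{(b)}_0] = \delta_{ab} v^{(a)}$, I would expand $w^{(a)}_+$ via \eqref{w+} and apply the super-Leibniz rule term by term. In each summand, the inner bracket $[e^{(\alpha,k)}, f^{(b)}_0]$ vanishes unless $\alpha = \alpha^{(b)}$, in which case it produces a multiple of $\delta_{ab}\psi^{(a)}_0$; the outer bracket $[f^{(\alpha,k)}, f^{(b)}_0]$ lies in $\mathfrak{g}_{-\alpha-\alpha^{(b)}}$ and, via Lemma \ref{betaalphalemma}, reassembles into the weighted sum $\sum (\alpha,\alpha^{(a)})\, f^{(\alpha,k)} e^{(\alpha,k)}$ that defines $\widetilde{v}^{(a)}$, which survives only when $a=b$. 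The residual cross-term $[\psi^{(a)}_0 e^{(a)}_0, f^{(b)}_0]$ is handled with \eqref{psi0fmin} and \eqref{efmin}, and contributes precisely the remaining $-\tfrac{\epsilon_1+\epsilon_2}{2}(\psi^{(a)}_0)^2$ piece of $v^{(a)}$. The identity for $[e^{(a)}_0, w^{(b)}_-]$ follows by the mirror argument using \eqref{w-}.

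For the last two identities, compute $[w^{(a)}_+, e^{(b)}_0]$ and $[e^{(a)}_0, w^{(b)}_+]$ separately from \eqref{w+} and subtract; one application of Lemma \ref{betaalphalemma} identifies the two double sums over $\varDelta_+$ up to a single residue proportional to $[e^{(a)}_0, e^{(b)}_0]$, while the $\psi^{(a)}_0$- and $\psi^{(b)}_0$-linear pieces combine through \eqref{psi0emin} to supply the prefactor $-\tfrac{\epsilon_1+\epsilon_2}{2}(\alpha^{(a)},\alpha^{(b)})$ on the right-hand side. The sixth identity is the analogue on the negative side. The principal obstacle will be tracking the $\mathbb{Z}_2$-signs that arise from the super-Leibniz expansions and the reordering in Lemma \ref{betaalphalemma}, and checking the degenerate cases at fermionic nodes where $(\alpha^{(a)},\alpha^{(a)})=0$: one must verify that \eqref{w+} and \eqref{w-} remain compatible with \eqref{wrequirement} there, and that the stated relations hold on the nose without dividing through by the vanishing inner product.
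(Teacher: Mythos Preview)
Your overall strategy matches the paper's: direct expansion of the definitions combined with Lemma~\ref{betaalphalemma}, with the second and third identities already essentially contained in the derivation of \eqref{findingw}. The paper likewise singles out the fourth identity as the most involved and writes out only that one explicitly (choosing the mirror version $[e^{(a)}_0, w^{(b)}_-]$).

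However, your sketch of the fourth identity contains a step that fails. The claim that $[e^{(\alpha,k)}, f^{(b)}_0]$ vanishes unless $\alpha=\alpha^{(b)}$ is incorrect: whenever $\alpha-\alpha^{(b)}\in\varDelta$ this bracket is a nonzero element of $\mathfrak{g}_{\alpha-\alpha^{(b)}}$, so no term-by-term vanishing occurs. The $\delta_{ab}$ does not originate there; it enters through $[e^{(a)}_0, f^{(b)}_0]=\delta_{ab}\psi^{(a)}_0$ once the fully expanded product is regrouped. In the paper's computation one adds and subtracts auxiliary monomials so that the sum splits into (i) a piece $\sum_{\alpha,k}[\delta_{ab}\psi^{(a)}_0,f^{(\alpha,k)}]e^{(\alpha,k)}$, which supplies the $\widetilde v^{(a)}$ part; (ii) two residual sums over $\varDelta_+\setminus\{\alpha^{(a)}\}$ of the shape $[f^{(b)}_0,[e^{(a)}_0,f^{(\alpha,k)}]]e^{(\alpha,k)}$ and $[f^{(b)}_0,f^{(\alpha,k)}][e^{(\alpha,k)},e^{(a)}_0]$, which do \emph{not} vanish individually but cancel against each other by Lemma~\ref{betaalphalemma}; and (iii) the boundary term at $\alpha=\alpha^{(a)}$ together with the $\psi^{(a)}_0 e^{(a)}_0$ contribution, which assemble the $-\tfrac{\epsilon_1+\epsilon_2}{2}(\psi^{(a)}_0)^2$ part of $v^{(a)}$. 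Your ingredients are correct, but the mechanism you describe (term-by-term vanishing producing $\delta_{ab}$, then Lemma~\ref{betaalphalemma} reassembling $\widetilde v^{(a)}$) is not what actually happens; the work is in the regrouping, and Lemma~\ref{betaalphalemma} is used for a cancellation, not a reconstruction.
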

\begin{proof}
	All the proofs follow from direct calculations with the use of Lemma \ref{betaalphalemma}. In particular, the second and third ones are essentially how we found $w^{(a)}_{\pm}$. Here, we would only explicitly write the calculation for the fourth one, which is the most lengthy, while the others should be much quicker.
	
	Let us consider the $e_0w_-$ relation, and the $w_+f_0$ relation would follow from the same argument. For convenience, we write
	\begin{equation}
		\begin{split}
			&-\frac{2}{\epsilon_1+\epsilon_2}\left[e^{(a)}_0,w^{(b)}_-\right]\\
			=&\sum_{\alpha\in\varDelta_+}\sum_{k=1}^{\dim\mathfrak{g}_\alpha}\left(e^{(a)}_0f^{(b)}_0f^{(\alpha,k)}e^{(\alpha,k)}-(-1)^{|\alpha||b|}e^{(a)}_0f^{(\alpha,k)}f^{(b)}_0e^{(\alpha,k)}-(-1)^{|a||b|}f^{(b)}_0f^{(\alpha,k)}e^{(\alpha,k)}e^{(a)}_0\right.\\
			&\left.+(-1)^{|\alpha||b|}(-1)^{|a||b|}f^{(\alpha,k)}f^{(b)}_0e^{(\alpha,k)}e^{(a)}_0\right)+\left[e^{(a)}_0,f^{(b)}_0\psi^{(b)}_0\right],
		\end{split}
	\end{equation}
    where we have expanded all the brackets (except the last term) for clarity. By adding the terms
    \begin{equation}
    	\mp(-1)^{|a||b|}f^{(b)}_0e^{(a)}_0f^{(\alpha,k)}e^{(\alpha,k)},\quad\mp(-1)^{|\alpha|(|a|+|b|)}f^{(\alpha,k)}f^{(b)}_0e^{(a)}_0e^{(\alpha,k)}
    \end{equation}
    inside the summations, we can group the terms into
    \begin{equation}
    	\begin{split}
    		&\sum_{\alpha\in\varDelta_+}\sum_{k=1}^{\dim\mathfrak{g}_\alpha}\left(\left[\delta_{ab}\psi^{(a)}_0,f^{(\alpha,k)}\right]e^{(\alpha,k)}+(-1)^{|a||b|}\left[f^{(b)}_0,\left[e^{(a)}_0,f^{(\alpha,k)}\right]\right]e^{(\alpha,k)}\right.\\
    		&\left.+(-1)^{|a||b|}(-1)^{|a||\alpha|}\left[f^{(b)}_0,f^{(\alpha,k)}\right]\left[e^{(a)}_0,e^{(\alpha,k)}\right]\right)+\left[e^{(a)}_0,f^{(b)}_0\psi^{(b)}_0\right].
    	\end{split}
    \end{equation}
    Picking out those with $\alpha=\alpha^{(a)}$ for the second and third terms in the summations, we have
    \begin{equation}
    	\begin{split}
    		&\sum_{\alpha\in\varDelta_+\backslash\left\{\alpha^{(a)}\right\}}\sum_{k=1}^{\dim\mathfrak{g}_\alpha}\left((-1)^{|a||b|}\left[f^{(b)}_0,\left[e^{(a)}_0,f^{(\alpha,k)}\right]\right]e^{(\alpha,k)}+(-1)^{|a||b|}\left[f^{(b)}_0,f^{(\alpha,k)}\right]\left[e^{(\alpha,k)},e^{(a)}_0\right]\right)\\
    		&-\sum_{\alpha\in\varDelta_+}\sum_{k=1}^{\dim\mathfrak{g}_\alpha}\left(\delta_{ab}\left(\alpha^{(a)},\alpha\right)f^{(\alpha,k)}e^{(\alpha,k)}\right)+(-1)^{|a||b|}\left[f^{(b)}_0,\left[e^{(a)}_0,f^{(a)}_0\right]\right]e^{(a)}_0+\left[e^{(a)}_0,f^{(b)}_0\psi^{(b)}_0\right].
    	\end{split}
    \end{equation}
     The first line vanishes using Lemma \ref{betaalphalemma} while the second line is simply $\frac{2}{\epsilon_1+\epsilon_2}\delta_{ab}v^{(a)}$. This completes the proof.
\end{proof}
From these relations, it is straightforward to get the following corollary by definitions of $J\left(\psi^{(a)}_0\right)$, $J\left(e^{(a)}_0\right)$ and $J\left(f^{(a)}_0\right)$.
\begin{corollary}\label{Jcor}
	We have
	\begin{align}
		&\left[\psi^{(a)}_0,J\left(X^{(b)}_0\right)\right]=J\left(\left[\psi^{(a)}_0,X^{(b)}_0\right]\right)\qquad(X=\psi,e,f),\\
		&\left[J\left(\psi^{(a)}_0\right),e^{(b)}_0\right]=\left(\alpha^{(a)},\alpha^{(b)}\right)J\left(e^{(b)}_0\right)+\frac{\sigma^{ba}_1-\sigma^{ab}_1}{2}\left(\alpha^{(a)},\alpha^{(b)}\right)e^{(b)}_0,\\
		&\left[J\left(\psi^{(a)}_0\right),f^{(b)}_0\right]=-\left(\alpha^{(a)},\alpha^{(b)}\right)J\left(f^{(b)}_0\right)-\frac{\sigma^{ba}_1-\sigma^{ab}_1}{2}\left(\alpha^{(a)},\alpha^{(b)}\right)f^{(b)}_0,\\
		&\left[J\left(e^{(a)}_0\right),f^{(b)}_0\right]=\left[e^{(a)}_0,J\left(f^{(b)}_0\right)\right]=\delta_{ab}J\left(\psi^{(a)}_0\right),\\
		&\left[J\left(e^{(a)}_0\right),e^{(b)}_0\right]-\left[e^{(a)}_0,J\left(e^{(b)}_0\right)\right]=\frac{1}{2}\left(\sigma^{ba}_1-\sigma^{ab}_1\right)\left[e^{(a)}_0,e^{(b)}_0\right],\\
		&\left[J\left(f^{(a)}_0\right),f^{(b)}_0\right]-\left[f^{(a)}_0,J\left(f^{(b)}_0\right)\right]=-\frac{1}{2}\left(\sigma^{ba}_1-\sigma^{ab}_1\right)\left[f^{(a)}_0,f^{(b)}_0\right],\\
		&\left[J\left(e^{(a)}_0\right),e^{(b)}_0\right]=\left[J\left(f^{(a)}_0\right),f^{(b)}_0\right]=0\qquad(\sigma_1^{ab}=0).
	\end{align}
\end{corollary}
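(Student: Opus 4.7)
The plan is to verify each of the seven identities by direct substitution of the definitions
\[
J(\psi^{(a)}_0) = \psi^{(a)}_1 + v^{(a)}, \quad J(e^{(a)}_0) = e^{(a)}_1 + w^{(a)}_+, \quad J(f^{(a)}_0) = f^{(a)}_1 + w^{(a)}_-,
\]
and then combining two ingredients already available: the ``mode-1'' defining relations supplied by the minimalistic presentation (Theorem \ref{minthm}, in particular \eqref{psi1emin}, \eqref{psi1fmin}, \eqref{efmin}, \eqref{eemin}, \eqref{ffmin}), and the commutation rules for the correction terms $v^{(a)}$ and $w^{(a)}_{\pm}$ that were just established in the preceding Lemma. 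In every case the commutator on the left splits into a piece involving only the mode-1 generators and a piece involving only $v$'s and $w_\pm$'s; the sum of these pieces is what we need to match against the right-hand side.

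For the first identity $[\psi^{(a)}_0, J(X^{(b)}_0)] = J([\psi^{(a)}_0, X^{(b)}_0])$, the case $X = \psi$ is immediate from the $\psi\psi$ relation and the lemma's $[\psi^{(a)}_0, v^{(b)}] = 0$. For $X = e$, I would use $[\psi^{(a)}_0, e^{(b)}_1] = (\alpha^{(a)}, \alpha^{(b)}) e^{(b)}_1$ from Lemma \ref{psi01eflemma}, together with the observation that $[\psi^{(a)}_0, w^{(b)}_+] = (\alpha^{(a)}, \alpha^{(b)}) w^{(b)}_+$; the latter follows by noting that in the explicit form of $w^{(b)}_+$, each summand $f^{(\alpha,k)}[e^{(\alpha,k)}, e^{(b)}_0]$ is a weight vector of weight $\alpha^{(b)}$ for the adjoint action of $\psi^{(a)}_0$ (since the weights $-\alpha$ and $\alpha$ cancel and $e^{(b)}_0$ contributes $\alpha^{(b)}$), while $\psi^{(b)}_0 e^{(b)}_0$ has the same weight. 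The $X = f$ case is symmetric.

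For the next two identities, the key computation is to expand $[J(\psi^{(a)}_0), e^{(b)}_0] = [\psi^{(a)}_1, e^{(b)}_0] + [v^{(a)}, e^{(b)}_0]$; the first piece is \eqref{psi1emin}, and the second is obtained from the lemma's $[\widetilde{v}^{(a)}, e^{(b)}_0] = (\alpha^{(a)}, \alpha^{(b)}) w^{(b)}_+$ together with the easy identity $[(\psi^{(a)}_0)^2, e^{(b)}_0] = (\alpha^{(a)}, \alpha^{(b)})(\psi^{(a)}_0 e^{(b)}_0 + e^{(b)}_0 \psi^{(a)}_0)$. Using the convention $\sigma^{ab}_1 + \sigma^{ba}_1 = (\epsilon_1 + \epsilon_2)(\alpha^{(a)}, \alpha^{(b)})$, the $\psi^{(a)}_0 e^{(b)}_0$ and $e^{(b)}_0 \psi^{(a)}_0$ contributions from the two pieces recombine into $\tfrac{1}{2}(\sigma^{ba}_1 - \sigma^{ab}_1)[\psi^{(a)}_0, e^{(b)}_0]$, while $e^{(b)}_1$ and $w^{(b)}_+$ regroup into $J(e^{(b)}_0)$. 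The $f$-case is analogous. The $ef$-identities reduce, after substitution, to \eqref{efmin} plus the lemma's $[w^{(a)}_+, f^{(b)}_0] = [e^{(a)}_0, w^{(b)}_-] = \delta_{ab} v^{(a)}$. The $ee$- and $ff$-identities use \eqref{eemin}/\eqref{ffmin} for the mode-1 parts, together with the lemma's $[w^{(a)}_+, e^{(b)}_0] - [e^{(a)}_0, w^{(b)}_+] = -\tfrac{\epsilon_1+\epsilon_2}{2}(\alpha^{(a)}, \alpha^{(b)})[e^{(a)}_0, e^{(b)}_0]$; in the antisymmetrized combination the $\sigma^{ab}_1$-symmetric pieces cancel and only the $(\sigma^{ba}_1 - \sigma^{ab}_1)$ part survives. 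Finally, when $\sigma^{ab}_1 = 0$ both sides vanish since \eqref{eeffmin} forces $[e^{(a)}_0, e^{(b)}_0] = 0$ and the remaining pieces are themselves antisymmetric commutators of vanishing quantities.

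The main obstacle is simply careful bookkeeping: each expansion produces four cross-terms, and in the fermionic cases the parity factors $(-1)^{|a||b|}$ must be tracked through the regrouping of $\psi^{(a)}_0 e^{(b)}_0$ versus $e^{(b)}_0 \psi^{(a)}_0$ (and similarly for $f$). All the substantive analytic content — the use of Lemma \ref{betaalphalemma} to tame the root-space sums inside $w^{(a)}_\pm$ — has already been absorbed into the preceding Lemma, so the corollary itself is, as advertised, a direct consequence of that lemma together with the minimalistic defining relations.
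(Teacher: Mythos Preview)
Your treatment of the first six identities is correct and matches the paper's approach exactly: the corollary is obtained by adding the preceding lemma's $v$/$w_\pm$ relations to the mode-1 relations of the minimalistic presentation, and regrouping.

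The gap is in your handling of the final identity, $[J(e^{(a)}_0),e^{(b)}_0]=0$ when $\sigma_1^{ab}=0$. Knowing that the antisymmetrised combination $[J(e^{(a)}_0),e^{(b)}_0]-[e^{(a)}_0,J(e^{(b)}_0)]$ vanishes (from the fifth identity together with $[e^{(a)}_0,e^{(b)}_0]=0$) does \emph{not} force each term to vanish separately; the phrase ``antisymmetric commutators of vanishing quantities'' does not supply the missing step. Concretely, splitting $[J(e^{(a)}_0),e^{(b)}_0]=[e^{(a)}_1,e^{(b)}_0]+[w^{(a)}_+,e^{(b)}_0]$, the first piece is indeed zero by \eqref{eeff}, but the lemma only gives $[w^{(a)}_+,e^{(b)}_0]-[e^{(a)}_0,w^{(b)}_+]=0$, not $[w^{(a)}_+,e^{(b)}_0]=0$.

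The paper closes this gap differently: it uses the already-established second identity to write
\[
J\left(e^{(a)}_0\right)=\frac{1}{\left(\alpha^{(c)},\alpha^{(a)}\right)}\left[J\left(\psi^{(c)}_0\right),e^{(a)}_0\right]-\frac{\sigma_1^{ac}-\sigma_1^{ca}}{2}e^{(a)}_0,
\]
with $c$ chosen so that $\left(\alpha^{(c)},\alpha^{(a)}\right)\neq 0$ \emph{and} $\sigma_1^{cb}=0$ (take $c=a$ if $a$ is bosonic, or $c=a\pm1$ appropriately if $a$ is fermionic). Bracketing with $e^{(b)}_0$ and applying Jacobi, the first term produces $[J(\psi^{(c)}_0),[e^{(a)}_0,e^{(b)}_0]]-[e^{(a)}_0,[J(\psi^{(c)}_0),e^{(b)}_0]]$; the first bracket vanishes since $[e^{(a)}_0,e^{(b)}_0]=0$, and the second vanishes because $[J(\psi^{(c)}_0),e^{(b)}_0]$ is proportional to $\left(\alpha^{(c)},\alpha^{(b)}\right)=0$. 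The point is that one needs to bootstrap from the $J(\psi)e$ relation rather than from the $J(e)e$ relation.
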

Notice that the last line follows from
\begin{equation}
	J\left(e^{(a)}_0\right)=\frac{1}{\left(\alpha^{(c)},\alpha^{(a)}\right)}\left[J\left(\psi^{(c)}_0\right),e^{(a)}_0\right]-\frac{\sigma_1^{ac}-\sigma_1^{ca}}{2}e^{(a)}_0,
\end{equation}
and likewise for $f^{(a)}_0$. When $a$ is bosonic, we can take $c=a$. When $a$ is fermionic, $c$ can be taken as one of $a\pm1$ such that $\sigma_1^{cb}=0$. It is straightforward to see that each relation in this corollary is equivalent to one of the relations (involving non-zero modes) in (R) in Theorem \ref{minthm}.

It is also possible to write $J$ acting on any positive real roots besides the simple ones with the help of the Weyl group of the untwisted affine A-type superalgebra. Since $\dim\mathfrak{g}_\alpha=1$ for $\alpha\in\varDelta^\text{re}_+$, we shall omit the label $k$ in the corresponding elements. Due to the Serre relations, given an even simple root $\alpha^{(b)}$, the operator $\tau^{(b)}:=\exp\left(\text{ad}_{e^{(b)}_0}\right)\exp\left(-\text{ad}_{f^{(b)}_0}\right)\exp\left(\text{ad}_{e^{(b)}_0}\right)$ is well-defined and is an automorphism of the quiver Yangian (see for example \cite{kumar2012kac,serganova2011kac}). Following the same argument as in \cite[Lemma 3.17]{guay2018coproduct}, $\tau^{(b)}$ can be applied to $J\left(\psi^{(a)}_0\right)$, $J\left(e^{(a)}_0\right)$ and $J\left(f^{(a)}_0\right)$ for any simple root $\alpha^{(a)}$. Moreover, we find that
\begin{equation}
	\tau^{(b)}\left(J\left(\psi^{(a)}_0\right)\right)=J\left(\psi^{(a)}_0\right)-\frac{2\left(\alpha^{(b)},\alpha^{(a)}\right)}{\left(\alpha^{(b)},\alpha^{(b)}\right)}J\left(\psi^{(b)}_0\right)-\left(\sigma^{ba}_1-\sigma^{ab}_1\right)\frac{\left(\alpha^{(b)},\alpha^{(a)}\right)}{\left(\alpha^{(b)},\alpha^{(b)}\right)}\psi^{(b)}_0.
\end{equation}

Suppose a root $\alpha$ can be obtained from a simple root $\alpha^{(a)}$ under the even reflections $s^{(b)}$ via $\alpha=s^{(b_1)}\dots s^{(b_p)}\left(\alpha^{(a)}\right)$. Then we may write $e^{(\alpha)}=\tau^{(b_1)}\dots\tau^{(b_p)}\left(e^{(a)}\right)$ and define $J\left(e^{(\alpha)}\right):=\tau^{(b_1)}\dots\tau^{(b_p)}\left(J\left(e^{(a)}\right)\right)$ (and likewise for $f$).
\begin{proposition}
	For any positive real root $\alpha$ and $a\in Q_0$, we have
	\begin{equation}
		\begin{split}
			&\left[J\left(\psi^{(a)}_0\right),e^{(\alpha)}\right]=\left[\psi^{(a)}_0,J\left(e^{(\alpha)}\right)\right]+c^{a\alpha}e^{(\alpha)}=\left(\alpha^{(a)},\alpha\right)J\left(e^{(\alpha)}\right)+c^{a\alpha}e^{(\alpha)},\\
			&\left[J\left(\psi^{(a)}_0\right),f^{(\alpha)}\right]=\left[\psi^{(a)}_0,J\left(f^{(\alpha)}\right)\right]-c^{a\alpha}e^{(\alpha)}=-\left(\alpha^{(a)},\alpha\right)J\left(f^{(\alpha)}\right)-c^{a\alpha}f^{(\alpha)},
		\end{split}
	\end{equation}
    where $c^{a\alpha}\in\frac{\epsilon_1-\epsilon_2}{2}\mathbb{Z}$.\label{Jprop}
\end{proposition}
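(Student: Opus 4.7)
The plan is to prove the proposition by induction on the minimum length $p$ of an even Weyl-word $s^{(b_1)}\cdots s^{(b_p)}$ carrying some simple root $\alpha^{(a')}$ to the given positive real root $\alpha$. I focus on the $e$-version throughout; the $f$-version follows by the parallel $\tau^{(b)}$-conjugation argument applied to the $f$-relations of Corollary \ref{Jcor}.

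For the base case $p=0$, where $\alpha=\alpha^{(a')}$ is simple, the required identity is just the second line of Corollary \ref{Jcor}: reading off, $c^{a\alpha^{(a')}}=\tfrac{1}{2}(\sigma^{a'a}_1-\sigma^{aa'}_1)(\alpha^{(a)},\alpha^{(a')})$, which lies in $\tfrac{\epsilon_1-\epsilon_2}{2}\mathbb{Z}$ because Figure \ref{epsilongencon} forces $\sigma^{a'a}_1-\sigma^{aa'}_1\in\{0,\pm(\epsilon_1-\epsilon_2)\}$ and $(\alpha^{(a)},\alpha^{(a')})$ is an integer entry of the Cartan matrix. The middle equality $[\psi^{(a)}_0,J(e^{(\alpha)})]=(\alpha^{(a)},\alpha)J(e^{(\alpha)})$ records only that $J(e^{(\alpha)})$ is a Cartan weight vector of weight $\alpha$, inherited from the weight homogeneity of $e^{(a)}_1$ and $w^{(a)}_+$ in the $J$-presentation together with the compatibility of $\tau^{(b)}$ with the Weyl reflection $s^{(b)}$ on weights.

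For the inductive step, write $\alpha'=s^{(b)}(\alpha)$ with $\alpha$ of length $p$, and apply the algebra automorphism $\tau^{(b)}$ to the identity $[J(\psi^{(a)}_0),e^{(\alpha)}]=(\alpha^{(a)},\alpha)J(e^{(\alpha)})+c^{a\alpha}e^{(\alpha)}$. Using $\tau^{(b)}(e^{(\alpha)})=e^{(\alpha')}$ and $\tau^{(b)}(J(e^{(\alpha)}))=J(e^{(\alpha')})$ by construction, the $s^{(b)}$-invariance of the bilinear form giving $(\alpha^{(a)},\alpha)=(\alpha^{(a)},\alpha')-(\alpha^{(b)},\alpha^{(a)})(\alpha^{(b)},\alpha')$ (normalising $(\alpha^{(b)},\alpha^{(b)})=2$ for $b$ even), and the explicit formula for $\tau^{(b)}(J(\psi^{(a)}_0))$ recalled before the statement, the identity becomes a linear relation among $[J(\psi^{(a)}_0),e^{(\alpha')}]$, $[J(\psi^{(b)}_0),e^{(\alpha')}]$, and the already-known $[\psi^{(b)}_0,e^{(\alpha')}]=(\alpha^{(b)},\alpha')e^{(\alpha')}$.

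The apparent circularity is closed by first handling $a=b$: the explicit formula collapses to $\tau^{(b)}(J(\psi^{(b)}_0))=-J(\psi^{(b)}_0)$, whence the calculation directly yields $[J(\psi^{(b)}_0),e^{(\alpha')}]=(\alpha^{(b)},\alpha')J(e^{(\alpha')})+c^{b\alpha'}e^{(\alpha')}$ with $c^{b\alpha'}=-c^{b\alpha}\in\tfrac{\epsilon_1-\epsilon_2}{2}\mathbb{Z}$. Feeding this back into the general-$a$ expansion and solving for $[J(\psi^{(a)}_0),e^{(\alpha')}]$ produces the desired identity with
\begin{equation*}
c^{a\alpha'}=c^{a\alpha}+(\alpha^{(b)},\alpha^{(a)})\,c^{b\alpha'}+\tfrac{1}{2}(\sigma^{ba}_1-\sigma^{ab}_1)(\alpha^{(b)},\alpha^{(a)})(\alpha^{(b)},\alpha'),
\end{equation*}
each summand of which lies in $\tfrac{\epsilon_1-\epsilon_2}{2}\mathbb{Z}$ by the induction hypothesis, the integrality of Cartan pairings of real roots, and the same constraint on $\sigma^{ba}_1-\sigma^{ab}_1$ used in the base case. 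The main obstacle is precisely this circularity at fixed $\alpha'$; the observation that $\tau^{(b)}$ acts as $-1$ on $J(\psi^{(b)}_0)$ is what unlocks the induction, letting us bootstrap through $a=b$ before handling general $a$.
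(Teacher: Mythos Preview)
Your proof is correct and follows essentially the same induction on Weyl length as the paper, using the explicit formula for $\tau^{(b)}\bigl(J(\psi^{(a)}_0)\bigr)$. The one genuine difference is organizational: the paper writes $[J(\psi^{(a)}_0),e^{(\alpha)}]=\tau^{(b_1)}\bigl([(\tau^{(b_1)})^{-1}J(\psi^{(a)}_0),e^{(\beta)}]\bigr)$ and expands $(\tau^{(b_1)})^{-1}J(\psi^{(a)}_0)$ (which has the same form, since $\tau^{(b_1)}$ acts involutively on the span of the $J(\psi^{(\bullet)}_0)$ and $\psi^{(\bullet)}_0$). This produces a combination of $[J(\psi^{(a)}_0),e^{(\beta)}]$ and $[J(\psi^{(b_1)}_0),e^{(\beta)}]$, both already known at length $p-1$, so the recursion for $c^{a\alpha}$ closes without any bootstrap. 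Your forward push of $\tau^{(b)}$ instead lands the unknown $[J(\psi^{(b)}_0),e^{(\alpha')}]$ at the \emph{new} level, which is why you must first treat $a=b$ separately; the observation $\tau^{(b)}\bigl(J(\psi^{(b)}_0)\bigr)=-J(\psi^{(b)}_0)$ indeed unlocks it, and the resulting recursion is equivalent to the paper's. One minor remark: in the super setting $(\alpha^{(b)},\alpha^{(b)})$ can be $-2$ as well as $+2$ for bosonic $b$, so your normalization should be read as $(\alpha^{(b)},\alpha^{(b)})=\pm 2$; this does not affect the integrality argument since $2(\alpha^{(b)},\alpha^{(a)})/(\alpha^{(b)},\alpha^{(b)})\in\mathbb{Z}$ either way.
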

\begin{proof}
	We will only show the first relation with $e^{(\alpha)}$ as the proof of the second one with $f^{(\alpha)}$ follows in the same way. We prove this by induction. When $p=0$, this is automatically true since $\alpha$ is a simple root. Now suppose this holds for $\beta=s^{(b_2)}\dots s^{(b_p)}\left(\alpha^{(a)}\right)$. Then for $\alpha=s^{(b_1)}\left(\beta\right)$,
	\begin{equation}
		\begin{split}
			&\left[J\left(\psi^{(a)}_0\right),e^{\alpha}\right]\\
			=&\tau^{(b_1)}\left(\left[\left(\tau^{(b_1)}\right)^{-1}J\left(\psi^{(a)}_0\right),e^{(\beta)}\right]\right)\\
			=&\tau^{(b_1)}\left(\left[J\left(\psi^{(a)}_0\right)-\frac{2\left(\alpha^{(b_1)},\alpha^{(a)}\right)}{\left(\alpha^{(b_1)},\alpha^{(b_1)}\right)}J\left(\psi^{(b_1)}_0\right)-\left(\sigma^{b_1a}_1-\sigma^{ab_1}_1\right)\frac{\left(\alpha^{(b_1)},\alpha^{(a)}\right)}{\left(\alpha^{(b_1)},\alpha^{(b_1)}\right)}\psi^{(b_1)}_0,e^{(\beta)}\right]\right).
		\end{split}
	\end{equation}
    With the assumption of the proposition to hold for $\beta$, we have
    \begin{equation}
    	\left[J\left(\psi^{(a)}_0\right),e^{\alpha}\right]=\left(\alpha^{(a)},\alpha\right)J\left(e^{(\alpha)}\right)+c^{a\alpha}e^{(\alpha)},
    \end{equation}
    where
    \begin{equation}
    	c^{a\alpha}=\left(1-\frac{2\left(\alpha^{(b_1)},\alpha^{(a)}\right)}{\left(\alpha^{(b_1)},\alpha^{(b_1)}\right)}\right)c^{a\beta}+\left(\sigma^{b_1a}-\sigma^{ab_1}\right)\frac{\left(\alpha^{(b_1)},\alpha^{(a)}\right)}{\left(\alpha^{(b_1)},\alpha^{b_1}\right)}\left(\alpha^{(b_1)},\beta\right).
    \end{equation}
    Write $c^{a\beta}=\frac{\epsilon_1-\epsilon_2}{2}c'$, and hence $c'\in\mathbb{Z}$. Then
    \begin{equation}
    	c^{a\alpha}=\frac{\epsilon_1-\epsilon_2}{2}\left(\left(1-\frac{2\left(\alpha^{(b_1)},\alpha^{(a)}\right)}{\left(\alpha^{(b_1)},\alpha^{(b_1)}\right)}\right)c'\pm2\frac{\left(\alpha^{(b_1)},\alpha^{(a)}\right)}{\left(\alpha^{(b_1)},\alpha^{b_1}\right)}\left(\alpha^{(b_1)},\beta\right)\right).
    \end{equation}
    In particular, the term in the biggest bracket is an integer. The other equality involving the commutator of $\psi^{(a)}_0$ and $J\left(e^{(\alpha)}\right)$ in this relation can be proven in a similar way.
\end{proof}
As a result, $J\left(e^{(\alpha)}\right)$ and $J\left(f^{(\alpha)}\right)$ are independent of the choice of the sequence of $\tau^{(b)}$ up to a constant multiple. From this proposition, it is also straightforward to obtain the following corollary.
\begin{corollary}
	For any positive real root $\alpha$ and $a\in Q_0$, we have
	\begin{equation}
		\begin{split}
			&\left(\alpha^{(b)},\alpha\right)\left[J\left(\psi^{(a)}_0\right),e^{(\alpha)}\right]-\left(\alpha^{(a)},\alpha\right)\left[J\left(\psi^{(b)}_0\right),e^{(\alpha)}\right]=c^{ab}_\alpha e^{(\alpha)},\\
			&\left(\alpha^{(b)},\alpha\right)\left[J\left(\psi^{(a)}_0\right),e^{(\alpha)}\right]-\left(\alpha^{(a)},\alpha\right)\left[J\left(\psi^{(b)}_0\right),e^{(\alpha)}\right]=c^{ab}_\alpha e^{(\alpha)},
		\end{split}
	\end{equation}
    where $c^{ab}_\alpha=\left(\alpha^{(b)},\alpha\right)c^{a\alpha}-\left(\alpha^{(a)},\alpha\right)c^{b\alpha}$.
\end{corollary}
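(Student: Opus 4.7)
The plan is to derive this corollary as a direct algebraic consequence of Proposition \ref{Jprop}. The crucial observation is that the proposition gives a two-term expansion of $[J(\psi^{(a)}_0), e^{(\alpha)}]$ in which the $J(e^{(\alpha)})$ coefficient is precisely $(\alpha^{(a)},\alpha)$, so taking an antisymmetric combination weighted by these same inner products should eliminate the $J$-valued terms and leave only the scalar multiples of $e^{(\alpha)}$.

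Concretely, I would first write down, from Proposition \ref{Jprop}, the two identities
\begin{equation*}
\bigl[J(\psi^{(a)}_0),e^{(\alpha)}\bigr]=(\alpha^{(a)},\alpha)J(e^{(\alpha)})+c^{a\alpha}e^{(\alpha)},\qquad
\bigl[J(\psi^{(b)}_0),e^{(\alpha)}\bigr]=(\alpha^{(b)},\alpha)J(e^{(\alpha)})+c^{b\alpha}e^{(\alpha)}.
\end{equation*}
Multiplying the first by $(\alpha^{(b)},\alpha)$ and the second by $(\alpha^{(a)},\alpha)$ and subtracting, the $J(e^{(\alpha)})$ terms cancel identically, leaving exactly
\begin{equation*}
\bigl((\alpha^{(b)},\alpha)c^{a\alpha}-(\alpha^{(a)},\alpha)c^{b\alpha}\bigr)e^{(\alpha)}=c^{ab}_\alpha e^{(\alpha)},
\end{equation*}
which is the claimed identity in the first line. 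The second line (which, reading the statement, should involve $f^{(\alpha)}$) follows by repeating the same linear combination using the $f$-version of Proposition \ref{Jprop}; the sign changes on the right-hand sides of the proposition cancel out of the antisymmetric combination, yielding the analogous result with $f^{(\alpha)}$ on both sides.

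There is really no hard step here: the proposition has already done the heavy lifting by establishing both the existence of $J(e^{(\alpha)})$ via the Weyl group action and the explicit form of the commutator. The only thing to verify is the bookkeeping that the $J(e^{(\alpha)})$ contributions cancel exactly, which is immediate from the symmetry of the coefficients. If one wanted to be careful, one could note that $J(e^{(\alpha)})$ is only well-defined up to a scalar multiple of $e^{(\alpha)}$ (as remarked after Proposition \ref{Jprop}), but this ambiguity is absorbed into a redefinition of $c^{a\alpha}$ and therefore does not affect the combination $c^{ab}_\alpha$ in any essential way. Thus the corollary is really a formal consequence of the proposition rather than a new computation.
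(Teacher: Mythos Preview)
Your proposal is correct and matches the paper's approach: the paper simply states that this corollary is ``straightforward to obtain'' from Proposition \ref{Jprop}, and your linear-combination argument is precisely the intended derivation. Your observation that the second displayed line is a typo for the $f^{(\alpha)}$ version is also correct.
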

Incidentally, following the similar proof as in \cite{ueda2019affine}, we also have
\begin{equation}
	\left[J\left(\psi^{(a)}_0\right),\widetilde{v}^{(b)}\right]+\left[J\left(\psi^{(b)}_0\right),\widetilde{v}^{(a)}\right]=\left[J\left(\psi^{(a)}_0\right),J\left(\psi^{(b)}_0\right)\right]+\left[\widetilde{v}^{(a)},\widetilde{v}^{(b)}\right]=0.
\end{equation}

Now, we are prepared to write our coproduct of the quiver Yangians. Recall that in general, $(x\otimes y)(z\otimes w)=(-1)^{|y||z|}(xz)\otimes(yw)$. For brevity, let us write a linear operator $\square(x):=x\otimes1+1\otimes x$ and define a Casimir element
\begin{equation}
	\Omega_-:=\sum_{\alpha\in\varDelta_+}\sum_{k=1}^{\dim\mathfrak{g}_\alpha}f^{(\alpha,k)}\otimes e^{(\alpha,k)}.
\end{equation}
It is straightforward to get the following commutation relations:
\begin{lemma}
	We have
	\begin{equation}
		\left[\square\left(\psi^{(a)}_r\right),\Omega_-\right]=0,\quad\left[\square\left(e^{(a)}_0\right),\Omega_-\right]=\psi^{(a)}_0\otimes e^{(a)}_0,\quad\left[\square\left(f^{(a)}_0\right),\Omega_-\right]=-f^{(a)}_0\otimes\psi^{(a)}_0.
	\end{equation}
\end{lemma}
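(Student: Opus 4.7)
The plan is to verify each identity by a direct term-by-term computation on the defining sum of $\Omega_-$. The $\psi$-identity reduces to a weight-counting argument, while the $e_0$ and $f_0$ identities are handled by pairing up contributions from opposite tensor slots using Lemma \ref{betaalphalemma}.

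For the first relation (taking $r=0$), I would use that $\psi_0^{(a)}$ acts as the weight operator $(\alpha^{(a)},\alpha)$ on every $e^{(\alpha,k)}$ and as $-(\alpha^{(a)},\alpha)$ on every $f^{(\alpha,k)}$. This is immediate on simple root vectors from \eqref{psi0emin} and \eqref{psi0fmin}, and extends to arbitrary positive roots by induction on root height: each higher root vector is built by nested Chevalley brackets from simple ones, and $\psi_0^{(a)}$ commutes through such brackets up to the correct weight shift. The two weight contributions in $\square(\psi_0^{(a)})$ acting on a single summand $f^{(\alpha,k)}\otimes e^{(\alpha,k)}$ then cancel pointwise, and the whole commutator is zero.

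For the second relation I would expand
\begin{equation*}
\bigl[\square(e_0^{(a)}),\Omega_-\bigr]=\sum_{\alpha\in\varDelta_+}\sum_{k=1}^{\dim\mathfrak{g}_\alpha}\Bigl([e_0^{(a)},f^{(\alpha,k)}]\otimes e^{(\alpha,k)}+(-1)^{|a||\alpha|}f^{(\alpha,k)}\otimes [e_0^{(a)},e^{(\alpha,k)}]\Bigr),
\end{equation*}
peel off the $\alpha=\alpha^{(a)}$ contribution of the first sum, which by \eqref{efmin} yields exactly $\psi_0^{(a)}\otimes e_0^{(a)}$, and then match the remaining terms in pairs. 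Reindexing the residual first sum via $\beta=\alpha-\alpha^{(a)}$, the surviving contributions are those $\alpha$ for which $\alpha-\alpha^{(a)}\in\varDelta_+$, and $[e_0^{(a)},f^{(\alpha,k)}]\in\mathfrak{g}_{-\beta}$ can be re-expanded in the dual basis. For each such $\beta$, Lemma \ref{betaalphalemma} applied with $z=e_0^{(a)}$ equates the $\beta$-indexed residual term of the first sum with the $\beta$-indexed term of the second sum, so that after collecting signs the two sums annihilate each other. The third relation follows by the symmetric argument with the roles of $e$ and $f$ interchanged, producing the residual term $-f_0^{(a)}\otimes\psi_0^{(a)}$ from the $\alpha=\alpha^{(a)}$ piece.

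The main obstacle will be the careful tracking of super-signs in the pairing step for the $e_0$ and $f_0$ identities: the factor $(-1)^{|a||\alpha|}$ introduced when moving $e_0^{(a)}$ past $f^{(\alpha,k)}$ in the tensor product, together with the sign coming from graded antisymmetry when passing between $[x,y]$ and $[y,x]$ and the signs implicit in Lemma \ref{betaalphalemma}, must all line up for the cancellation to close on the nose. A secondary subtlety concerns the intended range of $r$ in the first equation: if higher modes $\psi_r^{(a)}$ with $r\geq 1$ are also meant, the diagonal weight argument no longer suffices, because $\psi_1^{(a)}$ sends $e_0^{(\alpha,k)}$ out of the image of $U(\mathfrak{g})$ via \eqref{psie}; one would then need to induct using the full $\psi e$ and $\psi f$ relations \eqref{psie}, \eqref{psif} to show that the higher-mode corrections introduced in each tensor slot cancel against their companion in the other slot after summation over roots.
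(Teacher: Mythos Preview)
Your approach is correct and coincides with the paper's. For the $e_0^{(a)}$ identity the paper also expands $[e_0^{(a)}\otimes 1,\Omega_-]$, separates the $\alpha=\alpha^{(a)}$ contribution yielding $\psi_0^{(a)}\otimes e_0^{(a)}$, and applies Lemma~\ref{betaalphalemma} to recast the remainder as $-[1\otimes e_0^{(a)},\Omega_-]$; the $f_0^{(a)}$ case is symmetric, and the weight argument for $r=0$ is exactly what is used.

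Your caution about $r\geq 1$ is well placed and in fact decisive: the identity does \emph{not} hold for $r=1$. Since $[\psi_1^{(a)},e_0^{(b)}]$ contains the genuine mode-$1$ term $A_{ab}\,e_1^{(b)}$ (and similarly for $f$), the commutator $[\square(\psi_1^{(a)}),\Omega_-]$ produces pieces such as $-A_{ab}\,f_1^{(b)}\otimes e_0^{(b)}$ and $A_{ab}\,f_0^{(b)}\otimes e_1^{(b)}$ already at the simple-root level; these live in a fixed weight bigrading and cannot be cancelled by contributions from any other root, all of which carry different weights. The paper's own proof says the first relation ``directly follows'' and only illustrates the $e_0$ case, and its sole subsequent use of the $\psi$-identity (the $\psi_1\psi_0$ compatibility in Theorem~\ref{coprodthm}) requires only the $r=0$ statement: the $\psi_1\psi_1$ case there is handled by a separate explicit computation via Proposition~\ref{Jprop} that does not invoke $[\square(\psi_1^{(a)}),\Omega_-]=0$. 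So you should read the lemma as asserting only $r=0$; that is all that is proved and all that is needed.
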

\begin{proof}
	The relations directly follow the definitions of the operators. Here, we will only explicitly show the second one as an illustration. This can be seen by
	\begin{equation}
		\begin{split}
			\left[e^{(a)}\otimes1,\Omega_-\right]=&\sum_{\alpha\in\varDelta_+}\sum_{k=1}^{\dim\mathfrak{g}_\alpha}\left[e^{(a)}_0,f^{(\alpha,k)}\right]\otimes e^{(\alpha,k)}\\
			=&\psi^{(a)}_0\otimes e^{(a)}_0+\sum_{\alpha\in\varDelta_+\backslash\left\{\alpha^{(a)}\right\}}\sum_{k=1}^{\dim\mathfrak{g}_{\alpha-\alpha^{(a)}}}f^{\left(\alpha-\alpha^{(a)},k\right)}\otimes\left[e^{\left(\alpha-\alpha^{(a)},k\right)},e^{(a)}_0\right]\\
			=&\psi^{(a)}_0\otimes e^{(a)}_0-\left[1\otimes e^{(a)}_0,\Omega_-\right],
		\end{split}
	\end{equation}
    where we have used Lemma \ref{betaalphalemma} in the second equality.
\end{proof}

Let us also introduce another linear map $\Delta$ defined by
\begin{equation}
	\begin{split}
		&\Delta\left(\psi^{(a)}_0\right)=\square\left(\psi^{(a)}_0\right),\quad\Delta\left(e^{(a)}_0\right)=\square\left(e^{(a)}_0\right),\quad\Delta\left(f^{(a)}_0\right)=\square\left(f^{(a)}_0\right),\\
		&\Delta\left(\psi^{(a)}_1\right)=\square\left(\psi^{(a)}_1\right)+(\epsilon_1+\epsilon_2)\psi^{(a)}_0\otimes\psi^{(a)}_0+(\epsilon_1+\epsilon_2)\left[\psi^{(a)}_0\otimes1,\Omega_-\right]\\
		&\qquad\qquad=\square\left(\psi^{(a)}_1\right)+(\epsilon_1+\epsilon_2)\psi^{(a)}_0\otimes\psi^{(a)}_0-(\epsilon_1+\epsilon_2)\sum_{\alpha\in\varDelta_+^\text{re}}\left(\alpha^{(a)},\alpha\right)f^{(\alpha)}\otimes e^{(\alpha)}.
	\end{split}\label{Delta}
\end{equation}
Notice that this uniquely determines $\Delta$ as the actions on all modes can be obtained following the discussions in \S\ref{minimalistic}. For instance,
\begin{proposition}
	We have
	\begin{align}
		&\Delta\left(\widetilde{\psi}^{(a)}_1\right)=\square\left(\widetilde{\psi}^{(a)}_1\right)+(\epsilon_1+\epsilon_2)\left[\psi^{(a)}_0\otimes1,\Omega_-\right],\\
		&\Delta\left(e^{(a)}_1\right)=\square\left(e^{(a)}_1\right)-(\epsilon_1+\epsilon_2)\left[\Omega_-,e^{(a)}_0\otimes1\right],\\
		&\Delta\left(f^{(a)}_1\right)=\square\left(f^{(a)}_1\right)+(\epsilon_1+\epsilon_2)\left[\Omega_-,1\otimes f^{(a)}_0\right].
	\end{align}\label{Delta1prop}
\end{proposition}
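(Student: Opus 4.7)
The plan is to obtain all three formulas directly from the definition \eqref{Delta} of $\Delta$ on the low modes, combined with the inductive formulas \eqref{finitegenerators} coming from the minimalistic presentation and the commutation identities for $\Omega_-$ established in the preceding lemma. The point is that, by the minimalistic presentation of Theorem \ref{minthm}, $\Delta$ is determined on every higher mode once it is fixed on $\psi^{(a)}_{0,1}$, $e^{(a)}_0$, $f^{(a)}_0$, so the proof reduces to a bookkeeping exercise.

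For $\Delta(\widetilde{\psi}^{(a)}_1)$, I would simply expand $\widetilde{\psi}^{(a)}_1 = \psi^{(a)}_1 - \tfrac{1}{2}(\epsilon_1+\epsilon_2)(\psi^{(a)}_0)^2$ and use the given $\Delta$ on the two constituents. Since
\begin{equation}
\bigl(\square(\psi^{(a)}_0)\bigr)^2 = (\psi^{(a)}_0)^2\otimes 1 + 2\,\psi^{(a)}_0\otimes \psi^{(a)}_0 + 1\otimes (\psi^{(a)}_0)^2,
\end{equation}
the cross term $2\,\psi^{(a)}_0\otimes\psi^{(a)}_0$ cancels precisely against the corresponding summand appearing in $\Delta(\psi^{(a)}_1)$, and the remaining diagonal terms assemble into $\square(\widetilde{\psi}^{(a)}_1)$, leaving the stated result.

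For $\Delta(e^{(a)}_1)$, I would invoke \eqref{finitegenerators} with an admissible $b$ (namely $b=a$ if $|a|=0$ and $b=a\pm 1$ if $|a|=1$), writing $e^{(a)}_1 = \tfrac{1}{(\alpha^{(a)},\alpha^{(b)})}[\widetilde{\psi}^{(b)}_1, e^{(a)}_0]-\tfrac{\sigma^{ab}_1-\sigma^{ba}_1}{2}e^{(a)}_0$, applying $\Delta$, and substituting the formula just derived for $\Delta(\widetilde{\psi}^{(b)}_1)$. The resulting expression splits into two pieces. First, $[\square(\widetilde{\psi}^{(b)}_1),\square(e^{(a)}_0)]$, after verifying that mixed cross terms of the form $[X\otimes 1, 1\otimes Y]$ vanish identically under the super convention, collapses to $\square\bigl([\widetilde{\psi}^{(b)}_1,e^{(a)}_0]\bigr)$, which by \eqref{psitildeefmin} equals $(\alpha^{(a)},\alpha^{(b)})\square(e^{(a)}_1) + \tfrac{\sigma^{ab}_1-\sigma^{ba}_1}{2}(\alpha^{(a)},\alpha^{(b)})\square(e^{(a)}_0)$. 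Second, the bracket $[[\psi^{(b)}_0\otimes 1,\Omega_-],\square(e^{(a)}_0)]$ is handled by the super Jacobi identity: one term vanishes because $[\psi^{(b)}_0\otimes 1,\psi^{(a)}_0\otimes e^{(a)}_0]$ reduces to $[\psi^{(b)}_0,\psi^{(a)}_0]\otimes e^{(a)}_0=0$ via \eqref{psipsi}, and the other contributes $-(\alpha^{(a)},\alpha^{(b)})\,[\Omega_-, e^{(a)}_0\otimes 1]$ after using the lemma $[\square(e^{(a)}_0),\Omega_-]=\psi^{(a)}_0\otimes e^{(a)}_0$. Dividing through by $(\alpha^{(a)},\alpha^{(b)})$ and subtracting the correction proportional to $\square(e^{(a)}_0)$ gives the claimed expression. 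The argument for $\Delta(f^{(a)}_1)$ is completely parallel, starting from the corresponding relation in \eqref{finitegenerators} and using $[\square(f^{(a)}_0),\Omega_-]=-f^{(a)}_0\otimes\psi^{(a)}_0$.

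The main obstacle I anticipate is the careful bookkeeping of $\mathbb{Z}_2$-signs in the tensor-product super brackets, in particular when verifying that $[\square(X),\square(Y)]=\square([X,Y])$ in the presence of fermionic generators and when applying the super Jacobi identity to $\bigl[[\psi^{(b)}_0\otimes 1,\Omega_-],\square(e^{(a)}_0)\bigr]$. A secondary but non-trivial consistency check is that, when $a$ is fermionic, either $b=a-1$ or $b=a+1$ is admissible, and the two resulting computations must yield the same $\Delta(e^{(a)}_1)$; this should follow from the fact that $e^{(a)}_1$ itself is unambiguously defined in the minimalistic presentation, together with $\Delta$ being an algebra map, but it is worth writing out both cases to be safe.
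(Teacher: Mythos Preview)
Your proposal is correct and follows essentially the same approach as the paper: compute $\Delta(\widetilde{\psi}^{(a)}_1)$ by direct expansion, then obtain $\Delta(e^{(a)}_1)$ (and $\Delta(f^{(a)}_1)$) from \eqref{finitegenerators} together with the super Jacobi identity and the lemma on $[\square(e^{(a)}_0),\Omega_-]$, with the bosonic/fermionic cases handled by choosing $b=a$ or $b=a\pm1$. The paper fixes $b=a+1$ in the fermionic case, while you allow either sign and note the consistency check, but otherwise the arguments are identical.
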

\begin{proof}
	The first one follows from a straightforward calculation using the linearity of $\Delta$ and \eqref{Delta}. The second and third ones can be shown in the same way, and we will only explicitly verify the one for $e^{(a)}_1$ here. When $|a|=0$, by \eqref{finitegenerators} (with $b=a$), we have
	\begin{equation}
		\Delta\left(e^{(a)}_1\right)=\frac{1}{\left(\alpha^{(a)},\alpha^{(a)}\right)}\Delta\left(\left[\widetilde{\psi}^{(a)}_1,e^{(a)}_0\right]\right)=\square\left(e^{(a)}_1\right)+\frac{\epsilon_1+\epsilon_2}{\left(\alpha^{(a)},\alpha^{(a)}\right)}\left[\left[\psi^{(a)}_0\otimes1,\Omega_-\right],\square\left(e^{(a)}_0\right)\right],
	\end{equation}
    where we have applied the result of $\Delta\left(\widetilde{\psi}^{(a)}_0\right)$ in the second equality. Using the Jacobi identity $[x,[y,z]]=[[x,y],z]+(-1)^{|x||y|}[y,[x,z]]$, this is equal to
    \begin{equation}
    	\begin{split}
    		&\square\left(e^{(a)}_1\right)-\frac{\epsilon_1+\epsilon_2}{\left(\alpha^{(a)},\alpha^{(a)}\right)}\left(\left[\psi^{(a)}_0\otimes1,\left[\Omega_-,\square\left(e^{(a)}_0\right)\right]\right]-\left[\Omega_-,\left[\psi^{(a)}_0\otimes1,\square\left(e^{(a)}_0\right)\right]\right]\right)\\
    		=&\square\left(e^{(a)}_1\right)-\frac{\epsilon_1+\epsilon_2}{\left(\alpha^{(a)},\alpha^{(a)}\right)}\left(\left[\psi^{(a)}_0\otimes1,-\psi^{(a)}_0\otimes e^{(a)}_0\right]-\left[\Omega_-,\left[\psi^{(a)}_0\otimes1,e^{(a)}_0\otimes1\right]\right]\right)\\
    		=&\square\left(e^{(a)}_1\right)-(\epsilon_1+\epsilon_2)\left[\Omega_-,e^{(a)}\otimes1\right].
    	\end{split}
    \end{equation}
    When $|a|=1$, we take $b=a+1$ in \eqref{finitegenerators}, and a similar calculation leads to the same result.
\end{proof}

This operator in fact gives a coproduct of the quiver Yangian.
\begin{theorem}
	For $M+N>2$, $MN\neq2$ and $M\neq N$, the map $\Delta:\mathtt{Y}\rightarrow\mathtt{Y}\widehat{\otimes}\mathtt{Y}$ specified by \eqref{Delta} is a coassociative algebra homomorphism.\label{coprodthm}
\end{theorem}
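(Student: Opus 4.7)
The plan is to exploit the minimalistic presentation of Theorem~\ref{minthm}: since $\mathtt{Y}$ is generated by the modes of levels $r=0,1$ subject to the relations (R) together with Serre (S), and since $\Delta$ on all higher modes is then forced by the inductive formulas \eqref{finitegenerators}, it suffices to check that $\Delta$ preserves each relation of (R) and (S). On zero modes $\Delta$ restricts to the cocommutative $\square$, so all zero-mode relations --- in particular the Serre block (S) and the zero-mode part of (R), which together recover the Chevalley--Serre presentation of $U(\mathfrak{g})$ with $\mathfrak{g}=A^{(1)}_{M-1,N-1}$ --- hold automatically because $\square$ is the standard coproduct on $U(\mathfrak{g})$.

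The substantive work is then to verify the relations of (R) that involve at least one first mode: $[\psi^{(a)}_0,\psi^{(b)}_1]=0$, the $\psi_1 e_0$ and $\psi_1 f_0$ relations, $[e^{(a)}_1,f^{(b)}_0]=[e^{(a)}_0,f^{(b)}_1]=\delta_{ab}\psi^{(a)}_1$, and the $e_1 e_0$ and $f_1 f_0$ relations. For each, the plan is to substitute Proposition~\ref{Delta1prop} and Corollary~\ref{Jcor} into both sides, expand using the commutator identities for $\Omega_-$ with $\square(e^{(a)}_0)$, $\square(f^{(a)}_0)$, $\square(\psi^{(a)}_r)$ from the lemma preceding \eqref{Delta}, and reduce to an explicit equality in $\mathtt{Y}\otimes\mathtt{Y}$. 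A typical reduction produces a double sum over $\varDelta^{\mathrm{re}}_+$ that is collapsed by Lemma~\ref{betaalphalemma}: a shift $\alpha\mapsto\alpha\pm\alpha^{(a)}$ in the root label turns the would-be discrepancy into zero-mode contributions matching the right-hand side. Here the hypotheses $M\neq N$, $MN\neq 2$ and $M+N>2$ are all used --- the first rules out the $\mathfrak{psl}(M|M)$ degeneracy so that Proposition~\ref{Jprop} applies cleanly, and the latter two permit the invocation of Theorem~\ref{minthm}. That $\Delta$ lands in $\mathtt{Y}\widehat{\otimes}\mathtt{Y}$ is immediate from the grading: $\Omega_-\in\mathtt{Y}^-\widehat{\otimes}\mathtt{Y}^+$, and only finitely many root contributions survive in each quotient $A_n\otimes A_n$.

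Coassociativity $(\Delta\otimes\mathrm{id})\Delta=(\mathrm{id}\otimes\Delta)\Delta$ is then checked on the finite generating set. On zero modes it is the coassociativity of $\square$. On $\psi^{(a)}_1$, $e^{(a)}_1$ and $f^{(a)}_1$ it reduces, via \eqref{Delta} and Proposition~\ref{Delta1prop}, to the standard compatibilities of $\Omega_-$ under iterated $\square$, which are direct consequences of its root-wise definition combined with Lemma~\ref{betaalphalemma}. I expect the main obstacle to be the $e_1 e_0$ relation (and its $f$-counterpart) when $\sigma^{ab}_1\neq 0$: the correction $-(\epsilon_1+\epsilon_2)[\Omega_-,e^{(a)}_0\otimes 1]$ hidden in $\Delta(e^{(a)}_1)$ must conspire against its $b$-counterpart to reproduce precisely the quadratic expression $\sigma^{ba}_1\,e^{(a)}_0 e^{(b)}_0+(-1)^{|a||b|}\sigma^{ab}_1\,e^{(b)}_0 e^{(a)}_0$ on the right-hand side, and tracking the Bose--Fermi signs through the necessary root reorganizations via Lemma~\ref{betaalphalemma} is the most delicate combinatorial step in the argument.
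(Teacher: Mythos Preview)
Your overall strategy is correct and matches the paper: reduce to the minimalistic generators, check (R) and (S) one by one using Proposition~\ref{Delta1prop} and the $\Omega_-$ commutator identities, then verify coassociativity on the finite generating set. However, there is a genuine omission. Your list of relations ``involving at least one first mode'' leaves out $[\psi^{(a)}_1,\psi^{(b)}_1]=0$, and this turns out to be by far the hardest relation to verify --- not the $e_1e_0$ relation you flag as the main obstacle. In the paper's proof the $ee$ and $ff$ relations are dispatched in one sentence by the same Jacobi-identity manipulation used for $\widetilde{\psi}_1e_0$, whereas the $\psi_1\psi_1$ check occupies the bulk of the argument.

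The difficulty with $[\Delta(\widetilde{\psi}^{(a)}_1),\Delta(\widetilde{\psi}^{(b)}_1)]$ is that after expanding via Proposition~\ref{Delta1prop} one must show the vanishing of a sum of three pieces: two cross terms of the shape $[\square(\widetilde{\psi}^{(a)}_1),[\psi^{(b)}_0\otimes 1,\Omega_-]]$ and the quadratic piece $(\epsilon_1+\epsilon_2)[[\psi^{(a)}_0\otimes 1,\Omega_-],[\psi^{(b)}_0\otimes 1,\Omega_-]]$. None of these vanishes separately. The paper organizes everything by root height and shows that for each fixed $k$ the height-$k$ contribution from the quadratic piece exactly cancels the height-$k$ contribution from the cross terms; this cancellation relies on Proposition~\ref{Jprop} (to rewrite $[\widetilde{\psi}^{(a)}_1,e^{(\alpha)}]$ for non-simple real $\alpha$), on the invariance of the form, and on a symmetrization trick $\alpha\leftrightarrow\beta$ that converts products into (anti)commutators. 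The shift argument via Lemma~\ref{betaalphalemma} that you describe suffices for the $e_1f_0$, $\widetilde{\psi}_1e_0$, and $e_1e_0$ checks, but it does not by itself handle $\psi_1\psi_1$; you need the additional machinery just described.
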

\begin{proof}
	Let us first show that $\Delta$ is compatible with the relations (R) and (S) in Theorem \ref{minthm}. This is immediate for those involving only the zero modes $\psi^{(a)}_0$, $e^{(a)}_0$ and $f^{(a)}_0$. For the $\psi_0e_1$ relation \eqref{psi0emin} and the $\psi_0f_1$ relation \eqref{psi0fmin} (with $r=1$), they can be checked following straightforward computations using Proposition \ref{Delta1prop}. For the $e_1f_0$ relation, we have
	\begin{equation}
		\left[\Delta\left(e^{(a)}_1\right),\Delta\left(f^{(a)}_0\right)\right]=\delta_{ab}\square\left(\psi^{(a)}_1\right)-(\epsilon_1+\epsilon_2)\left[\left[e^{(a)}_0\otimes1,\Omega_-\right],\square\left(f^{(a)}_0\right)\right],
	\end{equation}
    where we have used Proposition \ref{Delta1prop} and \eqref{efmin}. As before, using the Jacobi identity, this becomes
    \begin{equation}
    	\begin{split}
    		&\delta_{ab}\square\left(\psi^{(a)}_1\right)+(\epsilon_1+\epsilon_2)\left(\left[e^{(a)}_0\otimes1,f^{(b)}_0\otimes\psi^{(b)}_0\right]+\delta_{ab}\left[\psi^{(a)}\otimes1,\Omega_-\right]\right)\\
    		=&\delta_{ab}\square\left(\psi^{(a)}_1\right)+(\epsilon_1+\epsilon_2)\left(\delta_{ab}\psi^{(a)}_0\otimes\psi^{(a)}_0+\delta_{ab}\left[\psi^{(a)}\otimes1,\Omega_-\right]\right)\\
    		=&\delta_{ab}\Delta\left(\psi^{(a)}_1\right).
    	\end{split}
    \end{equation}
    The $e_0f_1$ relation can be verified in the same way. Next, let us verify the $\psi_1e_0$ relation, which will be similar for $\psi_1f_0$. For convenience, let us consider the equivalent \eqref{psitildeefmin} with $\widetilde{\psi}^{(a)}_1$. Using Proposition \ref{Delta1prop} and the Jacobi identity, we find
    \begin{equation}
    	\begin{split}
    		&\left[\Delta\left(\widetilde{\psi}^{(a)}_1\right),\Delta\left(e^{(b)}_0\right)\right]\\
    		=&\left(\alpha^{(a)},\alpha^{(b)}\right)\square\left(e^{(b)}_1\right)+\frac{\sigma^{ba}_1-\sigma^{ab}_1}{2}\left(\alpha^{(a)},\alpha^{(b)}\right)\square\left(e^{(b)}_1\right)\\
    		&+(\epsilon_1+\epsilon_2)\left(\left[\psi^{(a)}_0\otimes1,\left[\Omega_-,\square\left(e^{(b)}_0\right)\right]\right]-\left[\Omega_-,\left[\psi^{(a)}_0\otimes1,\square\left(e^{(b)}_0\right)\right]\right]\right)\\
    		=&\left(\alpha^{(a)},\alpha^{(b)}\right)\square\left(e^{(b)}_1\right)+\frac{\sigma^{ba}_1-\sigma^{ab}_1}{2}\left(\alpha^{(a)},\alpha^{(b)}\right)\square\left(e^{(b)}_1\right)-(\epsilon_1+\epsilon_2)\left(\alpha^{(a)},\alpha^{(b)}\right)\left[\Omega_-,e^{(b)}\otimes1\right]\\
    		=&\left(\alpha^{(a)},\alpha^{(b)}\right)\Delta\left(e^{(b)}_1\right)+\frac{\sigma^{ba}_1-\sigma^{ab}_1}{2}\left(\alpha^{(a)},\alpha^{(b)}\right)\square\left(e^{(b)}_0\right).
    	\end{split}
    \end{equation}
    The $ee$ and $ff$ relations can likewise be verified using the same method.
    
    Now the only remaining relation yet to be checked is the $\psi\psi$ relation. For the $\psi_1\psi_0$ case, the compatibility follows from the fact that $\square\left(\psi^{(a)}_r\right)$ commutes with $\Omega_-$. For the $\psi_1\psi_1$ case, it would again be more convenient to work with $\widetilde{\psi}^{(a)}_1$. Since
    \begin{equation}
    	\Delta\left(\widetilde{\psi}^{(a)}_1\right)=\square\left(\widetilde{\psi}^{(a)}_1\right)+(\epsilon_1+\epsilon_2)\left[\psi^{(a)}_0\otimes1,\Omega_-\right]\quad\text{and}\quad\square\left(\left[\widetilde{\psi}^{(a)}_1,\widetilde{\psi}^{(b)}_1\right]\right)=0,
    \end{equation}
    it is equivalent to showing that
    \begin{equation}
    	\begin{split}
    		&\left[\square\left(\widetilde{\psi}^{(a)}_1\right),\left[\psi^{(b)}_0\otimes1,\Omega_-\right]\right]+\left[\left[\psi^{(a)}_0\otimes1,\Omega_-\right],\square\left(\widetilde{\psi}^{(b)}_1\right)\right]\\
    		&+(\epsilon_1+\epsilon_2)\left[\left[\psi^{(a)}_0\otimes1,\Omega_-\right],\left[\psi^{(b)}_0\otimes1,\Omega_-\right]\right]
    	\end{split}\label{psi1psi1prf1}
    \end{equation}
    vanishes. In particular, after expanding the brackets and rearranging the terms, the second line in \eqref{psi1psi1prf1} becomes
    \begin{equation}
    	\begin{split}
    		\frac{1}{2}(\epsilon_1+\epsilon_2)\sum_{k>0}\sum_{\substack{\text{ht}(\alpha+\beta)=k\\ \alpha,\beta\in\varDelta_+^\text{re}}}&\left(\alpha^{(a)},\alpha\right)\left(\alpha^{(b)},\beta\right)(-1)^{|\alpha||\beta|}\left(\left\{f^{(\alpha)},f^{(\beta)}\right\}\otimes\left[e^{(\alpha)},e^{(\beta)}\right]\right.\\
    		&\left.+\left(\left[f^{(\alpha)},f^{(\beta)}\right]\otimes\left\{e^{(\alpha)},e^{(\beta)}\right\}\right)\right),
    	\end{split}
    \end{equation}
    where ht denotes the height of a root, and we have defined $\{x,y\}=xy+(-1)^{|x||y|}yx$ for brevity. For the first line in \eqref{psi1psi1prf1}, it equals
    \begin{equation}
    	\sum_{k>0}\left(\sum_{\substack{\text{ht}(\alpha)=k\\ \alpha\in\varDelta_+^\text{re}}}\left(\alpha^{(a)},\alpha\right)\left[\square\left(\widetilde{\psi}^{(b)}_1\right),f^{(\alpha)}\otimes e^{(\alpha)}\right]-\sum_{\substack{\text{ht}(\alpha)=k\\ \alpha\in\varDelta_+^\text{re}}}\left(\alpha^{(b)},\alpha\right)\left[\square\left(\widetilde{\psi}^{(a)}_1\right),f^{(\alpha)}\otimes e^{(\alpha)}\right]\right).
    \end{equation}
    Using Proposition \ref{Jprop}, this becomes
    \begin{equation}
    	\begin{split}
    		\sum_{k>0}\sum_{\substack{\text{ht}(\alpha)=k\\ \alpha\in\varDelta_+^\text{re}}}&\left(\left[\left(\alpha^{(a)},\alpha\right)\widetilde{v}^{(b)}_1-\left(\alpha^{(b)},\alpha\right)\widetilde{v}^{(a)}_1,f^{(\alpha)}\right]\otimes e^{(\alpha)}\right.\\
    		&\left.+\left[\left(\alpha^{(a)},\alpha\right)\widetilde{v}^{(b)}_1-\left(\alpha^{(b)},\alpha\right)\widetilde{v}^{(a)}_1,e^{(\alpha)}\right]\otimes f^{(\alpha)}\right).
    	\end{split}
    \end{equation}
    Now, we claim that for each $k$, we have
    \begin{equation}
    	\begin{split}
    		&\frac{1}{2}(\epsilon_1+\epsilon_2)\sum_{\substack{\text{ht}(\alpha+\beta)=k\\ \alpha,\beta\in\varDelta_+^\text{re}}}\left(\alpha^{(a)},\alpha\right)\left(\alpha^{(b)},\beta\right)(-1)^{|\alpha||\beta|}\left(\left\{f^{(\alpha)},f^{(\beta)}\right\}\otimes\left[e^{(\alpha)},e^{(\beta)}\right]\right)\\
    		=&\sum_{\substack{\text{ht}(\alpha)=k\\ \alpha\in\varDelta_+^\text{re}}}\left(\left[\left(\alpha^{(b)},\alpha\right)\widetilde{v}^{(a)}_1-\left(\alpha^{(a)},\alpha\right)\widetilde{v}^{(b)}_1,f^{(\alpha)}\right]\otimes e^{(\alpha)}\right)
    	\end{split}\label{psi1psi1prf2}
    \end{equation}
    and
    \begin{equation}
    	\begin{split}
    		&\frac{1}{2}(\epsilon_1+\epsilon_2)\sum_{\substack{\text{ht}(\alpha+\beta)=k\\ \alpha,\beta\in\varDelta_+^\text{re}}}\left(\alpha^{(a)},\alpha\right)\left(\alpha^{(b)},\beta\right)(-1)^{|\alpha||\beta|}\left(\left[f^{(\alpha)},f^{(\beta)}\right]\otimes\left\{e^{(\alpha)},e^{(\beta)}\right\}\right)\\
    		=&\sum_{\substack{\text{ht}(\alpha)=k\\ \alpha\in\varDelta_+^\text{re}}}\left(\left[\left(\alpha^{(b)},\alpha\right)\widetilde{v}^{(a)}_1-\left(\alpha^{(a)},\alpha\right)\widetilde{v}^{(b)}_1,e^{(\alpha)}\right]\otimes f^{(\alpha)}\right).
    	\end{split}\label{psi1psi1prf3}
    \end{equation}
    These two equalities would then imply \eqref{psi1psi1prf1} being zero. Here, we will only write the proof for \eqref{psi1psi1prf2} explicitly while \eqref{psi1psi1prf3} can be shown in a similar way.
    
    We first notice that by definition of $\widetilde{v}^{(a)}$,
    \begin{equation}
    	\begin{split}
    		&\left[\left(\alpha^{(b)},\alpha\right)\widetilde{v}^{(a)}_1-\left(\alpha^{(a)},\alpha\right)\widetilde{v}^{(b)}_1,f^{(\alpha)}\right]\\
    		=&\frac{1}{2}(\epsilon_1+\epsilon_2)\sum_{\beta\in\varDelta_+^{\text{re}}}\left(\left(\alpha^{(a)},\beta\right)\left(\alpha^{(b)},\alpha\right)-\left(\alpha^{(b)},\beta\right)\left(\alpha^{(a)},\alpha\right)\right)\left(\left[f^{(\alpha)},f^{(\beta)}\right]e^{(\beta)}+f^{(\beta)}\left[e^{(\alpha)},f^{(\alpha)}\right]\right).
    	\end{split}\label{psi1psi1prf4}
    \end{equation}
    When $\gamma:=\beta-\alpha\in\mathbb{\varDelta}_+^\text{re}$, we have
    \begin{equation}
    	f^{(\beta)}\left[e^{(\beta)},f^{(\alpha)}\right]=\left(f^{(\gamma)},\left[e^{(\beta)},f^{(\alpha)}\right]\right)f^{(\beta)}e^{(\gamma)}.
    \end{equation}
    When $\gamma:=\alpha-\beta\in\mathbb{\varDelta}_+^\text{re}$, we have
    \begin{equation}
    	f^{(\beta)}\left[e^{(\beta)},f^{(\alpha)}\right]=\left(e^{(\gamma)},\left[e^{(\beta)},f^{(\alpha)}\right]\right)f^{(\beta)}f^{(\gamma)}=-(-1)^{|\beta||\gamma|}\left(\left[e^{(\beta)},e^{(\gamma)}\right],f^{(\alpha)}\right)f^{(\beta)}f^{(\gamma)},
    \end{equation}
    where we have used the invariance property $([x,y],z)=(x,[y,z])$ in the second equality. Therefore,
    \begin{equation}
    	\begin{split}
    		&\sum_{\beta\in\varDelta_+^{\text{re}}}\left(\left(\alpha^{(a)},\beta\right)\left(\alpha^{(b)},\alpha\right)-\left(\alpha^{(b)},\beta\right)\left(\alpha^{(a)},\alpha\right)\right)f^{(\beta)}\left[e^{(\alpha)},f^{(\alpha)}\right]\\
    		=&\sum_{\substack{\beta,\gamma\in\varDelta_+^\text{re}\\ \beta-\alpha=\gamma}}\left(\left(\alpha^{(a)},\beta\right)\left(\alpha^{(b)},\beta-\gamma\right)-\left(\alpha^{(b)},\beta\right)\left(\alpha^{(a)},\beta-\gamma\right)\right)\left(f^{(\gamma)},\left[e^{(\beta)},f^{(\alpha)}\right]\right)f^{(\beta)}e^{(\gamma)}\\
    		&-\sum_{\substack{\beta,\gamma\in\varDelta_+^\text{re}\\ \alpha-\beta=\gamma}}\left(\left(\alpha^{(a)},\beta\right)\left(\alpha^{(b)},\beta+\gamma\right)-\left(\alpha^{(b)},\beta\right)\left(\alpha^{(a)},\beta+\gamma\right)\right)(-1)^{|\beta||\gamma|}\left(\left[e^{(\beta)},e^{(\gamma)}\right],f^{(\alpha)}\right)f^{(\beta)}f^{(\gamma)}\\
    		=&\sum_{\substack{\beta,\gamma\in\varDelta_+^\text{re}\\ \beta-\alpha=\gamma}}\left(\left(\alpha^{(a)},\gamma\right)\left(\alpha^{(b)},\beta\right)-\left(\alpha^{(b)},\gamma\right)\left(\alpha^{(a)},\beta\right)\right)\left(f^{(\gamma)},\left[e^{(\beta)},f^{(\alpha)}\right]\right)f^{(\beta)}e^{(\gamma)}\\
    		&-\sum_{\substack{\beta,\gamma\in\varDelta_+^\text{re}\\ \alpha-\beta=\gamma}}\left(\left(\alpha^{(a)},\beta\right)\left(\alpha^{(b)},\gamma\right)-\left(\alpha^{(b)},\beta\right)\left(\alpha^{(a)},\gamma\right)\right)(-1)^{|\beta||\gamma|}\left(\left[e^{(\beta)},e^{(\gamma)}\right],f^{(\alpha)}\right)f^{(\beta)}f^{(\gamma)},
    	\end{split}
    \end{equation}
    where the terms involving imaginary roots have vanishing coefficients. Likewise, the first part in the summation of \eqref{psi1psi1prf4} becomes
    \begin{equation}
    	-\sum_{\substack{\beta,\gamma\in\varDelta_+^\text{re}\\ \beta-\alpha=\gamma}}\left(\left(\alpha^{(a)},\gamma\right)\left(\alpha^{(b)},\beta\right)-\left(\alpha^{(b)},\gamma\right)\left(\alpha^{(a)},\beta\right)\right)\left(f^{(\gamma)},\left[e^{(\beta)},f^{(\alpha)}\right]\right)f^{(\beta)}e^{(\gamma)}.
    \end{equation}
    Thus,
    \begin{equation}
    	\begin{split}
    		&\left[\left(\alpha^{(b)},\alpha\right)\widetilde{v}^{(a)}_1-\left(\alpha^{(a)},\alpha\right)\widetilde{v}^{(b)}_1,f^{(\alpha)}\right]\otimes e^{(\alpha)}\\
    		=&-\frac{1}{2}(\epsilon_1+\epsilon_2)\sum_{\substack{\beta,\gamma\in\varDelta_+^\text{re}\\ \alpha-\beta=\gamma}}\left(\left(\alpha^{(a)},\beta\right)\left(\alpha^{(b)},\gamma\right)-\left(\alpha^{(b)},\beta\right)\left(\alpha^{(a)},\gamma\right)\right)\\
    		&\qquad\qquad\qquad(-1)^{|\beta||\gamma|}\left(\left[e^{(\beta)},e^{(\gamma)}\right],f^{(\alpha)}\right)f^{(\beta)}f^{(\gamma)}\otimes e^{(\alpha)}\\
    		=&-\frac{1}{2}(\epsilon_1+\epsilon_2)\sum_{\substack{\beta,\gamma\in\varDelta_+^\text{re}\\ \alpha-\beta=\gamma}}\left(\left(\alpha^{(a)},\beta\right)\left(\alpha^{(b)},\gamma\right)-\left(\alpha^{(b)},\beta\right)\left(\alpha^{(a)},\gamma\right)\right)(-1)^{|\beta||\gamma|}f^{(\beta)}f^{(\gamma)}\otimes\left[e^{(\beta)},e^{(\gamma)}\right].
    	\end{split}\label{psi1psi1prf5}
    \end{equation}
    This is also equal to
    \begin{equation}
    	-\frac{1}{2}(\epsilon_1+\epsilon_2)\sum_{\substack{\beta,\gamma\in\varDelta_+^\text{re}\\ \alpha-\beta=\gamma}}\left(-\left(\alpha^{(a)},\beta\right)\left(\alpha^{(b)},\gamma\right)+\left(\alpha^{(b)},\beta\right)\left(\alpha^{(a)},\gamma\right)\right)(-1)^{|\beta||\gamma|}(-1)^{|\beta||\gamma|}f^{(\beta)}f^{(\gamma)}\otimes\left[e^{(\gamma)},e^{(\beta)}\right].\label{psi1psi1prf6}
    \end{equation}
    Adding \eqref{psi1psi1prf5} and \eqref{psi1psi1prf6} together and then dividing it by 2, we get
    \begin{equation}
    	\begin{split}
    		&\sum_{\substack{\text{ht}(\alpha)=k\\ \alpha\in\varDelta_+^\text{re}}}\left[\left(\alpha^{(b)},\alpha\right)\widetilde{v}^{(a)}_1-\left(\alpha^{(a)},\alpha\right)\widetilde{v}^{(b)}_1,f^{(\alpha)}\right]\otimes e^{(\alpha)}\\
    		=&\frac{1}{2}(\epsilon_1+\epsilon_2)\sum_{\substack{\beta,\gamma\in\varDelta_+^\text{re}\\ \beta+\gamma=\alpha}}\frac{1}{2}\left(\left(\alpha^{(a)},\beta\right)\left(\alpha^{(b)},\gamma\right)-\left(\alpha^{(b)},\beta\right)\left(\alpha^{(a)},\gamma\right)\right)(-1)^{|\beta||\gamma|}\left\{f^{(\beta)},f^{(\gamma)}\right\}\otimes\left[e^{(\gamma)},e^{(\beta)}\right].
    	\end{split}
    \end{equation}
    To the first line in \eqref{psi1psi1prf2}, we also add itself with $\alpha$ and $\beta$ exchanged and then divide the result by 2. This leads to
    \begin{equation}
    	\begin{split}
    		&\frac{1}{2}(\epsilon_1+\epsilon_2)\sum_{\substack{\text{ht}(\alpha+\beta)=k\\ \alpha,\beta\in\varDelta_+^\text{re}}}\left(\alpha^{(a)},\alpha\right)\left(\alpha^{(b)},\beta\right)(-1)^{|\alpha||\beta|}\left(\left\{f^{(\alpha)},f^{(\beta)}\right\}\otimes\left[e^{(\alpha)},e^{(\beta)}\right]\right)\\
    		=&\frac{1}{2}(\epsilon_1+\epsilon_2)\sum_{\substack{\text{ht}(\alpha+\beta)=k\\ \alpha,\beta\in\varDelta_+^\text{re}}}\frac{1}{2}\left(\left(\alpha^{(a)},\beta\right)\left(\alpha^{(b)},\gamma\right)-\left(\alpha^{(b)},\beta\right)\left(\alpha^{(a)},\gamma\right)\right)(-1)^{|\beta||\gamma|}\left\{f^{(\beta)},f^{(\gamma)}\right\}\otimes\left[e^{(\gamma)},e^{(\beta)}\right],
    	\end{split}
    \end{equation}
    where the terms with $\alpha+\beta\in\varDelta_+^\text{im}$ have vanishing coefficients. This proves the compatibility of $\Delta$ with the $\psi_1\psi_1$ relation, and hence with the all the relations in Theorem \ref{minthm}. As a result, $\Delta$ is an algebra homomorphism.
    
    Now, let us check the coassociativity of $\Delta$. It is immediate for the zero modes, and we only need to verify it for $\psi^{(a)}_1$, or equivalently, $\widetilde{\psi}^{(a)}_1$:
    \begin{equation}
    	\begin{split}
    		(\text{id}\otimes\Delta)\left(\Delta\left(\widetilde{\psi}^{(a)}_1\right)\right)=&\widetilde{\psi}^{(a)}_1\otimes(1\otimes1)+1\otimes\left(1\otimes\widetilde{\psi}^{(a)}_1\right)+1\otimes\left(\widetilde{\psi}^{(a)}_1\otimes1\right)\\
    		&-1\otimes\left((\epsilon_1+\epsilon_2)\sum_{\alpha\in\Delta_+^{\text{re}}}\left(\alpha^{(a)},\alpha\right)\left(f^{(\alpha)}\otimes e^{(\alpha)}\right)\right)\\
    		&-(\epsilon_1+\epsilon_2)\sum_{\alpha\in\Delta_+^{\text{re}}}\left(\alpha^{(a)},\alpha\right)\left(f^{(\alpha)}\otimes\left(1\otimes e^{(\alpha)}\right)+f^{(\alpha)}\otimes\left(e^{(\alpha)}\otimes1\right)\right);
    	\end{split}
    \end{equation}
    \begin{equation}
    	\begin{split}
    		(\Delta\otimes\text{id})\left(\Delta\left(\widetilde{\psi}^{(a)}_1\right)\right)=&\left(1\otimes1\right)\otimes\widetilde{\psi}^{(a)}_1+\left(1\otimes\widetilde{\psi}^{(a)}_1\right)\otimes1+\left(\widetilde{\psi}^{(a)}_1\otimes1\right)\otimes1\\
    		&-\left((\epsilon_1+\epsilon_2)\sum_{\alpha\in\Delta_+^{\text{re}}}\left(\alpha^{(a)},\alpha\right)\left(f^{(\alpha)}\otimes e^{(\alpha)}\right)\right)\otimes1\\
    		&-(\epsilon_1+\epsilon_2)\sum_{\alpha\in\Delta_+^{\text{re}}}\left(\alpha^{(a)},\alpha\right)\left(\left(1\otimes f^{(\alpha)}\right)\otimes e^{(\alpha)}+\left(f^{(\alpha)}\otimes1\right)\otimes e^{(\alpha)}\right).
    	\end{split}
    \end{equation}
    Therefore, $(\text{id}\otimes\Delta)\circ\Delta=(\Delta\otimes\text{id})\circ\Delta$. This completes the proof of Theorem \ref{coprodthm}.
\end{proof}

\begin{remark}
	In terms of the generators $\uppsi^{(a)}_n=(\epsilon_1+\epsilon_2)\psi^{(a)}_n$, $\mathtt{e}^{(a)}_n=(\epsilon_1+\epsilon_2)^{1/2}e^{(a)}_n$, $\mathtt{f}^{(a)}_n=(\epsilon_1+\epsilon_2)^{1/2}f^{(a)}_n$ (and likewise for $\mathtt{e}^{(\alpha,k)}$, $\mathtt{f}^{(\alpha,k)}$), the coproduct in \eqref{Delta} reads
	\begin{equation}
		\begin{split}
			&\Delta\left(\uppsi^{(a)}_0\right)=\square\left(\uppsi^{(a)}_0\right),\quad\Delta\left(\mathtt{e}^{(a)}_0\right)=\square\left(\mathtt{e}^{(a)}_0\right),\quad\Delta\left(\mathtt{f}^{(a)}_0\right)=\square\left(\mathtt{f}^{(a)}_0\right),\\
			&\Delta\left(\uppsi^{(a)}_1\right)=\square\left(\uppsi^{(a)}_1\right)+\uppsi^{(a)}_0\otimes\uppsi^{(a)}_0-(\epsilon_1+\epsilon_2)\sum_{\alpha\in\varDelta_+^\text{re}}\left(\alpha^{(a)},\alpha\right)\mathtt{f}^{(\alpha)}\otimes\mathtt{e}^{(\alpha)}.
		\end{split}
	\end{equation}
\end{remark}

\section{Isomorphism of Quiver Yangians}\label{isoQY}
Given \emph{any} toric CY threefold, its quivers can be in different toric phases. Since these quivers are related by Seiberg/toric duality, it is natural to conjecture that their quiver Yangians are isomorphic. Here, we shall prove this for the generalized conifolds considered in this paper. As a result, different toric phases correspond to different triangulations of the toric diagram, i.e., different sequences $\varsigma$.

A special feature for these generalized conifold is their underlying Kac-Moody superalgebras, which are of untwisted affine A-type. The zero modes of the quiver Yangians are actually different sets of Chevalley generators. For any two inequivalent sets of Chevalley generators, one can reach one from the other by odd reflections and Weyl groupoids \cite{serganova1985automorphisms,hoyt2007classification,serganova2011kac}. We shall now extend this to isomorphisms of quiver Yangians. Recall that each quiver has an underlying Dynkin diagram associated to $\widehat{\mathfrak{sl}}_{M|N}$. In such case, the odd reflection corresponding to an odd simple root $\alpha^{(\digamma)}$ acts on any simple root $\alpha^{(a)}$ as\footnote{More generally (especially when the Kac-Moody superalgebra has non-isotropic odd roots), the odd reflection associated to an isotropic simple odd root has the same action with the second condition as $\alpha^{(a)}+\alpha^{(\digamma)}$ being a root.}
\begin{equation}
	\alpha'^{(a)}=r^{(\digamma)}\left(\alpha^{(a)}\right)=\begin{cases}
		-\alpha^{(a)},&\quad a=\digamma,\\
		\alpha^{(a)}+\alpha^{(\digamma)},&\quad a=\digamma\pm1,\\
		\alpha^{(a)},&\quad\text{otherwise}.
	\end{cases}
\end{equation}
The Cartan matrix $A=(A_{ab})$ where $A_{ab}=\left(\alpha^{(a)},\alpha^{(b)}\right)$ is mapped to $A'=RAR^\text{T}$ with $R=(R_{ab})$ given by
\begin{equation}
	R_{ab}=\begin{cases}
		-1,&\quad a=b=\digamma,\\
		1,&\quad a=b\neq\digamma,\\
		1,&\quad b=\digamma,A_{a\digamma}\neq0,\\
		0,&\quad\text{otherwise}.
	\end{cases}
\end{equation}
In terms of Dynkin diagrams, this manipulation changes the $\mathbb{Z}_2$-grading of the nodes (and hence their $e$, $f$ generators) connected to the node $\digamma$ (i.e., those with $A_{a\digamma}\neq0$) and leaves the remaining ones unchanged. The Chevalley generators are mapped to\footnote{It seems that there could often be typos when writing this transformation in literature. They would lead to inconsistency of signs in some of the relations. Therefore, we give an explicit proof of this in Appendix \ref{oddrefchevgen}.}
\begin{equation}
	\begin{split}
		&\psi'^{(a)}_0=\sum_{b=1}^{M+N}R_{ab}\psi^{(b)}_0=\begin{cases}
			-\psi^{(a)}_0,&\quad a=\digamma,\\
			\psi^{(a)}_0+\psi^{(\digamma)}_0,&\quad a=\digamma\pm1,\\
			\psi^{(a)},&\quad\text{otherwise};
		\end{cases}\\
		&e'^{(a)}_0=\begin{cases}
			f^{(a)}_0,&\quad a=\digamma,\\
			\left[e^{(\digamma)}_0,e^{(a)}_0\right],&\quad a=\digamma\pm1,\\
			e^{(a)}_0,&\quad\text{otherwise};
		\end{cases}
	    \qquad
		f'^{(a)}_0=\begin{cases}
			-e^{(a)}_0,&\quad a=\digamma,\\
			-\frac{1}{A_{a\digamma}}\left[f^{(a)}_0,f^{(\digamma)}_0\right],&\quad a=\digamma\pm1,\\
			f^{(a)}_0,&\quad\text{otherwise}.
		\end{cases}
	\end{split}\label{chevgenprimed}
\end{equation}
Notice that $A_{a\digamma}$ is simply $\pm1$.

It would also be useful to spell out the following lemmas.
\begin{lemma}
	We have
	\begin{equation}
		\left(\alpha'^{(a)},\alpha'^{(b)}\right)=\begin{cases}
			-\left(\alpha^{(a)},\alpha^{(b)}\right),&\quad (a,b)=(\digamma\pm1,\digamma),(\digamma,\digamma\pm1),\\
			\left(\alpha^{(a)},\alpha^{(a)}\right)+2\left(\alpha^{(\digamma)},\alpha^{(a)}\right),&\quad a=b=\digamma\pm1,\\
			\left(\alpha^{(a)},\alpha^{(b)}\right),&\quad\text{otherwise}.
		\end{cases}
	\end{equation}
\end{lemma}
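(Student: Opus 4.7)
The plan is a direct case analysis driven by the formula for $\alpha'^{(a)}=r^{(\digamma)}(\alpha^{(a)})$ stated immediately above the lemma. The only inputs I need are bilinearity of $(-,-)$, the isotropy $(\alpha^{(\digamma)},\alpha^{(\digamma)})=0$ (which holds because $\digamma$ is fermionic and all odd roots of $A^{(1)}_{M-1,N-1}$ are isotropic, as recalled at the start of Section~\ref{coprod}), and the tridiagonal structure of the cyclic A-type affine Cartan matrix, so that $(\alpha^{(a)},\alpha^{(b)})$ is non-zero only when $a=b$ or $a\equiv b\pm1\pmod{M+N}$.

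For $(a,b)=(\digamma,\digamma\pm1)$, and symmetrically for the reversed pair, I expand
\[
(\alpha'^{(\digamma)},\alpha'^{(\digamma\pm1)})=(-\alpha^{(\digamma)},\alpha^{(\digamma\pm1)}+\alpha^{(\digamma)})=-(\alpha^{(\digamma)},\alpha^{(\digamma\pm1)}),
\]
the spare $-(\alpha^{(\digamma)},\alpha^{(\digamma)})$ dropping out by isotropy. For $a=b=\digamma\pm1$ the same trick gives
\[
(\alpha^{(a)}+\alpha^{(\digamma)},\alpha^{(a)}+\alpha^{(\digamma)})=(\alpha^{(a)},\alpha^{(a)})+2(\alpha^{(\digamma)},\alpha^{(a)}),
\]
which is exactly the second branch. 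These handle the two explicit branches.

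The \emph{otherwise} branch splits into three easy sub-cases and one subtle one. If neither $a$ nor $b$ lies in $\{\digamma-1,\digamma,\digamma+1\}$, both roots are unchanged and the equality is tautological. If $a=b=\digamma$, both sides equal $(\alpha^{(\digamma)},\alpha^{(\digamma)})=0$. If exactly one of $a,b$, say $a$, lies in $\{\digamma-1,\digamma,\digamma+1\}$ and the other does not, then every new cross term produced by the $\pm\alpha^{(\digamma)}$ shift contains a factor $(\alpha^{(\digamma)},\alpha^{(b)})$ with $b$ non-adjacent to $\digamma$, and hence vanishes; in the sub-case $a=\digamma$ the sign flip is likewise absorbed because $(\alpha^{(\digamma)},\alpha^{(b)})=0$ already.

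The one sub-case I expect to be the main obstacle is $\{a,b\}=\{\digamma-1,\digamma+1\}$, which is well-defined and still in the otherwise branch because the standing hypotheses $M+N>2$ and $MN\neq2$ force $M+N\geq4$, making $\digamma-1$ and $\digamma+1$ distinct and non-adjacent in the cyclic Dynkin diagram. Both roots get shifted by $\alpha^{(\digamma)}$, so after the leading and isotropic pieces cancel one is left with the requirement
\[
(\alpha^{(\digamma+1)},\alpha^{(\digamma)})+(\alpha^{(\digamma)},\alpha^{(\digamma-1)})=0.
\]
This is the only place in the proof where fermionic-ness of $\digamma$ plays a non-formal role: the two off-diagonal entries of the Cartan matrix in the row of an isotropic odd simple root of super A-type have opposite signs. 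This is a standard combinatorial feature of the super A-type Dynkin diagrams, and is built into the sign conventions recorded in Figure~\ref{epsilongencon} together with the identification $\sigma^{ab}_1+\sigma^{ba}_1=(\epsilon_1+\epsilon_2)(\alpha^{(a)},\alpha^{(b)})$; verifying it reduces to inspecting the two arrows incident to a fermionic vertex.
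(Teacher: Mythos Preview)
Your proof is correct and follows essentially the same direct-computation approach as the paper, which simply tracks how the Cartan matrix entries transform under the odd reflection; you have merely spelled out the case analysis via bilinearity in more detail. Your handling of the $\{a,b\}=\{\digamma-1,\digamma+1\}$ sub-case and the identity $(\alpha^{(\digamma+1)},\alpha^{(\digamma)})+(\alpha^{(\digamma)},\alpha^{(\digamma-1)})=0$ is exactly the content of the paper's remark that $A_{a\digamma}$ flips sign for $a=\digamma\pm1$, as can be read off directly from \eqref{Aabvarsigma}.
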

\begin{proof}
	This can be directly shown by considering how the Cartan matrix would change under the odd reflection. More specifically, $A_{\digamma\digamma}$ changes from $\pm2$ to 0 (and vice versa) while $A_{a\digamma}=A_{\digamma a}$ changes from $\mp1$ to $\pm1$ for $a=\digamma\pm1$. The other elements in $A$ are invariant. This can also be checked in the proof in Appendix \ref{oddrefchevgen}.
\end{proof}
\begin{lemma}
	We have $\sigma_1'^{\digamma\pm1,\digamma}=-\sigma_1^{\digamma,\digamma\pm1}$ and $\sigma_1'^{\digamma,\digamma\pm1}=-\sigma_1^{\digamma\pm1,\digamma}$ while the other $\sigma_1^{ab}$ are invariant. Therefore, $\sigma_1'^{ab}-\sigma_1'^{ba}=\sigma_1^{ab}-\sigma_1^{ba}$.
\end{lemma}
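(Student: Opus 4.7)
The plan is to identify how the sign sequence $\varsigma$ encoding the toric phase is modified by the odd reflection at node $\digamma$, and then read off the new parameters directly from the arrow--labelling rules of Figure \ref{epsilongencon}. Since $\digamma$ is fermionic precisely when $\varsigma_\digamma\neq\varsigma_{\digamma+1}$, I claim that the odd reflection at $\digamma$ corresponds to swapping $\varsigma_\digamma\leftrightarrow\varsigma_{\digamma+1}$. Indeed, this swap leaves the pair $\{\varsigma_\digamma,\varsigma_{\digamma+1}\}$ invariant (so $\digamma$ stays fermionic) while simultaneously flipping the grading of the node $\digamma-1$ (determined by $\varsigma_{\digamma-1}$ vs.\ $\varsigma_\digamma$) and of $\digamma+1$ (determined by $\varsigma_{\digamma+1}$ vs.\ $\varsigma_{\digamma+2}$), matching exactly the Dynkin-diagram description above \eqref{chevgenprimed} and the previous lemma about $(\alpha'^{(a)},\alpha'^{(b)})$.

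Next, I would verify the three parts of the statement case by case. For any pair $a,b$ with $\{a,b\}\cap\{\digamma-1,\digamma,\digamma+1\}^c\supseteq\{a,b\}$ or more generally for any arrow not incident to $\digamma$, the local prescription of Figure \ref{epsilongencon} assigns $\widetilde\epsilon_I$ in terms of $\varsigma_c$ with $c\neq\digamma,\digamma+1$, so these parameters and hence the corresponding $\sigma^{ab}_1$ are unchanged by the swap. For the two arrows between $\digamma$ and $\digamma\pm1$, I would read off $\widetilde\epsilon_I$ from Figure \ref{epsilongencon} before and after the swap of $\varsigma_\digamma$ and $\varsigma_{\digamma+1}$: in each case the role of the arrow $\digamma\to\digamma\pm1$ is exchanged with that of $\digamma\pm1\to\digamma$ together with an overall sign flip, yielding $\sigma_1'^{\digamma\pm1,\digamma}=-\sigma_1^{\digamma,\digamma\pm1}$ and $\sigma_1'^{\digamma,\digamma\pm1}=-\sigma_1^{\digamma\pm1,\digamma}$.

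As a consistency check I would confirm compatibility with the relation $\sigma^{ab}_1+\sigma^{ba}_1=(\epsilon_1+\epsilon_2)(\alpha^{(a)},\alpha^{(b)})$, which must hold in both toric phases. For the pair $(\digamma\pm1,\digamma)$ the previous lemma gives $(\alpha'^{(\digamma\pm1)},\alpha'^{(\digamma)})=-(\alpha^{(\digamma\pm1)},\alpha^{(\digamma)})$, so the sum $\sigma_1'^{\digamma\pm1,\digamma}+\sigma_1'^{\digamma,\digamma\pm1}$ must equal $-(\sigma_1^{\digamma\pm1,\digamma}+\sigma_1^{\digamma,\digamma\pm1})$, which is exactly what the claimed individual formulas produce; for the other pairs the inner product is invariant and so are both $\sigma_1$'s. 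The final identity $\sigma_1'^{ab}-\sigma_1'^{ba}=\sigma_1^{ab}-\sigma_1^{ba}$ then follows immediately: the only non-trivial case is $(a,b)=(\digamma\pm1,\digamma)$, where
\[
\sigma_1'^{\digamma\pm1,\digamma}-\sigma_1'^{\digamma,\digamma\pm1}=-\sigma_1^{\digamma,\digamma\pm1}+\sigma_1^{\digamma\pm1,\digamma}=\sigma_1^{\digamma\pm1,\digamma}-\sigma_1^{\digamma,\digamma\pm1}.
\]

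The main obstacle is the bookkeeping in Figure \ref{epsilongencon}: each arrow parameter is determined by a local rule that depends on $\varsigma_\digamma$, $\varsigma_{\digamma\pm1}$, and in some cases $\varsigma_{\digamma\pm2}$, so one must handle separately the four sub-cases according to whether $\digamma\pm1$ is originally bosonic or fermionic and check each against the convention $\varsigma_a=-1$ fixed in \S\ref{coprod}. Once this table is written out, however, the lemma reduces to a mechanical comparison and the non-trivial content is entirely contained in the identification of the odd reflection with the swap of $\varsigma_\digamma$ and $\varsigma_{\digamma+1}$.
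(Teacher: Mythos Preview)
Your proposal is correct and follows essentially the same route as the paper, which simply says the result ``can be directly shown by considering how the charge assignment in Figure~\ref{epsilongencon}(b) would change under the odd reflection.'' Your identification of the odd reflection at $\digamma$ with the swap $\varsigma_\digamma\leftrightarrow\varsigma_{\digamma+1}$ is exactly the mechanism behind that one-line proof, and your consistency check against $\sigma^{ab}_1+\sigma^{ba}_1=(\epsilon_1+\epsilon_2)(\alpha^{(a)},\alpha^{(b)})$ is a useful sanity check that the paper omits.
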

\begin{proof}
	This can be directly shown by considering how the charge assignment in Figure \ref{epsilongencon}(b) would change under the odd reflection.
\end{proof}

Now, we are ready to show the algebra isomorphisms of the Seiberg/toric dual quiver Yangians.
\begin{theorem}
	For toric CY threefolds without compact divisors (except $xy=z^2w^2$), the quiver Yangians in different toric phases are isomorphic algebras.\label{QYisothm}
\end{theorem}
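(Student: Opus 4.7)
The strategy is to reduce the statement to showing that a single odd reflection $r^{(\digamma)}$ induces an algebra isomorphism between the quiver Yangian in the original toric phase and the quiver Yangian in the adjacent phase. Since any two sets of Chevalley generators of $\widehat{\mathfrak{sl}}_{M|N}$ are connected by a sequence of odd reflections together with Weyl groupoid elements (and the latter are already realised as internal automorphisms of $\mathtt{Y}$ built from the $\tau^{(b)}$ of \S\ref{Jpresentation}), composing the resulting isomorphisms will connect any two toric phases of the same generalized conifold. The case $xy = z^2 w^2$ is excluded precisely because one of its toric phases has only fermionic nodes and so lies outside the minimalistic presentation of Theorem \ref{minthm} on which the whole argument rests.

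The first step is to define a candidate homomorphism $\Phi \colon \mathtt{Y}' \to \mathtt{Y}$ on the finite generating set supplied by Theorem \ref{minthm}. On the zero modes, $\Phi$ is prescribed by \eqref{chevgenprimed}, which are the standard odd-reflection formulas recalled in Appendix \ref{oddrefchevgen}. On $\psi'^{(a)}_1$, one uses the minimalistic relation $\psi'^{(a)}_1 = [e'^{(a)}_1, f'^{(a)}_0]$ together with the inductive formula \eqref{finitegenerators} (with $b$ chosen in the new Dynkin diagram according to the new parity of $a$) to pin down $\Phi(\psi'^{(a)}_1)$ as an explicit expression in the unprimed $\psi^{(c)}_1, e^{(c)}_0, f^{(c)}_0$. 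For $a \notin \{\digamma, \digamma\pm 1\}$ this reduces to $\Phi(\psi'^{(a)}_1) = \psi^{(a)}_1$; for the three affected nodes the expression carries a non-trivial quadratic correction in zero modes that absorbs the sign changes recorded in the two lemmas preceding the theorem.

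The bulk of the work is to check that $\Phi$ preserves each relation in (R) and (S). The degree-zero relations \eqref{psipsimin}, \eqref{eeffmin}, the $e_0 f_0$ half of \eqref{efmin}, and the Serre relations (S) follow, after use of \eqref{chevgenprimed} and the two transformation lemmas, from the standard compatibility of odd reflections with $U(\mathfrak{g})$. The key bookkeeping identity is $\sigma_1'^{ab} - \sigma_1'^{ba} = \sigma_1^{ab} - \sigma_1^{ba}$, which together with the parity flips at $\digamma \pm 1$ arranges that a new bosonic-type Serre relation at a node whose parity was flipped becomes an old fermionic-type Serre relation and conversely; the brackets $[e'^{(\digamma\pm 1)}_0, e'^{(\digamma)}_0]$ appearing after reflection are rewritten in the original generators using the degree-zero $ee$ relation.

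The main obstacle will be verifying the relations involving $\psi'^{(a)}_1$, namely \eqref{psi1emin}, \eqref{psi1fmin}, and the $e_0 f_1, e_1 f_0$ halves of \eqref{efmin}, in the case $a \in \{\digamma, \digamma \pm 1\}$. The plan here is to expand $\Phi(\psi'^{(a)}_1)$ using the definition above, then to apply the Jacobi identity repeatedly, together with the corresponding unprimed relations in (R) and the zero-mode Serre relations, to reduce both sides of each primed relation to a common normal form in $\mathtt{Y}$; the constants involving $\sigma_1$ and $\left(\alpha^{(\cdot)},\alpha^{(\cdot)}\right)$ are tracked with the two transformation lemmas, and the changes of sign there feed directly into the flipped coefficients demanded by \eqref{psi1emin}--\eqref{psi1fmin} on the primed side. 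Once $\Phi$ is established as a homomorphism, its inverse is constructed by the same procedure applied to the odd reflection $r^{(\digamma)}$ in the new basis; this returns to the original basis up to the sign flip $e'^{(\digamma)}_0 = f^{(\digamma)}_0$, $f'^{(\digamma)}_0 = -e^{(\digamma)}_0$, which is absorbed by a diagonal rescaling of the $e^{(\digamma)}_0, f^{(\digamma)}_0$ pair. Hence $\Phi$ is an isomorphism, and the theorem follows by iterating over a chain of odd reflections connecting the two phases.
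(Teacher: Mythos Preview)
Your overall strategy matches the paper's: reduce to showing that a single odd reflection $r^{(\digamma)}$ extends to an isomorphism of quiver Yangians, then compose along a chain of such reflections. The cases with a unique toric phase are dismissed as trivial in both.

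The one genuine difference is tactical. You propose to define $\Phi$ on $\psi'^{(a)}_1$ via the inductive formula \eqref{finitegenerators} and then check the minimalistic relations (R) directly. The paper instead passes to the equivalent $J$ presentation (Corollary~\ref{Jcor}) and specifies the map on $J(\psi'^{(a)}_0)$, $J(e'^{(a)}_0)$, $J(f'^{(a)}_0)$. The payoff is that the transformation rules for the $J$ generators stay essentially linear: e.g.\ $J(\psi'^{(\digamma\pm1)}_0)=J(\psi^{(\digamma\pm1)}_0)+J(\psi^{(\digamma)}_0)-\tfrac12(\sigma_1^{a\digamma}-\sigma_1^{\digamma a})\psi^{(\digamma)}_0$, with only a zero-mode tail. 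By contrast, in your setup $\Phi(\widetilde\psi'^{(\digamma\pm1)}_1)$ carries a genuine quadratic correction $-\sigma_1^{a\digamma}e^{(\digamma)}_0f^{(\digamma)}_0+\sigma_1^{\digamma a}f^{(\digamma)}_0e^{(\digamma)}_0$ (the paper records exactly this in the Remark after the proof), and those terms must be dragged through every Jacobi-identity manipulation when verifying \eqref{psi1emin}--\eqref{psi1fmin}. Both routes are valid; the $J$ presentation just makes Lemmas~\ref{primedJpsieflemma}--\ref{primedJeeJfflemma} cleaner. A minor point: your invocation of even reflections via $\tau^{(b)}$ is unnecessary here, since different toric phases of a fixed generalized conifold are already connected by odd reflections alone.
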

\begin{proof}
	When $M+N>2$ and $MN\neq2$, we shall show that the map induced by each odd reflection is an isomorphism\footnote{Here, we have also included the cases with $M=N$ which can be checked to follow the same isomorphic maps using the relations in Theorem \ref{minthm}. Notice that we need to exclude $xy=z^2w^2$ due to the subtlety mentioned at the end of \S\ref{minimalistic} though the non-reduced quiver Yangians in this case are still isomorphic.}. Then a quiver Yangian in any toric phase can be obtained from another under a sequence of such isomorphic maps. It suffices to check the relations involving non-zero modes. It would be more convenient to work with the $J$ presentation. This is then given by \eqref{chevgenprimed} and
	\begin{equation}
		\begin{split}
			&J\left(\psi'^{(a)}_0\right)=\begin{cases}
				-J\left(\psi^{(a)}_0\right),&\quad a=\digamma,\\
				J\left(\psi^{(a)}_0\right)+J\left(\psi^{(\digamma)}_0\right)-\frac{1}{2}\left(\sigma_1^{a\digamma}-\sigma_1^{\digamma a}\right)\psi^{(\digamma)}_0,&\quad a=\digamma\pm1,\\
				J\left(\psi^{(a)}_0\right),\quad\text{otherwise};
			\end{cases}\\
		    &J\left(e'^{(a)}_0\right)=\begin{cases}
		    	J\left(f^{(a)}_0\right),&\quad a=\digamma,\\
		    	\left[e^{(\digamma)}_0,J\left(e^{(a)}_0\right)\right],&\quad a=\digamma\pm1,\\
		    	J\left(e^{(a)}_0\right),&\quad\text{otherwise};
		    \end{cases}\\
		    &J\left(f'^{(a)}_0\right)=\begin{cases}
		    	-J\left(e^{(a)}_0\right),&\quad a=\digamma,\\
		    	-\frac{1}{A_{a\digamma}}\left[J\left(f^{(a)}_0\right),f^{(\digamma)}_0\right],&\quad a=\digamma\pm1,\\
		    	J\left(f^{(a)}_0\right),&\quad\text{otherwise}.
		    \end{cases}
		\end{split}
	\end{equation}
    It is sufficient to show that this is consistent with the relations in Corollary \ref{Jcor}. There is no need to check the last equation therein as it is derived from the previous ones. The $\psi'J(X')$ ($X=\psi,e,f$) relations are straightforward following linearity (and the Jacobi identity when $b=\digamma\pm1$). The remaining relations are verified below in Lemma \ref{primedJpsieflemma}$\sim$Lemma \ref{primedJeeJfflemma}. This then clearly also holds for its inverse.
	
	Notice that we have also included $\mathbb{C}^3$ ($\mathbb{C}\times\mathbb{C}^2/\mathbb{Z}_2$, conifold, resp. SPP) with $(M,N)=(1,0)$ ($(2,0)$, $(1,1)$, resp. $(2,1)$) which do not belong to the family we mainly focus on in this paper, as well as $\mathbb{C}^3/(\mathbb{Z}_2\times\mathbb{Z}_2)$ that is not a genralized conifold. Their isomorphisms are trivial since each of them only has one single toric phase.
\end{proof}

\begin{lemma}\label{primedJpsieflemma}
	We have
	\begin{align}
		&\left[J\left(\psi'^{(a)}_0\right),e'^{(b)}_0\right]=\left(\alpha'^{(a)},\alpha'^{(b)}\right)J\left(e'^{(b)}_0\right)+\frac{\sigma'^{ba}_1-\sigma'^{ab}_1}{2}\left(\alpha'^{(a)},\alpha'^{(b)}\right)e'^{(b)}_0,\\
		&\left[J\left(\psi'^{(a)}_0\right),f'^{(b)}_0\right]=-\left(\alpha'^{(a)},\alpha'^{(b)}\right)J\left(f'^{(b)}_0\right)-\frac{\sigma'^{ba}_1-\sigma'^{ab}_1}{2}\left(\alpha'^{(a)},\alpha'^{(b)}\right)f'^{(b)}_0.
	\end{align}
\end{lemma}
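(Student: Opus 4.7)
The plan is to verify the two identities by a case analysis on the positions of $a$ and $b$ relative to the reflected node $\digamma$. Since the underlying quiver is of untwisted affine A-type, the bilinear form $\left(\alpha^{(a)},\alpha^{(b)}\right)$ vanishes unless $|a-b|\leq 1$ modulo $M+N$, and the odd reflection only modifies the generators at $\digamma$ and $\digamma\pm1$. Therefore, outside the window $\{a,b\}\subset\{\digamma-1,\digamma,\digamma+1\}$, the primed generators coincide with the unprimed ones and the bilinear forms agree, so the identities reduce to Corollary \ref{Jcor} directly.

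For the remaining cases, I would work through them in increasing complexity. The easiest is $a=b=\digamma$: substitute $J\left(\psi'^{(\digamma)}_0\right)=-J\left(\psi^{(\digamma)}_0\right)$ and $e'^{(\digamma)}_0=f^{(\digamma)}_0$, then apply the $[J(\psi_0),f_0]$ formula from Corollary \ref{Jcor}, and use the fact that both $\left(\alpha^{(\digamma)},\alpha^{(\digamma)}\right)$ and $\left(\alpha'^{(\digamma)},\alpha'^{(\digamma)}\right)$ vanish isotropically. For the mixed cases $\{a,b\}=\{\digamma,\digamma\pm1\}$, I would plug in the three-term substitution for $J\left(\psi'^{(\digamma\pm1)}_0\right)$ together with $e'^{(\digamma\pm1)}_0=\left[e^{(\digamma)}_0,e^{(\digamma\pm1)}_0\right]$ or $e'^{(\digamma)}_0=f^{(\digamma)}_0$, expand using the Jacobi identity, and reassemble using Corollary \ref{Jcor} and the lemma giving the transformed $\sigma_1$'s (noting $\sigma'^{\digamma,\digamma\pm1}_1-\sigma'^{\digamma\pm1,\digamma}_1=\sigma^{\digamma,\digamma\pm1}_1-\sigma^{\digamma\pm1,\digamma}_1$ after cancellations, but with individual sign flips).

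The main obstacle will be the diagonal case $a=b=\digamma\pm1$, as well as the off-diagonal case with $a=\digamma+1$, $b=\digamma-1$ (or vice versa). There $J\left(\psi'^{(a)}_0\right)$ picks up an extra $J\left(\psi^{(\digamma)}_0\right)$ summand together with a $\psi^{(\digamma)}_0$ correction of weight $\tfrac12(\sigma^{a\digamma}_1-\sigma^{\digamma a}_1)$, while the RHS is dictated by the shifted inner product $\left(\alpha'^{(a)},\alpha'^{(a)}\right)=\left(\alpha^{(a)},\alpha^{(a)}\right)+2\left(\alpha^{(\digamma)},\alpha^{(a)}\right)$. The verification amounts to checking that commuting $J\left(\psi^{(\digamma)}_0\right)$ with the nested bracket $\left[e^{(\digamma)}_0,e^{(a)}_0\right]$ produces exactly the extra $2\left(\alpha^{(\digamma)},\alpha^{(a)}\right)J\left(e'^{(a)}_0\right)$ piece together with a residual scalar multiple of $e'^{(a)}_0$ matching $\tfrac12(\sigma'^{ba}_1-\sigma'^{ab}_1)\left(\alpha'^{(a)},\alpha'^{(b)}\right)$. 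For the $f$ version one also has to track the $-\frac{1}{A_{a\digamma}}$ normalization and that $A_{a\digamma}=\pm1$, so the argument is parallel but with careful sign bookkeeping.

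Throughout I would rely on three recurring tools: the Jacobi identity in graded form, the unprimed relations of Corollary \ref{Jcor}, and the two preparatory lemmas on how $\left(\alpha'^{(a)},\alpha'^{(b)}\right)$ and $\sigma'^{ab}_1$ transform under $r^{(\digamma)}$. The $\psi'f'$ identity then follows by an entirely analogous calculation, swapping the roles of $e$ and $f$ and flipping signs as dictated by Corollary \ref{Jcor}, so I would only write one side explicitly and remark that the other is symmetric.
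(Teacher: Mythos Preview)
Your proposal is correct and follows essentially the same route as the paper: a case analysis on the position of $a,b$ relative to $\digamma$, handled with the Jacobi identity, Corollary \ref{Jcor}, and the two transformation lemmas for $(\alpha'^{(a)},\alpha'^{(b)})$ and $\sigma'^{ab}_1$. The paper likewise treats only the $J(\psi')e'$ identity explicitly, working through $a=b=\digamma\pm1$, $a=\digamma,b=\digamma\pm1$, $a=\digamma\mp1,b=\digamma\pm1$, $a=b=\digamma$, and lumps the rest under ``otherwise''; your ordering by complexity and your remark that the $f$ version is symmetric match this exactly.
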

\begin{proof}
	Here, we shall only explicitly write the proof for the $J(\psi')e'$ relation, and the $J(\psi')f'$ relation follows in the same manner. This can be divided into the following cases:
	\begin{itemize}
		\item $a=b=\digamma\pm1$: We have
		\begin{equation}
			\begin{split}
				\left[J\left(\psi'^{(a)}_0\right),e'^{(b)}_0\right]=&\left[J\left(\psi^{(b)}\right),\left[e^{(\digamma)}_0,e^{(b)}_0\right]\right]+\left[J\left(\psi^{(\digamma)}\right),\left[e^{(\digamma)}_0,e^{(b)}_0\right]\right]\\
				&-\frac{1}{2}\left(\sigma_1^{b\digamma}-\sigma_1^{\digamma b}\right)\left[\psi^{(\digamma)}_0,\left[e^{(\digamma)}_0,e^{(b)}_0\right]\right].
			\end{split}
		\end{equation}
	    Using the Jacobi identity, this is equal to
	    \begin{equation}
	    	\begin{split}
	    		&\left[\left[J\left(\psi^{(b)}_0\right),e^{(\digamma)}_0\right],e^{(b)}_0\right]+\left[e^{(\digamma)}_0,\left[J\left(\psi^{(b)}_0\right),e^{(b)}_0\right]\right]+0+\left[e^{(\digamma)}_0,\left[J\left(\psi^{(\digamma)}_0\right),e^{(b)}_0\right]\right]\\
	    		&-0-\frac{1}{2}\left(\sigma_1^{b\digamma}-\sigma_1^{\digamma b}\right)\left[e^{(\digamma)}_0,\left[\psi^{(\digamma)}_0,e^{(b)}_0\right]\right].
	    	\end{split}
	    \end{equation}
        Using the $J(\psi)e$, $J(e)e$ and $\psi e$ relations, the terms with $\left(\sigma_1^{b\digamma}-\sigma_1^{\digamma b}\right)$ get cancelled and this becomes
        \begin{equation}
        	\begin{split}
        		&\left(\alpha^{(b)},\alpha^{(\digamma)}\right)\left[e^{(\digamma)}_0,J\left(e^{(b)}_0\right)\right]+\left(\alpha^{(b)},\alpha^{(b)}\right)\left[e^{(\digamma)}_0,J\left(e^{(b)}_0\right)\right]+\left(\alpha^{(b)},\alpha^{(\digamma)}\right)\left[e^{(\digamma)}_0,J\left(e^{(b)}_0\right)\right]\\
        		=&\left(\alpha'^{(a)},\alpha'^{(b)}\right)e'^{(b)}_0.
        	\end{split}
        \end{equation}
    
    \item $a=\digamma,b=\digamma\pm1$: We have
    \begin{equation}
    	\begin{split}
    		\left[J\left(\psi'^{(a)}_0\right),e'^{(b)}_0\right]=&-\left[J\left(\psi^{(\digamma)}_0\right),\left[e^{(\digamma)}_0,e^{(b)}_0\right]\right]=-\left[e^{(\digamma)}_0,\left[J\left(\psi^{(\digamma)}_0\right),e^{(b)}_0\right]\right]\\
    		=&-\left(\alpha^{(\digamma)},\alpha^{(b)}\right)\left[e^{(\digamma)}_0,J\left(e^{(b)}_0\right)\right]-\frac{1}{2}\left(\sigma_1^{b\digamma}-\sigma_1^{\digamma b}\right)\left(\alpha^{(\digamma)},\alpha^{(b)}\right)\left[e^{(\digamma)}_0,e^{(b)}_0\right]\\
    		=&\left(\alpha'^{(a)},\alpha'^{(b)}\right)J\left(e'^{(b)}_0\right)+\frac{1}{2}\left(\sigma_1'^{ba}-\sigma_1'^{ab}\right)\left(\alpha^{(\digamma)},\alpha^{(b)}\right)e'^{(b)}_0.
    	\end{split}
    \end{equation}

    \item $a=\digamma\mp1,b=\digamma\pm1$: This is similar to the case when $a=b=\digamma\pm1$. By using the Jacobi identity, as well as the $J(\psi)e$, $J(e)e$ and $\psi e$ relations, we have
    \begin{equation}
    	\begin{split}
    		\left[J\left(\psi'^{(a)}_0\right),e'^{(b)}_0\right]=&\left(\alpha^{(a)},\alpha^{(\digamma)}\right)\left[e^{(\digamma)}_0,J\left(e^{(b)}_0\right)\right]+\frac{1}{2}\left(\sigma_1^{b\digamma}-\sigma_1^{\digamma b}\right)\left(\alpha^{(a)},\alpha^{(\digamma)}\right)\left[e^{(\digamma)}_0,e^{(b)}_0\right]\\
    		&+\left(\alpha^{(b)},\alpha^{(\digamma)}\right)\left[e^{(\digamma)}_0,J\left(e^{(b)}_0\right)\right]-\frac{1}{2}\left(\sigma_1^{a\digamma}-\sigma_1^{\digamma a}\right)\left(\alpha^{(b)},\alpha^{(\digamma)}\right)\left[e^{(\digamma)}_0,e^{(b)}_0\right].
    	\end{split}
    \end{equation}
    This vanishes as $\left(\alpha^{(a)},\alpha^{(\digamma)}\right)=-\left(\alpha^{(b)},\alpha^{(\digamma)}\right)$ and $\left(\sigma_1^{a\digamma}-\sigma_1^{\digamma a}\right)=-\left(\sigma_1^{b\digamma}-\sigma_1^{\digamma b}\right)$.
    
    \item $a=b=\digamma$: We have $\left[J\left(\psi'^{(a)}_0\right),e'^{(b)}_0\right]=-\left[J\left(\psi^{(\digamma)}_0\right),f^{(\digamma)}_0\right]=0$.
    
    \item otherwise: The remaining cases are immediate following the expressions of the primed generators and the similar arguments as above.
	\end{itemize}
\end{proof}

\begin{lemma}\label{primedJeflemma}
	We have
	\begin{align}
		&\left[J\left(e'^{(a)}_0\right),f'^{(b)}_0\right]=\left[e'^{(a)}_0,J\left(f'^{(b)}_0\right)\right]=\delta_{ab}J\left(\psi'^{(a)}_0\right).
	\end{align}
\end{lemma}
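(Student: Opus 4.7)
My strategy mirrors the proof of Lemma \ref{primedJpsieflemma}: I treat the first equality $[J(e'^{(a)}_0), f'^{(b)}_0] = \delta_{ab} J(\psi'^{(a)}_0)$ by case analysis on the positions of $a$ and $b$ relative to the reflection node $\digamma$, and then observe that the second equality follows from a verbatim argument with the roles of $e$ and $f$ swapped (and the appropriate signs read off from \eqref{chevgenprimed}). The basic tools are the graded Jacobi identity $[x,[y,z]] = [[x,y],z] + (-1)^{|x||y|}[y,[x,z]]$ together with the relations collected in Corollary \ref{Jcor}.

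The easy cases dispatch quickly. If both $a, b \notin \{\digamma, \digamma \pm 1\}$, the primed generators equal the unprimed ones and the identity is exactly the fourth line of Corollary \ref{Jcor}. For $a = b = \digamma$ one has $J(e'^{(\digamma)}_0) = J(f^{(\digamma)}_0)$ and $f'^{(\digamma)}_0 = -e^{(\digamma)}_0$; since both arguments are fermionic the super bracket is symmetric, so $[J(f^{(\digamma)}_0), -e^{(\digamma)}_0] = -[e^{(\digamma)}_0, J(f^{(\digamma)}_0)] = -J(\psi^{(\digamma)}_0) = J(\psi'^{(\digamma)}_0)$. The remaining off-diagonal cases with $a \neq b$ but both in $\{\digamma, \digamma \pm 1\}$ must vanish on both sides: after Jacobi-expanding any outer $\mathrm{ad}_{e^{(\digamma)}_0}$ or $\mathrm{ad}_{f^{(\digamma)}_0}$, every resulting super bracket pairs a $J(e)$ (or $J(f)$) with an $f$ (resp.\ $e$) at a \emph{different} Kac--Moody node, hence vanishes by the fourth line of Corollary \ref{Jcor}, supplemented by $[e^{(\digamma - 1)}_0, f^{(\digamma + 1)}_0] = 0$ when the two flanking nodes face each other. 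Mixed cases with exactly one of $a, b$ in $\{\digamma, \digamma \pm 1\}$ are handled identically.

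The substantive case is the diagonal one $a = b = \digamma \pm 1$, where
\[
[J(e'^{(b)}_0), f'^{(b)}_0] = -\frac{1}{A_{b\digamma}}\bigl[[e^{(\digamma)}_0, J(e^{(b)}_0)], [f^{(b)}_0, f^{(\digamma)}_0]\bigr]
\]
must be shown to equal $J(\psi^{(b)}_0) + J(\psi^{(\digamma)}_0) - \tfrac{1}{2}(\sigma_1^{b\digamma} - \sigma_1^{\digamma b})\psi^{(\digamma)}_0$. I would peel off the four brackets one at a time using graded Jacobi and then replace each resulting two-body super bracket by its value from Corollary \ref{Jcor}: the only non-vanishing contractions are $[J(e^{(b)}_0), f^{(b)}_0] = J(\psi^{(b)}_0)$, $[e^{(\digamma)}_0, f^{(\digamma)}_0] = \psi^{(\digamma)}_0$, and $[e^{(\digamma)}_0, J(f^{(\digamma)}_0)] = J(\psi^{(\digamma)}_0)$, together with the $\psi_0 e$ and $J(\psi_0) e$ relations that drag $\psi^{(\digamma)}_0$ (or $J(\psi^{(\digamma)}_0)$) past $e^{(b)}_0$ and produce the $\sigma_1^{b\digamma}$, $\sigma_1^{\digamma b}$ terms. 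The main obstacle will be sign bookkeeping: $\digamma$ is fermionic, whereas $b = \digamma \pm 1$ may be of either parity (and toggles parity after the reflection), so the $(-1)^{|x||y|}$ factors and the normal ordering of $\psi^{(\digamma)}_0$ relative to the other generators must be tracked with care, and one needs $A_{b\digamma} = \pm 1$ to convert the overall prefactor into the correct sign for the $\sigma_1$ terms. Once the coefficients line up, the first equality is established, and the mirror argument for $[e'^{(a)}_0, J(f'^{(b)}_0)]$ carries over unchanged under $e \leftrightarrow f$.
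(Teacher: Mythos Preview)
Your strategy matches the paper's: case analysis on $a,b$ relative to $\digamma$, handling only $[J(e'^{(a)}_0),f'^{(b)}_0]$ explicitly and invoking symmetry for the other equality. Your treatment of the trivial cases and of the substantive diagonal case $a=b=\digamma\pm1$ is essentially the paper's argument, though your list of ``non-vanishing contractions'' is slightly garbled: no $J(f^{(\digamma)}_0)$ appears in that computation. What actually happens is that Jacobi produces $[e^{(\digamma)}_0,J(\psi^{(b)}_0)]$, which the $J(\psi)e$ relation converts into a combination of $J(e^{(\digamma)}_0)$ and $e^{(\digamma)}_0$; these then pair with the remaining $f^{(\digamma)}_0$ to yield $J(\psi^{(\digamma)}_0)$ and $\psi^{(\digamma)}_0$.

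There is, however, a genuine gap in your blanket treatment of the off-diagonal cases. Your claim that ``every resulting super bracket pairs a $J(e)$ (or $J(f)$) with an $f$ (resp.\ $e$) at a different node'' is false for $(a,b)=(\digamma,\digamma\pm1)$ and $(a,b)=(\digamma\pm1,\digamma)$. In the first of these, $J(e'^{(\digamma)}_0)=J(f^{(\digamma)}_0)$ and $f'^{(b)}_0=-\tfrac{1}{A_{b\digamma}}[f^{(b)}_0,f^{(\digamma)}_0]$, so there are no $e$'s at all and the fourth line of Corollary~\ref{Jcor} is irrelevant. The vanishing instead comes from two ingredients you omit: (i) $[J(f^{(\digamma)}_0),f^{(\digamma)}_0]=0$, which is the last line of Corollary~\ref{Jcor} (since $\sigma_1^{\digamma\digamma}=0$ at a fermionic node), and (ii) $\mathrm{ad}_{f^{(\digamma)}_0}^2=0$, which kills the remaining term after one invokes the $J(f)f$ relation to trade $[J(f^{(\digamma)}_0),f^{(b)}_0]$ for $[f^{(\digamma)}_0,J(f^{(b)}_0)]$ plus a multiple of $[f^{(\digamma)}_0,f^{(b)}_0]$. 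The case $(a,b)=(\digamma\pm1,\digamma)$ is the $e$-mirror of this and likewise needs $\mathrm{ad}_{e^{(\digamma)}_0}^2=0$ rather than any $J(e)f$ vanishing.
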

\begin{proof}
	Here, we shall only explicitly write the proof for the $J(e')f'$ relation, and the $e'J(f')$ relation follows in the same manner. This can be divided into the following cases:
	\begin{itemize}
		\item $a=b=\digamma$: We have $\left[J\left(e'^{(a)}_0\right),f'^{(b)}_0\right]=-\left[J\left(f^{(\digamma)}_0\right),e^{(\digamma)}_0\right]=-J\left(\psi^{(\digamma)}_0\right)=J\left(\psi^{(a)}_0\right)$.
		
		\item $a=b=\digamma\pm1$: We have
		\begin{equation}
			\left[J\left(e'^{(a)}_0\right),f'^{(b)}_0\right]=-\frac{1}{A_{a\digamma}}\left[\left[e^{(\digamma)}_0,J\left(e^{(a)}_0\right)\right],\left[f^{(a)}_0,f^{(\digamma)}_0\right]\right].
		\end{equation}
	    Using the Jacobi identity, this becomes
	    \begin{equation}
	    	-\frac{1}{A_{a\digamma}}\left(\left[\left[\left[e^{(\digamma)}_0,J\left(e^{(a)}_0\right)\right],f^{(a)}_0\right],f^{(\digamma)}_0\right]+(-1)^{|a|(1+|a|)}\left[f^{(a)}_0,\left[\left[e^{(\digamma)}_0,J\left(e^{(a)}_0\right)\right],f^{(\digamma)}_0\right]\right]\right).\label{primedJefprf1}
	    \end{equation}
        In particular,
        \begin{equation}
        	\begin{split}
        		\left[\left[e^{(\digamma)}_0,J\left(e^{(a)}_0\right)\right],f^{(a)}_0\right]=&\left[e^{(\digamma)}_0\left[J\left(e^{(a)}_0\right),f^{(a)}_0\right]\right]-0=\left[e^{(\digamma)}_0,J\left(\psi^{(a)}_0\right)\right]\\
        		=&-\left(\alpha^{(a)},\alpha^{(\digamma)}\right)J\left(e^{(\digamma)}_0\right)-\frac{1}{2}\left(\sigma_1^{\digamma a}-\sigma_1^{a\digamma}\right)\left(\alpha^{(a)},\alpha^{(\digamma)}\right)e^{(\digamma)}_0.
        	\end{split}
        \end{equation}
        Likewise,
        \begin{equation}
        	\left[\left[e^{(\digamma)}_0,J\left(e^{(a)}_0\right)\right],f^{(\digamma)}_0\right]=(-1)^{|a|}\left(\alpha^{(a)},\alpha^{(\digamma)}\right)J\left(e^{(a)}_0\right).
        \end{equation}
        Therefore, \eqref{primedJefprf1} becomes
        \begin{equation}
        	\begin{split}
        		&\left[J\left(e^{(a)}_0\right),f^{(a)}_0\right]+\left[J\left(e^{(\digamma)}_0\right),f^{(\digamma)}_0\right]+\frac{1}{2}\left(\sigma_1^{\digamma a}-\sigma_1^{a\digamma}\right)\left[e^{(\digamma)},f^{(\digamma)}\right]\\
        		=&J\left(\psi^{(a)}_0\right)+J\left(\psi^{(\digamma)}_0\right)-\frac{1}{2}\left(\sigma_1^{a\digamma}-\sigma_1^{\digamma a}\right)\psi^{(\digamma)}_0.
        	\end{split}
        \end{equation}
        
        \item $a=\digamma,b=\digamma\pm1$: We have
        \begin{equation}
        	\begin{split}
        		\left[J\left(e'^{(a)}_0\right),f'^{(b)}_0\right]=&\frac{1}{A_{b\digamma}}\left[J\left(f^{(\digamma)}_0\right),\left[f^{(b)}_0,f^{(\digamma)}_0\right]\right]\\
        		=&\frac{1}{A_{b\digamma}}\left(\left[\left[J\left(f^{(\digamma)}_0\right),f^{(b)}_0\right],f^{(\digamma)}_0\right]+(-1)^{|b|}\left[f^{(b)}_0,\left[J\left(f^{(\digamma)}_0\right),f^{(\digamma)}_0\right]\right]\right).
        	\end{split}
        \end{equation}
        The second term in the parenthesis is zero as $\left[J\left(f^{(\digamma)}_0\right),f^{(\digamma)}_0\right]=0$ while using the $J(f)f$ relation, the first term becomes
        \begin{equation}
        	\left[\left[f^{(\digamma)}_0,J\left(f^{(b)}_0\right)\right],f^{(\digamma)}_0\right]-\frac{1}{2}\left(\sigma_1^{b\digamma}-\sigma_1^{\digamma b}\right)\left[\left[f^{(\digamma)}_0,f^{(b)}_0\right],f^{(\digamma)}_0\right].
        \end{equation}
        Both of the terms vanish as $\text{ad}_{f^{(\digamma)}_0}^2=0$. Therefore, $\left[J\left(e'^{(a)}_0\right),f'^{(b)}_0\right]=0$.
        
        \item $a=\digamma\mp1,b=\digamma\pm1$: This is similar to the case when $a=b=\digamma\pm1$. Using the Jacobi identity, $\left[J\left(e'^{(a)}_0\right),f'^{(b)}_0\right]$ becomes
        \begin{equation}
        	-\frac{1}{A_{a\digamma}}\left(\left[\left[\left[e^{(\digamma)}_0,J\left(e^{(a)}_0\right)\right],f^{(b)}_0\right],f^{(\digamma)}_0\right]+(-1)^{|b|(1+|a|)}\left[f^{(b)}_0,\left[\left[e^{(\digamma)}_0,J\left(e^{(a)}_0\right)\right],f^{(\digamma)}_0\right]\right]\right).
        \end{equation}
        In particular, $\left[\left[e^{(\digamma)}_0,J\left(e^{(a)}_0\right)\right],f^{(b)}_0\right]$ vanishes using the Jacobi identity. Moreover,
        \begin{equation}
        	\left[\left[e^{(\digamma)}_0,J\left(e^{(a)}_0\right)\right],f^{(\digamma)}_0\right]=-(-1)^{|a|}\left[J\left(e^{(a)}_0\right),\psi^{(k)}_0\right],
        \end{equation}
        which consists of two terms proportional to $J\left(e^{(a)}_0\right)$ and $e^{(a)}_0$ respectively. Hence, $\left[f^{(b)}_0,\left[\left[e^{(\digamma)}_0,J\left(e^{(a)}_0\right)\right],f^{(\digamma)}_0\right]\right]$ vanishes. Therefore, $\left[J\left(e'^{(a)}_0\right),f'^{(b)}_0\right]=0$.
        
        \item otherwise: The remaining cases are immediate following the expressions of the primed generators and the similar arguments as above.
	\end{itemize}
\end{proof}

\begin{lemma}\label{primedJeeJfflemma}
	We have
	\begin{align}
		&\left[J\left(e'^{(a)}_0\right),e'^{(b)}_0\right]-\left[e'^{(a)}_0,J\left(e'^{(b)}_0\right)\right]=\frac{1}{2}\left(\sigma'^{ba}_1-\sigma'^{ab}_1\right)\left[e'^{(a)}_0,e'^{(b)}_0\right],\\
		&\left[J\left(f'^{(a)}_0\right),f'^{(b)}_0\right]-\left[f'^{(a)}_0,J\left(f'^{(b)}_0\right)\right]=-\frac{1}{2}\left(\sigma'^{ba}_1-\sigma'^{ab}_1\right)\left[f'^{(a)}_0,f'^{(b)}_0\right].
	\end{align}
\end{lemma}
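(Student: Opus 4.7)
The plan is to follow the same case-by-case strategy used in Lemmas \ref{primedJpsieflemma} and \ref{primedJeflemma}. Since the odd reflection $r^{(\digamma)}$ acts trivially on every node outside $\{\digamma,\digamma-1,\digamma+1\}$ and on the associated parameters $\sigma^{ab}_1$, the only genuinely nontrivial computations occur when at least one of $a,b$ lies in $\{\digamma,\digamma\pm 1\}$. I will verify the $J(e')e'$ identity explicitly; the $J(f')f'$ identity follows by the mirror argument with $f'^{(a)}_0$ and $J(f'^{(a)}_0)$ inserted in place of $e'^{(a)}_0$ and $J(e'^{(a)}_0)$.

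The case split is: (i) $a,b\notin\{\digamma,\digamma\pm 1\}$, which is trivial since all relevant objects coincide with their unprimed versions and $\sigma'=\sigma$; (ii) $a=b=\digamma$, where $e'^{(\digamma)}_0=f^{(\digamma)}_0$ is odd so both sides vanish by the symmetry of the super-bracket of two odd elements together with $\sigma'^{\digamma\digamma}_1=\sigma^{\digamma\digamma}_1$; (iii) one of $\{a,b\}$ equals $\digamma$ and the other equals $\digamma\pm 1$, handled by substituting $e'^{(\digamma)}_0=f^{(\digamma)}_0$, $e'^{(\digamma\pm 1)}_0=[e^{(\digamma)}_0,e^{(\digamma\pm 1)}_0]$ together with the corresponding $J$-expressions, then applying the super-Jacobi identity and the $J(\psi)e$, $J(e)f$, $ef$ relations from Corollary \ref{Jcor} to collapse the resulting terms; (iv) $a=b=\digamma\pm 1$; and (v) $\{a,b\}=\{\digamma-1,\digamma+1\}$, where $(\alpha^{(a)},\alpha^{(b)})=\sigma^{ab}_1=0$ and one must show both sides vanish after substitution. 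The remaining mixed cases, where one index is outside $\{\digamma,\digamma\pm 1\}$ and the other is inside, follow by expanding the inner bracket and re-applying the unprimed relation, exactly as done in the previous two lemmas.

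Throughout, I will use the super-Jacobi identity $[x,[y,z]]=[[x,y],z]+(-1)^{|x||y|}[y,[x,z]]$ to move the outer $\mathrm{ad}_{e^{(\digamma)}_0}$ inside the commutators, then apply the unprimed $J(e)e$, $ef$, $\psi e$, $J(\psi)e$ relations together with the Serre relations of Theorem \ref{minthm}, and finally invoke the two transformation lemmas preceding the theorem to rewrite the coefficient on the right in the primed convention, using $\sigma'^{ab}_1-\sigma'^{ba}_1=\sigma^{ab}_1-\sigma^{ba}_1$ and the explicit formula for $(\alpha'^{(a)},\alpha'^{(b)})$.

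The main obstacle is case (iv). Here $J(e'^{(a)}_0)=[e^{(\digamma)}_0,J(e^{(a)}_0)]$ and $e'^{(a)}_0=[e^{(\digamma)}_0,e^{(a)}_0]$, so the left-hand side is a fourfold nested commutator. Expanding with the super-Jacobi identity produces terms of the form $[[e^{(\digamma)}_0,J(e^{(a)}_0)],[e^{(\digamma)}_0,e^{(a)}_0]]$ together with contractions of $\mathrm{ad}^2_{e^{(\digamma)}_0}$ and $\mathrm{ad}^2_{e^{(a)}_0}$ on the remaining factors. Collapsing these requires combining the unprimed $J(e)e$ identity applied to the pair $(a,\digamma)$ (for which $\sigma^{\digamma a}_1\neq 0$) with the Serre relations \eqref{Serre1min} and \eqref{Serre2min} and the fermionic nilpotency $(e^{(\digamma)}_0)^2=0$ to reach a single multiple of $[e'^{(a)}_0,e'^{(a)}_0]$. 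A secondary subtlety is that the primed grading at $\digamma\pm 1$ is the opposite of the unprimed one, so each $(-1)^{|a||b|}$ sign appearing in the super-bracket on the RHS must be recomputed with the new parity; it is precisely the compatibility of these sign rearrangements with the LHS expansion, together with the change of $\sigma$'s recorded in the lemma just before the theorem, that produces the correct coefficient $-\tfrac{1}{2}(\sigma'^{ba}_1-\sigma'^{ab}_1)$ (resp.\ $+\tfrac{1}{2}(\sigma'^{ba}_1-\sigma'^{ab}_1)$) on the $f'$ (resp.\ $e'$) side.
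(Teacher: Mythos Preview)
Your case split and toolkit match the paper's proof, and the reductions you sketch for cases (i)--(iii) and the mixed cases are exactly right. However, you have inverted the difficulty ordering: case (iv), which you call the main obstacle, is in fact the easier of the two nontrivial cases, while case (v) is where the real work lies.

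For (iv), $a=b=\digamma\pm1$, the target right-hand side is zero since $\sigma'^{aa}_1-\sigma'^{aa}_1=0$. The left-hand side collapses quickly: when the unprimed $|a|=0$ the primed element $e'^{(a)}_0$ is odd and the two terms are equal by super-symmetry of the bracket; when $|a|=1$ the left-hand side is twice $\bigl[[e^{(\digamma)}_0,J(e^{(a)}_0)],[e^{(\digamma)}_0,e^{(a)}_0]\bigr]$, which Jacobi plus $\mathrm{ad}^2_{e^{(\digamma)}_0}=0$ rewrites as $-[e^{(\digamma)}_0,[[e^{(\digamma)}_0,J(e^{(a)}_0)],e^{(a)}_0]]$, and then the unprimed $J(e)e$ relation for the pair $(\digamma,a)$ together with $\mathrm{ad}^2_{e^{(a)}_0}=0$ kills it. No Serre relation beyond $(e^{(\digamma)}_0)^2=0$ and $(e^{(a)}_0)^2=0$ is needed.

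Case (v), $\{a,b\}=\{\digamma-1,\digamma+1\}$, is the genuinely delicate one, and your one-line ``both sides vanish after substitution'' is not enough. After Jacobi and $\mathrm{ad}^2_{e^{(\digamma)}_0}=0$, the left-hand side becomes $(-1)^{1+|a|}[e^{(\digamma)}_0,\;[[e^{(\digamma)}_0,J(e^{(a)}_0)],e^{(b)}_0]-[[e^{(\digamma)}_0,e^{(a)}_0],J(e^{(b)}_0)]]$, and the two inner pieces do \emph{not} cancel by any direct symmetry. The paper's key step --- which you have not identified --- is to invert the $J(\psi)e$ relation and write $J(e^{(a)}_0)=\frac{1}{A_{\digamma a}}[J(\psi^{(\digamma)}_0),e^{(a)}_0]-\tfrac{1}{2}(\sigma^{a\digamma}_1-\sigma^{\digamma a}_1)e^{(a)}_0$ (and similarly for $J(e^{(b)}_0)$), choosing $c=\digamma$ precisely because it is adjacent to both $a$ and $b$. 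After this substitution and further Jacobi manipulations, the residual terms are controlled by $[e^{(\digamma)}_0,J(\psi^{(\digamma)}_0)]=0$ together with the fermionic Serre relation \eqref{Serre2min} at the node $\digamma$, and only then does the expression collapse to zero. You should flag this rewriting as the crux of (v).
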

\begin{proof}
	Here, we shall only explicitly write the proof for the $J(e')e'$ relation, and the $J(f')f'$ relation follows in the same manner. This can be divided into the following cases:
	\begin{itemize}
		\item $a=b=\digamma\pm1$: We have
		\begin{equation}
			\begin{split}
				&\left[J\left(e'^{(a)}_0\right),e'^{(b)}_0\right]-\left[e'^{(a)}_0,J\left(e'^{(b)}_0\right)\right]\\
				=&\left[\left[e^{(\digamma)}_0,J\left(e^{(a)}_0\right)\right],\left[e^{(\digamma)}_0,e^{(a)}_0\right]\right]-\left[\left[e^{(\digamma)}_0,e^{(a)}_0\right],\left[e^{(\digamma)}_0,J\left(e^{(a)}_0\right)\right]\right].
			\end{split}
		\end{equation}
	    For $|e^{(a)}|=0$, the two terms are the same and hence cancel each other. For $|e^{(a)}|=1$, this is twice of the first term. Using the Jacobi identity, we get
	    \begin{equation}
	    	\begin{split}
	    		&\left[\left[e^{(\digamma)}_0,J\left(e^{(a)}_0\right)\right],\left[e^{(\digamma)}_0,e^{(a)}_0\right]\right]\\
	    		=&\left[\left[\left[e^{(\digamma)}_0,J\left(e^{(a)}_0\right)\right],e^{(\digamma)}_0\right],e^{(a)}_0\right]-\left[e^{(\digamma)}_0,\left[\left[e^{(\digamma)}_0,J\left(e^{(a)}_0\right)\right],e^{(a)}_0\right]\right].
	    	\end{split}
	    \end{equation}
        The first term vanishes as $\text{ad}_{e^{(\digamma)}_0}^2=0$. For the second term, using the $eJ(e)$ relation, $\left[e^{(\digamma)}_0,J\left(e^{(a)}_0\right)\right]$ consists of two terms proportional to $\left[J\left(e^{(\digamma)}_0\right),e^{(a)}_0\right]$ and $\left[e^{(\digamma)}_0,e^{(a)}_0\right]$ respectively. Hence, the second term also vanishes as $\text{ad}_{e^{(a)}_0}^2=0$. Therefore, $\left[J\left(e'^{(a)}_0\right),e'^{(b)}_0\right]-\left[e'^{(a)}_0,J\left(e'^{(b)}_0\right)\right]=0$.
        
        \item $a=\digamma,b=\digamma\pm1$: We have
        \begin{equation}
        	\left[J\left(e'^{(a)}_0\right),e'^{(b)}_0\right]-\left[e'^{(a)}_0,J\left(e'^{(b)}_0\right)\right]=\frac{1}{2}\left(\sigma_1^{b\digamma}-\sigma_1^{\digamma b}\right)\left(\alpha^{(b)},\alpha^{(\digamma)}\right)e^{(b)}_0
        \end{equation}
        using the Jacobi identity. On the other hand,
        \begin{equation}
        	\begin{split}
        		&\frac{1}{2}\left(\sigma'^{ba}_1-\sigma'^{ab}_1\right)\left[e'^{(a)}_0,e'^{(b)}_0\right]=\frac{1}{2}\left(\sigma^{b\digamma}_1-\sigma^{\digamma b}_1\right)\left[f^{(\digamma)}_0,\left[e^{(\digamma)}_0,e^{(b)}_0\right]\right]\\
        		=&\frac{1}{2}\left(\sigma^{b\digamma}_1-\sigma^{\digamma b}_1\right)\left[\psi^{(\digamma)}_0,e^{(b)}_0\right]=\frac{1}{2}\left(\sigma_1^{b\digamma}-\sigma_1^{\digamma b}\right)\left(\alpha^{(b)},\alpha^{(\digamma)}\right)e^{(b)}_0.
        	\end{split}
        \end{equation}
        
        \item $a=\digamma\pm1,b=\digamma\pm1$: We have
        \begin{equation}
        	\begin{split}
        		\left[J\left(e'^{(a)}_0\right),e'^{(b)}_0\right]-\left[e'^{(a)}_0,J\left(e'^{(b)}_0\right)\right]=&(-1)^{1+|a|}\left[e^{\digamma}_0,\left[\left[e^{(\digamma)}_0,J\left(e^{(a)}_0\right)\right],e^{(b)}_0\right]\right]\\
        		&-(-1)^{1+|a|}\left[e^{\digamma}_0,\left[\left[e^{(\digamma)}_0,e^{(a)}_0\right],J\left(e^{(b)}_0\right)\right]\right],
        	\end{split}\label{primedJeeprf1}
        \end{equation}
        where we have used the Jacobi identity and $\text{ad}_{e^{(\digamma)}_0}^2=0$. Therefore, we need to show that the expression on the right hand side vanishes. In particular,
        \begin{equation}
        	\left[\left[e^{(\digamma)}_0,J\left(e^{(a)}_0\right)\right],e^{(b)}_0\right]=-(-1)^{|a|}\left[J\left(e^{(a)}_0\right),\left[e^{(\digamma)}_0,e^{(b)}_0\right]\right].
        \end{equation}
        Using
        \begin{equation}
        	J\left(e^{(a)}_0\right)=\frac{1}{\left(\alpha^{(c)},\alpha^{(a)}\right)}\left[J\left(\psi^{(c)}_0\right),e^{(a)}_0\right]-\frac{\sigma_1^{ac}-\sigma_1^{ca}}{2}e^{(a)}_0
        \end{equation}
        with $c$ taken to be $\digamma$, the first term on the right hand side in \eqref{primedJeeprf1} is equal to $(-1)^{1+|a|}\left[e^{(\digamma)}_0,\mathcal{X}_1\right]$ plus a term proportional to $\left[e^{(\digamma)}_0,\left[e^{(a)}_0,\left[e^{(\digamma)}_0,e^{(b)}_0\right]\right]\right]$. In particular, $\left[e^{(\digamma)}_0,\left[e^{(a)}_0,\left[e^{(\digamma)}_0,e^{(b)}_0\right]\right]\right]$ vanishes due to the Serre relation for $\digamma$. Here,
        \begin{equation}
        	\mathcal{X}_1=-(-1)^{|a|}\frac{1}{A_{a\digamma}}\left[\left[J\left(\psi^{(\digamma)}_0\right),e^{(a)}_0\right],\left[e^{(\digamma)}_0,e^{(b)}_0\right]\right].
        \end{equation}
        Likewise, the second line in \eqref{primedJeeprf1} is equal to $-(-1)^{1+|a|}\left[e^{(\digamma)}_0,\mathcal{X}_2\right]$, where
        \begin{equation}
        	\mathcal{X}_2=\frac{1}{A_{b\digamma}}\left[\left[e^{(\digamma)}_0,e^{(a)}_0\right],\left[J\left(\psi^{(\digamma)}_0\right),e^{(b)}_0\right]\right].
        \end{equation}
        Therefore, showing that the right hand side in \eqref{primedJeeprf1} vanishes is equivalent to showing $\left[e^{(\digamma)}_0,\mathcal{X}_1\right]=\left[e^{(\digamma)}_0,\mathcal{X}_2\right]$. Using the Jacobi identity and $\text{ad}_{e^{(\digamma)}_0}^2=0$, we have
        \begin{equation}
        	\mathcal{X}_1=-(-1)^{|a||b|+|b|}\frac{1}{A_{a\digamma}}\left[e^{(b)}_0,\left[e^{(\digamma)}_0,\left[J\left(\psi^{(\digamma)}_0\right),e^{(a)}_0\right]\right]\right].
        \end{equation}
        Keep using the Jacobi identity, and we get
        \begin{equation}
        	\left[e^{(\digamma)}_0,\mathcal{X}_1\right]=\left[e^{(\digamma)}_0,\mathcal{X}_3\right]+\mathcal{X}_4,
        \end{equation}
        where
        \begin{equation}
        	\mathcal{X}_3=-(-1)^{|a||b|+|b|}\frac{1}{A_{a\digamma}}\left[\left[e^{(b)}_0,J\left(\psi^{(\digamma)}_0\right)\right],\left[e^{(\digamma)}_0,,e^{(a)}_0\right]\right]
        \end{equation}
        and $\mathcal{X}_4$ is proportional to
        \begin{equation}
        	\begin{split}
        		&\left[e^{(\digamma)}_0,\left[J\left(\psi^{(\digamma)}_0\right),\left[e^{(b)}_0,\left[e^{(\digamma)}_0,,e^{(a)}_0\right]\right]\right]\right]\\
        		=&\left[\left[e^{(\digamma)}_0,J\left(\psi^{(\digamma)}_0\right)\right],\left[e^{(b)}_0,\left[e^{(\digamma)}_0,,e^{(a)}_0\right]\right]\right]-\left[J\left(\psi^{(\digamma)}_0\right),\left[e^{(\digamma)}_0,\left[e^{(b)}_0,\left[e^{(\digamma)}_0,,e^{(a)}_0\right]\right]\right]\right].
        	\end{split}
        \end{equation}
        Hence, $\mathcal{X}_4$ vanishes due to $\left[e^{(\digamma)}_0,J\left(\psi^{(\digamma)}_0\right)\right]=0$ and the Serre relation for $\digamma$. Therefore, $\left[e^{(\digamma)}_0,\mathcal{X}_1\right]=\left[e^{(\digamma)}_0,\mathcal{X}_3\right]$. It is straightforward to see that $\mathcal{X}_3=\mathcal{X}_2$. This shows that \eqref{primedJeeprf1} vanishes.
        
        \item otherwise: The remaining cases are immediate following the expressions of the primed generators and the similar arguments as above.
	\end{itemize}
\end{proof}

\begin{remark}
	In terms of $\psi^{(a)}_1$, $e^{(a)}_1$ and $f^{(a)}_1$, we have
	\begin{equation}
		\begin{split}
			&\widetilde{\psi}'^{(a)}_1=\begin{cases}
				-\widetilde{\psi}^{(a)}_1,&\quad a=\digamma,\\
				\widetilde{\psi}^{(a)}_1+\widetilde{\psi}^{(\digamma)}_1-\sigma_1^{a\digamma}e^{(\digamma)}_0f^{(\digamma)}_0+\sigma_1^{\digamma a}f^{(\digamma)}_0e^{(\digamma)}_0,&\quad a=\digamma\pm1,\\
				\widetilde{\psi}^{(a)}_1,&\quad\text{otherwise};
			\end{cases}\\
		    &e'^{(a)}_1=\begin{cases}
		    	f^{(\digamma)}_1-\frac{\sigma_1^{b\digamma}}{\left(\alpha^{(\digamma)},\alpha^{(b)}\right)}\psi^{(\digamma)}_0f^{(\digamma)}_0-\frac{\sigma_1^{\digamma b}}{\left(\alpha^{(\digamma)},\alpha^{(b)}\right)}f^{(\digamma)}_0\psi^{(\digamma)}_0,&\quad a=\digamma,\\
		    	\left[e^{(\digamma)}_0,e^{(a)}_1\right],&\quad a=\digamma\pm1,\\
		    	e^{(a)}_1,&\quad\text{otherwise};
		    \end{cases}\\
	        &f'^{(a)}_1=\begin{cases}
	        	-e^{(\digamma)}_1+\frac{\sigma_1^{b\digamma}}{\left(\alpha^{(\digamma)},\alpha^{(b)}\right)}e^{(\digamma)}_0\psi^{(\digamma)}_0+\frac{\sigma_1^{\digamma b}}{\left(\alpha^{(\digamma)},\alpha^{(b)}\right)}\psi^{(\digamma)}_0e^{(\digamma)}_0,&\quad a=\digamma,\\
	        	-\frac{1}{\left(\alpha^{(\digamma)},\alpha^{(a)}\right)}\left[f^{(a)}_1,f^{(\digamma)}_0\right],&\quad a=\digamma\pm1,\\
	        	f^{(a)}_1,&\quad\text{otherwise}.
	        \end{cases}
		\end{split}
	\end{equation}
    Here, $b$ can be taken as either $\digamma+1$ or $\digamma-1$ which would give the same result. One may check that this satisfies the relations for quiver Yangians in Theorem \ref{minthm}. Notice that the coefficients $\sigma_1^{b\digamma}/A_{b\digamma}$ and $\sigma_1^{\digamma b}/A_{b\digamma}$ are equal to $\epsilon_{1,2}$. It is also straightforward to write this isomorphism using $\uppsi'^{(a)}_n=(\epsilon_1+\epsilon_2)\psi'^{(a)}_n$, $\mathtt{e}'^{(a)}_n=(\epsilon_1+\epsilon_2)^{1/2}e'^{(a)}_n$ and $\mathtt{f}'^{(a)}_n=(\epsilon_1+\epsilon_2)^{1/2}f'^{(a)}_n$.
\end{remark}

\section{Connections to $\mathcal{W}$-Algebras}\label{YandW}
As mentioned before, the quiver Yangians and certain $\mathcal{W}$-algebras are expected to have intimate relations that implement the BPS/CFT correspondence. Indeed, as we are now going to see, the rectangular $\mathcal{W}$-algebras for the associated generalized conifolds can be viewed as truncations of the quiver Yangians.

\subsection{From $\mathtt{Y}$ to $\mathcal{W}$}\label{truncations}
Here, we shall directly start with the commutation relations for the generators of rectangular $\mathcal{W}$-algebras for the generalized conifold $xy=z^Mw^N$. A mathematical definition of rectangular $W$-algebras $\mathcal{W}^k\left(\mathfrak{gl}(Ml|Nl),\left(l^{(M|N)}\right)\right)$ is given in Appendix \ref{recW} with the notations and conventions set up therein. For brevity, we shall abbreviate it as $\mathcal{W}_{M|N\times l}$.

The $\mathcal{W}$-algebras of interest in this paper can be generated by $U^{(s)}_{ij}$ with spin $s=1,2$ and $i,j\in\mathbb{Z}/(M+N)\mathbb{Z}$. Given a parity sequence $\varsigma=\{\varsigma_i\}$ as introduced in \S\ref{QY}, the generator $U^{(s)}_{ij}$ has the $\mathbb{Z}_2$-grading given by $(-1)^{p(i)+p(j)}$, where $(-1)^{p(i)}=\varsigma_i$ (see also \eqref{pdef})\footnote{As we will see shortly, $|a|$ and $p(i)$ are indeed consistent in the sense of $\varsigma$ when relating $\mathtt{Y}$ and $\mathcal{W}$. In other words, $|a|$ is bosonic when $p(a)=p(a+1)$ and fermionic otherwise.}. The OPEs of the currents $U^{(s)}_{ij}(z)$ were obtained in \cite{Rapcak:2019wzw,Eberhardt:2019xmf}. The following commutation relations for their modes $U^{(s)}_{ij}[m]$ can then be computed directly using \eqref{OPE2commreln}.
\begin{lemma}\label{UUlemma}
	We have
	\begin{align}
			&\left[U^{(1)}_{i_1j_1}[m],U^{(1)}_{i_2j_2}[n]\right]\nonumber\\
			=&\delta_{m,-n}ml\left(\delta_{j_1i_2}\delta_{i_1j_2}(-1)^{p(j_1)}\varkappa+\delta_{i_1j_1}\delta_{i_2j_2}\right)\nonumber\\
			&+(-1)^{p(i_1)p(j_1)+p(i_2)p(j_2)+p(j_1)p(i_2)}\delta_{i_1j_2}U^{(1)}_{i_2j_1}[m+n]-(-1)^{p(j_1)}\delta_{i_2j_1}U^{(1)}_{i_1j_2}[m+n],
	\end{align}
    \begin{align}
    		&\left[U^{(1)}_{i_1j_1}[m],U^{(2)}_{i_2j_2}[n]\right]\nonumber\\
    		=&\frac{1}{2}l(l-1)m(m-1)\varkappa\delta_{m,-n}\left((-1)^{p(j_1)}\varkappa\delta_{i_1j_2}\delta_{i_2j_1}+\delta_{i_1j_1}\delta_{i_2j_2}\right)\nonumber\\
    		&+m(l-1)\left((-1)^{p(i_1)p(j_1)+p(i_2)p(j_2)+p(j_1)p(i_2)}\varkappa\delta_{i_1j_2}U^{(1)}_{i_2j_1}[m+n]+\delta_{i_1j_1}U^{(1)}_{i_2j_2}[m+n]\right)\nonumber\\
    		&+(-1)^{p(i_1)p(j_1)+p(i_2)p(j_2)+p(j_1)p(i_2)}\delta_{i_1j_2}U^{(2)}_{i_2j_1}[m+n]-(-1)^{p(j_1)}\delta_{i_2j_1}U^{(2)}_{i_1j_2}[m+n],
    \end{align}
    \begin{align}
    		&\left[U^{(2)}_{ii}[m],U^{(2)}_{jj}[n]\right]\nonumber\\
    		=&\frac{1}{12}l(l-1)m(m-1)(m+1)\delta_{m,-n}\left(2\varkappa((1-2l)\alpha^2+1)(-1)^{p(i)}\delta_{ij}-(4l-3)\varkappa^2+1\right)\nonumber\\
    		&+\frac{1}{2}m(m+1)\left((l-1)^2\varkappa\left(U^{(1)}_{ii}[m+n]-U^{(1)}_{jj}[m+n]\right)\right)\nonumber\\
    		&-(m+1)\left(U^{(2)}_{jj}[m+n]+U^{(2)}_{ii}[m+n]+2\varkappa(-1)^{p(i)}\delta_{ij}U^{(2)}_{ii}[m+n]\right)\nonumber\\
    		&+(m+1)(l-1)\left(\sum_{k<0}U^{(1)}_{ii}[k]U^{(1)}_{jj}[m+n-k]+\sum_{k\geq0}U^{(1)}_{jj}[m+n-k]U^{(1)}_{ii}[k]\right)\nonumber\\
    		&+(m+1)(l-1)(-1)^{p(j)}\varkappa\left(\sum_{k<0}U^{(1)}_{ij}[k]U^{(1)}_{ji}[m+n-k]+(-1)^{p(i)+p(j)}\sum_{k\geq0}U^{(1)}_{ji}[m+n-k]U^{(1)}_{ij}[k]\right)\nonumber\\
    		&-(m+1)l(l-1)\varkappa(m+n+1)\left(1+(-1)^{p(i)}\varkappa\delta_{ij}\right)U^{(1)}_{ii}[m+n]\nonumber\\
    		&-(m+n+2)\left(U^{(2)}_{jj}[m+n]-(-1)^{p(i)}\varkappa\delta_{ij}U^{(2)}_{ii}[m+n]-2U^{(2)}_{ii}[m+n]\right)\nonumber\\
    		&+(-1)^{p(i)}\left(\sum_{k<-1}U^{(2)}_{ji}[k]U^{(1)}_{ij}[m+n-k]+\sum_{k\geq-1}(-1)^{p(i)+p(j)}U^{(1)}_{ij}[m+n-k]U^{(2)}_{ji}[k]\right)\nonumber\\
    		&-(-1)^{p(j)}\left(\sum_{k<-1}U^{(2)}_{ij}[k]U^{(1)}_{ji}[m+n-k]+\sum_{k\geq-1}(-1)^{p(i)+p(j)}U^{(1)}_{ji}[m+n-k]U^{(2)}_{ij}[k]\right)\nonumber\\
    		&+(l-1)\left(\sum_{k<0}(-k-1)U^{(1)}_{ii}[k]U^{(1)}_{jj}[m+n-k]+\sum_{k\geq0}(-k-1)U^{(1)}_{jj}[m+n-k]U^{(1)}_{ii}[k]\right)\nonumber\\
    		&+(l-1)\varkappa(-1)^{p(j)}\left(\sum_{k<0}(-k-1)U^{(1)}_{ij}[k]U^{(1)}_{ji}[m+n-k]\right.\nonumber\\
    		&\left.\qquad\qquad\qquad\qquad\quad+(-1)^{p(i)+p(j)}\sum_{k\geq0}(-k-1)U^{(1)}_{ji}[m+n-k]U^{(1)}_{ij}[k]\right)\nonumber\\
    		&+\frac{1}{2}(l-1)(m+n+1)(m+n+2)\left((l+1)\varkappa U^{(1)}_{ii}[m+n]-\varkappa U^{(1)}_{jj}[m+n]\right)\nonumber\\
    		&+\frac{1}{2}(l-1)(m+n+1)(m+n+2)(-1)^{p(j)}l\varkappa^2\delta_{ij}U^{(1)}_{ii}[m+n].
    \end{align}
\end{lemma}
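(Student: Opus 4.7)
The plan is to derive each bracket by applying the standard mode-OPE dictionary \eqref{OPE2commreln} to the operator product expansions of the generating currents $U^{(1)}_{ij}(z)$ and $U^{(2)}_{ij}(z)$ of $\mathcal{W}_{M|N\times l}$ recorded in \cite{Rapcak:2019wzw,Eberhardt:2019xmf}. Concretely, I would read off the coefficient of each pole $(z-w)^{-r}$ in the OPE, rewrite it in terms of modes, and then the mode commutator is an explicit linear combination of those coefficients weighted by binomial factors in $m,n$ dictated by \eqref{OPE2commreln}. No new structural input is needed; the lemma is fundamentally a bookkeeping statement.

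For the $[U^{(1)},U^{(1)}]$ bracket the relevant OPE has only a double pole (whose coefficient carries the invariant bilinear form $(-1)^{p(j_1)}\varkappa\,\delta_{i_1j_2}\delta_{i_2j_1}+\delta_{i_1j_1}\delta_{i_2j_2}$ multiplied by $l$) and a simple pole giving the $\widehat{\mathfrak{gl}}(M|N)$-type current-algebra structure constants. Feeding these into \eqref{OPE2commreln} with $s_1=s_2=1$ produces the claim directly after tracking the $\mathbb{Z}_2$-grading signs. The mixed $[U^{(1)},U^{(2)}]$ bracket comes from a spin-$(1,2)$ OPE with poles of order up to three, so \eqref{OPE2commreln} yields an $m(m-1)$ central term, a piece linear in $m$ and in $l-1$ proportional to $U^{(1)}$, and a spin-preserving contribution in $U^{(2)}$. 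These are again routine expansions involving Kronecker deltas and the $\varsigma$-dependent signs.

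The substantive work is the $[U^{(2)},U^{(2)}]$ commutator. Its OPE carries poles up to order five and, crucially, the low-order poles contain normal-ordered quadratics in the spin-$1$ currents such as $:\!U^{(1)}_{ii}U^{(1)}_{jj}\!:$ and $:\!U^{(1)}_{ij}U^{(1)}_{ji}\!:$ together with some of their derivatives. When these are fed through \eqref{OPE2commreln}, the normal ordering is precisely what forces the split of the $k$-sums into $k<0$ and $k\geq 0$, i.e. the creation/annihilation arrangement visible in the statement. I would proceed pole by pole, paying attention to (i) the signs from $(-1)^{p(i)},(-1)^{p(j)},\dots$ that multiply the invariant form and its contractions; (ii) derivative terms in the OPE, which via \eqref{OPE2commreln} generate the level shifts $(m+n+1)$ and $(m+n+2)$ seen in the last lines of the statement; and (iii) the weighted sums $\sum(-k-1)(\dots)$, which arise from translating a derivative acting on a normal-ordered bilinear into modes.

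The main obstacle is precisely this combinatorial accounting in the spin-$(2,2)$ case: simultaneously tracking the $p(i),p(j)$ signs under normal-order reshuffling and separating the various level-dependent contributions (those of order $\varkappa$, $\varkappa^2$, $l-1$, $l(l-1)$, $(l-1)^2$, etc.) appearing at each pole. Everything else reduces to mechanical application of Appendix \ref{recW}'s mode-commutator formula to the known OPEs, and no further conceptual ingredients beyond those references are required.
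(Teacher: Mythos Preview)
Your approach is correct and matches the paper's: the lemma is stated as a direct consequence of applying the mode--OPE formula \eqref{OPE2commreln} to the OPEs of $U^{(1)}$ and $U^{(2)}$ from \cite{Rapcak:2019wzw,Eberhardt:2019xmf}, with no additional argument given. Your more detailed breakdown of the pole-by-pole bookkeeping (especially the normal-ordered bilinears in the spin-$(2,2)$ case producing the split $k$-sums) simply makes explicit what the paper leaves as a routine computation.
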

Notice that we only give the $U^{(2)}_{i_1i_2}U^{(2)}_{j_1j_2}$ relation when $i_1=j_1$ and $i_2=j_2$ as this is sufficient for the use here. It is also straightforward to get the more general case from the OPE. In this paper, we shall always assume $\varkappa\neq0$.

The $\mathcal{W}$-algebra is often defined via the distinguished parity sequence, that is, only two fermionic $p(i)$ (the non-super case $M|0$ always has bosonic ones only). Here, we allow it to have different $\varsigma$. Analogous to the quiver Yangians related by Seiberg dualities, we would expect the $\mathcal{W}$-algebras with different $\varsigma$ are essentially the same. In fact, the proof of this is much simpler than the quiver Yangian case in \S\ref{isoQY} by virtue of the matrix presentation here.
\begin{proposition}
	Given $M$, $N$ and $l$, the rectangular $\mathcal{W}$-algebras $\mathcal{W}_{M|N\times l}$ are isomorphic for different $\varsigma$.\label{Wiso}
\end{proposition}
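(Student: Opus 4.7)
The plan is to reduce to the case of a single adjacent transposition of the parity sequence and then exhibit the isomorphism explicitly by permuting row/column labels of the matrix generators $U^{(s)}_{ij}$. Any two parity sequences on $\mathbb{Z}/(M+N)\mathbb{Z}$ with $M$ entries $-1$ and $N$ entries $+1$ are related by a permutation in $S_{M+N}$, and every permutation factors as a product of adjacent transpositions. So it is enough to handle the case in which $\varsigma'$ differs from $\varsigma$ only by swapping $\varsigma_a$ and $\varsigma_{a+1}$. Let $\tau=(a,a+1)$ denote the corresponding transposition; this is the $\mathcal{W}$-algebra counterpart of the odd reflection $r^{(\digamma)}$ used in the proof of the quiver-Yangian isomorphism (Theorem~\ref{QYisothm}).

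For such $\tau$, I would define
\begin{equation*}
    \phi\colon \mathcal{W}^{\varsigma'}_{M|N\times l}\longrightarrow \mathcal{W}^{\varsigma}_{M|N\times l},\qquad \phi\bigl(U'^{(s)}_{ij}[m]\bigr)=U^{(s)}_{\tau(i)\tau(j)}[m],\quad s=1,2,\ m\in\mathbb{Z}.
\end{equation*}
By construction $\varsigma'_i=\varsigma_{\tau(i)}$, hence $p'(i)=p(\tau(i))$, so $\phi$ respects the $\mathbb{Z}_2$-grading $(-1)^{p(i)+p(j)}$ of the generators. The central parameter $\varkappa$ and the rectangular data $l$ are intrinsic and unchanged. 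Since $\tau^{-1}=\tau$, the same formula (with $\varsigma$ and $\varsigma'$ swapped) provides a two-sided inverse, so the task reduces to checking that $\phi$ is an algebra homomorphism, which in turn amounts to verifying the three families of relations in Lemma~\ref{UUlemma}.

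This verification is essentially automatic: every sign factor appearing in those relations has the form $(-1)^{f(p(i_1),p(j_1),p(i_2),p(j_2))}$ for some polynomial $f$; every Kronecker delta is a $\delta_{i,j}$; and every generator-mode has the form $U^{(s)}_{ij}[k]$ with $k$ ranging over a fixed summation. Under $\phi$ the parities transform as $p'(\cdot)\mapsto p(\tau(\cdot))=p'(\cdot)$, the deltas as $\delta_{i,j}\mapsto\delta_{\tau(i),\tau(j)}=\delta_{i,j}$, and the generator-modes as $U'^{(s)}_{ij}[k]\mapsto U^{(s)}_{\tau(i)\tau(j)}[k]$ with the mode index $k$ untouched. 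Hence normal-ordered products and summation ranges transform consistently, and each term on the primed side maps termwise to its unprimed counterpart with identical coefficients. The only real bookkeeping obstacle is the long $[U^{(2)}_{ii},U^{(2)}_{jj}]$ relation, whose many sign factors and quadratic composites $U^{(1)}_{ii}U^{(1)}_{jj}$, $U^{(1)}_{ij}U^{(1)}_{ji}$ are interleaved in an intricate way; however, each contribution there depends on $i,j$ only through their parities and the flag $\delta_{ij}$, all of which are preserved by $\tau$, so matching is again termwise. Composing such $\phi$ over a sequence of adjacent transpositions then relates any two parity sequences with fixed $(M,N,l)$, establishing the proposition.
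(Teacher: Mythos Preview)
Your proposal is correct and takes essentially the same approach as the paper: reduce to an adjacent transposition $\sigma$ and define the isomorphism by $U^{(r)}_{ij}\mapsto U^{(r)}_{\sigma(i)\sigma(j)}$, then observe that every structure constant in the defining relations depends on the indices only through parities $p(\cdot)$ and Kronecker deltas, both of which are preserved. The paper's proof states exactly this in two sentences; your version simply spells out the termwise verification more explicitly.
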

\begin{proof}
	The isomorphism can be constructed from a sequence of the following isomorphic maps. Suppose $\varsigma$ and $\varsigma'$ are related by $\sigma\in\mathfrak{S}_{M+N}$ that permutes the $i^\text{th}$ and $(i+1)^\text{th}$ elements. Then the transformation is given by $U^{(r)}_{ij}\mapsto U^{(r)}_{\sigma(i)\sigma(j)}$. It is straightforward to see that this preserves the relations for the generators.
\end{proof}
Therefore, when considering the map from the quiver Yangians to the (universal enveloping algebra of) $\mathcal{W}$-algebra below, we can simply take them to have the same $\varsigma$. The isomorphic ones are related by the transformations in Theorem \ref{QYisothm} and Proposition \ref{Wiso} respectively.

Let $\epsilon_{\pm}=\epsilon_1\pm\epsilon_2$. Since
\begin{equation}
	\varsigma_a(\epsilon_1-\epsilon_2)=\begin{cases}
		\varsigma_{a+1}(\epsilon_1-\epsilon_2),&\varsigma_a=\varsigma_{a+1}\\
		\varsigma_{a+1}(\epsilon_2-\epsilon_1),&\varsigma_a=-\varsigma_{a+1},
	\end{cases}
\end{equation}
we can take $\epsilon_-=\sigma_1^{a+1,a}-\sigma_1^{a,a+1}$ for any $a$ based on Figure \ref{epsilongencon} without loss of generality. This allows us to consider another presentation of the quiver Yangian which would be convenient for our discussions. Let us prepare the generators defined as
\begin{equation}
	\begin{cases}
		\mathcal{H}^{(a)}_0=\psi^{(a)}_0&\\
		\mathcal{E}^{(a)}_0=e^{(a)}_0\\
		\mathcal{F}^{(a)}_0=f^{(a)}_0,
	\end{cases}
    \qquad
    \begin{cases}
    	\mathcal{H}^{(a)}_1=\psi^{(a)}_1+\frac{1}{2}\nu(a)\epsilon_-\psi^{(a)}_0&\\
    	\mathcal{E}^{(a)}_1=e^{(a)}_1+\frac{1}{2}\nu(a)\epsilon_-e^{(a)}_0&\\
    	\mathcal{F}^{(a)}_1=f^{(a)}_1+\frac{1}{2}\nu(a)\epsilon_-f^{(a)}_0,
    \end{cases}
\end{equation}
where $\nu(a)$ can be any function satisfying $\nu(a\pm1)=\nu(a)\pm1$ for $0\leq a\leq M+N-1$. In particular, this means that $\nu(-1)\neq\nu(M+N-1)$ and $\nu(0)\neq\nu(M+N)$. For instance, the simplest example would be $\nu(a)=a$ ($-1\leq a\leq M+N$). We shall also pick a reference node labelled by $a=0$.
\begin{proposition}\label{HEFprop}
	The quiver Yangian is generated by $\mathcal{H}^{(a)}_r$, $\mathcal{E}^{(a)}_r$, $\mathcal{F}^{(a)}_r$ ($a\in Q_0$, $r=0,1$) subject to the relations
	\begin{align}
		&\left[\mathcal{H}^{(a)}_r,\mathcal{H}^{(b)}_s\right]=0,\\
		&\left[\mathcal{E}^{(a)}_r,\mathcal{F}^{(b)}_s\right]=\delta_{ab}\mathcal{H}^{(a)}_{r+s},\\
		&\left[\mathcal{H}^{(a)}_0,\mathcal{E}^{(a)}_r\right]=A_{ab}\mathcal{E}^{(a)}_r,\\
		&\left[\widetilde{\mathcal{H}}^{(a)}_1,\mathcal{E}^{(b)}_0\right]=\begin{cases}
			A_{ab}\mathcal{E}^{(b)}_1+\frac{1}{2}A_{ab}\nu(M+N)\epsilon_-\mathcal{E}^{(b)}_0,&(a,b)=(M+N-1,0)\\
			A_{ab}\mathcal{E}^{(b)}_1-\frac{1}{2}A_{ab}\nu(M+N)\epsilon_-\mathcal{E}^{(b)}_0,&(a,b)=(0,M+N-1)\\
			A_{ab}\mathcal{E}^{(b)}_1,&\text{otherwise},
		\end{cases}\\
	    &\left[\mathcal{H}^{(a)}_0,\mathcal{F}^{(a)}_r\right]=-A_{ab}\mathcal{F}^{(a)}_r,\\
	    &\left[\widetilde{\mathcal{H}}^{(a)}_1,\mathcal{F}^{(b)}_0\right]=\begin{cases}
	    	-A_{ab}\mathcal{E}^{(b)}_1-\frac{1}{2}A_{ab}\nu(M+N)\epsilon_-\mathcal{E}^{(b)}_0,&(a,b)=(M+N-1,0)\\
	    	-A_{ab}\mathcal{E}^{(b)}_1+\frac{1}{2}A_{ab}\nu(M+N)\epsilon_-\mathcal{E}^{(b)}_0,&(a,b)=(0,M+N-1)\\
	    	-A_{ab}\mathcal{E}^{(b)}_1,&\text{otherwise},
	    \end{cases}\\
        &\left[\mathcal{E}^{(a)}_0,\mathcal{E}^{(b)}_0\right]=\left[\mathcal{F}^{(a)}_0,\mathcal{F}^{(b)}_0\right]=0\qquad(\sigma_1^{ab}=0),\\
        &\nonumber\\
        &\left[\mathcal{E}^{(a)}_1,\mathcal{E}^{(b)}_0\right]-\left[\mathcal{E}^{(a)}_0,\mathcal{E}^{(b)}_1\right]\nonumber\\
        =&\begin{cases}
        	\frac{1}{2}A_{ab}\epsilon_+\left\{\mathcal{E}^{(a)}_0,\mathcal{E}^{(b)}_0\right\}-\frac{1}{2}\nu(M+N)\epsilon_-\left[\mathcal{E}^{(a)}_0,\mathcal{E}^{(b)}_0\right],&(a,b)=(0,M+N-1)\\
        	\frac{1}{2}A_{ab}\epsilon_+\left\{\mathcal{E}^{(a)}_0,\mathcal{E}^{(b)}_0\right\},&\text{otherwise},
        \end{cases}\\
        &\nonumber\\
        &\left[\mathcal{F}^{(a)}_1,\mathcal{F}^{(b)}_0\right]-\left[\mathcal{F}^{(a)}_0,\mathcal{F}^{(b)}_1\right]\nonumber\\
        =&\begin{cases}
        	-\frac{1}{2}A_{ab}\epsilon_+\left\{\mathcal{F}^{(a)}_0,\mathcal{F}^{(b)}_0\right\}-\frac{1}{2}\nu(M+N)\epsilon_-\left[\mathcal{F}^{(a)}_0,\mathcal{F}^{(b)}_0\right],&(a,b)=(0,M+N-1)\\
        	-\frac{1}{2}A_{ab}\epsilon_+\left\{\mathcal{F}^{(a)}_0,\mathcal{F}^{(b)}_0\right\},&\text{otherwise},
        \end{cases}\\
        &\nonumber\\
        &\text{Serre relations }\textup{(S)},
	\end{align}
    where $\widetilde{\mathcal{H}}^{(a)}_1:=\mathcal{H}^{(a)}_1-\frac{1}{2}\epsilon_+\left(\mathcal{H}^{(a)}_0\right)^2$. Recall that $\{x,y\}$ is used to denote $xy+(-1)^{|x||y|}yx$ here. The higher modes with $r\geq2$ can be obtained in a way similar to the presentation using $\psi,e,f$.
\end{proposition}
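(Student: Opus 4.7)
The plan is to deduce this presentation directly from the minimalistic presentation of Theorem \ref{minthm}. Substitute $\mathcal{H}^{(a)}_1=\psi^{(a)}_1+\tfrac{1}{2}\nu(a)\epsilon_-\psi^{(a)}_0$ (and likewise for $\mathcal{E}^{(a)}_1,\mathcal{F}^{(a)}_1$) into each of the relations (R) and (S), and verify that the claimed relations in $\mathcal{H},\mathcal{E},\mathcal{F}$ emerge. Two input identities drive the simplification: the universal $\sigma^{ab}_1+\sigma^{ba}_1=\epsilon_+A_{ab}$ built into the charge assignment, and, under the convention $\epsilon_-=\sigma^{a+1,a}_1-\sigma^{a,a+1}_1$ adopted at the start of this subsection, the bookkeeping identity
\begin{equation*}
\sigma^{ba}_1-\sigma^{ab}_1=(\nu(b)-\nu(a))\epsilon_-,
\end{equation*}
which holds for every adjacent pair $(a,b)$ in the cyclic quiver \emph{except} at the wrap-around pairs $(0,M+N-1)$ and $(M+N-1,0)$, where the two sides differ by $\pm(M+N)\epsilon_-$. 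This isolated failure of the identity at the boundary is precisely what produces the extra $\tfrac{1}{2}\nu(M+N)\epsilon_-$ correction terms in the statement.

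The relations involving only zero modes are automatic: the Serre relations (S) and the vanishing $\mathcal{E}\mathcal{E}$, $\mathcal{F}\mathcal{F}$ relations for $\sigma^{ab}_1=0$ carry over verbatim. The $\mathcal{H}\mathcal{H}$ relation follows from $[\psi^{(a)}_r,\psi^{(b)}_s]=0$ by linearity. The $\mathcal{E}_r\mathcal{F}_s$ relations with $r+s\leq 1$ hold because the $\nu$-shifts in $\mathcal{E}^{(a)}_1$ and $\mathcal{F}^{(a)}_1$ combine via $[e^{(a)}_0,f^{(b)}_0]=\delta_{ab}\psi^{(a)}_0$ to reproduce precisely the shift in $\mathcal{H}^{(a)}_1$. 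The $[\mathcal{H}^{(a)}_0,\mathcal{E}^{(b)}_r]$ and $[\mathcal{H}^{(a)}_0,\mathcal{F}^{(b)}_r]$ relations reduce immediately to Lemma \ref{psi01eflemma} since $\psi^{(a)}_0$ commutes with $\psi^{(b)}_0$.

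The substantive checks are the $\widetilde{\mathcal{H}}^{(a)}_1\mathcal{E}^{(b)}_0$ and $\mathcal{E}_1\mathcal{E}_0$ relations, together with their $\mathcal{F}$-counterparts. For the first, \eqref{psitildeefmin} gives $[\widetilde{\psi}^{(a)}_1,e^{(b)}_0]=A_{ab}e^{(b)}_1+\tfrac{1}{2}A_{ab}(\sigma^{ba}_1-\sigma^{ab}_1)e^{(b)}_0$. Adding the $\tfrac{1}{2}\nu(a)\epsilon_-A_{ab}e^{(b)}_0$ from the shift in $\widetilde{\mathcal{H}}^{(a)}_1$ and absorbing $\tfrac{1}{2}\nu(b)\epsilon_-A_{ab}e^{(b)}_0$ into $A_{ab}\mathcal{E}^{(b)}_1$ on the right leaves a residual proportional to $(\sigma^{ba}_1-\sigma^{ab}_1)+(\nu(a)-\nu(b))\epsilon_-$, which vanishes away from the wrap-around and equals $\pm(M+N)\epsilon_-$ on it. For $\mathcal{E}_1\mathcal{E}_0$, first rewrite the right side of \eqref{eemin} using $\sigma^{ab}_1+\sigma^{ba}_1=\epsilon_+A_{ab}$ as
\begin{equation*}
\tfrac{1}{2}\epsilon_+A_{ab}\{e^{(a)}_0,e^{(b)}_0\}+\tfrac{1}{2}(\sigma^{ba}_1-\sigma^{ab}_1)[e^{(a)}_0,e^{(b)}_0];
\end{equation*}
the $\nu$-shifts on the left side of the relation contribute $\tfrac{1}{2}(\nu(a)-\nu(b))\epsilon_-[\mathcal{E}^{(a)}_0,\mathcal{E}^{(b)}_0]$, and the same boundary analysis recovers the stated cases. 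The $\mathcal{F}$-counterparts follow by identical manipulations with the appropriate sign flips, and higher modes of the new generators are defined via \eqref{finitegenerators} with $\psi\to\mathcal{H}$, inheriting consistency from the arguments of \S\ref{minimalistic}.

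The main bookkeeping obstacle is the wrap-around: $\nu$ jumps by $M+N$ across the boundary rather than by $\pm 1$, and one must confirm that the correction appears in exactly one ordering of the pair (here $(0,M+N-1)$ in the $\mathcal{E}_1\mathcal{E}_0$ case), the opposite ordering being encoded by super-bracket antisymmetry $[x,y]=-(-1)^{|x||y|}[y,x]$. Invertibility of the change of variables is automatic since it is triangular with unit diagonal on the level-one modes, confirming equivalence with the minimalistic presentation.
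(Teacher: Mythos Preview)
Your proposal is correct and follows exactly the route the paper indicates: the paper simply writes ``This can be verified by straightforward calculations. Hence, we omit the explicit proof here,'' and your substitution-and-bookkeeping argument is precisely that calculation made explicit. The two driving identities you isolate---$\sigma^{ab}_1+\sigma^{ba}_1=\epsilon_+A_{ab}$ and the near-identity $\sigma^{ba}_1-\sigma^{ab}_1=(\nu(b)-\nu(a))\epsilon_-$ with its wrap-around defect---are indeed what governs the computation, and the paper's remark pointing to \eqref{Aabvarsigma} is the only hint it offers in this direction.
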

This can be verified by straightforward calculations. Hence, we omit the explicit proof here. When checking these relations, it is also worth noting that
\begin{equation}
	A_{ab}=\left(\alpha^{(a)},\alpha^{(b)}\right)=\begin{cases}
		-(\varsigma_b+\varsigma_{b+1}),&a=b\\
		\varsigma_{b+1},&a=b+1\\
		\varsigma_b,&b=a+1\\
		0,&\text{otherwise}.
	\end{cases}\label{Aabvarsigma}
\end{equation}

\begin{remark}
	As pointed out in \cite{Bao:2022fpk}, the quiver Yangian is only isomorphic to Ueda's affine super Yangian introduced in \cite{ueda2019affine} when $\epsilon_-=0$. By comparing the presentation above with the similar presentation for Ueda's affine super Yangian in \cite{ueda2022affine}, it is straightforward to see that this difference is encoded by $\nu(a)$ here and the coefficients in the presentation in \cite{ueda2022affine}.
\end{remark}

Now, we are ready to bridge the quiver Yangians and $\mathcal{W}$-algebras.
\begin{theorem}\label{YWthm}
	Given a generalized conifold with $M+N>2$, $MN\neq2$ and $M\neq N$, when $\nu(M+N)\epsilon_-=(2\varkappa-M-N)\epsilon_+$, there is a surjective algebra homomorphism from the quiver Yangian to the universal enveloping algebra of $\mathcal{W}_{M|N\times l}$. Fixing a parity sequence $\varsigma$, such map $\Phi:\mathtt{Y}\rightarrow U\left(\mathcal{W}_{M|N\times l}\right)$ can be uniquely determined by
	\begin{equation}
		\Phi\left(X^{(a)}_0\right)=\Phi\left(Y^{(a)}_0\right),\qquad\Phi\left(X^{(a)}_1\right)=\Phi\left(Y^{(a)}_1\right)-\frac{1}{2}\nu(a)\epsilon_-\Phi\left(Y^{(a)}_0\right)
	\end{equation}
    for $(X,Y)=(\psi,\mathcal{H}),(e,\mathcal{E}),(f,\mathcal{F})$, where
    \begin{align}
    	\Phi\left(\mathcal{H}^{(a)}_0\right)=&\begin{cases}
    		U^{(1)}_{M+N,M+N}[0]-U^{(1)}_{11}[0]+l\varkappa,&a=0\\
    		U^{(1)}_{aa}[0]-U^{(1)}_{a+1,a+1}[0],&a\neq0,
    	\end{cases}\\
        \Phi\left(\mathcal{E}^{(a)}_0\right)=&\begin{cases}
        	-(-1)^{p(1)}U^{(1)}_{M+N,1}[-1],&a=0\\
        	-(-1)^{p(a+1)}U^{(1)}_{a,a+1}[0],&a\neq0,
        \end{cases}\\
        \Phi\left(\mathcal{F}^{(a)}_0\right)=&\begin{cases}
        	U^{(1)}_{1,M+N}[1],&a=0\\
        	U^{(1)}_{a+1,a}[0],&a\neq0,
        \end{cases}\\
        \Phi\left(\mathcal{H}^{(0)}_1\right)=&\epsilon_+\bigg(U^{(2)}_{M+N,M+N}[0]-U^{(2)}_{11}[0]-U^{(1)}_{M+N,M+N}[0]U^{(1)}_{11}[0]\nonumber\\
        	&-\sum\limits_{c=1}^{M+N}\sum_{k\geq0}(-1)^{p(c)+p(M+N)}U^{(1)}_{c,M+N}[-k]U^{(1)}_{M+N,c}[k]\nonumber\\
        	&+\sum\limits_{c=1}^{M+N}\sum_{k\geq0}(-1)^{p(c)+p(1)}U^{(1)}_{c,1}[-k-1]U^{(1)}_{1,c}[k+1]\nonumber\\
        	&+\left(\frac{1}{2}\nu(1)-\frac{1}{2}-l\varkappa\right)\left(U^{(1)}_{M+N,M+N}[0]-U^{(1)}_{11}[0]+l\varkappa\right)+\varkappa U^{(1)}_{M+N,M+N}[0]\bigg),\\
        \Phi\left(\mathcal{H}^{(a\neq0)}_1\right)=&\epsilon_+\bigg(U^{(2)}_{aa}[0]-U^{(2)}_{a+1,a+1}[0]-U^{(1)}_{aa}[0]U^{(1)}_{a+1,a+1}[0]+\frac{1}{2}\nu(a)\left(U^{(1)}_{aa}[0]-U^{(1)}_{a+1,a+1}[0]\right)\nonumber\\
        &-\sum\limits_{c=1}^{a}\sum_{k\geq0}(-1)^{p(c)+p(a)}U^{(1)}_{ca}[-k]U^{(1)}_{ac}[k]\nonumber\\
        &+\sum\limits_{c=a+1}^{M+N}\sum_{k\geq0}(-1)^{p(c)+p(a)}U^{(1)}_{ca}[-k-1]U^{(1)}_{ac}[k+1]\nonumber\\
        &-\sum\limits_{c=1}^{a}\sum_{k\geq0}(-1)^{p(c)+p(a+1)}U^{(1)}_{c,a+1}[-k]U^{(1)}_{a+1,c}[k]\nonumber\\
        &+\sum\limits_{c=a+1}^{M+N}\sum_{k\geq0}(-1)^{p(c)+p(a+1)}U^{(1)}_{c,a+1}[-k-1]U^{(1)}_{a+1,c}[k+1]\bigg),\\
        \Phi\left(\mathcal{E}^{(0)}_1\right)=&\epsilon_+\bigg(-(-1)^{p(1)}U^{(2)}_{M+N,1}[-1]-(-1)^{p(1)}\left(\frac{1}{2}\nu(1)-\frac{1}{2}-l\varkappa\right)U^{(1)}_{M+N,1}[-1]\nonumber\\
        &+\sum\limits_{c=1}^{M+N}\sum_{k\geq0}(-1)^{p(c)+p(M+N)p(c)+p(1)p(c)+p(M+N)p(1)+p(M+N)}U^{(1)}_{c,1}[-k-1]U^{(1)}_{M+N,c}[k]\bigg),\\
        \Phi\left(\mathcal{E}^{(a\neq0)}_1\right)=&\epsilon_+\bigg(-(-1)^{p(a+1)}U^{(2)}_{a,a+1}[0]-\frac{1}{2}\nu(a)(-1)^{p(a+1)}U^{(1)}_{a,a+1}[0]\nonumber\\
        &-\sum\limits_{c=1}^{a}\sum_{k\geq0}(-1)^{p(c)+p(a)p(c)+p(a+1)p(c)+p(a)p(a+1)+p(a)}U^{(1)}_{c,a+1}[-k]U^{(1)}_{ac}[k]\nonumber\\
        &+\sum\limits_{c=a+1}^{M+N}\sum_{k\geq0}(-1)^{p(c)+p(a)p(c)+p(a+1)p(c)+p(a)p(a+1)+p(a)}U^{(1)}_{c,a+1}[-k-1]U^{(1)}_{ac}[k+1]\bigg)\\
        \Phi\left(\mathcal{F}^{(0)}_1\right)=&\epsilon_+\bigg(U^{(2)}_{1,M+N}[1]+\left(\frac{1}{2}\nu(1)-\frac{1}{2}+\varkappa\right)U^{(1)}_{1,M+N}[1]\nonumber\\
        &-\sum\limits_{c=1}^{M+N}\sum_{k\geq0}(-1)^{p(c)+p(M+N)p(c)+p(1)p(c)+p(M+N)p(1)}U^{(1)}_{c,M+N}[-k]U^{(1)}_{1,c}[k+1]\bigg),\\
        \Phi\left(\mathcal{F}^{(a\neq0)}_1\right)=&\epsilon_+\bigg(U^{(2)}_{a+1,a}[0]+\frac{1}{2}\nu(a)U^{(1)}_{a+1,a}[0]\nonumber\\
        &-\sum\limits_{c=1}^{a}\sum_{k\geq0}(-1)^{p(c)+p(a)p(c)+p(a+1)p(c)+p(a)p(a+1)}U^{(1)}_{ca}[-k]U^{(1)}_{a+1,c}[k]\nonumber\\
        &+\sum\limits_{c=a+1}^{M+N}\sum_{k\geq0}(-1)^{p(c)+p(a)p(c)+p(a+1)p(c)+p(a)p(a+1)}U^{(1)}_{ca}[-k-1]U^{(1)}_{a+1,c}[k+1]\bigg).
    \end{align}
\end{theorem}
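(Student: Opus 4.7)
The plan is to use Proposition \ref{HEFprop}, which presents $\mathtt{Y}$ by finitely many generators $\mathcal{H}^{(a)}_r, \mathcal{E}^{(a)}_r, \mathcal{F}^{(a)}_r$ ($r \in \{0,1\}$) subject to an explicit list of relations together with (S), and to verify that the formulas defining $\Phi$ respect every one of them. Once this is done, $\Phi$ extends uniquely to an algebra homomorphism on all of $\mathtt{Y}$. The only computational input needed is Lemma \ref{UUlemma}, which supplies every commutator $[U^{(s)}_{ij}[m], U^{(t)}_{kl}[n]]$ with $s,t \in \{1,2\}$.

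The zero-mode relations are essentially automatic: the $\mathcal{H}\mathcal{H}$ bracket vanishes because the diagonal $U^{(1)}_{ii}[0]$'s commute; $[\Phi(\mathcal{H}^{(a)}_0), \Phi(\mathcal{E}^{(b)}_0)]$ and its $\mathcal{F}$-analogue reduce to the Cartan weight of $U^{(1)}_{ij}[0]$ and reproduce the structure constants $A_{ab}$ from \eqref{Aabvarsigma}; $[\Phi(\mathcal{E}^{(a)}_0), \Phi(\mathcal{F}^{(b)}_0)]$ is a direct instance of the $U^{(1)}U^{(1)}$ formula; and the Serre relations (S) hold because the zero-mode image is a quotient of $U(\widehat{\mathfrak{sl}}_{M|N})$. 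The technical core consists of the degree-one relations. To verify $[\widetilde{\mathcal{H}}^{(a)}_1, \mathcal{E}^{(b)}_0]$, I expand via Lemma \ref{UUlemma} into three pieces: (i) the $[U^{(2)}, U^{(1)}]$ contribution produces $\Phi(\mathcal{E}^{(b)}_1)$ up to lower-degree corrections; (ii) the bilinear $U^{(1)}U^{(1)}$ commutators telescope after a careful rearrangement of normal orderings; and (iii) central residues from the $\delta_{m,-n}m l$ terms combine with the numerical shifts $\tfrac{1}{2}\nu(a) - \tfrac{1}{2} \mp l\varkappa, \tfrac{1}{2}\nu(a) - \tfrac{1}{2} \pm \varkappa$ built into $\Phi(\mathcal{H}_1), \Phi(\mathcal{E}_1), \Phi(\mathcal{F}_1)$ to give the required answer. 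Checking $[\mathcal{E}^{(a)}_r, \mathcal{F}^{(b)}_s]$ for $(r,s) \in \{(1,0),(0,1)\}$ is analogous and pins down the formula for $\Phi(\mathcal{H}^{(a)}_1)$, while the $[\mathcal{E}_1, \mathcal{E}_0] - [\mathcal{E}_0, \mathcal{E}_1]$-difference relation (and its $\mathcal{F}$-analogue) is handled the same way for non-boundary pairs $(a,b)$. The sensitive case is the wraparound $(a,b) \in \{(0, M+N-1),(M+N-1, 0)\}$: here the cyclic structure of the affine Dynkin labels clashes with the non-cyclic matrix labels $\{1,\dots,M+N\}$, producing extra constant terms whose matching requires exactly the identity $\nu(M+N)\epsilon_- = (2\varkappa - M - N)\epsilon_+$.

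Surjectivity follows once the homomorphism is established. The images $\Phi(\mathcal{E}^{(a)}_0), \Phi(\mathcal{F}^{(a)}_0)$ for $a \neq 0$ exhaust $U^{(1)}_{i,i\pm1}[0]$, while $\Phi(\mathcal{E}^{(0)}_0), \Phi(\mathcal{F}^{(0)}_0)$ supply the affine boundary modes $U^{(1)}_{M+N,1}[-1]$ and $U^{(1)}_{1,M+N}[1]$; iterated commutators via Lemma \ref{UUlemma} then produce every $U^{(1)}_{ij}[m]$, and $[\Phi(\mathcal{E}^{(a)}_0), \Phi(\mathcal{F}^{(a)}_0)]$ together with the affine generator recover the $U^{(1)}_{ii}[0]$. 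Subtracting the now-known bilinear $U^{(1)}U^{(1)}$-terms from $\Phi(\mathcal{H}^{(a)}_1), \Phi(\mathcal{E}^{(a)}_1), \Phi(\mathcal{F}^{(a)}_1)$ isolates the generators $U^{(2)}_{ij}[m]$ in specific small-$m$ cases, and further commutators with the $U^{(1)}$-modes yield every $U^{(2)}_{ij}[m]$. Since $U(\mathcal{W}_{M|N \times l})$ is generated by the $U^{(s)}_{ij}[m]$ with $s \in \{1,2\}$, this proves $\Phi$ is surjective.

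The main obstacle, where essentially all the real work lies, is the normal-ordering bookkeeping in the degree-one boundary relations: tracking the $(-1)^{p(i)}$ signs, the infinite sums of quadratic $U^{(1)}U^{(1)}$ terms, and the central contributions through the wraparound of the affine Dynkin diagram is precisely what forces the tuning $\nu(M+N)\epsilon_- = (2\varkappa - M - N)\epsilon_+$. The payoff is that this single numerical condition is the only place in the argument where the two parameters $\epsilon_\pm$ on the Yangian side and $(\varkappa, l)$ on the $\mathcal{W}$-side get linked, giving the truncation relation explicitly.
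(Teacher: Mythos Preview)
Your proposal is correct and follows essentially the same approach as the paper: verify the relations of Proposition \ref{HEFprop} one by one using Lemma \ref{UUlemma}, observe that the zero-mode sector realizes the Chevalley generators of $\widehat{\mathfrak{sl}}_{M|N}$, check the degree-one relations by direct computation (with the wraparound pairs $(0,M+N-1)$ forcing the constraint $\nu(M+N)\epsilon_-=(2\varkappa-M-N)\epsilon_+$), and deduce surjectivity by iterated commutators. The paper carries out exactly this program, with the most tedious explicit checks being the $\mathcal{H}_1\mathcal{H}_1$ and $\widetilde{\mathcal{H}}_1\mathcal{F}_0$ relations, and defers the full surjectivity construction (including the extraction of the individual diagonal modes $U^{(1)}_{aa}[m]$ and $U^{(2)}_{aa}[m]$, which requires a slightly more elaborate commutator trick than your sketch indicates) to the following subsection.
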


\begin{proof}
	This is proven by directly applying Lemma \ref{UUlemma}. The construction of this map is analogous to the one for Ueda's affine Yangian in \cite{ueda2022affine} as well as the evaluation maps for certain Yangians in \cite{kodera2021guay,ueda2019affine}. Nevertheless, we should still be careful about the slight differences, and let us give a quick proof here. First, we shall check the relations involving only zero modes. This is essentially building the Chevalley generators from the matrices. Here, we would only explicitly write the proof for the $\mathcal{H}_0\mathcal{H}_0$, $\mathcal{H}_0\mathcal{E}_0$ and Serre relations. The other relations can be verified also following straightforward computations.
	
	The $\mathcal{H}_0\mathcal{H}_0$ relation is immediate since $U^{(1)}_{aa}[0]$ and $U^{(1)}_{bb}[0]$ commute with each other for any $a,b$. Now, suppose $a,b\neq0$, and the $\mathcal{H}_0\mathcal{E}_0$ relation is recovered by
	\begin{equation}
		\begin{split}
			&\left[U^{(1)}_{aa}[0]-U^{(1)}_{a+1,a+1}[0],-(-1)^{p(b+1)}U^{(1)}_{b,b+1}[0]\right]\\
			=&-(-1)^{p(b+1)+p(a)+p(b)p(b+1)+p(a)p(b)}\delta_{a,b+1}U^{(1)}_{ba}[0]+(-1)^{p(b+1)+p(a)}\delta_{ba}U^{(1)}_{a,b+1}[0]\\
			&+(-1)^{p(b+1)+p(a+1)+p(b)p(b+1)+p(a+1)p(b)}\delta_{a+1,b+1}U^{(1)}_{ba}[0]-(-1)^{p(b+1)+p(a+1)}\delta_{b,a+1}U^{(1)}_{a+1,b+1}[0]\\
			=&\begin{cases}
				-(-1)^{p(b+1)}\left(-(-1)^{p(b+1)}U^{(1)}_{b,b+1}[0]\right)-(-1)^{p(b)}\left(-(-1)^{p(b+1)}U^{(1)}_{b,b+1}[0]\right),&a=b\\
				(-1)^{p(b+1)}\left(-(-1)^{p(b+1)}U^{(1)}_{b,b+1}[0]\right),&a=b+1\\
				(-1)^{p(b)}\left(-(-1)^{p(b+1)}U^{(1)}_{b,b+1}[0]\right),&a+1=b\\
				0,&\text{otherwise}
			\end{cases}
		\end{split}
	\end{equation}
    as required following \eqref{Aabvarsigma}. When $a$ and/or $b$ equal(s) zero, the relation can be checked similarly. The Serre relations also hold immediately as each term we get from the commutation relations has factor $\delta_{a,a+1}$, $\delta_{a,a+2}$ or $\delta_{a+1,a-1}$ that vanishes.
    
    Now, let us check the relations involving higher modes. Here, we shall only verify the $\mathcal{H}_1\mathcal{H}_1$ and $\mathcal{H}_1\mathcal{F}_0$ relations explicitly, and the other relations can also be obtained following direct computations. For convenience, let us use the notation $\delta_{\texttt{cond}}$ which is equal to 1 when the condition $\texttt{cond}$ is true and 0 otherwise.
    
    Let us first consider the $\mathcal{H}_1\mathcal{H}_1$ relation for $a,b\neq0$. In the total commutation relation, we have the piece
    \begin{align}
    	&\left[U^{(2)}_{aa}[0]-U^{(2)}_{a+1,a+1}[0],U^{(2)}_{bb}[0]-U^{(2)}_{b+1,b+1}[0]\right]\nonumber\\
    	=&{\color{orange} (l-1)\left(\sum_{k>0}kU^{(1)}_{aa}[k]U^{(1)}_{bb}[-k]-\sum_{k>0}kU^{(1)}_{bb}[-k]U^{(1)}_{aa}[k]\right)}\nonumber\\
    	&{\color{Green} +(l-1)\left((-1)^{p(b)}\varkappa\sum_{k>0}kU^{(1)}_{ab}[k]U^{(1)}_{ba}[-k]-(-1)^{p(a)}\varkappa\sum_{k>0}kU^{(1)}_{ba}[-k]U^{(1)}_{ab}[k]\right)}\nonumber\\
    	&{\color{Magenta} +(-1)^{p(a)}\sum_{k>1}U^{(2)}_{ba}[-k]U^{(1)}_{ab}[k]+(-1)^{p(b)}\sum_{k\geq-1}U^{(1)}_{ab}[-k]U^{(2)}_{ba}[k]}\nonumber\\
    	&{\color{red} -(-1)^{p(b)}\sum_{k>1}U^{(2)}_{ab}[-k]U^{(1)}_{ba}[k]-(-1)^{p(a)}\sum_{k\geq-1}U^{(1)}_{ba}[-k]U^{(2)}_{ab}[k]}\nonumber\\
    	&-\begin{bmatrix}a\rightarrow a\\b\rightarrow b+1\end{bmatrix}-\begin{bmatrix}a\rightarrow a+1\\b\rightarrow b\end{bmatrix}-\begin{bmatrix}a\rightarrow a+1\\b\rightarrow b+1\end{bmatrix},
    \end{align}
    where all the single $U^{(s)}_{aa}[0]$ and $U^{(s)}_{bb}[0]$ get cancelled. We also have
    \begin{align}
    	&-\left[\sum_{c=1}^a\sum_{k\geq0}(-1)^{p(c)+p(a)}U^{(1)}_{ca}[-k]U^{(1)}_{ac}[k]+\sum_{c=a+1}^{M+N}\big(\dots\big),U^{(2)}_{bb}[0]-U^{(2)}_{b+1,b+1}[0]\right]\nonumber\\
    	=&-\sum_{k\geq0}\bigg({\color{orange} k(l-1)U^{(1)}_{aa}[-k]U^{(1)}_{bb}[k]}{\color{Green} -k(l-1)\varkappa(-1)^{p(a)}U^{(1)}_{ba}[-k]U^{(1)}_{ab}[k]}\nonumber\\
    	&{\color{Magenta} +\delta_{a\geq b}(-1)^{p(a)}U^{(2)}_{ba}[-k]U^{(1)}_{ab}[k]+\delta_{a<b}(-1)^{p(a)}U^{(2)}_{ba}[-k-1]U^{(1)}_{ab}[k+1]}\nonumber\\
    	&{\color{red} -\delta_{a\geq b}(-1)^{p(a)}U^{(1)}_{ba}[-k]U^{(2)}_{ab}[k]-\delta_{a<b}(-1)^{p(a)}U^{(1)}_{ba}[-k-1]U^{(2)}_{ab}[k+1]}\nonumber\\
    	&{\color{blue} +k(l-1)\varkappa\delta_{ab}\sum_{c=1}^a(-1)^{p(c)}U^{(1)}_{ca}[-k]U^{(1)}_{ac}[k]+\sum_{c=a+1}^{M+N}\big(\dots\big)}\nonumber\\
    	&{\color{blue} +\delta_{ab}\sum_{c=1}^a(-1)^{p(c)}U^{(1)}_{ca}[-k]U^{(2)}_{ac}[k]+\sum_{c=a+1}^{M+N}\big(\dots\big)}\nonumber\\
    	&{\color{blue} -\delta_{ab}\sum_{c=1}^a(-1)^{p(c)}U^{(2)}_{ca}[-k]U^{(1)}_{ac}[k]+\sum_{c=a+1}^{M+N}\big(\dots\big)}\nonumber\\
    	&{\color{brown} -k(l-1)U^{(1)}_{bb}[-k]U^{(1)}_{aa}[k]}-\begin{bmatrix}b\rightarrow b+1\end{bmatrix}\bigg),
    \end{align}
    where each $\sum\limits_{c=a+1}^{M+N}(\dots)$ indicates a term similar to its previous one (but with $\pm k$ changed to $\pm(k+1)$ in the \emph{modes only}). Likewise,
    \begin{align}
    	&-\left[U^{(2)}_{aa}[0]-U^{(2)}_{a+1,a+1}[0],\sum_{c=1}^b\sum_{k\geq0}(-1)^{p(c)+p(b)}U^{(1)}_{cb}[-k]U^{(1)}_{bc}[k]+\sum_{c=b+1}^{M+N}\big(\dots\big)\right]\nonumber\\
    	=&-\sum_{k\geq0}\bigg({\color{orange} k(l-1)U^{(1)}_{bb}[-k]U^{(1)}_{aa}[k]}{\color{Green} -k(l-1)\varkappa(-1)^{p(b)}U^{(1)}_{ab}[-k]U^{(1)}_{ba}[k]}\nonumber\\
    	&{\color{Magenta} +\delta_{b\geq a}(-1)^{p(b)}U^{(1)}_{ab}[-k]U^{(2)}_{ba}[k]+\delta_{b<a}(-1)^{p(b)}U^{(1)}_{ab}[-k-1]U^{(2)}_{ba}[k+1]}\nonumber\\
    	&{\color{red} -\delta_{b\geq a}(-1)^{p(b)}U^{(2)}_{ab}[-k]U^{(1)}_{ba}[k]-\delta_{b<a}(-1)^{p(b)}U^{(2)}_{ab}[-k-1]U^{(1)}_{ba}[k+1]}\nonumber\\
    	&{\color{blue} -\begin{bmatrix}\text{blue terms}\end{bmatrix}}{\color{brown} +k(l-1)U^{(1)}_{aa}[-k]U^{(1)}_{bb}[k]}-\begin{bmatrix}a\rightarrow a+1\end{bmatrix}\bigg).
    \end{align}
    The terms in the same colours cancel each other (except the two brown ones here) and similarly for the terms as well as the commutation relations with $a\rightarrow a+1$ and/or $b\rightarrow b+1$. Notice that when cancelling the $U^{(2)}U^{(1)}$ or $U^{(1)}U^{(2)}$ terms (in pink and red), we need to apply the commutation relation for $U^{(1)}$ and $U^{(2)}$ for the modes 0 and $\pm1$. The resulted extra terms with single $U^{(s)}$ are cancelled with those from a similar manipulation of the terms in a different colour. The brown terms above are compensated by
    \begin{align}
    	&\left[\sum_{c=1}^a\sum_{k\geq0}(-1)^{p(c)+p(a)}U^{(1)}_{ca}[-k]U^{(1)}_{ac}[k]+\sum_{c=a+1}^{M+N}\big(\dots\big),\begin{bmatrix}\text{terms with}\\a\rightarrow b\end{bmatrix}\right]\nonumber\\
    	=&{\color{brown} \sum_{k\geq0}\left(k(l-1)U^{(1)}_{aa}[-k]U^{(1)}_{bb}[k]-k(l-1)U^{(1)}_{bb}[-k]U^{(1)}_{aa}[k]\right)}.
    \end{align}
    One may check that $U^{(1)}_{aa}[0]$ (and hence $U^{(1)}_{aa}[0]U^{(1)}_{a+1,a+1}[0]$) commutes with other terms. This shows that $\Phi\left(\mathcal{H}^{(a)}_1\right)$ commutes with $\Phi\left(\mathcal{H}^{(b)}_1\right)$ when $a,b\neq0$. Suppose $a=0$ and $b\neq0$, the commutation relations we need to compute are the same as the case for $a,b\neq0$. When $a=b=0$, this relation is trivial.
    
    Let us next check the $\mathcal{H}_1\mathcal{F}_0$ relation when $a,b\neq0$. Similar to the computation for the zero modes, the commutation relation $\epsilon_+\left[U^{(2)}_{aa}[0]-U^{(2)}_{a+1,a+1}[0],U^{(1)}_{b+1,b}[0]\right]$ is responsible for the $-A_{ab}\epsilon_+U^{(2)}_{b+1,b}[0]$ term in $-A_{ab}\Phi\left(\mathcal{F}^{(b)}_1\right)$. Henceforth, we shall omit $-A_{ab}$ and $\epsilon_+$ when making similar statements for brevity (though the calculations would still carry $(-1)^{p(a)}$ factors that lead to the corresponding Cartan matrix element). Now, consider
    \begin{align}
    	&-\left[\sum_{c=1}^a\sum_{k\geq0}(-1)^{p(c)+p(a)}U^{(1)}_{ca}[-k]U^{(1)}_{ac}[k]+\sum_{c=a+1}^{M+N}\big(\dots\big),U^{(1)}_{b+1,b}[0]\right]\nonumber\\
    	=&{\color{orange} (-1)^{p(b)}\left(-\sum_{c=1}^a\sum_{k\geq0}(-1)^{p(c)+p(b)p(c)+p(b+1)p(c)+p(b)p(b+1)}\delta_{ab}U^{(1)}_{cb}[-k]U^{(1)}_{b+1,c}[k]\right)}\nonumber\\
    	&{\color{red} +\sum_{k\geq0}(-1)^{p(a)}\delta_{a\geq b+1}U^{(1)}_{b+1,a}[-k]U^{(1)}_{ab}[k]-\sum_{k\geq0}(-1)^{p(a)}\delta_{a\geq b}U^{(1)}_{b+1,a}[-k]U^{(1)}_{ab}[k]}\nonumber\\
    	&{\color{Green} -(-1)^{p(b+1)}\left(-\sum_{c=1}^a\sum_{k\geq0}(-1)^{p(c)+p(b)p(c)+p(b+1)p(c)+p(b)p(b+1)}\delta_{b+1,a}U^{(1)}_{cb}[-k]U^{(1)}_{b+1,c}[k]\right)}\nonumber\\
    	&{\color{orange} +(-1)^{p(b)}\left(-\sum_{c=a}^{M+N}\sum_{k\geq0}(-1)^{p(c)+p(b)p(c)+p(b+1)p(c)+p(b)p(b+1)}\delta_{ab}U^{(1)}_{cb}[-k-1]U^{(1)}_{b+1,c}[k+1]\right)}\nonumber\\
    	&{\color{red} +\sum_{k\geq0}(-1)^{p(a)}\delta_{a<b+1}U^{(1)}_{b+1,a}[-k-1]U^{(1)}_{ab}[k+1]-\sum_{k\geq0}(-1)^{p(a)}\delta_{a<b}U^{(1)}_{b+1,a}[-k-1]U^{(1)}_{ab}[k+1]}\nonumber\\
    	&{\color{Green} -(-1)^{p(b+1)}\left(-\sum_{c=a+1}^{M+N}\sum_{k\geq0}(-1)^{p(c)+p(b)p(c)+p(b+1)p(c)+p(b)p(b+1)}\delta_{b+1,a}U^{(1)}_{cb}[-k-1]U^{(1)}_{b+1,c}[k+1]\right)},
    \end{align}
    and
    \begin{align}
    	&\left[\sum_{c=1}^a\sum_{k\geq0}(-1)^{p(c)+p(a+1)}U^{(1)}_{c,a+1}[-k]U^{(1)}_{a+1,c}[k]+\sum_{c=a+1}^{M+N}\big(\dots\big),U^{(1)}_{b+1,b}[0]\right]\nonumber\\
    	=&{\color{Green} -(-1)^{p(b)}\left(-\sum_{c=1}^a\sum_{k\geq0}(-1)^{p(c)+p(b)p(c)+p(b+1)p(c)+p(b)p(b+1)}\delta_{a+1,b}U^{(1)}_{cb}[-k]U^{(1)}_{b+1,c}[k]\right)}\nonumber\\
    	&{\color{red} -\sum_{k\geq0}(-1)^{p(a+1)}\delta_{a\geq b+1}U^{(1)}_{b+1,a+1}[-k]U^{(1)}_{a+1,b}[k]+\sum_{k\geq0}(-1)^{p(a+1)}\delta_{a\geq b}U^{(1)}_{b+1,a+1}[-k]U^{(1)}_{a+1,b}[k]}\nonumber\\
    	&{\color{orange} +(-1)^{p(b+1)}\left(-\sum_{c=1}^a\sum_{k\geq0}(-1)^{p(c)+p(b)p(c)+p(b+1)p(c)+p(b)p(b+1)}\delta_{ba}U^{(1)}_{cb}[-k]U^{(1)}_{b+1,c}[k]\right)}\nonumber\\
    	&{\color{Green} -(-1)^{p(b)}\left(-\sum_{c=a}^{M+N}\sum_{k\geq0}(-1)^{p(c)+p(b)p(c)+p(b+1)p(c)+p(b)p(b+1)}\delta_{a+1,b}U^{(1)}_{cb}[-k-1]U^{(1)}_{b+1,c}[k+1]\right)}\nonumber\\
    	&{\color{red} -\sum_{k\geq0}(-1)^{p(a+1)}\delta_{a<b+1}U^{(1)}_{b+1,a+1}[-k-1]U^{(1)}_{a+1,b}[k+1]+\sum_{k\geq0}(-1)^{p(a+1)}\delta_{a<b}U^{(1)}_{b+1,a+1}[-k-1]U^{(1)}_{a+1,b}[k+1]}\nonumber\\
    	&{\color{orange} +(-1)^{p(b+1)}\left(-\sum_{c=a+1}^{M+N}\sum_{k\geq0}(-1)^{p(c)+p(b)p(c)+p(b+1)p(c)+p(b)p(b+1)}\delta_{ba}U^{(1)}_{cb}[-k-1]U^{(1)}_{b+1,c}[k+1]\right)}.
    \end{align}
    The four green lines are equal to
    \begin{equation}
    	\begin{cases}
    		{\color{orange} -\varsigma_{b+1}\begin{bmatrix}\text{orange}\\ \text{sums}\end{bmatrix}}{\color{red} +(-1)^{p(b+1)}\delta_{b+1,a}U^{(1)}_{b+1,b}[0]U^{(1)}_{aa}[0]},&a=b+1\\
    		{\color{orange} -\varsigma_b\begin{bmatrix}\text{orange}\\ \text{sums}\end{bmatrix}}{\color{red} -(-1)^{p(b)}\delta_{a+1,b}U^{(1)}_{a+1,a+1}[0]U^{(1)}_{b+1,b}[0]},&b=a+1.
    	\end{cases}
    \end{equation}
    All the orange terms above then give the corresponding sum in $\Phi\left(\mathcal{F}^{(b)}_1\right)$. For the remaining red terms, altogether they become
    \begin{equation}
    	\begin{cases}
    		{\color{Magenta} (-1)^{p(b+1)}\delta_{b+1,a}U^{(1)}_{b+1,b}[0]U^{(1)}_{aa}[0]},&a\geq b+1\\
    		{\color{Magenta} -(-1)^{p(b)}U^{(1)}_{b+1,b}[0]U^{(1)}_{aa}[0]+(-1)^{p(b+1)}U^{(1)}_{a+1,a+1}[0]U^{(1)}_{b+1,b}[0]},&a=b\\
    		{\color{Magenta} -(-1)^{p(b)}\delta_{a+1,b}U^{(1)}_{a+1,a+1}[0]U^{(1)}_{b+1,b}[0]},&a<b-1.
    	\end{cases}
    \end{equation}
    Since it would be more convenient to consider $\widehat{\mathcal{H}}^{(a)}_1$, the $U^{(1)}_{a+1,a+1}[0]U^{(1)}_aa[0]$ term gets cancelled and we are left with $\frac{1}{2}\left(\left(U^{(1)}_{aa}[0]\right)^2+\left(U^{(1)}_{a+1,a+1}[0]\right)^2\right)$. Then
    \begin{align}
    	&\left[\frac{1}{2}\left(\left(U^{(1)}_{aa}[0]\right)^2+\left(U^{(1)}_{a+1,a+1}[0]\right)^2\right),U^{(1)}_{b+1,1}[0]\right]\nonumber\\
    	=&{\color{Magenta} (-1)^{p(a)}\delta_{ab}U^{(1)}_{aa}[0]U^{(1)}_{b+1,b}[0]-(-1)^{p(a)}\delta_{a,b+1}U^{(1)}_{aa}[0]U^{(1)}_{b+1,b}[0]}\nonumber\\
    	&{\color{Magenta} +(-1)^{p(a)}\delta_{ab}U^{(1)}_{b+1,b}[0]U^{(1)}_{aa}[0]-(-1)^{p(a)}\delta_{a,b+1}U^{(1)}_{b+1,b}[0]U^{(1)}_{aa}[0]}\nonumber\\
    	&{\color{Magenta} +(-1)^{p(a+1)}\delta_{a+1,b}U^{(1)}_{a+1,a+1}[0]U^{(1)}_{b+1,b}[0]-(-1)^{p(a+1)}\delta_{ab}U^{(1)}_{a+1,a+1}[0]U^{(1)}_{b+1,b}[0]}\nonumber\\
    	&{\color{Magenta} +(-1)^{p(a+1)}\delta_{a+1,b}U^{(1)}_{b+1,b}[0]U^{(1)}_{a+1,a+1}[0]-(-1)^{p(a+1)}\delta_{ab}U^{(1)}_{b+1,b}[0]U^{(1)}_{a+1,a+1}[0]}.
    \end{align}
    All the pink terms give
    \begin{equation}
    	\begin{cases}
    		0,&a=b\\
    		-\varsigma_{b+1}\frac{1}{2}U^{(1)}_{b+1,b}[0],&a=b+1\\
    		\varsigma_b\frac{1}{2}U^{(1)}_{b+1,b}[0],&b=a+1.
    	\end{cases}
    \end{equation}
    Together with $\frac{1}{2}\nu(a)\left[U^{(1)}_{aa}[0]-U^{(1)}_{a+1,a+1}[0],U^{(1)}_{b+1,b}[0]\right]$ (recall that $\nu(a\pm1)=\nu(a)\pm1$), they recover the remaining term in $\Phi\left(\mathcal{F}^{(b)}_1\right)$. When $a$ and/or $b$ equal(s) zero, this can be verified in the same manner. Therefore, we would not write it explicitly here. Notice that the condition on $\epsilon_\pm$ would come from the cases when $a$ and/or $b$ equal(s) zero.
    
    We have shown that $\Phi$ is a homomorphism, and we still need to check its surjectivity. This can be done by showing that any element $U^{(s)}_{ab}[m]$ ($s=1,2$, $a,b=1,\dots,M+N$, $m\in\mathbb{Z}$) can be expressed in terms of $\Phi(\mathcal{H})$, $\Phi(\mathcal{E})$ and $\Phi(\mathcal{F})$. We shall omit the proof here as these generators will be constructed from $\Phi$ explicitly in the next subsection.
\end{proof}

\begin{remark}
	From Theorem \ref{YWthm}, we can see that the universal enveloping algebras of $\mathcal{W}_{M|N\times l}$ are essentially truncations of the quiver Yangians, that is,
	\begin{equation}
		\mathtt{Y}/\ker(\Phi)\cong U(\mathcal{W}_{M|N\times l}).
	\end{equation}
	Therefore, we may view the quiver Yangian as some sort of ``$U(\mathcal{W}_{M|N\times\infty})$'' algebra (cf.\cite{Rapcak:2019wzw})\footnote{Notice that this is not claiming that $\Phi$ becomes an isomorphism when taking the limit $l\rightarrow\infty$. It still requires to show the injectivity. However, this map does not seem to be well-defined when $l$ diverges, and the factors $l$ cannot be fully absorbed under redefinitions of the generators in these expressions. One might think of taking $\varkappa\rightarrow0$ as another possible way to bypass this divergence, but some properties of $\Phi$, such as surjectivity, rely on $\varkappa\neq0$.}. This allows us to apply our knowledge in BPS algebras to VOAs and vice versa.
\end{remark}

At first glance, one might wonder whether we could write such surjective homomorphism without choosing a reference $a=0$ so that the map would become more ``uniform''. However, as we will see shortly, this is actually very natural on the quiver Yangian side (not just due to the presentation in Proposition \ref{HEFprop}), especially when discussing the crystal melting models.

\paragraph{Coproduct and parabolic induction} As the coproduct for the quiver Yangians is obtained in \S\ref{coprod}, we can thence consider the parabolic induction for the $\mathcal{W}$-algebras. In other words, given representations $R_{1,2}$ of $U(\mathcal{W}_{M|N\times l_{1,2}})$, we have $R_1\otimes R_2$ as a representation of $U\left(\mathcal{W}_{M|N\times (l_1+l_2)}\right)$. In particular, the study in \cite{kodera2022coproduct} (see also \cite{genra2020screening} for non-super cases including some cases of BCD types) answers Conjecture 2 in \cite{Rapcak:2019wzw}.

Consider $\widehat{\mathfrak{gl}}(M|N)_\varkappa=\widehat{\mathfrak{sl}}(M|N)_\varkappa\oplus\widehat{\mathfrak{z}}(M|N)_{\varkappa+M-N}$, where $\widehat{\mathfrak{z}}$ is the Heisenberg algebra at level $\varkappa+M-N=k+l(M-N)$. We have an algebra automorphism given by \cite{kodera2022coproduct}
\begin{equation}
	\eta_\beta\left(E_{ij}[m]\right)=E_{ij}[m]+\delta_{m,0}\delta_{ij}\beta
\end{equation}
for some complex number $\beta$. This yields an algebra automorphism
\begin{equation}
	\eta_\beta^{\otimes l}=\bigotimes_{l\text{ times}}\eta_\beta\in\text{Aut}\left(U\left(\widehat{\mathfrak{gl}}(M+N)_\varkappa\right)^{\widehat{\otimes}l}\right).
\end{equation}
Using \eqref{U1U2}, we have
\begin{equation}
	\begin{split}
		&\eta^{\otimes l}_\beta\left(U^{(1)}_{ij}[m]\right)=U^{(1)}_{ij}[m]+\delta_{m,0}\delta_{ij}l\beta,\\
		&\eta^{\otimes l}_\beta\left(U^{(2)}_{ij}[m]\right)=U^{(2)}_{ij}[m]+(l-1)\varkappa U^{(1)}_{ij}[m]+\frac{1}{2}\delta_{m,0}\delta_{ij}l(l-1)(\beta^2-\varkappa\beta).
	\end{split}
\end{equation}

To relate the parabolic induction with the coproduct of quiver Yangians, let us take $l=l_1+l_2$ and $k+l(M-N)=k_1+l_1(M-N)=k_2+l_2(M-N)$ such that $\varkappa$ remains the same for $\mathcal{W}_{M|N\times l}$ and $\mathcal{W}_{M|N\times l_{1,2}}$. Then there exists an inclusion map $\Delta_{l_1,l_2}:\mathcal{W}_{M|N\times l}\rightarrow\mathcal{W}_{M|N\times l_1}\otimes\mathcal{W}_{M|N\times l_2}$ that splits \eqref{miura} into two pieces of sizes $l_1$ and $l_2$ (see also (5.5) in \cite{Rapcak:2019wzw}). As computed in \cite{Rapcak:2019wzw,kodera2022coproduct}, we have\footnote{We are not adding extra labels $l$ and $l_{1,2}$ to these $U^{(s)}_{ij}$ as it should be clear which elements belong to which parts.}
\begin{equation}
	\begin{split}
		&\Delta_{l_1,l_2}\left(U^{(1)}_{ij}[m]\right)=U^{(1)}_{ij}[m]\otimes1+1\otimes U^{(1)}_{ij}[m],\\
		&\Delta_{l_1,l_2}\left(U^{(2)}_{ij}[m]\right)=U^{(2)}_{ij}[m]\otimes1+1\otimes U^{(2)}_{ij}[m]+\sum_{c=1}^{M+N}\sum_{n\in\mathbb{Z}}U^{(1)}_{in}[k]\otimes U^{(2)}_{nj}[m-k]-(m+1)l_1\varkappa1\otimes U^{(1)}_{ij}[m].
	\end{split}
\end{equation}
Let us also define the map $\widetilde{\Delta}_{l_1,l_2}=\left(\text{id}^{\otimes l_1}\otimes\eta^{\otimes l_2}_{-l_1\varkappa}\right)\circ\Delta_{l_1,l_2}$. Then following the same proof as in \cite{kodera2022coproduct}, we have the commutative diagram
\begin{equation}
	\includegraphics[width=7.5cm]{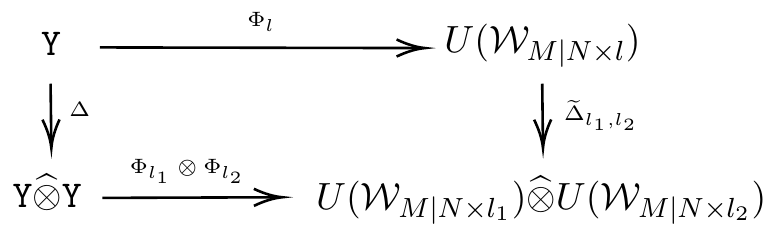},
\end{equation}
where we have labelled $\Phi$ with subscripts $l$ and $l_{1,2}$ for clarity.

\paragraph{More general truncations} As studied in \cite{Gaiotto:2017euk,Prochazka:2017qum,Prochazka:2018tlo} for the $\mathbb{C}^3$ case and \cite{Rapcak:2019wzw} for any generalized conifold, there exist larger families of truncations of the $\mathcal{W}$-algebras. These truncations, which are dictated by the functions $x^{l_3}y^{l_2}z^{l_4}w^{l_1}\in\mathbb{C}[x,y,z,w]/\left\langle xy=z^Mw^N\right\rangle$, can be built from $x$-, $y$-, $z$- and $w$-algebras associated to different divisors in the CY$_3$. In particular, the $x$-algebra just corresponds to the Miura operator of form \eqref{miura}. More generally, these truncations have generalized Miura/pseudo-differential operators of different types.

In terms of $(p,q)$-brane webs, certain stacks of D3s are stretched in different regions, indicating the multiplicities of smooth components of these divisors. See for example Figure 6 in \cite{Rapcak:2019wzw} for an illustration of the patterns of these elementary truncated algebras. The web diagram encodes the loci in the base where the $T^2$ part of the fibre degenerates to a circle in the (resolved) CY threefold. The complex coordinates that are used in the moment maps parametrizing the base can be grouped into variables $x,y,z,w$. This gives rise to the perspective of a 4d $\mathcal{N}=4$ gauge theory which is divided into a junction of four interfaces (or three for the $\mathbb{C}^3$ case).

The generators for the elementary building blocks of truncations can be found in \cite{Rapcak:2019wzw}. Here, we would just like to mention that with
\begin{equation}
	l=\frac{((N-M)\epsilon_1-\epsilon_2)l_3+\epsilon_2l_2-\epsilon_1l_4+\epsilon_1l_1}{(N-M)\epsilon_1-\epsilon_2},
\end{equation}
the same argument as in Theorem \ref{YWthm} indicates that we have this surjective homomorphism from the quiver Yangian to the general $x^{l_3}y^{l_2}z^{l_4}w^{l_1}$-algebra, which reflects the feature of the VOAs as truncations. Notice that now the generators $U^{(1)}_{ij}$ and $U^{(2)}_{ab}$ are given in \cite[(5.47)]{Rapcak:2019wzw}, satisfying the same OPEs as before.

\subsection{Crystal Melting}\label{crystal}
A consequence of Theorem \ref{YWthm} is that the universal enveloping algebra of $\mathcal{W}_{M|N\times l}$ admits (truncated) crystal representations. For quiver Yangians, the crystal melting representations encodes the information of BPS spectra with many salient features.

Let us first give a quick recap on crystal melting. The crystal model is a 3d uplift of the periodic quiver. The atoms in the crystal are in one-to-one correspondence with the gauge nodes in the quiver while the bifundamental/adjoint arrows are the ``chemical bonds'' connecting these atoms. Different atoms associated with different gauge nodes are then in different ``colours''.

Let us pick an initial atom $\mathfrak{o}$ in the periodic quiver as the first atom being molten from the empty room configuration. From the above map $\Phi$, a natural choice would be an atom having colour $a=0$. All the other atoms are then arranged level by level following the arrows/chemical bonds, at the positions of the corresponding nodes in the periodic quiver. There could be more than one paths from one atom to another, but they should be equivalent due to the F-term relations. This gives rise to the path algebra $\mathbb{C}Q/\langle\partial W\rangle$.

To recover the BPS counting, the crystal configurations should obey the melting rule. It states that an atom $\mathfrak{a}$ appears in the molten crystal $\mathfrak{C}$ if there exists an arrow $I$ such that $I\cdot\mathfrak{a}\in\mathfrak{C}$.

When considering the action of quiver Yangians on the crystal modules, it would be convenient to use the currents\footnote{Recall that we have set $\psi^{(a)}_{-1}=1/\epsilon_+$. Notice that this mode expansion here is only for symmetric quivers. For more general cases, there can be non-trivial $\psi^{(a)}_n$ with $n<0$.}
\begin{equation}
	\psi^{(a)}(z)=\sum_{n=-1}^{\infty}\frac{\psi^{(a)}_n}{z^{n+1}},\qquad e^{(a)}(z)=\sum_{n=0}^{\infty}\frac{e^{(a)}_n}{z^{n+1}},\qquad f^{(a)}(z)=\sum_{n=0}^{\infty}\frac{f^{(a)}_n}{z^{n+1}}.
\end{equation}
We can write down the OPEs for these currents from the commutation relations that define the quiver Yangians. They can be found for instance in \cite{Li:2020rij,Bao:2022fpk}. In particular, $e^{(a)}$ should act like a creation operator that add atoms of colour $a$ to the molten crystal while $f^{(a)}$ annihilates atoms in the crystal configuration. By analyzing how atoms can be added and removed following the melting rule, we can obtain the actions of these currents on any state $|\mathfrak{C}\rangle$. Let us write $\mathfrak{a}\in\mathfrak{C}_+$ (resp. $\mathfrak{a}\in\mathfrak{C}_-$) such that $|\mathfrak{C}\rangle$ would become $|\mathfrak{C}+\mathfrak{a}\rangle$ (resp. $|\mathfrak{C}-\mathfrak{a}\rangle$) after the atom $\mathfrak{a}$ with colour $a$ is added to (resp. removed from) $\mathfrak{C}$. From \cite{Li:2020rij}, we learn that the actions are
\begin{align}
	&\psi^{(a)}(z)|\mathfrak{C}\rangle=\Psi^{(a)}_{\mathfrak{C}}(z)|\mathfrak{C}\rangle,\\
	&e^{(a)}(z)|\mathfrak{C}\rangle=\sum_{\mathfrak{a}\in\mathfrak{C}_+}\frac{\pm\sqrt{-(-1)^{|a|}\text{Res}_{\widetilde{\epsilon}(\mathfrak{a})}\Psi^{(a)}_{\mathfrak{C}}(u)}}{z-\widetilde{\epsilon}(\mathfrak{a})}|\mathfrak{C}+\mathfrak{a}\rangle,\label{eaction}\\
	&f^{(a)}(z)|\mathfrak{C}\rangle=\sum_{\mathfrak{a}\in\mathfrak{C}_-}\frac{\pm\sqrt{\text{Res}_{\widetilde{\epsilon}(\mathfrak{a})}\Psi^{(a)}_{\mathfrak{C}}(u)}}{z-\widetilde{\epsilon}(\mathfrak{a})}|\mathfrak{C}-\mathfrak{a}\rangle,
\end{align}
where
\begin{align}
	&\Psi^{(a)}_{\mathfrak{C}}(z):=\left(\frac{z+C}{z}\right)^{\delta_{a,1}}\prod_{b\in Q_0}\prod_{\mathfrak{b}\in\mathfrak{C}}\phi^{b\Rightarrow a}(z-\widetilde{\epsilon}(\mathfrak{b})),\\
	&\phi^{b\Rightarrow a}(z)=\frac{\prod\limits_{I\in a\rightarrow b}(z+\widetilde{\epsilon}_I)}{\prod\limits_{I\in b\rightarrow a}(z-\widetilde{\epsilon}_I)},\label{bondfactor}\\
	&\widetilde{\epsilon}(\mathfrak{a})=\sum_{I\in\text{path}[\mathfrak{o}\rightarrow\mathfrak{a}]}\widetilde{\epsilon}_I.
\end{align}
Here, the numerical constant $C$ is the vacuum charge. For the cases considered in this paper whose toric CY$_3$ have no compact divisors, it can be identified as the central term $\sum\limits_{a\in Q_0}\psi^{(a)}_0$. Later, we shall relate this to the parameters on the $\mathcal{W}$-algebra side. The $\pm$ signs in the actions depend on the statistics of the quiver Yangian (see \cite[\S6]{Li:2020rij} for more details). Recall that the charge assignment $\widetilde{\epsilon}_I$ should satisfy the constraints discussed in \S\ref{QY} (and hence leads to the two parameters $\epsilon_{1,2}$ in the algebra). The action of each mode on the crystal can then be obtained via the corresponding contour integral around $\infty$.

As argued in \cite{Li:2020rij,Galakhov:2021xum}, the truncations of quiver Yangians lead to truncated crystal configurations where the melting would stop at one or more atoms. Therefore, with the map $\Phi$, we may consider the truncated crystals as modules of $U(\mathcal{W}_{M|N\times l})$.

To get the actions of $U^{(s)}_{ij}$ on the truncated crystals, we need to know how they can be expressed using the map $\Phi$. Here, let us find such expressions for all the elements at spin $s=1,2$, which essentially gives the proof of the surjectivity of $\Phi$.

First, let us consider $U^{(1)}_{ab}[m]$ for any $a,b=1,\dots,M+N$ and $m\in\mathbb{Z}$. The zero modes of $\mathcal{H},\mathcal{E},\mathcal{F}$ already gives $U^{(1)}_{a,a+1}[0]$, $U^{(1)}_{a+1,a}[0]$ (with two exceptions) and $U^{(1)}_{aa}[0]-U^{(1)}_{a+1,a+1}[0]$. It is immediate to obtain $U^{(1)}_{aa}[0]-U^{(1)}_{bb}[0]$ for any $a,b$. Using the commutation relations
\begin{equation}
	\begin{cases}
		-(-1)^{p(a+1)}\left[U^{(1)}_{a,a+1}[0],U^{(1)}_{a+1,a+2}[0]\right]=U^{(1)}_{a,a+2}[0],\\
		(-1)^{p(a)p(a+1)+p(a+1)p(a+2)+p(a)p(a+2)}\left[U^{(1)}_{a+1,a}[0],U^{(1)}_{a+2,a+1}[0]\right]=U^{(1)}_{a+2,a}[0]
	\end{cases}
\end{equation}
iteratively, we can get $U^{(1)}_{ab}[0]$ for any $a\neq b$ (including $U^{(1)}_{1,M+N}[0]$ and $U^{(1)}_{M+N,1}[0]$). This is consistent with the fact that $e^{(\alpha)}_0$ (and likewise for $f^{(\alpha)}_0$) is of form $\left[\dots\left[\left[e^{(a_1)}_0,e^{(a_2)}_0\right],e^{(a_3)}_0\right]\dots,e^{(a_n)}_0\right]$ when $\alpha=\alpha^{(a_1)}+\dots+\alpha^{(a_n)}$. Now, using $\Phi\left(\mathcal{E}^{(a)}_0\right)$ and $\Phi\left(\mathcal{F}^{(a)}_0\right)$, we can write
\begin{equation}
	\begin{cases}
		U^{(1)}_{M+N,a}[-1]=-(-1)^{p(1)}\left[U^{(1)}_{M+N,1}[-1],U^{(1)}_{1,a}[0]\right],\\
		U^{(1)}_{a,M+N}[-1]=(-1)^{p(1)p(M+N)+p(a)p(1)+p(a)p(M+N)}\left[U^{(1)}_{1,M+N}[1],U^{(1)}_{a,1}[0]\right]
	\end{cases}
\end{equation}
for $a\neq M+N$. Hence, via
\begin{equation}
	\begin{cases}
		U^{(1)}_{ba}[-1]=(-1)^{p(M+N)p(a)+p(M+N)p(b)+p(a)p(b)}\left[U^{(1)}_{M+N,a}[-1],U^{(1)}_{b,M+N}[0]\right],\\
		U^{(1)}_{ab}[1]=-(-1)^{p(M+N)}\left[U^{(1)}_{a,M+N}[1],U^{(1)}_{M+N,b}[0]\right],
	\end{cases}
\end{equation}
we can get $U^{(1)}_{ab}[\pm1]$ for any $a\neq b$. Keep this procedure, and we can obtain $U^{(1)}_{ab}[m]$ for any $a\neq b$ and $m\in\mathbb{Z}$.

For elements of spin 1, we are now left with $U^{(1)}_{aa}[m]$. Take
\begin{equation}
	\begin{split}
		&X_{ab}[m]:=\Bigg[U^{(1)}_{ab}[m],U^{(2)}_{bb}[0]-U^{(2)}_{b+1,b+1}[0]-U^{(1)}_{bb}[0]U^{(1)}_{b+1,b+1}[0]\\
		&\qquad\qquad\quad-\sum_{k\geq0}U^{(1)}_{bb}[-k]U^{(1)}_{bb}[k]+\sum_{k\geq0}U^{(1)}_{b+1,b+1}[-k-1]U^{(1)}_{b+1,b+1}[k+1]\Bigg]
	\end{split}\label{Xabm}
\end{equation}
for $a\neq b$. We are allowed to use this commutation relation because of $\Phi\left(\mathcal{H}^{(b)}_1\right)$. A straightforward computation yields
\begin{equation}
	\begin{split}
		X_{ab}[m]=&-(-1)^{p(b)}U^{(2)}_{ab}[m]+(-1)^{p(b)}U^{(1)}_{ab}[m]U^{(1)}_{b+1,b+1}[0]\\
		&+\sum_{k\geq0}(-1)^{p(b)}U^{(1)}_{ab}[m-k]U^{(1)}_{bb}[k]-\sum_{k\geq0}(-1)^{p(b)}U^{(1)}_{bb}[-k]U^{(1)}_{ab}[m+k].
	\end{split}
\end{equation}
Therefore,
\begin{equation}
	\begin{split}
		&\left[U^{(1)}_{ba}[0],(-1)^{p(b)}X_{ab}[m]\right]-\left[U^{(1)}_{ba}[1],(-1)^{p(b)}X_{ab}[m-1]\right]\\
		=&-(l-1)(-1)^{p(a)}\varkappa U^{(1)}_{bb}[m]-\delta_{m,0}l(-1)^{p(a)}\varkappa U^{(1)}_{b+1,b+1}[0]-\delta_{m\geq0}l(-1)^{p(a)}\varkappa U^{(1)}_{bb}[m]\\
		&+\delta_{m\leq0}l(-1)^{p(a)}\varkappa U^{(1)}_{bb}[m]+(-1)^{p(b)}U^{(1)}_{ab}[m]U^{(1)}_{ba}[0]-(-1)^{p(b)}U^{(1)}_{ba}[0]U^{(1)}_{ab}[m].
	\end{split}\label{UXUX}
\end{equation}
Notice that most terms appeared in the calculation get cancelled since $0+m=1+m-1$. When $m>0$, \eqref{UXUX} is equal to
\begin{equation}
	-(-1)^{p(a)}\varkappa U^{(1)}_{bb}[m]-l(-1)^{p(a)}\varkappa U^{(1)}_{bb}[m]+(-1)^{p(b)}U^{(1)}_{ab}[m]U^{(1)}_{ba}[0]-(-1)^{p(b)}U^{(1)}_{ba}[0]U^{(1)}_{ab}[m].
\end{equation}
As a result, we obtain $U^{(1)}_{bb}[m]$ for $m>0$. This is likewise for $m<0$. When $m=0$, \eqref{UXUX} is
\begin{equation}
	-(-1)^{p(a)}\varkappa U^{(1)}_{bb}[0]-l(-1)^{p(a)}\varkappa U^{(1)}_{b+1,b+1}[0]+(-1)^{p(b)}U^{(1)}_{ab}[0]U^{(1)}_{ba}[0]-(-1)^{p(b)}U^{(1)}_{ba}[0]U^{(1)}_{ab}[m].
\end{equation}
This in particular gives $-(-1)^{p(a)}\varkappa U^{(1)}_{bb}[0]-l(-1)^{p(a)}\varkappa U^{(1)}_{b+1,b+1}[0]$. Together with $U^{(1)}_{bb}[0]-U^{(1)}_{b+1,b+1}[0]$, we can get $U^{(1)}_{bb}[0]$.

Now, we shall consider the elements of spin 2. From $\Phi\left(\mathcal{H}^{(a)}_1\right)$, $\Phi\left(\mathcal{E}^{(a)}_1\right)$ and $\Phi\left(\mathcal{F}^{(a)}_1\right)$, we get $U^{(2)}_{a,a+1}[0]$, $U^{(2)}_{(a+1,a)}[0]$ (with two exceptions) and $U^{(2)}_{aa}[0]-U^{(2)}_{a+1,a+1}[0]$. Similar to the case of $U^{(1)}_{ab}[m]$, we can then obtain $U^{(2)}_{ab}[m]$, as well as $U^{(2)}_{aa}[m]-U^{(2)}_{bb}[m]$, for any $a\neq b$ and $m\in\mathbb{Z}$ using the $U^{(1)}U^{(2)}$ commutation relation.

To get the remaining elements $U^{(2)}_{aa}[m]$, let us compute
\begin{equation}
	\left[U^{(2)}_{aa}[1]-U^{(2)}_{bb}[1],U^{(2)}_{aa}[m]-U^{(2)}_{bb}[m]\right]-\left[U^{(2)}_{aa}[1]-U^{(2)}_{bb}[1],U^{(2)}_{aa}[m+1]-U^{(2)}_{bb}[m+1]\right]
\end{equation}
for $a\neq b$. This could be tedious due to all the $U^{(2)}U^{(2)}$ commutation relations, but we notice that most of the terms can be cancelled, and it becomes
\begin{align}
	&-2\varkappa(-1)^{p(a)}U^{(2)}_{aa}[m+1]-2\varkappa(-1)^{p(b)}U^{(2)}_{bb}[m+1]\nonumber\\
	&+(l-1)\left((1+(-1)^{p(a)})\left(\sum_{k>0}U^{(1)}_{aa}[-k]U^{(1)}_{aa}[m+1+k]+\sum_{k\geq0}U^{(1)}_{aa}[m+1-k]U^{(1)}_{aa}[k]\right)\right.\nonumber\\
	&+(1+(-1)^{p(b)})\left(\sum_{k>0}U^{(1)}_{bb}[-k]U^{(1)}_{bb}[m+1+k]-\sum_{k\geq0}U^{(1)}_{bb}[m+1-k]U^{(1)}_{bb}[k]\right)\nonumber\\
	&-\sum_{k>0}U^{(1)}_{aa}[-k]U^{(1)}_{bb}[m+1+k]-\sum_{k\geq0}U^{(1)}_{bb}[m+1-k]U^{(1)}_{aa}[k]\nonumber\\
	&-\sum_{k>0}U^{(1)}_{bb}[-k]U^{(1)}_{aa}[m+1+k]-\sum_{k\geq0}U^{(1)}_{aa}[m+1-k]U^{(1)}_{bb}[k]\nonumber\\
	&-(-1)^{p(b)}\varkappa\sum_{k>0}U^{(1)}_{ab}[-k]U^{(1)}_{ba}[m+1-k]-(-1)^{p(a)}\varkappa\sum_{k\geq0}U^{(1)}_{ba}[m+1-k]U^{(1)}_{ab}[k]\nonumber\\
	&\left.-(-1)^{p(a)}\varkappa\sum_{k>0}U^{(1)}_{ba}[-k]U^{(1)}_{ab}[m+1-k]-(-1)^{p(b)}\varkappa\sum_{k\geq0}U^{(1)}_{ab}[m+1-k]U^{(1)}_{ba}[k]\right)\nonumber\\
	&-l(l-1)(m+2)\varkappa^2\left((-1)^{p(a)}U^{(1)}_{aa}[m+1]+(-1)^{p(b)}U^{(1)}_{bb}[m+1]\right).
\end{align}
From this, we get $(-1)^{p(a)}U^{(2)}_{aa}[n]+(-1)^{p(b)}U^{(2)}_{bb}[n]$. Choose $a,b$ such that $p(a)=p(b)$ (which is always possible for the cases we focus on in this paper). Together with $U^{(2)}_{aa}[n]-U^{(2)}_{bb}[n]$, we can obtain $U^{(2)}_{aa}[n]$ for any $a=1,\dots,M+N$ and $n\in\mathbb{Z}$.

Now, we can in principle write the actions of any $U^{(s)}_{ab}[m]$ on the (truncated) crystals. Since the crystal configuration always starts from the empty state $|\varnothing\rangle$ on which only the $e^{(0)}_0$ and $\psi^{(0)}_0$ modes would have non-trivial action. It is natural to wonder whether the truncated crystal can be a highest weight representation of $U(\mathcal{W}_{M|N\times l})$ with $|\varnothing\rangle$ being the highest weight vector. In particular, all the modes $U^{(s)}_{ab}[m]$ with $s\in\mathbb{Z}_+$, $a,b=1,\dots,M+N$ and $m>0$ should annihilate the highest weight state. To make this state unique, we also need $U^{(1)}_{ab}[0]$ to act trivially on it for all $a>b$ in our convention here.
\begin{corollary}
	The truncated crystal is a highest weight representation of $U(\mathcal{W}_{M|N\times l})$ with the empty room configuration $|\varnothing\rangle$ as the highest weight state.
\end{corollary}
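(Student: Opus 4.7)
The plan is to combine the explicit expressions for the $U^{(s)}_{ab}[m]$ derived in the proof of surjectivity of Theorem \ref{YWthm} with the known crystal action of $\mathtt{Y}$ on $|\varnothing\rangle$. First I record what $\mathtt{Y}$ does on the empty state: since $\varnothing_-$ is empty, $f^{(a)}_n|\varnothing\rangle=0$ for all $a,n$; since the only addable atom is the initial atom $\mathfrak{o}$ of colour $0$ with $\widetilde{\epsilon}(\mathfrak{o})=0$, the current $e^{(a)}(z)|\varnothing\rangle$ vanishes for $a\neq 0$ and has only a simple pole at $z=0$ for $a=0$, so $e^{(a)}_n|\varnothing\rangle=0$ whenever $(a,n)\neq(0,0)$; and $\psi^{(a)}_n|\varnothing\rangle$ is a scalar multiple of $|\varnothing\rangle$ read off from $\Psi^{(a)}_\varnothing(z)$.

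Pushing these facts through $\Phi$, I would then verify that each of the "positive" generators, namely $U^{(s)}_{ab}[m]$ with $m>0$ together with $U^{(1)}_{ab}[0]$ for $a>b$, annihilates $|\varnothing\rangle$. For $U^{(1)}_{ab}[0]$ with $a>b$, every such element is a nested commutator of the $\Phi(\mathcal{F}^{(c)}_0)=U^{(1)}_{c+1,c}[0]=f^{(c)}_0$ with $c\in\{1,\dots,M+N-1\}$, each of which kills $|\varnothing\rangle$, and a bracket of annihilators remains an annihilator. For the spin-one positive modes I would start from $\Phi(\mathcal{F}^{(0)}_0)=U^{(1)}_{1,M+N}[1]=f^{(0)}_0$ and follow the commutator chain of \S\ref{crystal} that generates every other $U^{(1)}_{ab}[m]$ with $m>0$; the auxiliary operators entering that chain are either lower-triangular zero modes (annihilators) or spin-one Cartan zero modes (scalars on $|\varnothing\rangle$), so at every step the bracket is either a commutator of annihilators or of the form $[H,X]$ with $H$ a scalar and $X$ an annihilator, and in either case vanishes on $|\varnothing\rangle$. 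The diagonal case $U^{(1)}_{bb}[m]$ for $m>0$ is then handled by \eqref{UXUX}: $X_{ab}[m]$ from \eqref{Xabm} is a commutator of the already-annihilating $U^{(1)}_{ab}[m]$ with the Cartan element $\Phi(\mathcal{H}^{(b)}_1)$ (scalar on $|\varnothing\rangle$), so it annihilates, and \eqref{UXUX} then forces $U^{(1)}_{bb}[m]|\varnothing\rangle=0$.

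The spin-two positive modes $U^{(2)}_{ab}[m]$ are treated by the same bootstrap, using that they are constructed from $\Phi(\mathcal{H}^{(a)}_1),\Phi(\mathcal{E}^{(a)}_1),\Phi(\mathcal{F}^{(a)}_1)$ and the already-controlled $U^{(1)}$ modes, with an induction on $|m|$ closing the argument. Generation of the module by $|\varnothing\rangle$ is transparent on the crystal side: every configuration $|\mathfrak{C}\rangle$ arises from $|\varnothing\rangle$ by iterated application of creation currents $e^{(a)}(z)$, and by Theorem \ref{YWthm} each $\mathtt{Y}$-generator lies in the image of $\Phi$, so $U(\mathcal{W}_{M|N\times l})\cdot|\varnothing\rangle$ exhausts the truncated crystal. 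The weights of the highest-weight state are then read off from the scalar action of $\Phi(\mathcal{H}^{(a)}_r)$, giving the announced link between the vacuum charge $C$ and the $\mathcal{W}$-algebra parameters.

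I expect the delicate point to be the spin-two step, because $\Phi(\mathcal{H}^{(a)}_1)$, $\Phi(\mathcal{E}^{(a)}_1)$, $\Phi(\mathcal{F}^{(a)}_1)$ contain normal-ordered infinite sums of the shape $\sum_{k\geq 0}U^{(1)}[-k]U^{(1)}[k]$; one must check that these sums act on $|\varnothing\rangle$ as finite operators on each graded piece and that the positive-mode $U^{(2)}_{ab}[m]$ extracted by commutators really decomposes into zero-mode annihilators and negative-mode creators in a way compatible with the action on the empty state. Once this bookkeeping is under control, the three properties (annihilation of the positive part, scalar action of the Cartan, cyclic generation) combine to give the highest-weight claim.
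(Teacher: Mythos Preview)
Your approach is essentially the paper's: express the $U^{(s)}_{ab}[m]$ via the commutator chain from the surjectivity argument and use the known Yangian action on $|\varnothing\rangle$. Two points deserve comment.

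First, the paper shortcuts the verification by invoking \cite{Eberhardt:2019xmf}: from the OPEs it suffices to check annihilation by $U^{(1)}_{a+1,a}[0]$, $U^{(1)}_{1,M+N}[1]$, and $U^{(r)}_{11}[1]$ for $r=1,2,3$ only. This avoids your heavier direct bootstrap over all positive modes, though your route is in principle workable.

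Second, your \eqref{UXUX} step for the diagonal $U^{(1)}_{bb}[m]$ is incomplete as stated. Knowing $X_{ab}[m]|\varnothing\rangle=0$ is not enough: in $[U^{(1)}_{ba}[0],X_{ab}[m]]|\varnothing\rangle$ the surviving term is $X_{ab}[m]\,U^{(1)}_{ba}[0]|\varnothing\rangle$, so you also need the \emph{upper}-triangular elements $U^{(1)}_{ba}[0]$ and $U^{(1)}_{ba}[1]$ (with $b<a$) to annihilate $|\varnothing\rangle$. The paper treats this explicitly: $U^{(1)}_{c,c+1}[0]\propto\Phi(e^{(c)}_0)$ for $c\neq 0$ kills the empty state since only $e^{(0)}_0$ acts nontrivially there, whence all $U^{(1)}_{ba}[0]$ with $1\le b<a\le M+N$ annihilate; the mode-one statements then follow by bracketing with $U^{(1)}_{1,M+N}[1]=\Phi(f^{(0)}_0)$ and lower-triangular zero modes. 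With this added, your plan goes through; the spin-two and spin-three annihilation is then read off from $\Phi(\mathcal{F}^{(0)}_1)$ and $\Phi(\mathcal{F}^{(0)}_2)$ exactly as you anticipate. Also note that $X_{ab}[m]$ is not literally $[U^{(1)}_{ab}[m],\Phi(\mathcal{H}^{(b)}_1)]$ but a bracket with a closely related operator; the difference consists of $U^{(1)}$-bilinears whose action on $|\varnothing\rangle$ is controlled by the facts already established.
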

\begin{proof}
	As pointed out in \cite{Eberhardt:2019xmf}, following the OPEs, it suffices to show that $U^{(1)}_{a+1,a}[0]$, $U^{(1)}_{1,M+N}[1]$ and $U^{(r)}_{11}[1]$ ($r=1,2,3$) would annihilate $|\varnothing\rangle$. This is obvious for $U^{(1)}_{a+1,a}[0]$ since it is mapped from $\mathcal{F}^{(a)}_0=f^{(a)}_0$ ($a\neq0$) which acts trivially on $|\varnothing\rangle$. Likewise, $U^{(1)}_{1,M+N}[1]$ comes from $\mathcal{F}^{(0)}_0=f^{(0)}_0$.
	
	To show that the remaining elements have zero eigenvalues, we first notice that $U^{(1)}_{ab}[m]|\varnothing\rangle=0$ for any $a>b$ and $m\geq0$. This follows from the commutation relations we used to construct these elements when showing the surjectivity of $\Phi$ above. On the other hand, $\left(U^{(1)}_{bb}[0]-U^{(1)}_{b+1,b+1}[0]\right)|\varnothing\rangle=0$ for $b\neq0$. This is because $\psi^{(a)}_0|\varnothing\rangle=\delta_{a,0}C|\varnothing\rangle$ from the action of $\psi^{(a)}(u)$. Besides, we have $\mathcal{H}^{(a)}_1|\varnothing\rangle=\frac{1}{2}\nu(b)\epsilon_-\delta_{a,0}C|\varnothing\rangle$ since $\psi^{(a)}_1|\varnothing\rangle=0$. With these results, we find that $X_{ab}[m]|\varnothing\rangle=0$ for $a>b$ and $m=0,1$, where $X_{ab}[m]$ is defined in \eqref{Xabm}. Therefore,
	\begin{equation}
		\begin{split}
			&\left[U^{(1)}_{ba}[0],(-1)^{p(b)}X_{ab}[m]\right]-\left[U^{(1)}_{ba}[1],(-1)^{p(b)}X_{ab}[m-1]\right]|\varnothing\rangle\\
			=&-(-1)^{p(a)}X_{ab}[m]U^{(1)}_{ba}[0]|\varnothing\rangle+(-1)^{p(a)}X_{ab}[m-1]U^{(1)}_{ba}[1]|\varnothing\rangle
		\end{split}\label{UXUXaction}
	\end{equation}
    from \eqref{UXUX} for $b<a$.
    
    Let us take $a=3$ and $b=1$. Then $U^{(1)}_{13}|\varnothing\rangle\propto\left[U^{(1)}_{12}[0],U^{(1)}_{23}[0]\right]|\varnothing\rangle=0$ since the only non-trivial $e^{(c)}_0|\varnothing\rangle$ is $c=0$. This indicates that the first term in \eqref{UXUXaction} vanishes. Repeat the similar procedure iteratively, and we find that $U^{(1)}_{1,M+N}[0]|\varnothing\rangle=0$. Suppose $M+N\neq3$, then $U^{(1)}_{13}[1]|\varnothing\rangle\propto\left[U^{(1)}_{1,M+N}[1],U^{(1)}_{M+N,3}[0]\right]|\varnothing\rangle=0$. If $M+N=3$, then $U^{(1)}_{13}[1]|\varnothing\rangle=0$ is automatic since this comes from $\mathcal{F}^{(0)}_0|\varnothing\rangle=0$. As a result, the second term in \eqref{UXUXaction}, and hence the whole equation, would vanish.
    
    Now, take $m=1$. Recall that the action in \eqref{UXUXaction} can be written using \eqref{UXUX}. As the latter two terms in \eqref{UXUX} vanish when acting on $|\varnothing\rangle$ as we have shown, the only two terms that survive under $m=1$ would give a term proportional to $U^{(1)}_{11}[1]$. Since \eqref{UXUXaction} vanishes, we have $U^{(1)}_{11}[1]|\varnothing\rangle=0$. Moreover, $U^{(1)}_{1,a}|\varnothing\rangle\propto\left[U^{(1)}_{11}[1],U^{(1)}_{1,a}[0]\right]|\varnothing\rangle=0$ for $a>1$. Keep this procedure, and we find that $U^{(1)}_{1,a}[m]|\varnothing\rangle=0$ for any $m>0$. Together with the expression of $\Phi\left(\mathcal{F}^{(0)}_1\right)$, we have $U^{(2)}_{1,M+N}[1]|\varnothing\rangle=0$. Likewise, by considering $\mathcal{F}^{(a)}_2$ from $\left[\mathcal{H}^{(a+1)}_1,\mathcal{F}^{(a)}_1\right]$, we can get $U^{(3)}_{1,M+N}[1]|\varnothing\rangle=0$ following the same steps. Notice that this uses the general $U^{(2)}_{ab}U^{(2)}_{cd}$ commutation relation which is not explicitly listed here, but it is straightforward from the OPE in \cite{Rapcak:2019wzw}.
\end{proof}

When considering $\mathcal{H}^{(a\neq0)}_0|\varnothing\rangle=\psi^{(a)}_0|\varnothing\rangle=0$, we can see that $U^{(1)}_{11}|\varnothing\rangle=U^{(1)}_{22}|\varnothing\rangle=\dots=U^{(1)}_{M+N,M+N}|\varnothing\rangle$. On the other hand, $\left(U^{(1)}_{M+N,M+N}[0]-U^{(1)}_{11}[0]+l\varkappa\right)|\varnothing\rangle=\mathcal{H}^{(0)}_0|\varnothing\rangle=C|\varnothing\rangle$. This then relates the parameter on the $\mathcal{W}$-algebra side with the vacuum charge in the quiver Yangian:
\begin{corollary}
	We have $C=l\varkappa$.
\end{corollary}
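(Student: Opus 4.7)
The plan is to extract two equalities of eigenvalues on $|\varnothing\rangle$ from the homomorphism $\Phi$ of Theorem \ref{YWthm} and compare them. First, from the explicit action of $\psi^{(a)}(z)$ on the empty configuration in \S\ref{crystal}, the only non-trivial zero-mode action is $\psi^{(0)}_0|\varnothing\rangle = C|\varnothing\rangle$, while $\psi^{(a)}_0|\varnothing\rangle = 0$ for all $a \neq 0$. This is the only input needed from the crystal side.

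Next, I would apply $\Phi$ to these identities. For $a \neq 0$, the relation $\Phi(\mathcal{H}^{(a)}_0) = U^{(1)}_{aa}[0] - U^{(1)}_{a+1,a+1}[0]$ together with $\psi^{(a)}_0|\varnothing\rangle = 0$ yields $U^{(1)}_{aa}[0]|\varnothing\rangle = U^{(1)}_{a+1,a+1}[0]|\varnothing\rangle$. Chaining these equalities for $a = 1, 2, \dots, M+N-1$ gives a single common eigenvalue; in particular $U^{(1)}_{11}[0]|\varnothing\rangle = U^{(1)}_{M+N,M+N}[0]|\varnothing\rangle$.

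For $a = 0$, applying $\Phi\left(\mathcal{H}^{(0)}_0\right) = U^{(1)}_{M+N,M+N}[0] - U^{(1)}_{11}[0] + l\varkappa$ to $|\varnothing\rangle$ and using the previous step to cancel the first two terms leaves $l\varkappa|\varnothing\rangle$, which must equal $C|\varnothing\rangle$. Hence $C = l\varkappa$.

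There is no real obstacle: once the homomorphism $\Phi$ and the action of $\psi^{(a)}_0$ on $|\varnothing\rangle$ are in hand, the proof is a two-line eigenvalue comparison, and the only thing to be careful about is that the chain of equalities $U^{(1)}_{11}[0]|\varnothing\rangle = \dots = U^{(1)}_{M+N,M+N}[0]|\varnothing\rangle$ really closes up, which it does because $\Phi(\mathcal{H}^{(a)}_0)$ for $a = 1, \dots, M+N-1$ covers all consecutive differences.
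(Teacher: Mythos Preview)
Your proposal is correct and follows essentially the same approach as the paper: the paper also uses $\mathcal{H}^{(a\neq0)}_0|\varnothing\rangle=\psi^{(a)}_0|\varnothing\rangle=0$ to conclude $U^{(1)}_{11}[0]|\varnothing\rangle=\dots=U^{(1)}_{M+N,M+N}[0]|\varnothing\rangle$, and then applies $\Phi\!\left(\mathcal{H}^{(0)}_0\right)=U^{(1)}_{M+N,M+N}[0]-U^{(1)}_{11}[0]+l\varkappa$ to $|\varnothing\rangle$ to read off $C=l\varkappa$.
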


\paragraph{Example} Let us illustrate the above discussions with an example. The simplest case would be $\mathbb{C}\times\mathbb{C}^2/\mathbb{Z}_3$ whose quiver and crystal model are depicted in Figure \ref{CC2Z3}.
\begin{figure}[h]
	\centering
	\includegraphics[width=12cm]{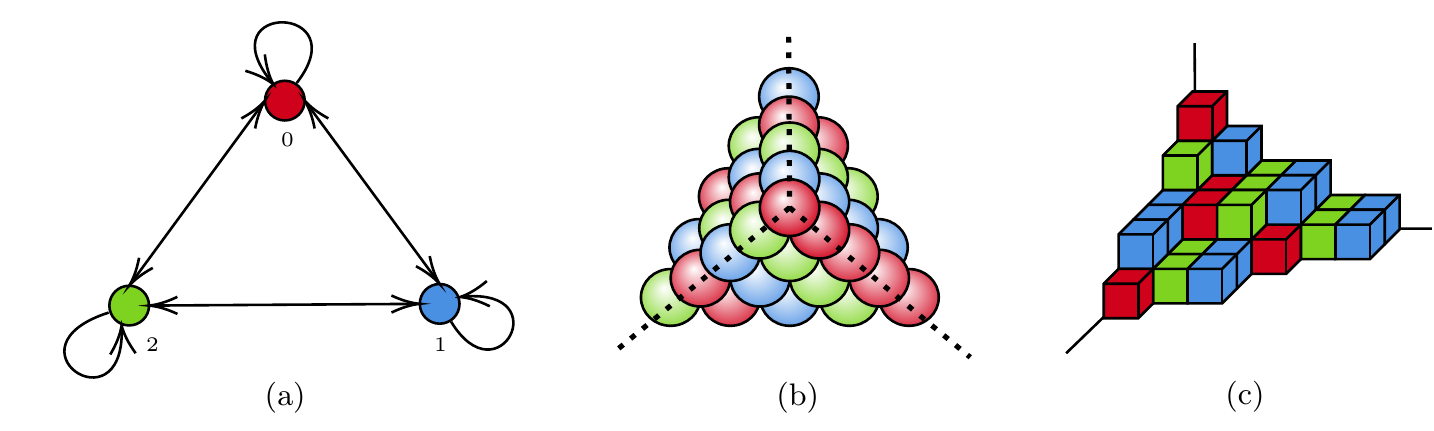}
	\caption{(a) The quiver for $\mathbb{C}\times\mathbb{C}^2/\mathbb{Z}_3$. (b) The corresponding crystal model with three colours. The dased lines are the ridges of the crystal. (c) The equivalent crystal model visualized using coloured plane partitions.}\label{CC2Z3}
\end{figure}
The possible configurations with the corresponding modes acting on $|\varnothing\rangle$ at low levels are listed in \cite{Li:2020rij}. For instance, take $|\mathfrak{C}\rangle$ to be
\begin{equation}
	\includegraphics[width=4.5cm]{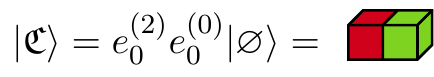}.
\end{equation}
One possible way to add an atom is to act $\mathcal{E}^{(1)}_0=e^{(1)}_0$. Using the action of $e^{(1)}(z)$ from \eqref{eaction} above, we have
\begin{equation}
	\Psi^{(1)}_{\mathfrak{C}}=\frac{\left(z+\widetilde{\epsilon}_{10}\right)\left(z-\widetilde{\epsilon}_{02}+\widetilde{\epsilon}_{12}\right)}{\left(z-\widetilde{\epsilon}_{01}\right)\left(z-\widetilde{\epsilon}_{02}-\widetilde{\epsilon}_{21}\right)},
\end{equation}
where we have kept the notation $\widetilde{\epsilon}_I$. Therefore,
\begin{equation}
	\mathcal{E}^{(1)}_0|\mathfrak{C}\rangle=-\left(\frac{\left(\widetilde{\epsilon}_{10}+\widetilde{\epsilon}_{02}+\widetilde{\epsilon}_{21}\right)\left(\widetilde{\epsilon}_{12}+\widetilde{\epsilon}_{21}\right)}{\widetilde{\epsilon}_{01}-\widetilde{\epsilon}_{02}-\widetilde{\epsilon}_{21}}\right)^{1/2}|\mathfrak{C}+\mathfrak{1}\rangle,
\end{equation}
where we have taken $+$ in $\pm$ in \eqref{eaction}. This gives
\begin{equation}
	U^{(1)}_{12}[0]|\mathfrak{C}\rangle=\frac{3\epsilon_2(\epsilon_1+\epsilon_2)}{\epsilon_1-2\epsilon_2}|\mathfrak{C}+\mathfrak{1}\rangle,
\end{equation}
where
\begin{equation}
	\includegraphics[width=3cm]{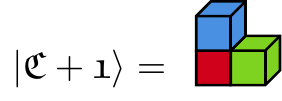}.
\end{equation}
One may apply the procedure above used for constructing other elements and get their actions. Here, we list a few examples for $U^{(1)}_{ij}[m]$:
\begin{equation}
	\includegraphics[width=14cm]{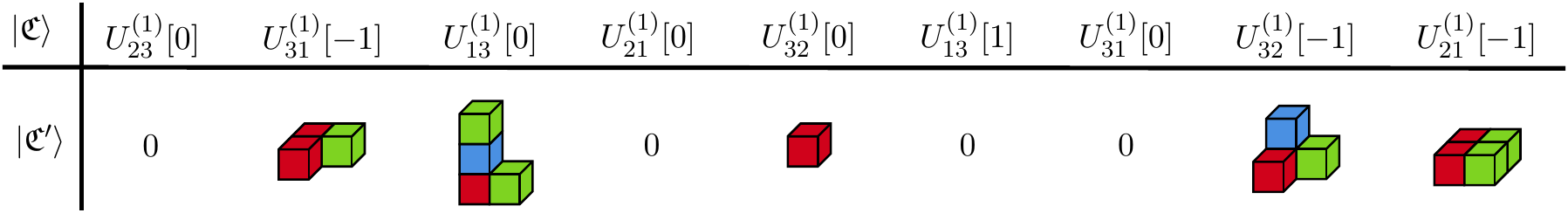},
\end{equation}
where we omit the coefficients and only show the crystal configurations. By considering higher modes of the quiver Yangian generators, we can also get the actions of $U^{(s)}$ with higher spins.

Let us take a brief look at how the truncations of the crystal could happen. Here, we shall only discuss the simplest example which truncates the algebra at $l=1$. In such case, we just have the universal enveloping algebra of the Kac-Moody algebra with only zero modes for the quiver Yangian, or equivalently, only spin 1 elements for the $\mathcal{W}$-algebra. It is straightforward to see that the truncated crystal has the shape
\begin{equation}
	\includegraphics[width=6cm]{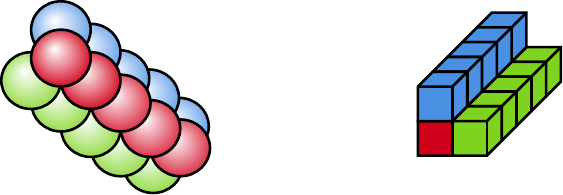}
\end{equation}
of semi-infinite length. We simply have $U^{(s\geq2)}_{ab}[m]|\mathfrak{C}\rangle=0$ for any $a,b\in\{1,2,3\}$, $m\in\mathbb{Z}$ and any configuration $\mathfrak{C}$ as $U^{(s)}_{ab}[m]$ vanishes for $s>l$.

In general, for any quiver Yangian we focus on in this paper, the truncation at the very first level $l=1$ can be desribed in this manner. For more general truncations of the crystal and larger $l$, this could be more involved, and we leave this to future work (see also \S\ref{outlook}).

\section{Outlook}\label{outlook}
We have discussed a few aspects of the quiver Yangians for most generalized conifolds and their connections to vertex algebras. Let us mention some possible directions for further research.

The construction of the coproduct we have benefits from the untwisted affine Lie superalgebra of A-type. A natural extension would be a more thorough study on quiver Yangians for the remaining generalized conifolds, as well as $\mathbb{C}^3/(\mathbb{Z}_2\times\mathbb{Z}_2)$. All of them have underlying Kac-Moody algebras. It is worth noting that a method of computing the coproduct perturbatively is given in \cite[Appendix C]{Galakhov:2022uyu} for generalized conifolds. More generally, for toric CYs with compact 4-cycles, the quiver Yangians do not seem to have such underlying Kac-Moody algebras. It would be desirable to find the coproduct of the algebra associated to any quiver.

The coproduct plays an important role when studying the Bethe/gauge correspondence from the BPS quiver algebra \cite{Galakhov:2022uyu,Bao:2022fpk}. In particular, there are obstructions for the Bethe/gauge correspondence to work in the chiral quiver cases as pointed out in \cite{Galakhov:2022uyu}, and this requires further investigations. Similar to the map from the quiver Yangian to the $\mathcal{W}$-algebra, it would be natural to wonder whether we can write some $\mathcal{RTT}$-like presentation of the quiver Yangian, which might in turn shed light on the Bethe/gauge correspondence.

One may also consider the trigonometric and elliptic versions of the quiver Yangians \cite{Galakhov:2021vbo}. They bear resemblance to the rational ones by means of the generators and their relations. However, the coproduct structures seem to behave very different.

We showed that the quiver Yangians discussed in this paper are isomorphic algebras under toric duality by virtue of the odd reflections of the Kac-Moody superalgebras. We expect similar isomorphic maps for quiver Yangians associated to the CYs without compact divisors that are not explored here. However, new methods are required when considering toric CYs with compact divisors. It is natural to conjecture that such quiver Yangians would still be isomorphic as their supersymmetric gauge theories are related by Seiberg duality. It could even be possible to consider the quivers outside the toric phases which can still be reached via Seiberg duality. We can always define the corresponding quiver Yangians from the quiver data though whether/how they would implement the BPS algebras would need further checks. Mathematically, Seiberg dual quivers essentially transform under mutations. Therefore, cluster algebras might be useful in proving the isomorphisms.

Regarding the truncations and VOAs, we have only discussed the most trivial case for truncations here with $l=1$. As analyzed in \cite{Li:2020rij}, when the crystal is truncated at some atom, we have the corresponding residue vanishing in the numerator of \eqref{eaction}. This leads to some extra conditions that $\widetilde{\epsilon}_I$ should satisfy on the quiver Yangian side. On the other hand, recall that we also have certain condition on $\epsilon_i$ for $\Phi$ to be homomorphic, and the truncation comes from the parameter $l$ on the $\mathcal{W}$-algebra side. We expect that the cut at $l$ would not provide all the possible truncations of the crystal. It is very likely that the coefficients in the actions of some $U$-modes become zero due to the truncation conditions on $\widetilde{\epsilon}_I$ from the quiver Yangians. Besides, there are more general truncations, namely the $x^{l_3}y^{l_2}z^{l_4}w^{l_1}$-algebras, as mentioned above. They might also give possible truncations on the crystal. Moreover, it would be interesting to see whether other crystal configurations, such as crystals in other chambers \cite{Yamazaki:2010fz,Aganagic:2010qr,Bao:2022oyn} and 2d crystals \cite{Nishinaka:2013mba,Galakhov:2022uyu,Bao:2022fpk}, could give similar relations.

It still remains an open question whether the quiver Yangians for more general geometry, especially those associated to toric CYs with compact divisors, could have some $\mathcal{W}$-algebras as their truncations. It might be possible to construct the VOAs from the quiver Yangians in this setting and compare them with other constructions. This could provide more insights into the BPS/CFT correspondence.

\section*{Acknowledgement}
The research is supported by a CSC scholarship.

\appendix

\section{Generators of Quiver Yangians for $\mathbb{C}\times\mathbb{C}^2/\mathbb{Z}_2$ and Conifold}\label{generators2}
Analogous to the cases discussed in the main context, the generators of the quiver Yangians with two gauge nodes can also be expressed using finitely many modes. The main difference is that there are more than one pairs of arrows connecting the two nodes in the quiver. See Figure \ref{twonodes}. Below we shall use the generators in the convention of $\uppsi$, $\mathtt{e}$ and $\mathtt{f}$.
\begin{figure}[h]
	\centering
	\includegraphics[width=10cm]{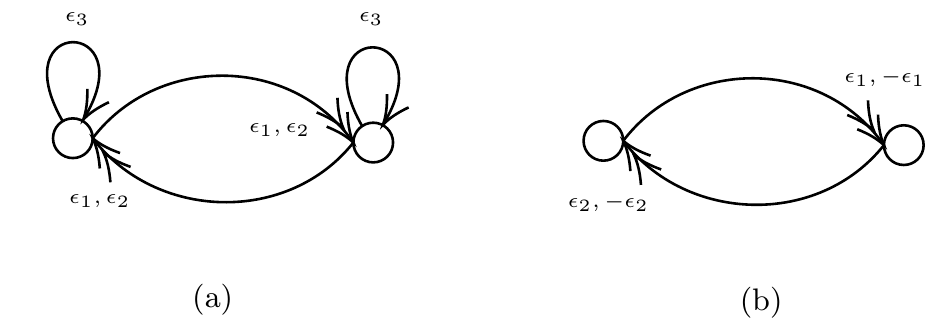}
	\caption{(a) The $\mathbb{C}\times\mathbb{C}^2/\mathbb{Z}_2$ quiver with charge assignment. (b) The conifold quiver with charge assignment.}\label{twonodes}
\end{figure}

\paragraph{Case 1: $\mathbb{C}\times\mathbb{C}^2/\mathbb{Z}_2$} We have
\begin{equation}
	\sigma_1^{ab}=\sigma_1^{ba}=\begin{cases}
		\epsilon_1+\epsilon_2=-\epsilon_3,&a\neq b\\
		\epsilon_3,&a=b,
	\end{cases}
    \qquad
    \sigma_2^{ab}=\sigma_2^{ba}=\begin{cases}
    	\epsilon_1\epsilon_2,&a\neq b\\
    	0,&a=b.
    \end{cases}
\end{equation}
The defining relations of the quiver Yangian are
\begin{align}
	&\left[\uppsi^{(a)}_n,\uppsi^{(b)}_m\right]=0,\\
	&\left[\mathtt{e}^{(a)}_n,\mathtt{f}^{(b)}_m\right]=\delta_{ab}\uppsi^{(a)}_{m+n},\\
	&\left[\uppsi^{(a)}_{n+1},\mathtt{e}^{(a)}_m\right]-\left[\uppsi^{(a)}_n,\mathtt{e}^{(a)}_{m+1}\right]=\epsilon_3\left\{\uppsi^{(a)}_n,\mathtt{e}^{(a)}_m\right\}\\
	&\left[\uppsi^{(a)}_{n+1},\mathtt{f}^{(a)}_m\right]-\left[\uppsi^{(a)}_n,\mathtt{f}^{(a)}_{m+1}\right]=-\epsilon_3\left\{\uppsi^{(a)}_n,\mathtt{f}^{(a)}_m\right\}\\
	&\left[\mathtt{e}^{(a)}_{n+1},\mathtt{e}^{(a)}_m\right]-\left[\mathtt{e}^{(a)}_n,\mathtt{e}^{(a)}_{m+1}\right]=\epsilon_3\left\{\mathtt{e}^{(a)}_n,\mathtt{e}^{(a)}_m\right\}\\
	&\left[\mathtt{f}^{(a)}_{n+1},\mathtt{f}^{(a)}_m\right]-\left[\mathtt{f}^{(a)}_n,\mathtt{f}^{(a)}_{m+1}\right]=-\epsilon_3\left\{\mathtt{f}^{(a)}_n,\mathtt{f}^{(a)}_m\right\}\\
	&\left[\uppsi^{(a)}_{n+2},\mathtt{e}^{(a+1)}_m\right]-2\left[\uppsi^{(a)}_{n+1},\mathtt{e}^{(a+1)}_{m+1}\right]+\left[\uppsi^{(a)}_n,\mathtt{e}^{(a+1)}_{m+2}\right]\nonumber\\
	=&-\epsilon_1\epsilon_2\left[\uppsi^{(a)}_n,\mathtt{e}^{(a+1)}_m\right]-\epsilon_3\left\{\uppsi^{(a)}_{n+1},\mathtt{e}^{(a+1)}_m\right\}+\epsilon_3\left\{\uppsi^{(a)}_n,\mathtt{e}^{(a+1)}_{m+1}\right\},\label{psiaea+1CC2Z2}\\
	&\left[\uppsi^{(a)}_{n+2},\mathtt{f}^{(a+1)}_m\right]-2\left[\uppsi^{(a)}_{n+1},\mathtt{f}^{(a+1)}_{m+1}\right]+\left[\uppsi^{(a)}_n,\mathtt{f}^{(a+1)}_{m+2}\right]\nonumber\\
	=&-\epsilon_1\epsilon_2\left[\uppsi^{(a)}_n,\mathtt{f}^{(a+1)}_m\right]+\epsilon_3\left\{\uppsi^{(a)}_{n+1},\mathtt{f}^{(a+1)}_m\right\}-\epsilon_3\left\{\uppsi^{(a)}_n,\mathtt{f}^{(a+1)}_{m+1}\right\},\\
	&\left[\mathtt{e}^{a}_{n+2},\mathtt{e}^{(a+1)}_m\right]-2\left[\mathtt{e}^{(a)}_{n+1},\mathtt{e}^{(a+1)}_{m+1}\right]+\left[\mathtt{e}^{(a)}_n,\mathtt{e}^{(a+1)}_{m+2}\right]\nonumber\\
	=&-\epsilon_1\epsilon_2\left[\mathtt{e}^{(a)}_n,\mathtt{e}^{(a+1)}_m\right]-\epsilon_3\left\{\mathtt{e}^{(a)}_{n+1},\mathtt{e}^{(a+1)}_m\right\}+\epsilon_3\left\{\mathtt{e}^{(a)}_n,\mathtt{e}^{(a+1)}_{m+1}\right\},\\
	&\left[\mathtt{f}^{a}_{n+2},\mathtt{f}^{(a+1)}_m\right]-2\left[\mathtt{f}^{(a)}_{n+1},\mathtt{f}^{(a+1)}_{m+1}\right]+\left[\mathtt{f}^{(a)}_n,\mathtt{f}^{(a+1)}_{m+2}\right]\nonumber\\
	=&-\epsilon_1\epsilon_2\left[\mathtt{f}^{(a)}_n,\mathtt{f}^{(a+1)}_m\right]+\epsilon_3\left\{\mathtt{f}^{(a)}_{n+1},\mathtt{f}^{(a+1)}_m\right\}-\epsilon_3\left\{\mathtt{f}^{(a)}_n,\mathtt{f}^{(a+1)}_{m+1}\right\},\\
	&\text{Sym}_{n_1,n_2}\left[\mathtt{e}^{(a)}_{n_1},\left[\mathtt{e}^{(a)}_{n_2},\mathtt{e}^{(a+1)}_m\right]\right]=\text{Sym}_{n_1,n_2}\left[\mathtt{f}^{(a)}_{n_1},\left[\mathtt{f}^{(a)}_{n_2},\mathtt{f}^{(a+1)}_m\right]\right]=0.
\end{align}
Recall that $\uppsi^{(a)}_{-1}=1$. Take $n=-2$ in \eqref{psiaea+1CC2Z2}, and we have
\begin{equation}
	\left[\uppsi^{(a)}_0,\mathtt{e}^{(a+1)}_m\right]=-2\epsilon_3\mathtt{e}^{(a+1)}_m,\quad\text{i.e.,}\quad\left[\uppsi^{(a)}_0,\mathtt{e}^{(a+1)}_{m+1}\right]=-2\epsilon_3\mathtt{e}^{(a+1)}_{m+1}.
\end{equation}
Then take $n=-1$ in \eqref{psiaea+1CC2Z2}, and we get
\begin{equation}
	\left[\uppsi^{(a)}_1,\mathtt{e}^{(a+1)}_m\right]-2\left[\uppsi^{(a)}_0,\mathtt{e}^{(a+1)}_{m+1}\right]=-\epsilon_3\left\{\uppsi^{(a)}_0,\mathtt{e}^{(a+1)}_m\right\}+2\epsilon_3\mathtt{e}^{(a+1)}_{m+1}.
\end{equation}
Therefore,
\begin{equation}
	\begin{split}
		\mathtt{e}^{(a+1)}_{m+1}=&-\frac{1}{6\epsilon_3}\left[\uppsi^{(a)}_1,\mathtt{e}^{(a+1)}_m\right]+\frac{1}{6}\uppsi^{(a)}_0\mathtt{e}^{(a+1)}_m+\frac{1}{6}\mathtt{e}^{(a+1)}_m\uppsi^{(a)}_0\\
		=&-\frac{1}{6\epsilon_3}\left[\uppsi^{(a)}_1,\mathtt{e}^{(a+1)}_m\right]-\frac{1}{6}\uppsi^{(a)}_0\frac{1}{2\epsilon_3}\left[\uppsi^{(a)}_0,\mathtt{e}^{(a+1)}_m\right]-\frac{1}{6}\frac{1}{2\epsilon_3}\left[\uppsi^{(a)}_0,\mathtt{e}^{(a+1)}_m\right]\uppsi^{(a)}_0\\
		=&-\frac{1}{6\epsilon_3}\left(\left[\uppsi^{(a)}_1,\mathtt{e}^{(a+1)}_m\right]+\frac{1}{2}\left[\left(\uppsi^{(a)}_0\right)^2,\mathtt{e}^{(a+1)}_m\right]\right),
	\end{split}
\end{equation}
and likewise for $\mathtt{f}$. Define $\widetilde{\uppsi}^{(a)}_1=\uppsi^{(a)}_1+\frac{1}{2}\left(\uppsi^{(a)}_0\right)^2$ (notice the difference with the namesake for the cases in the main context). As a result, all the modes can be obtained from $\uppsi^{(a)}_0$, $\uppsi^{(a)}_1$, $\mathtt{e}^{(a)}_0$ and $\mathtt{f}^{(a)}_0$ ($a\in Q_0$) inductively via
\begin{equation}
	\mathtt{e}^{(a)}_{m+1}=\frac{1}{6\left(\epsilon_1+\epsilon_2\right)}\left[\widetilde{\uppsi}^{(a+1)}_1,\mathtt{e}^{(a)}_m\right],\quad\mathtt{f}^{(a)}_{m+1}=-\frac{1}{6\left(\epsilon_1+\epsilon_2\right)}\left[\widetilde{\uppsi}^{(a+1)}_1,\mathtt{f}^{(a)}_m\right],\quad\uppsi^{(a)}_{m+1}=\left[\mathtt{e}^{(a)}_{m+1},\mathtt{f}^{(a)}_0\right].
\end{equation}

\paragraph{Case 2: conifold} We have
\begin{equation}
	\sigma_1^{ab}=\sigma_1^{ba}=0,\quad\sigma_2^{01}=-\epsilon_1^2,\quad\sigma_2^{10}=-\epsilon_2^2,\quad\sigma_2^{aa}=0.
\end{equation}
The defining relations of the quiver Yangian are
\begin{align}
	&\left[\uppsi^{(a)}_n,\uppsi^{(b)}_m\right]=0,\\
	&\left[\mathtt{e}^{(a)}_n,\mathtt{f}^{(b)}_m\right]=\delta_{ab}\uppsi^{(a)}_{m+n},\\
	&\left[\uppsi^{(a)}_{n+2},\mathtt{e}^{(a+1)}_m\right]-2\left[\uppsi^{(a)}_{n+1},\mathtt{e}^{(a+1)}_{m+1}\right]+\left[\uppsi^{(a)}_n,\mathtt{e}^{(a+1)}_{m+2}\right]=-\sigma_2^{ba}\uppsi^{(a)}_n\mathtt{e}^{(a+1)}_m+\sigma_2^{ab}\mathtt{e}^{(a+1)}_m\uppsi^{(a)}_n,\label{psiaea+1conifold}\\
	&\left[\uppsi^{(a)}_{n+2},\mathtt{f}^{(a+1)}_m\right]-2\left[\uppsi^{(a)}_{n+1},\mathtt{f}^{(a+1)}_{m+1}\right]+\left[\uppsi^{(a)}_n,\mathtt{f}^{(a+1)}_{m+2}\right]=-\sigma_2^{ab}\uppsi^{(a)}_n\mathtt{f}^{(a+1)}_m+\sigma_2^{ba}\mathtt{f}^{(a+1)}_m\uppsi^{(a)}_n,\\
	&\left[\mathtt{e}^{a}_{n+2},\mathtt{e}^{(a+1)}_m\right]-2\left[\mathtt{e}^{(a)}_{n+1},\mathtt{e}^{(a+1)}_{m+1}\right]+\left[\mathtt{e}^{(a)}_n,\mathtt{e}^{(a+1)}_{m+2}\right]=-\sigma_2^{ba}\mathtt{e}^{(a)}_n\mathtt{e}^{(a+1)}_m-\sigma_2^{ab}\mathtt{e}^{(a+1)}_m\mathtt{e}^{(a)}_n,\\
	&\left[\mathtt{f}^{a}_{n+2},\mathtt{f}^{(a+1)}_m\right]-2\left[\mathtt{f}^{(a)}_{n+1},\mathtt{f}^{(a+1)}_{m+1}\right]+\left[\mathtt{f}^{(a)}_n,\mathtt{f}^{(a+1)}_{m+2}\right]=-\sigma_2^{ab}\mathtt{f}^{(a)}_n\mathtt{f}^{(a+1)}_m-\sigma_2^{ba}\mathtt{f}^{(a+1)}_m\mathtt{f}^{(a)}_n,\\
	&\text{Sym}_{n_1,n_2}\text{Sym}_{m_1,m_2}\left[\mathtt{e}^{(a)}_{n_1},\left[\mathtt{e}^{(a+1)}_{m_1},\left[\mathtt{e}^{(a)}_{n_2},\mathtt{e}^{(a+1)}_{m_2}\right]\right]\right]=\text{Sym}_{n_1,n_2}\text{Sym}_{m_1,m_2}\left[\mathtt{f}^{(a)}_{n_1},\left[\mathtt{f}^{(a+1)}_{m_1},\left[\mathtt{f}^{(a)}_{n_2},\mathtt{f}^{(a+1)}_{m_2}\right]\right]\right]=0.
\end{align}
Recall that we simply use $[\text{-},\text{-}]$ to denote the supercommutator in our convention here. Take $n=-2$ in \eqref{psiaea+1conifold}, and we have
\begin{equation}
	\left[\uppsi^{(0)}_0,\mathtt{e}^{(1)}_m\right]=0.
\end{equation}
Then take $n=-1,0$ in \eqref{psiaea+1conifold}, and we get
\begin{align}
	&\left[\uppsi^{(0)}_1,\mathtt{e}^{(1)}_m\right]=\left(\epsilon_2^2-\epsilon_1^2\right)\mathtt{e}^{(1)}_m,\\
	&\left[\uppsi^{(0)}_2,\mathtt{e}^{(1)}_m\right]-2\left[\uppsi^{(0)}_1,\mathtt{e}^{(1)}_{m+1}\right]=\left(\epsilon_2^2-\epsilon_1^2\right)\uppsi^{(0)}_0\mathtt{e}^{(1)}_m.
\end{align}
Therefore,
\begin{equation}
	\begin{split}
		\left[\uppsi^{(0)}_2,\mathtt{e}^{(1)}_m\right]-2\left(\epsilon_2^2-\epsilon_1^2\right)\mathtt{e}^{(1)}_{m+1}=&\uppsi^{(0)}_0\left[\uppsi^{(0)}_1,\mathtt{e}^{(1)}_m\right]\\
		=&\left[\uppsi^{(0)}_0,\mathtt{e}^{(1)}_m\right]\uppsi^{(0)}_1+\uppsi^{(0)}_0\left[\uppsi^{(0)}_1,\mathtt{e}^{(1)}_m\right]\\
		=&\left[\uppsi^{(0)}_0\uppsi^{(0)}_1,\mathtt{e}^{(1)}_m\right],
	\end{split}
\end{equation}
and likewise for $a=1$, as well as $\mathtt{f}$. Define $\widetilde{\uppsi}^{(a)}_2=\uppsi^{(a)}_2-\uppsi^{(a)}_0\uppsi^{(a)}_1$. As a result\footnote{As $\uppsi^{(a)}_0$ commutes with $\mathtt{e}^{(a+1)}_m$ and $\mathtt{f}^{(a+1)}_m$, it could be possible to add terms such as $\frac{1}{3}\left(\uppsi^{(a)}_0\right)^3$ to the definition of $\widetilde{\uppsi}^{(a)}_2$, which might be more convenient when finding a minimalistic presentation.}, all the modes can be obtained from $\uppsi^{(a)}_0$, $\uppsi^{(a)}_1$, $\uppsi^{(a)}_2$, $\mathtt{e}^{(a)}_0$ and $\mathtt{f}^{(a)}_0$ ($a\in Q_0$) inductively via
\begin{equation}
	\begin{split}
		&\mathtt{e}^{(1)}_{m+1}=\frac{1}{2\left(\epsilon_2^2-\epsilon_1^2\right)}\left[\widetilde{\uppsi}^{(0)}_2,\mathtt{e}^{(1)}_m\right],\quad\mathtt{f}^{(1)}_{m+1}=-\frac{1}{2\left(\epsilon_2^2-\epsilon_1^2\right)}\left[\widetilde{\uppsi}^{(0)}_2,\mathtt{f}^{(1)}_m\right],\quad\uppsi^{(1)}_{m+1}=\left[\mathtt{e}^{(1)}_{m+1},\mathtt{f}^{(1)}_{0}\right],\\
		&\mathtt{e}^{(0)}_{m+1}=\frac{1}{2\left(\epsilon_1^2-\epsilon_2^2\right)}\left[\widetilde{\uppsi}^{(1)}_2,\mathtt{e}^{(0)}_m\right],\quad\mathtt{f}^{(0)}_{m+1}=-\frac{1}{2\left(\epsilon_1^2-\epsilon_2^2\right)}\left[\widetilde{\uppsi}^{(1)}_2,\mathtt{f}^{(0)}_m\right],\quad\uppsi^{(0)}_{m+1}=\left[\mathtt{e}^{(0)}_{m+1},\mathtt{f}^{(0)}_{0}\right].
	\end{split}
\end{equation}
Unlike all the cases discussed above, we further need the extra $\uppsi^{(a)}_2$ for the conifold case.

\paragraph{Comments on $\mathbb{C}^3/(\mathbb{Z}_2\times\mathbb{Z}_2)$} Let us also briefly mention the quiver Yangian for $\mathbb{C}^3/(\mathbb{Z}_2\times\mathbb{Z}_2)$ here, which is the only toric CY$_3$ without compact divisors that is not a generalized conifold. Its toric diagram and quiver are depicted in Figure \ref{C3Z2Z2}.
\begin{figure}[h]
	\centering
	\includegraphics[width=8cm]{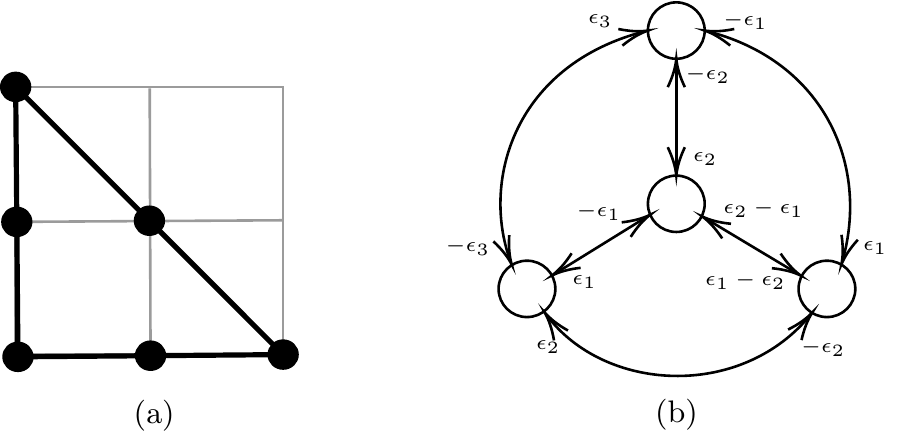}
	\caption{(a) The toric diagram of $\mathbb{C}^3/(\mathbb{Z}_2\times\mathbb{Z}_2)$. (b) The corresponding quiver with charge assignment.}\label{C3Z2Z2}
\end{figure}
The generators $\psi^{(a)}_n$, $e^{(a)}_n$ and $f^{(a)}_n$ of the quiver Yangian (whose underlying algebra is the affine $D(2,1;\alpha)$) satisfy the same defining relations as in Definition \ref{Ydef}. One can equivalently write the quiver Yangian using $\uppsi^{(a)}_n$, $\mathtt{e}^{(a)}_n$ and $\mathtt{f}^{(a)}_n$. Notice that the only vanishing $\sigma_1^{(ab)}$ in this case would have $a=b$. It is straightforward to see that the quiver Yangian has a minimalistic presentation as given in Theorem \ref{minthm}.

\section{Odd Reflections and Chevalley Generators}\label{oddrefchevgen}
Let us verify that the primed Chevalley generators in \eqref{chevgenprimed} satisfy the corresponding relations of the Kac-Moody superalgebras, which are essentially the ones involving only zero modes in Theorem \ref{minthm}. For convenience, \eqref{chevgenprimed} is reproduced here:
\begin{equation}
	\begin{split}
		&\psi'^{(a)}_0=\sum_{b=1}^{M+N}R_{ab}\psi^{(b)}_0=\begin{cases}
			-\psi^{(a)}_0,&\quad a=\digamma,\\
			\psi^{(a)}_0+\psi^{(\digamma)}_0,&\quad a=\digamma\pm1,\\
			\psi^{(a)}_0,&\quad\text{otherwise};
		\end{cases}\\
		&e'^{(a)}_0=\begin{cases}
			f^{(a)}_0,&\quad a=\digamma,\\
			\left[e^{(\digamma)}_0,e^{(a)}_0\right],&\quad a=\digamma\pm1,\\
			e^{(a)}_0,&\quad\text{otherwise};
		\end{cases}
		\qquad
		f'^{(a)}_0=\begin{cases}
			-e^{(a)}_0,&\quad a=\digamma,\\
			-\frac{1}{A_{a\digamma}}\left[f^{(a)}_0,f^{(\digamma)}_0\right],&\quad a=\digamma\pm1,\\
			f^{(a)}_0,&\quad\text{otherwise}.
		\end{cases}
	\end{split}
\end{equation}
We also recall that
\begin{equation}
	A'_{ab}=\begin{cases}
		-A_{ab},&\quad (a,b)=(\digamma\pm1,\digamma),(\digamma,\digamma\pm1),\\
		A_{aa}+2A_{a\digamma},&\quad a=b=\digamma\pm1,\\
		A_{ab},&\quad\text{otherwise}.
	\end{cases}
\end{equation}

The $\psi'\psi'$ relation, the $e'^{(a)}e'^{(b)}$, $f'^{(a)}f'^{(b)}$ relations for $\sigma_1^{ab}=0$ and the Serre relations whose right hand sides are all zero follow from direct computations. Now, let us check the $e'f'$ relation:
\begin{itemize}
	\item $a=b=\digamma$: We have
	\begin{equation}
		\left[e'^{(a)}_0,f'^{(b)}_0\right]=-\left[f^{(\digamma)}_0,e^{(\digamma)}_0\right]=-\psi^{(\digamma)}_0=\psi'^{(a)}_0.
	\end{equation}
    
    \item $a=b=\digamma\pm1$: We have
    \begin{equation}
    	\begin{split}
    		\left[e'^{(a)}_0,f'^{(b)}_0\right]=&-\frac{1}{A_{a\digamma}}\left[\left[e^{(\digamma)}_0,e^{(a)}_0\right],\left[f^{(a)}_0,f^{(\digamma)}_0\right]\right]\\
    		=&-\frac{1}{A_{a\digamma}}\left(\left[e^{(\digamma)}_0,\left[e^{(a)}_0,\left[f^{(a)}_0,f^{(\digamma)}_0\right]\right]\right]-(-1)^{|a|}\left[e^{(a)}_0,\left[e^{(\digamma)}_0,\left[f^{(a)}_0,f^{(\digamma)}_0\right]\right]\right]\right)\\
    		=&-\frac{1}{A_{a\digamma}}\left(\left[e^{(\digamma)}_0,-A_{a\digamma}f^{(\digamma)}_0\right]+\left[e^{(a)}_0,-A_{a\digamma}f^{(a)}_0\right]\right)\\
    		=&\psi'^{(\digamma)}_0+\psi'^{(a)}_0,
    	\end{split}
    \end{equation}
    where we have used the Jacobi identity in the second and third equalities.
    
    \item $a=\digamma,b=\digamma\pm1$: We have $\left[e'^{(a)}_0,f'^{(b)}_0\right]=-\frac{1}{A_{b\digamma}}\left[f^{(\digamma)}_0,\left[f^{(b)}_0,f^{(\digamma)}_0\right]\right]=0$ as $\text{ad}_{f^{(\digamma)}_0}^2=0$.
    
    \item $a=\digamma\mp1,b=\digamma\pm1$: This is similar to the case when $a=b=\digamma\pm1$. By using the Jacobi identity, we have $\left[e'^{(a)}_0,f'^{(b)}_0\right]=-\frac{1}{A_{b\digamma}}\left(0+\left[e^{(a)}_0,-A_{a\digamma}f^{(b)}_0\right]\right)=0$.
    
    \item $a\neq\digamma,\digamma\pm1, b=\digamma\pm1$: We have $\left[e'^{(a)}_0,f'^{(b)}_0\right]=-\frac{1}{A_{b\digamma}}\left[e^{(a)}_0,\left[f^{(b)}_0,f^{(\digamma)}_0\right]\right]=0$ using the Jacobi identity.
    
    \item otherwise: The remaining cases are immediate following the expressions of the primed generators and the similar arguments as above.
\end{itemize}

Next, we shall check the $\psi'e'$ relation:
\begin{itemize}
	\item $a=b=\digamma\pm1$: We have
	\begin{equation}
		\begin{split}
			\left[\psi'^{(a)}_0,e'^{(b)}_0\right]=&\left[\psi^{(a)}+\psi^{(\digamma)}_0,\left[e^{(\digamma)}_0,e^{(a)}_0\right]\right]\\
			=&-\left[e^{(\digamma)}_0,\left[e^{(a)}_0,\psi^{(a)}_0\right]\right]+(-1)^{|a|}\left[e^{(a)}_0,\left[e^{\digamma}_0,\psi^{(a)}_0\right]\right]-\left[e^{(\digamma)}_0,\left[e^{(a)}_0,\psi^{(\digamma)}_0\right]\right]-0\\
			=&A_{aa}\left[e^{(\digamma)}_0,e^{(a)}_0\right]-(-1)^{|a|}A_{a\digamma}\left[e^{(a)}_0,e^{\digamma}_0\right]+A_{a\digamma}\left[e^{(\digamma)}_0,e^{(a)}_0\right]\\
			=&(2A_{a\digamma}+A_{aa})\left[e^{(\digamma)}_0,e^{(a)}_0\right]\\
			=&A'_{ab}e'^{(b)}_0,
		\end{split}
	\end{equation}
    where we have used the Jacobi identity in the second eqaulity.
    
    \item $a=\digamma,b=\digamma\pm1$: We have
    \begin{equation}
    	\left[\psi'^{(a)}_0,e'^{(b)}_0\right]=-\left[\psi^{\digamma},\left[e^{(\digamma)}_0,e^{(b)}_0\right]\right]=-\left[e^{(\digamma)}_0,A_{b\digamma}e^{(b)}_0\right]=-A_{b\digamma}e'^{(b)}_0=A'_{ab}e'^{(b)}_0.
    \end{equation}
    
    \item $a=\digamma\mp1,b=\digamma\pm1$: This is similar to the case when $a=b=\digamma\pm1$. By using the Jacobi identity, we have $\left[\psi'^{(a)}_0,e'^{(b)}_0\right]=-(-1)^{|a|}A_{a\digamma}\left[e^{(b)}_0,e^{(\digamma)}_0\right]+A_{b\digamma}\left[e^{(k)}_0,e^{(b)}_0\right]=0$.
    
    \item $a\neq\digamma,\digamma\pm1,b=\digamma\pm1$: We have $\left[\psi'^{(a)}_0,e'^{(b)}_0\right]=\left[\psi^{(a)}_0,\left[e^{(\digamma)}_0,e^{(b)}_0\right]\right]=0$ using the Jacobi identity.
    
    \item $a=b=\digamma$: We have $\left[\psi'^{(a)}_0,e'^{(b)}_0\right]=-\left[\psi^{(\digamma)}_0,f^{(\digamma)}_0\right]=0$.
    
    \item $a=\digamma\pm1,b=\digamma$: We have $\left[\psi'^{(a)}_0,e'^{(b)}_0\right]=\left[\psi^{(a)}_0+\psi^{(\digamma)}_0,f^{(\digamma)}_0\right]=-A_{a\digamma}f^{(\digamma)}_0=A'_{ab}e^{(b)}$.
    
    \item otherwise: The remaining cases are immediate following the expressions of the primed generators and the similar arguments as above.
\end{itemize}

The $\psi'f'$ relation can be verified in the same manner. Nevertheless, let us explicitly write the proof for three of the cases here:
\begin{itemize}
	\item $a=\digamma\pm1,b=\digamma$: We have
	\begin{equation}
		\left[\psi'^{(a)}_0,f'^{(b)}_0\right]=\left[\psi^{(a)}+\psi^{(\digamma)}_0,-e^{(\digamma)}_0\right]=-A_{a\digamma}e^{(\digamma)}_0=A_{ab}f'^{(b)}_0=-A'_{ab}f'^{(b)}_0.
	\end{equation}
	
	\item $a=\digamma,b=\digamma\pm1$: We have
	\begin{equation}
		\left[\psi'^{(a)}_0,f'^{(b)}_0\right]=\frac{1}{A_{b\digamma}}\left[-\psi^{(\digamma)}_0,-\left[f^{(b)}_0,f^{(\digamma)}_0\right]\right]=\frac{1}{A_{b\digamma}}\left[-A_{b\digamma}f^{(b)}_0,f^{(\digamma)}_0\right]=A_{b\digamma}f'^{(b)}_0=-A_{ab}f'^{(b)}_0.
	\end{equation}

    \item $a=b=\digamma\pm1$: We have
    \begin{equation}
    	\begin{split}
    		\left[\psi'^{(a)}_0,f'^{(b)}_0\right]=&-\frac{1}{A_{a\digamma}}\left[\psi^{(a)}_0+\psi^{(\digamma)}_0,\left[f^{(a)}_0,f^{(\digamma)}_0\right]\right]\\
    		=&\frac{1}{A_{a\digamma}}\left(\left[\left[f^{(a)}_0,f^{(\digamma)}_0\right],\psi^{(a)}_0\right]+\left[\left[f^{(a)}_0,f^{(\digamma)}_0\right],\psi^{(\digamma)}_0\right]\right)\\
    		=&\frac{1}{A_{a\digamma}}\left(\left[f^{(a)}_0,\left[f^{\digamma}_0,\psi^{(a)}_0\right]\right]-(-1)^{|a|}\left[f^{(\digamma)}_0,\left[f^{a}_0,\psi^{(a)}_0\right]\right]+0-(-1)^{|a|}\left[f^{(\digamma)}_0,\left[f^{a}_0,\psi^{(\digamma)}_0\right]\right]\right)\\
    		=&\frac{1}{A_{a\digamma}}\left(A_{a\digamma}\left[f^{(a)}_0,f^{(\digamma)}_0\right]-(-1)^{|a|}A_{aa}\left[f^{(\digamma)}_0,f^{(a)}_0\right]-(-1)^{|a|}A_{a\digamma}\left[f^{(\digamma)}_0,f^{(a)}_0\right]\right)\\
    		=&-(2A_{a\digamma}+A_{aa})\times\left(-\frac{1}{A_{a\digamma}}\left[f^{(a)}_0,f^{(\digamma)}_0\right]\right)\\
    		=&-A'_{ab}f'^{(b)}_0.
    	\end{split}
    \end{equation}
\end{itemize}

\section{Rectangular $\mathcal{W}$-Algebras}\label{recW}
In literature, the rectangular $\mathcal{W}$-algebra (of type A) is often defined based on the distinguished case for the Kac-Moody superalgebra where the number of fermionic nodes is minimized. Nevertheless, we can certainly consider $\mathcal{W}$-algebras with any underlying root systems/Dynkin diagrams as we are going to discuss now. Proposition \ref{Wiso} then ensures that they are isomorphic for a given generalized conifold.

In this appendix only, we will use $\mathfrak{g}$ to denote the algebra $\mathfrak{gl}(Ml|Nl)$ for some positive integer $l$. We shall choose the convention such that the basis matrix $E_{ij}$ has entry $(-1)^{p(j)}$ at position $(i,j)$ with all other elements being zero\footnote{Another convention often adopted in literature such as \cite{ueda2022affine} would naturally be 1 at entry $(i,j)$.}. Notice that we have used
\begin{equation}
	p(i)=\begin{cases}
		0,&i\text{ is bosonic}\\
		1,&i\text{ is fermionic}
	\end{cases}\label{pdef}
\end{equation}
so as to distinguish it from $|a|$ in the quiver Yangians. Given a parity sequence $\varsigma$ composed of $(-1)^{p(i)}$, the $\mathbb{Z}_2$-grading of $E_{ij}$ is $p(i)+p(j)$. In particular, $E_{i_1j_1}E_{i_2j_2}=(-1)^{p(j_1)}\delta_{j_1i_2}E_{i_1j_2}$ and $\text{str}(E_{i_1j_1}E_{i_2j_2})=(-1)^{p(j_1)}\delta_{j_1i_2}\delta_{i_1j_2}$. Then
\begin{equation}
	\mathfrak{g}=\bigoplus_{\substack{1\leq i,j\leq M+N\\1\leq r,s\leq l}}\mathbb{C}E_{(r-1)(M+N)+i,(s-1)(M+N)+j},
\end{equation}
and $E_{(r-1)(M+N)+i,(s-1)(M+N)+j}=E_{ij}\otimes E_{rs}$ as $\mathfrak{g}$ is isomorphic to $\mathfrak{gl}(M|N)\otimes\mathfrak{gl}(l)$ as a vector space\footnote{Notice that the $\mathfrak{gl}(l)$ part (with subscripts $r,s$) is always bosonic.}. We shall take the bosonic nilpotent matrix $x^-=\sum\limits_{s=1}^{l-1}\sum\limits_{i=1}^{M+N}E_{s(M+N)+i,(s-1)(M+N)+i}$ which can be written as $\left(\sum\limits_{i=1}^{M+N}E_{ii}\right)\otimes\left(\sum\limits_{s=1}^{l-1}E_{s,s-1}\right)$. This nilpotent matrix is of Jordan type with the rectangle Young tableau $\left(l^{(M|N)}\right)$ (and hence the name rectangular $\mathcal{W}$-algebra). Given a complex number $k$, there is an inner product of $\mathfrak{g}$ given by
\begin{equation}
	(u|v)=\begin{cases}
		k\text{str}(uv),&u\in\mathfrak{sl}(Ml|Nl)\text{ or }v\in\mathfrak{sl}(Ml|Nl)\\
		k\text{str}(uv)+(-1)^{p(i)+p(j)}(1-c),&u=E_{ii}\otimes E_{rr}\text{ and }v=E_{jj}\otimes E_{ss}
	\end{cases}
\end{equation}
for some $c\in\mathbb{C}$.

Now, $\mathfrak{g}$ has a good grading in the sense of \cite{kac2003quantum} for the nilpotent element with
\begin{equation}
	\mathfrak{g}_r:=\bigoplus_{\substack{1\leq i,j\leq M+N\\0\leq s\leq l-1\\0\leq s+r\leq l-1}}\mathbb{C}E_{s(M+N)+i,(s+r)(M+N)+j}.
\end{equation}
We then have an $\mathfrak{sl}_2$ triple $(h,x^+,x^-)$ such that $\mathfrak{g}_r=\{y\in\mathfrak{g}|[h,y]=ry\}$. Define the subalgebras $\mathfrak{b}=\mathfrak{g}_{\leq0}=\bigoplus\limits_{r\leq0}\mathfrak{g}_r$ and $\mathfrak{g}_0=\mathfrak{g}_{r=0}$. We have an inner product on $\mathfrak{b}$ which reads
\begin{equation}
	\kappa(u,v)=(u|v)+\frac{1}{2}(\kappa_{\mathfrak{g}}(u,v)-\kappa_{\mathfrak{g}_0}(\text{pr}(u),\text{pr}(v)))
\end{equation}
for any $u,v\in\mathfrak{b}$, where $\kappa_{\mathfrak{g}}$ (resp. $\kappa_{\mathfrak{g}_0}$) is the Killing form on $\mathfrak{g}$ (resp. $\mathfrak{g}_0$) and $\text{pr}:\mathfrak{b}\rightarrow\mathfrak{g}_0$ is the projection map. Recall that in general, the Killing form is $\kappa_{\mathfrak{gl}(M|N)}(x,y)=2(M-N)\text{str}(xy)-2\text{str}(x)\text{str}(y)$. Then
\begin{equation}
	\begin{split}
		&\kappa(E_{r_1(M+N)+i_1,s_1(M+N)+j_1},E_{r_2(M+N)+i_2,s_2(M+N)+j_2})\\
		=&\delta_{r_1,s_2}\delta_{r_2,s_1}\delta_{i_1,j_2}\delta_{i_2,j_1}(-1)^{p(j)}\varkappa+\delta_{r_1,s_1}\delta_{r_2,s_2}\delta_{i_1,j_1}\delta_{i_2,j_2}(\delta_{r_1,r_2}-c),
	\end{split}
\end{equation}
where $\varkappa:=k+(l-1)(M-N)$. Consider the affinization $\hat{\mathfrak{b}}=\mathfrak{b}\left[t^{\pm1}\right]\oplus\mathbb{C}\bm{1}$ (with $\bm{1}$ central). The commutation relation reads $[at^m,bt^n]=[a,b]t^{m+n}+\delta_{m,-n}m\kappa(a,b)\bm{1}$. The associated (universal affine) vertex algebra $V^k(\mathfrak{b})$ is defined to be $U(\hat{\mathfrak{b}})/U(\hat{\mathfrak{b}})(\mathfrak{b}[t]\oplus\mathbb{C}(\bm{1}-1))\cong U(\hat{\mathfrak{b}})\otimes_{U(\mathfrak{b}[t]\oplus\mathbb{C}\bm{1})}\mathbb{C}$. Here, $\mathbb{C}$ denotes the one-dimensional representation of $\mathfrak{b}[t]\oplus\mathbb{C}{\bm{1}}$ where $\mathfrak{b}[t]$ acts trivially as 0 and $\bm{1}$ acts as 1. This is isomorphic to $U\left(\mathfrak{b}\left[t^{-1}\right]t^{-1}\right)$ as a vector space by PBW theorem. Here, we shall take the mode expansion of a current $a(z)$ in the vertex algebra depending on its spin $s$:
\begin{equation}
	a(z)=\sum_{n\in\mathbb{Z}}\frac{a[n]}{z^{n+s}},
\end{equation}
where we have denoted $at^n$ as $a[n]$ to avoid potential clutter of subscripts later on. In this paper, we use the normal ordered product with the convention\footnote{One can also define the $n$-product given by
\begin{equation}
	\begin{split}
		(a_{(n)}b)(z)=a(z)_{(n)}b(z)=\begin{cases}
			\text{Res}_w(w-z)^n[a(w),b(z)],&n\geq0\\
			\frac{1}{(n+1)!}:\partial^{n+1}a(z)b(z):,&n<0,
		\end{cases}\nonumber
	\end{split}
\end{equation}
as well as the $\lambda$-bracket $[a_\lambda b]=\sum\limits_{n\in\mathbb{Z}_+}\frac{\lambda^n}{n!}a_{(n)}b$ which enjoys certain properties such as the noncommutative Wick formula. See for example \cite{kac1998vertex} for more details. The pair of fields is local if $(a_{(n)}b)(z)$ vanishes for sufficiently large (positive) $n$. It is clear that the $(-1)$-product coincides with the normal ordered product.}
\begin{equation}
	\begin{split}
		&:a(z)b(z):~=a(z)_{\leq}b(z)+(-1)^{p(a)+p(b)}b(z)a(z)_>,\\
		&\text{ where }a(z)_{\leq}=\sum_{n\leq-s}\frac{a[n]}{z^{n+s}}\text{ and }a(z)_>=\sum_{n>-s}\frac{a[n]}{z^{n+s}}.
	\end{split}
\end{equation}
In terms of modes, we have
\begin{equation}
	:a[n]b[m]:~=\begin{cases}
		a[n]b[m],&n\leq-s\\
		(-1)^{p(a)+p(b)}b[m]a[n],&n>-s.
	\end{cases}
\end{equation}
In the main context and below, we shall use $(ab)$ instead of :$ab$: to denote the normal ordering for convenience when it would not cause confusions. Of course, different conventions of the normal ordered product would not change our results in \S\ref{YandW}. For instance, if we ``split'' the normal ordering at the zero modes, one may check that the homomorphism $\Phi$ from $\mathtt{Y}$ to $\mathcal{W}$ would remain the same.

Let us also consider the Lie superalgebra $\mathfrak{a}=\left(\bigoplus\limits_{u\in\mathfrak{b}}\mathbb{C}\mathcal{A}^{(u)}\right)\oplus\left(\bigoplus\limits_{u\in\mathfrak{g}_{<0}}\mathbb{C}\mathcal{A}_{(u)}\right)$ with $p\left(\mathcal{A}^{(u)}\right)=p(u)$ and $p\left(\mathcal{A}_{(u)}\right)=p(u)+1$. The commutation relations are
\begin{equation}
	\begin{split}
		&\left[\mathcal{A}^{(u)},\mathcal{A}^{(v)}\right]=\mathcal{A}^{([u,v])},\quad\left[\mathcal{A}_{(u)},\mathcal{A}_{(v)}\right]=0,\\
		&\left[\mathcal{A}^{(E_{i_1j_1})},\mathcal{A}_{(E_{i_2j_2})}\right]=\delta_{j_1,i_1}\mathcal{A}_{(E_{i_1,j_2})}-\delta_{i_1,j_2}(-1)^{(p(i_1)+p(j_1))(p(i_2)+p(j_2)+1)}\mathcal{A}_{(E_{i_2,j_1})}.
	\end{split}
\end{equation}
Suppose $u=\sum\limits_{i}a_iu_i$ ($a_i\in\mathbb{C}$), then $\mathcal{A}^{(u)}$ is $\sum\limits_{i}a_i\mathcal{A}^{(u_i)}$ (and similarly for $\mathcal{A}_{(u)}$). We can write the inner product determined by
\begin{equation}
	\kappa_{\mathfrak{a}}\left(\mathcal{A}^{(u)},\mathcal{A}^{(v)}\right)=\kappa_\mathfrak{b}(u,v),\quad\kappa_{\mathfrak{a}}\left(\mathcal{A}^{(u)},\mathcal{A}_{(v)}\right)=\kappa_{\mathfrak{a}}\left(\mathcal{A}_{(u)},\mathcal{A}_{(v)}\right)=0,
\end{equation}
and likewise consider the affinization of $\mathfrak{a}$. Then the associated vertex algebra $V^k(\mathfrak{a})$ contains $V^k(\mathfrak{b})$ as a subalgebra. Both of the vertex algebras can be regarded as non-associative algebras with respect to the normal ordered product.

The $\mathcal{W}$-algebra is then the collection of elements in $V^k(\mathfrak{b})$ that are annihilated by the BRST charge. More specifically, the BRST cohomology has fermionic derivation $Q:V^k(\mathfrak{a})\rightarrow V^k(\mathfrak{a})$ commuting with the translation operator $\partial$ of the vertex algebra. The other commutation relations $Q$ should satisfy can be found for example in \cite[\S3]{arakawa2017explicit} and in \cite[\S3]{ueda2022affine} (with the convention therein).
\begin{definition}
	Given the above data, the rectangular $\mathcal{W}$-algebra is $\mathcal{W}^k\left(\mathfrak{gl}(M|N),\left(l^{(M|N)}\right)\right):=\{v\in V^k(\mathfrak{b})\subset V^k(\mathfrak{a})|Qv=0\}$.
\end{definition}
Notice that we have omitted the parity sequence $\varsigma$ in the notation as different $\varsigma$ give isomorphic $\mathcal{W}$-algberas by Proposition \ref{Wiso}. For our discussions, it would be of great help to obtain the generators of the $\mathcal{W}$-algebra. This can be constructed by considering the non-associative free algebra $T\left(\mathfrak{gl}(l)_{\leq0}\left[t^{-1}\right]t^{-1}\right)\otimes\mathbb{C}[\tau]$ with the even element $\tau$ commuting with $\bm{1}$ and $[\tau,y[m]]=-my[m-1]$ for $y\in\mathfrak{gl}(l)_{\leq0}$. We then have an algebra homomorphism $\mathfrak{T}:T\left(\mathfrak{gl}(l)_{\leq0}\left[t^{-1}\right]t^{-1}\right)\otimes\mathbb{C}[\tau]\rightarrow\mathfrak{gl}(M|N)\otimes V^k(\mathfrak{b})\otimes\mathbb{C}[\tau]$ such that
\begin{equation}
	\mathfrak{T}(x)=\sum_{i,j=1}^{M+N}(-1)^{p(i)p(j)}E_{ij}\otimes\mathfrak{T}_{ij}(x),\quad\mathfrak{T}(\tau)=\tau,
\end{equation}
where
\begin{equation}
	\mathfrak{T}_{ij}(x)=x\otimes E_{ji}\in\mathfrak{gl}(l)_{\leq0}\left[t^{-1}\right]t^{-1}\otimes\mathfrak{gl}(M|N)=\mathfrak{b}\left[t^{-1}\right]t^{-1}.
\end{equation}
Since $\mathfrak{T}(xy)=\mathfrak{T}(x)\mathfrak{T}(y)$, we find that
\begin{equation}
	\mathfrak{T}_{ij}(xy)=\sum_{r=1}^{M+N}\mathfrak{T}_{ir}(x)\mathfrak{T}_{rj}(y).
\end{equation}

Let us now consider the $l\times l$ matrix
\begin{equation}
	B=\begin{pmatrix}
		\varkappa\tau+E_{11}[-1]~&~-1~&~0~&~\dots~&~0\\
		E_{21}[-1]~&~\varkappa\tau+E_{22}[-1]~&~-1~&~\dots~&~0\\
		\vdots~&~\vdots~&~\ddots~&~\vdots~&~\vdots\\
		E_{l-1,1}[-1]~&~E_{l-1,2}[-1]~&~\dots~&~\varkappa\tau+E_{l-1,l-1}[-1]~&~-1\\
		E_{l1}[-1]~&~E_{l2}[-1]~&~\dots~&~E_{l,l-1}[-1]~&~\varkappa\tau+E_{ll}[-1]
	\end{pmatrix}
\end{equation}
and compute its column determinant
\begin{equation}
	\text{cdet}(B)=\sum_{\sigma\in\mathfrak{S}_l}\text{sgn}\sigma~ b_{\sigma(1)1}(b_{\sigma(2)2}(b_{\sigma(3)3}\dots(b_{\sigma(l-1),l-1}b_{\sigma(l),l})\dots)).
\end{equation}
As the entries $b_{rs}$ of $B$ are in $T\left(\mathfrak{gl}(l)_{\leq0}\left[t^{-1}\right]t^{-1}\right)\otimes\mathbb{C}[\tau]$, we can write
\begin{equation}
	\mathfrak{T}_{ij}(\text{cdet}(B))=\sum_{r=0}^l\widetilde{U}^{(r)}_{ji}(\varkappa\tau)^{l-r}.
\end{equation}
We then have the remarkable results from \cite{ueda2022affine,arakawa2017explicit}:
\begin{theorem}
	The rectangular $\mathcal{W}$-algebra $\mathcal{W}^k\left(\mathfrak{gl}(M|N),\left(l^{(M|N)}\right)\right)$ is freely generated by $\widetilde{U}^{(r)}_{ij}$ for $1\leq r\leq l$ and $1\leq i,j\leq M+N$. Moreover, when $M\neq N$, $M+N\geq2$ (and $\varkappa\neq0$), it is generated by $\widetilde{U}^{(1)}_{ij}$ and $\widetilde{U}^{(2)}_{ij}$.
\end{theorem}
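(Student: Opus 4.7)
The plan is to establish the two claims separately: first, that $\{\widetilde{U}^{(r)}_{ij}\mid 1\le r\le l,\, 1\le i,j\le M+N\}$ freely generates $\mathcal{W}^k(\mathfrak{gl}(Ml|Nl),(l^{(M|N)}))$, and second, that under the extra hypotheses $M\ne N$, $M+N\ge 2$, $\varkappa\ne 0$, the spin-one and spin-two subsets already suffice. The overall framework is quantum Drinfeld--Sokolov reduction: one exploits the PBW-type theorem of Kac--Roan--Wakimoto (in the super setting, following Arakawa and Ueda) together with explicit commutator computations inside $V^k(\mathfrak{a})$. The free-generation part is structural; the reduction to spins $1$ and $2$ is a recursive extraction.

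For free generation, I first need to verify that the column-determinant elements are BRST-closed, i.e.\ $Q\widetilde{U}^{(r)}_{ij}=0$. This is designed into the definition of $\mathrm{cdet}(B)$: under the action of $Q$, shifts by strictly upper-triangular matrix units (with respect to the rectangular $\mathfrak{sl}_2$-grading) are absorbed by the cofactor expansion row-by-row. Next, equip $V^k(\mathfrak{b})$ with the Kazhdan (or conformal) filtration associated to the semisimple element $h$ of the $\mathfrak{sl}_2$ triple, and compute the principal symbol $\overline{\widetilde{U}^{(r)}_{ij}}$ in $\mathrm{gr}V^k(\mathfrak{b})\cong S(\mathfrak{b}[t^{-1}]t^{-1})$. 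The symbols are, up to normalization, the canonical representatives of a basis of $\mathfrak{g}^{x^-}$, since the centralizer of the rectangular nilpotent admits the well-known basis $\{E_{i,(s-1)(M+N)+j}\otimes E_{1s}\mid 1\le s\le l,\,1\le i,j\le M+N\}$ (of super-dimension $l(M+N)^2$). By Arakawa's theorem $\mathrm{gr}\mathcal{W}^k\cong S(\mathfrak{g}^{x^-}[t^{-1}]t^{-1})$, and matching dimensions in each Kazhdan degree forces the $\widetilde{U}^{(r)}_{ij}$ to be algebraically independent generators.

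For the reduction to spins $1$ and $2$, I proceed by induction on $r$. Assume every $\widetilde{U}^{(s)}_{ab}$ with $s<r$ has already been written as a differential polynomial in the $\widetilde{U}^{(1)}$ and $\widetilde{U}^{(2)}$ currents. To produce $\widetilde{U}^{(r)}_{ij}$, I compute a targeted OPE/commutator of the schematic form $[\widetilde{U}^{(2)}_{ab}[\cdot],\widetilde{U}^{(r-1)}_{cd}[\cdot]]$ and isolate the term carrying a single $\widetilde{U}^{(r)}$. The coefficient of that term is a polynomial in $\varkappa$, $l$, and $M-N$ that comes from the $(1,2)$-OPE coefficient recorded in Lemma~\ref{UUlemma} together with the supertrace shift in $\kappa$. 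The hypotheses $\varkappa\ne 0$ and $M\ne N$ guarantee this coefficient is invertible, so one can solve algebraically for $\widetilde{U}^{(r)}_{ij}$. This is the same mechanism already exhibited in \S\ref{crystal}, where higher modes $U^{(1)}_{bb}[m]$ were extracted from $[U^{(1)}_{ba}[\cdot],X_{ab}[\cdot]]$ via \eqref{Xabm}--\eqref{UXUX}; I will generalize that bracket to the relevant spin-$r$ statement and check the leading coefficient is $-(\varkappa+\ell\varkappa)(-1)^{p(a)}$ type, nonzero under our hypothesis.

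The main obstacle is the leading-coefficient computation in the inductive step: super-signs arising from $(-1)^{p(i)p(j)}$ in the $(1,2)$-OPE can conspire to kill the coefficient of $\widetilde{U}^{(r)}_{ij}$, and indeed this is precisely what happens when $M=N$ (the supertrace shift $(M-N)$ vanishes), explaining the exclusion. A careful bookkeeping of indices $a,b,c,d$ will be needed so that the chosen commutator actually has a single surviving copy of the target generator rather than several that might cancel; by contrast the routine verification that $Q\widetilde{U}^{(r)}_{ij}=0$ and the character-matching for free generation are standard once the filtration is set up. A secondary subtlety is the boundary case $M+N=2$, where the column determinant has only two row-indices and the recursion shortens; for that case I will verify the claim by direct inspection rather than through the general induction.
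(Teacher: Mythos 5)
You should first be aware that the paper does not prove this theorem at all: it is stated as a ``remarkable result'' imported verbatim from the cited works of Arakawa--Molev and Ueda, so there is no in-paper proof to compare against. That said, your outline reproduces the strategy of those references quite faithfully: free generation via the Kazhdan-type filtration, identification of the symbols of $\widetilde{U}^{(r)}_{ij}$ with a basis of the centralizer $\mathfrak{g}^{x^-}$ (whose super-dimension $l(M+N)^2$ indeed matches the count of generators), and the structural isomorphism $\mathrm{gr}\,\mathcal{W}^k\cong S\left(\mathfrak{g}^{x^-}\left[t^{-1}\right]t^{-1}\right)$; then a recursive commutator extraction of $\widetilde{U}^{(r)}$ from $\widetilde{U}^{(2)}$ and $\widetilde{U}^{(r-1)}$ for the second claim. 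As a proposal this is the right route, but two of its load-bearing steps are announced rather than carried out. First, the $Q$-closedness of $\mathrm{cdet}(B)$ is not ``designed into the definition''; in the literature this is a genuine verification (via screening operators annihilating the Miura image, or a direct BRST computation), and asserting it is absorbed ``by the cofactor expansion row-by-row'' does not constitute an argument. Second, the entire inductive step hinges on the non-vanishing of one explicit coefficient, for which you offer only a schematic guess; until that coefficient is computed, the roles of the hypotheses $\varkappa\neq0$ and $M\neq N$ remain conjectural. In particular your explanation that $M=N$ fails ``because the supertrace shift $(M-N)$ vanishes'' is not right as stated: since $\varkappa=k+(l-1)(M-N)$, setting $M=N$ merely gives $\varkappa=k$, which is already covered by the $\varkappa\neq0$ hypothesis; the obstruction at $M=N$ is of a different nature (degeneracy of the invariant form on $\mathfrak{gl}(M|M)$, the same $\mathfrak{psl}(M|M)$ subtlety the paper invokes elsewhere). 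If you intend this as a self-contained proof rather than a reduction to the cited results, those two computations, plus the $M+N=2$ boundary case you defer, are where the actual work lies.
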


Following \cite{arakawa2017explicit,arakawa2017introduction}, the projection $\mathfrak{b}\rightarrow\mathfrak{l}=(\mathfrak{gl}_l)_0\otimes\mathfrak{gl}_{M|N}$ induces an injective algebra homomorphism $\mu:\mathcal{W}^k\left(\mathfrak{gl}(M|N),\left(l^{(M|N)}\right)\right)\rightarrow V^k(\mathfrak{l})$ known as the (quantum) Miura transformation. Under the Miura transformation, we have
\begin{equation}
	\sum_{r=0}^l\mu\left(\widetilde{U}^{(r)}_{ji}\right)(\varkappa\tau)^{l-r}=\mathfrak{T}_{ij}\left((\varkappa\tau+E_{11}[-1])(\varkappa\tau+E_{22}[-1])\dots(\varkappa\tau+E_{ll}[-1])\right).\label{miura}
\end{equation}
Let us write\footnote{Notice that we could have also started with $-E_{ij}$ as our basis matrix from the very beginning.}
\begin{equation}
	\mathcal{J}^s_{ij}=-E_{(s-1)(M+N)+i,(s-1)(M+N)+j}[-1],\quad\partial\mathcal{J}^s_{ij}=-E_{(s-1)(M+N)+i,(s-1)(M+N)+j}[-2].
\end{equation}
This gives the same convention as in \cite{Rapcak:2019wzw}. The generators of the $\mathcal{W}$-algebra can be written as
\begin{equation}
	U^{(1)}_{ij}=\sum_{1\leq s\leq l}\mathcal{J}^s_{ij},\qquad U^{(2)}_{ij}=\varkappa\sum_{1\leq s\leq l}(s-1)\partial\mathcal{J}^s_{ij}+\sum_{\substack{1\leq s_1<s_2\leq l\\1\leq n\leq M+N}}\left(\mathcal{J}^{s_1}_{in}\mathcal{J}^{s_2}_{nj}\right).\label{U1U2}
\end{equation}
By definition of the vertex algebra, the OPE of $\mathcal{J}^s_{ij}$ reads
\begin{equation}
	\begin{split}
		\mathcal{J}^{s_1}_{i_1j_1}(z)\mathcal{J}^{s_2}_{i_2j_2}(w)\sim&\frac{\kappa(E_{(s_1-1)(M+N)+i_1,(s_1-1)(M+N)+j_1},E_{(s_2-1)(M+N)+i_2,(s_1-2)(M+N)+j_2})}{(z-w)^2}\\
		&\frac{[E_{(s_1-1)(M+N)+i_1,(s_1-1)(M+N)+j_1},E_{(s_2-1)(M+N)+i_2,(s_1-2)(M+N)+j_2}][-1](w)}{z-w}\\
		=&\frac{\delta_{s_1s_2}\delta_{j_1i_2}\delta_{i_1j_2}(-1)^{(p(j_1))}\varkappa+\delta_{i_1j_1}\delta_{i_2j_2}(\delta_{s_1s_2}-c)}{(z-w)^2}\\
		&+\frac{(-1)^{p(i_1)p(j_1)+p(i_2)p(j_2)+p(j_1)p(i_2)}\delta_{s_1s_2}\delta_{i_1j_2}\mathcal{J}^{s_1}_{i_2j_1}-(-1)^{p(j_1)}\delta_{s_1s_2}\delta_{i_2j_1}\mathcal{J}^{s_1}_{i_1j_2}}{z-w}.
	\end{split}
\end{equation}
The OPEs for $U^{(r)}_{ij}$ can then be obtained from this, as well as the commutation relations for their modes via
\begin{equation}
	\left[U^{(r)}_{i_1j_1}[m],U^{(s)}_{i_2j_2}[n]\right]=\frac{1}{(2\pi i)^2}\oint_0\text{d}w\oint_w\text{d}z~z^{m+r-1}w^{n+s-1}U^{(r)}_{i_1j_1}(z)U^{(s)}_{i_2j_2}(w).\label{OPE2commreln}
\end{equation}
In this paper, we shall focus on the case when the parameter $c=0$. The commutation relations used in this paper are listed in Lemma \ref{UUlemma}.

To relate the non-associative $\mathcal{W}$-algebra with the quiver Yangian, we shall consider the universal enveloping algebra $U(\mathcal{W})$. In general, for any vertex algebra $V$, its universal enveloping algebra $U(V)$ is an associative algebra topologically generated by $ut^m$ (or $ut^{m+s-1}$ depending on the convention) for $u\in V$ and $m\in\mathbb{Z}$ which correspond to the modes $u[m]$ in the vertex algebra. Therefore, we shall slightly abuse the notation and write $u[m]$ as well for the elements in $U(V)$. For more details on vertex algebras and their universal enveloping algebras, see for example \cite{frenkel2004vertex}.

\linespread{0.9}\selectfont
\addcontentsline{toc}{section}{References}
\bibliographystyle{utphys}
\bibliography{references}

\end{document}